\documentclass[11pt]{amsart}
\usepackage{blindtext}
\usepackage{geometry}
\usepackage{amssymb}
\usepackage{pgfplots}
\pgfplotsset{compat=1.15}
\usepackage{mathrsfs}
\usetikzlibrary{arrows}
\usepackage{tikz}
\usepackage{tikz-cd}
 \geometry{
 a4paper,
 total={170mm,257mm},
 left=25mm,
 right=25mm,
 top=20mm,
 }

\usepackage{graphicx} 
\usepackage{dcolumn}
\usepackage{bm}

\usepackage[utf8]{inputenc}
\usepackage[T1]{fontenc}
\usepackage{mathptmx}
\usepackage{etoolbox}
\usepackage{subcaption}
\usepackage{stmaryrd}
\makeatletter

\usepackage[english]{babel}  
\usepackage{mathtools}
\usepackage{amsthm}
\usepackage{amsmath}
\usepackage{xcolor}
\usepackage{comment}
\usepackage[colorlinks=true,linkcolor=black,anchorcolor=black,citecolor=black,filecolor=black,menucolor=black,runcolor=black,urlcolor=black]{hyperref} 

\newtheorem{definition}{Definition}[section]
\newtheorem{theorem}{Theorem}[section]
\newtheorem*{remark}{Remark}
\newtheorem{coro}[theorem]{Corollary}
\newtheorem{prop}{Proposition}[section]
\newtheorem{lemma}{Lemma}[section]

\newcommand{\R}{\mathbb{R}}
\newcommand{\Z}{\mathbb{Z}}
\newcommand{\N}{\mathbb{N}}
\newcommand{\C}{\mathbb{C}}
\newcommand{\Q}{\mathbb{Q}}

\newcommand{\cut}{\kappa}
\newcommand{\ch}{\text{ch}}
\newcommand{\chb}{\text{ch}^{(b)}}
\newcommand{\che}{\text{ch}^{(e)}}
\newcommand{\chab}{\text{ch}^{(ab)}}

\newcommand{\SquareMarker}[2]{%
  \pgfmathanglebetweenpoints{\pgfpointorigin}{\pgfpoint{#1}{#2}}%
  \edef\markerangle{\pgfmathresult}%
  \draw[shift={(#1,#2)}, rotate=\markerangle, fill=white, draw=none] 
       (-1pt,-1pt) rectangle (1pt,1pt);
}

\newcommand{\Ss}{\mathcal{S}}
\newcommand{\p}{p}
\newcommand{\q}{q}
\renewcommand{\b}{\psi}

\makeatother
\begin{document}

\title[Bulk Edge Correspondence for aperiodic tight binding operators]{Augmentation and Bulk Edge Correspondence for one dimensional aperiodic tight binding operators}
\author{Johannes Kellendonk, Lorenzo Scaglione}
\address{Institut Camille Jordan, Université Claude Bernard Lyon 1, 43 boulevard du 11 novembre 1918, F-69622 Villeurbanne Cedex}

\date{\today}

\begin{abstract}
	We consider a particular class of 1D aperiodic models with the aim to understand how their internal degrees of freedom contribute to their topological invariants and the possible relations (correspondences) among them. In order to handle models with finite local complexity we introduce the principle of augmentation. This allows us to relate the values of the Integrated Density of States at gap energies for the bulk system to spectral flows. We consider two different augmentations. The first is based on the mapping torus construction. It leads to an alternative proof of the result that the gap labelling group of Bellissard coincides with that of Johnson-Moser. It furthermore allows for an interpretation of the spectral flow via boundary forces. The second augmentation applies to models obtained by the cut and project method where we find for 2-cut models two different spectral flows, one attached to the edge modes and related to the phason motion whereas the other is an augmented bulk invariant.  
	Our approach is based on the well-established $C^*$-algebraic approach to solid state physics and the description of topological invariants by $K$-theory and cyclic cocycles. We also present numerical simulations to illustrate our theorems.
\end{abstract}
\maketitle

\section{Introduction}
Aperiodic topological insulators have a rich structure of topological invariants. For quasi-periodic insulators this can be traced back to their internal degrees of freedom\footnote{such degrees of freedom are sometimes also referred to as synthetic degrees of freedom, because they can be designed in metamaterials}  
which come about as these materials can be described as cuts in a higher dimensional periodic structure.  
We consider here one-dimensional aperiodic models with the aim to understand how their internal degrees of freedom contribute to their topological invariants and how these invariants are related. This question has been looked at, both theoretically and experimentally \cite{kraus2012topological,verbin2015topological,dana2014topologically,baboux2017measuring,jagannathan2025missing}, mostly by relating quasi-periodic chains to a two-dimensional system with a magnetic field, the tight binding approximation of the Integer Quantum Hall Effect. Such a relation is possible only for what we call here 1-cut models, a name we will explain below.  
Extending the work of \cite{kellendonk_prodan}, we develop here another approach to which we refer to as augmentation. This is an alternative way to overcome the difficulty that for chains based on the cut and project method, the edge states do not fill the gap. It has a wider range of applicability and brings in new features for 2-cut models, for instance.
 
Our approach is based on the by now well-established $C^*$-algebraic approach to solid state physics and the description of topological invariants by $K$-theory. A key feature of this approach is that materials are not described by a single Hamiltonian, but by a whole family of Hamiltonians, parametrized by the points of a topological space which can be derived from the geometric structure of the material, called its hull. The topological invariants of the material are related to the topology of this hull.   
 Bellissard realized that this approach \cite{Bellissard1992} can be conveniently described by a $C^*$-algebra together with a family of representations, for each point of the hull one, in such a way that the individual Hamiltonians of the family are the representatives of a single self-adjoint element of the $C^*$-algebra. Topological invariants can be derived from the $K$-groups of this algebra and correspondences between topological invariants are described by the boundary maps arising from exact sequences of algebras which relate different systems, as, for instance, the bulk and the edge of a material \cite{KelRichBal}. 

While the framework which we will set up below allows for far more general applications, let us describe our main model. 
Consider three real numbers $\theta$, $\phi$, $\cut$. Taken modulo $\Z$ we interprete $\theta$ and $\phi$ as angles on $\mathbb S^1=\R/\Z$ and call 
$\theta$ the rotation angle and $\phi$ the parameter angle. $\cut$ is supposed to lie in $(0,1)$ and called the cut value.

The family of Hamiltonians we study are defined on $\ell^2(\Z)$ and there given by 
\begin{equation}\label{eq-Koh} 
	H_\phi \psi(n) = \psi(n+1)+\psi(n-1) + V_\phi(n)\psi(n),\quad V_\phi(n)  = \left\{\begin{array}{ll}
	1 & \mbox{if } 0 < \{n\theta +\phi\} \leq \cut \\
	0 & \mbox{otherwise} \end{array}
\right.
\end{equation}
where 
$\{x\}$ denotes the fractional part of $x\in\R$. Note that translating the potential amounts to rotating $\phi$ by $\theta$, $V_\phi(n+1) = V_{\phi+\theta}(n)$ and so the rotation angle is related to translation in space. If we vary $\phi$ so that $n\theta +\phi$ crosses the boundary of the segment $[0,\cut]$ then the value $V_\phi(n)$ of the potential at site $n$ jumps from $0$ to $1$ or from $1$ to $0$ depending on whether we cross $0$ or $\cut$, respectively. Indeed, the orbits $\{n\theta|n\in \Z\}$ and $\{n\theta+\cut|n\in \Z\}$ 
of $0$ and $\cut$ under the rotation by $\theta$ are the discontinuity points of the function $\mathbb S^1\ni \phi \mapsto V_\phi\in \mathcal B(\ell^2(\Z))$ and they will be referred to as the cut points. In general the two orbits of cut points are different, and so we speak of a 2-cut model. But
if $\cut=\{m\theta\}$ for some $m\in\Z$--we simply say that $\kappa$ is a multiple of $\theta$--then the two orbits coincide and we speak of a 1-cut model. The discontinuous jumps in the potential values happen then in a single chain at two different sites; we call that a flip, in quasicrystal physics this is called a phason flip. 

If $\cut=\{\theta\}$ and this is an irrational number, then the model is the perhaps simplest model for a tight binding operator on a quasicrystalline structure. It has been studied extensively since it was proposed by Kohmoto, with recent work also focusing its topological invariants \cite{kraus2012topological,verbin2015topological,dana2014topologically,baboux2017measuring,jagannathan2025missing,kellendonk_prodan}. Bellissard and his co-authors have studied the model for irrational $\theta$ but with $\cut$ not a multiple of $\theta$. They were interested in the gap labelling. Gap labels are numerical topological invariants associated to the gaps in the spectrum of a Hamiltonian. Based on $K$-theory, Bellissard has defined the gap labelling group of a material. It is a subgroup of $\R$ which has the property that, whenever $E$ is an energy in a gap of the spectrum of the Hamiltonian
then the value of the integrated density of states (IDS) at that energy must belong to that group. Applied to the above operator $H_\phi$ Bellissard  \cite{bellissard1992gap} found that the IDS at an energy in a gap has the form 
\begin{equation}\label{eq-IDS-cut}
	\mathrm{IDS} = N + n_1\theta +n_2\cut
\end{equation}
where $N,n_1,n_2$ are integers, and this independently of the value of $\phi$. As for simple tight binding operators the IDS lies between $0$ and $1$, this means that the pair $(n_1,n_2)$ can be used to label the gap, provided $\theta$ and $\kappa$ are irrational and independent over $\Q$ (if this is not the case then there are ambiguities). An interesting question to ask is whether, given $(n_1,n_2)$, there is a gap in the spectrum which is labelled by these numbers. This need not be the case, but \cite{Bellissard_Iochum_Scoppola_Testard} could show that, if one multiplies the potential with a large enough constant then $H_\phi$ has at least some gaps with labels $(n_1,n_2)$ where $n_2\neq 0$. We will show below that, for $H_\phi$, $n_2$ can only be $0$ or $1$.
Recently \cite{band2024drymartiniproblemsturmian} has shown that if $\cut=\theta$ (so that the second label $n_2$ can be absorbed into the first) then for any choice of $n_1$ there is a gap with that label.

Our question here is: Do $n_1$ and $n_2$ have another physical interpretation? Can we obtain them as a spectral flow? Guided by what is already known, we expect this to be related to a bulk edge correspondence (BEC). Indeed, this approach was fruitful in the interpretation of $n_1$ given in \cite{kellendonk_prodan} for the case $\cut$ a multiple of $\theta$. To achieve this, the standard setup of the BEC had to be modified, notably by smoothening out the discontinuities of the function $\phi\mapsto V_\phi$. This can be done by interpolating the potentials in the following way.
For $\phi\in \mathbb S^1$ and $t\in [0,1]$ let
$$V_{\phi, t}(n) = \left\{\begin{array}{ll}
V_\phi(n) & \mbox{if } n\theta+\phi \neq 0,\cut \\
t & \mbox{if } n\theta+\phi = 0 \\
1-t & \mbox{if } n\theta+\phi =\cut
\end{array}
\right.$$
Figures~\ref{V_0} and \ref{V_0b} show examples for rational $\theta$, the first with $\cut$ not a multiple of $\theta$, the second with $\cut$ a multiple of $\theta$. It shows the values of this potential at some sites. At $t=0$ the potential  $V_{0,t}$ coincides with $V_0$ and 
as $t$ goes to $1$ the potential  $V_{0,t}$ goes to $V_\phi$ for some small $\phi>0$ making the above described jump (or phason flip) continuous. 
\begin{figure*}
	\centering
	\begin{subfigure}[t]{0.45\textwidth}
		\centering
		\begin{tikzpicture}[line cap=round,line join=round,x=1.0cm,y=1.0cm]
			\clip(-2.6,-0.6) rectangle (1.5,1.5);
			\draw [domain=-2.5:2.5] plot(\x,{(-0-0*\x)/10});
			\begin{scriptsize}
				\draw[loosely dotted] (-2.3,1) -- (2.5,1);
				\draw[color=black] (-2.5,1) node {1};
				\fill [color=black] (0,0) circle (1.5pt);
				\draw (0,0) -- (0,0);
				\draw[color=black] (0,0.2) node {$V_0(0)$};
				\draw[color=black] (0.09,-0.2) node {0};
				\fill [color=black] (1,0) circle (1.5pt);
				\draw (1,0) -- (1,1);
				\draw[color=black] (1,1.2) node {$V_0(1)$};
				\draw[color=black] (1.09,-0.2) node {1};
				\fill [color=black] (2,0) circle (1.5pt);
				\draw[color=black] (2.09,-0.2) node {2};
				\fill [color=black] (3,0) circle (1.5pt);
				\draw[color=black] (3.09,-0.2) node {3};
				\fill [color=black] (-1,0) circle (1.5pt);
				\draw[color=black] (-1,0.2) node {$V_0(-1)$};
				\draw[color=black] (-0.88,-0.2) node {-1};
				\fill [color=black] (-2,0) circle (1.5pt);
				\draw (-2,0) -- (-2,1);
				\draw[color=black] (-2,1.2) node {$V_0(-2)$};
				\draw[color=black] (-1.88,-0.2) node {-2};
			\end{scriptsize}
		\end{tikzpicture}
		\subcaption{Values of $V_0$}
	\end{subfigure}%
	~ 
	\begin{subfigure}[t]{0.45\textwidth}
		\centering
		\begin{tikzpicture}[line cap=round,line join=round,x=1.0cm,y=1.0cm]
			\clip(-2.6,-0.6) rectangle (1.5,1.5);
			\draw [domain=-2.5:2.5] plot(\x,{(-0-0*\x)/10});
			\begin{scriptsize}
				\draw[loosely dotted] (-2.3,1) -- (2.5,1);
				\draw[color=black] (-2.5,1) node {1};
				\fill [color=black] (0,0) circle (1.5pt);
				\draw (0,0) -- (0,0.5);
				\draw[color=black] (0,0.7) node {$V_{(0,t)}(0)=t$};
				\draw[color=black] (0.09,-0.2) node {0};
				\fill [color=black] (1,0) circle (1.5pt);
				\draw (1,0) -- (1,1);
				\draw[color=black] (1,1.2) node {$V_{(0,t)}(1)$};
				\draw[color=black] (1.09,-0.2) node {1};
				\fill [color=black] (2,0) circle (1.5pt);
				\draw[color=black] (2.09,-0.2) node {2};
				\fill [color=black] (3,0) circle (1.5pt);
				\draw[color=black] (3.09,-0.2) node {3};
				\fill [color=black] (-1,0) circle (1.5pt);
				\draw[color=black] (-1,0.2) node {$V_{(0,t)}(-1)$};
				\draw[color=black] (-0.88,-0.2) node {-1};
				\fill [color=black] (-2,0) circle (1.5pt);
				\draw (-2,0) -- (-2,1);
				\draw[color=black] (-2,1.2) node {$V_{(0,t)}(-2)$};
				\draw[color=black] (-1.88,-0.2) node {-2};
			\end{scriptsize}
		\end{tikzpicture}
		\subcaption{Values of $V_{(0,t)}$}
	\end{subfigure}

	\caption{Values of the potential for $\phi=0$, $\theta=0.2$ and $\cut=0.7$, $t=0.5$.}\label{V_0}
\end{figure*}
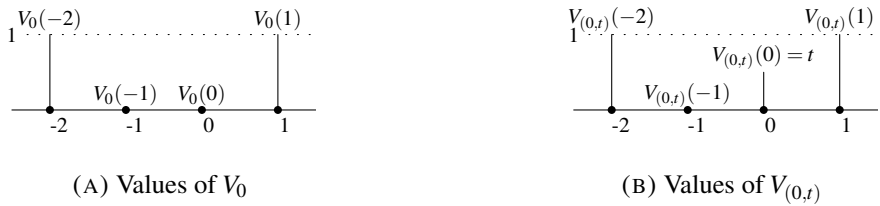
\begin{figure*}
	\centering
	\begin{subfigure}[t]{0.45\textwidth}
		\centering
		\begin{tikzpicture}[line cap=round,line join=round,x=1.0cm,y=1.0cm]
			\clip(-2.6,-0.6) rectangle (3.5,1.5);
			\draw [domain=-2.5:3.5] plot(\x,{(-0-0*\x)/10});
			\begin{scriptsize}
				\draw[loosely dotted] (-2.3,1) -- (3.5,1);
				\draw[color=black] (-2.5,1) node {1};
				\fill [color=black] (0,0) circle (1.5pt);
				\draw (1,0) -- (1,1);
				\draw[color=black] (0,0.2) node {$V_0(0)$};
				\draw[color=black] (0.09,-0.2) node {0};
				\fill [color=black] (1,0) circle (1.5pt);
				\draw[color=black] (1,1.2) node {$V_0(1)$};
				\draw[color=black] (1.09,-0.2) node {1};
				\fill [color=black] (2,0) circle (1.5pt);
				\draw[color=black] (2,0.2) node {$V_0(2)$};
				\draw[color=black] (2.09,-0.2) node {2};
				\fill [color=black] (3,0) circle (1.5pt);
				\draw (3,0) -- (3,1);
				\draw[color=black] (3,1.2) node {$V_0(3)$};
				\draw[color=black] (3.09,-0.2) node {3};
				\fill [color=black] (-1,0) circle (1.5pt);
				\draw[color=black] (-1,0.2) node {$V_0(-1)$};
				\draw[color=black] (-0.88,-0.2) node {-1};
				\fill [color=black] (-2,0) circle (1.5pt);
				\draw (-2,0) -- (-2,1);
				\draw[color=black] (-2,1.2) node {$V_0(-2)$};
				\draw[color=black] (-1.88,-0.2) node {-2};
			\end{scriptsize}
		\end{tikzpicture}
		\subcaption{Values of $V_0$}
	\end{subfigure}%
	~ 
	\begin{subfigure}[t]{0.45\textwidth}
		\centering
		\begin{tikzpicture}[line cap=round,line join=round,x=1.0cm,y=1.0cm]
			\clip(-2.6,-0.6) rectangle (3.5,1.5);
			\draw [domain=-2.5:3.5] plot(\x,{(-0-0*\x)/10});
			\begin{scriptsize}
				\draw[loosely dotted] (-2.3,1) -- (3.5,1);
				\draw[color=black] (-2.5,1) node {1};
				\fill [color=black] (0,0) circle (1.5pt);
				\draw (0,0) -- (0,0.3);
				\draw[color=black] (0,0.5) node {$V_{(0,t)}(0)=t$};
				\draw[color=black] (0.09,-0.2) node {0};
				\fill [color=black] (1,0) circle (1.5pt);
				\draw (1,0) -- (1,0.7);
				\draw[color=black] (1,0.8) node {$V_{(0,t)}(1)=1-t$};
				\draw[color=black] (1.09,-0.2) node {1};
				\fill [color=black] (2,0) circle (1.5pt);
				\draw[color=black] (2,0.2) node {$V_{(0,t)}(2)$};
				\draw[color=black] (2.09,-0.2) node {2};
				\fill [color=black] (3,0) circle (1.5pt);
				\draw (3,0) -- (3,1);
				\draw[color=black] (3,1.2) node {$V_{(0,t)}(3)$};
				\draw[color=black] (3.09,-0.2) node {3};
				\fill [color=black] (-1,0) circle (1.5pt);
				\draw[color=black] (-1,0.2) node {$V_{(0,t)}(-1)$};
				\draw[color=black] (-0.88,-0.2) node {-1};
				\fill [color=black] (-2,0) circle (1.5pt);
				\draw (-2,0) -- (-2,1);
				\draw[color=black] (-2,1.2) node {$V_{(0,t)}(-2)$};
				\draw[color=black] (-1.88,-0.2) node {-2};
			\end{scriptsize}
		\end{tikzpicture}
		\subcaption{Values of $V_{(0,t)}$}
	\end{subfigure}
	
	\caption{Values of the potential for  $\phi=0$, $\theta=\cut=0.4$, $t=0.3$.}\label{V_0b}
\end{figure*}
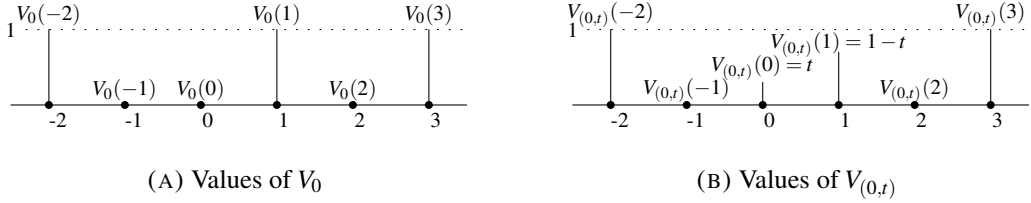
This procedure enlarges our family of Hamiltonians. The enlarged (or augmented) family is given by the Hamiltionians  
$H_{\phi,t}$ which are defined as  
$ H_{\phi}$ but with $V_\phi$ replaced by $V_{\phi,t}$.
We will show that we can do this interpolation on the level of hulls and algebras (we call that augmentation). 
Technically we work with two extensions of algebras, one defined by the augmentation of the bulk algebra and the other by the Toeplitz extension of the augmented bulk algebra. The resulting boundary maps in $K$-theory allow us identify $n_1$ and $n_2$ with topological invariants of different systems. The first, $n_1$ is essentially the same as in \cite{kellendonk_prodan}. It is associated to the edge system of the augmented system. If $n_2$ vanishes then $n_1$ is given by
\begin{equation}\label{eq-n1}
	n_1=\frac1{|\Delta|} \sum_{n\in\Z} \int_0^1 {\mathrm{Tr_{\ell^2(\Z^{\geq 0})}}}\big( 
	P_{\Delta}(\hat H_{\cut+n\theta,t})\partial_t\hat H_{\cut+n\theta,t}
	+P_{\Delta}(\hat H_{n\theta,t})\partial_t \hat H_{n\theta,t}\big)
	\text{d}t,
\end{equation}
where $\hat H_{\phi,t}$ is the restriction of $H_{\phi,t}$ to $\ell^2(\Z^{\geq 0})$. This restriction to half space creates edge states with eigenvalues  in the gap $\Delta$ of spectrum of the family of augmented bulk operators. 
$P_\Delta$ is the projector onto the corresponding eigenspaces.
The quantity $n_1|\Delta|$ has the interpretation of the work which the system exhibits on the edge states when we vary the potential as in Figure~\ref{V_0} to move through all jumps (or phason flips, Figure~\ref{V_0b}, if $\kappa$ is a multiple of $\theta$). The work is finite, as only the jumps near the edge have a significant effect on the edge states. Moreover, we can visualise $n_1$
as the spectral flow through the gap of the eigenvalues of edge states of $\hat H_{\phi,t}$ when $\phi$ varies through $\mathbb S^1$ and $t$ through the added intervals corresponding to the interpolation, see Figure~\ref{primary_gaps_WN}. 

If $\cut=\theta$ (we have a 1-cut model) then $n_2$ can be absorbed into $n_1$ and this situation has been discussed in \cite{kellendonk_prodan}. The Hamiltonian $H_\phi$ is then related to the Harper Hamiltonian $H^{Har}_\phi$ which one obtains if one replaces its potential by $V^{Har}_\phi(n) = 2\cos(n\theta+\phi)$. But this relation is subtle. While, 
$V_\phi$ can be approximated arbitrarily well by continuous deformations of the Harper potential \cite{kraus2012topological}, such an approximation cannot be uniform in $n$ and indeed, $H_\phi$ cannot be approximated by a continuously deformed Harper model in the norm topology. However, as was explained in \cite{kellendonk2024topological}, whenever  $H_\phi$ and $H^{Har}_\phi$ have the same IDS at the Fermi energy and this energy lies in that gap, they belong to the same topological phase (the same $K$-theory class). Finally, as the Harper Hamiltonian and the Hofstadter Hamiltonian arise through different representations of the same algebra element, their topological invariants are the same. This gives an explanation of most of the topological phenomena discussed in \cite{kraus2012topological,verbin2015topological,dana2014topologically,baboux2017measuring} from the point of view of Bellissard's $C^*$-algebraic approach to solid state systems, see also \cite{prodan2015virtual} for more details. 
But this is not what we will do in this work.

The second topological number $n_2$, which arises only if $\cut$ is not a multiple of $\theta$ (2-cut model), is not attached to edge states, but to the spectrum of the augmented bulk operator. It is thus a bulk invariant. 
In our numerical simulation, which requires rational values for $\theta$ and $\kappa$, we can see $n_2$ as a spectral flow density, that is, a spectral flow per unit length. We can also interpret it as a kind of work which the system performs on the bulk modes in the gap which are created through the interpolation. But contrarily to the above work exhibited on the edge states, we must count it with a relative sign, depending on whether $\phi$ belongs to the orbit of $\kappa$ or of $0$. The physical significance is therefore less clear. 

Remains the question what is $n_1$ if $n_2\neq 0$. This question is related to the group operation in $K$-theory which, in its physical interpretation, corresponds to the stacking of systems. We can make use of that here to make $n_1$ also visible as a spectral flow when $n_2\neq 0$. The idea is the following. A tight binding Hamiltonian without kinetic term represents an insulator which usually considered to be topologically trivial. But its IDS need not be $0$. So we take the direct sum of $H_\phi$ with a Hamiltonian which is a pure potential, namely $-V_\phi$, shifted in energy and rescaled in such a way that its central gap (the potential takes only two values) contains the gap $\Delta$ of $H_\phi$. It then turns out  that the direct sum Hamiltonian has an IDS at the gap of the form $N'+n_1\theta$, so the $n_1$-value stays the same whereas $n_2$ becomes zero. The trick is now to couple the two Hamiltonians with a small off-diagonal interaction term. Due to this interaction the interpolation procedure does not completely fill the gap so that we can compute $n_1$ by formula \eqref{eq-n1} where the Hamiltonian is now the augmentation of the coupled direct sum Hamiltonian, restricted to half space. This is an example of a physical realisation of Grothendieck's construction in $K$-theory.

We furthermore analyse in some detail a second augmentation of the bulk algebra which is based on the mapping torus construction. It allows to describe a system with a moving edge. Using this we find that there is an integer $N$ such that 
\begin{equation}\label{eq-IDS-shift}
	\mathrm{IDS} = N + \nu
\end{equation}
where $\nu$ is the spectral flow of the energies of the eigenvalues of the edge states in the gap when one moves the edge by one unit length. $\nu$ is the average work exhibited on the electrons by that move of the edge. Here the average is over all possible positions of the edge in the material. This average is thus topologically quantized.  
This result is comparable to \cite{dana1985quantised,kunz1986quantized}. 
It is the tight binding analog of the analogous result proven in \cite{ZoisKellendonk2005rotation} for Schr\"odinger operators in which the kinetic part is given by the Laplacian. In latter case we have the stronger result that $\mathrm{IDS} = \nu$ with the same interpretation of $\nu$. The need to allow for a possibly non-zero $N$ in the context of tight binding operators is related to an ambiguity in the extension problem. 

We can construct the augmented Hamiltonians via interpolation of the potential, somewhat similar to what we did above, and this leads to the physical interpretation of the r.h.s.\ of \eqref{eq-IDS-shift} through a boundary force. But from the point of view of $K$-theory other choices are possible and there is one, more mathematically then physically motivated, which leads to the equation $\mathrm{IDS} = \nu$. This choice is of theoretical interest, as this equation  says that the $K$-theoretical gap labelling group of Bellissard \cite{Bellissard} is equal to the gap labelling group defined by Johnson and Moser \cite{johnson1982rotation,johnson1986exponential}. An abstract argument showing the equivalence of these two versions of the gap-labelling was sketched \cite{bellissard1992gap}.

Combining (\ref{eq-IDS-cut}) and (\ref{eq-IDS-shift}) we obtain a relation between different spectral flows and these have different physical interpretation. For example, if the Fermi energy lies in a primary gap ($n_2=0$) then, modulo an integer, the average work exhibited on the edge states of the gap by moving the edge by one unit of length is $\theta$ times the work exhibited by the phason flips.
\bigskip

The question of how the above topological numbers $n_1$, $n_2$, $\nu$ can be visualized in a numerical simulation forces us to consider 
the case in which the rotation angle $\theta$ is rational. It is important to realize that, following the philosophy of the $C^*$-algebraic approach, rationally approximating the aperiodic system defined by irrational $\theta$
means that we approximate the hull with its $\Z$-action and not only one Hamiltonian. To explain the difference, if $\theta=\frac{p}{q}$ then the Hamiltonians $H_\phi$ become $q$-periodic. But in contrast to the aperiodic case, the spectrum of $H_\phi$ is no longer independent of $\phi$. 
Therefore, the joint spectrum 
$\bigcup_{\phi} \mathrm{spec}(H_\phi)$ is larger than the individual spectra $\mathrm{spec}(H_\phi)$. As Bellissard's gap labelling applies to the gaps of the joint spectrum the gap labelling group is richer than one would naively expect taking into account only the periodicity of the model. Indeed, the joint spectrum has generically $2q-1$ inner gaps. 
If $n_2=0$ so that there is no spectral flow of the augmented bulk spectrum then the spectral flow of the edge states is nicely visible. We call these gaps primary and the others secondary.
If $n_2\neq 0$ (secondary gap) the augmented bulk spectrum fills the gap and it is not possible to observe the spectral flow of the edge states. To see $n_1$ as the spectral flow of edge states properly, we need to couple the system to one which is an atomic limit as described above. 
In this way we can define and measure the $n_1$-value of (the family) $H_\phi$ also in secondary gaps.
Our simulations confirm (\ref{eq-IDS-cut}) and (\ref{eq-IDS-shift}). Here we have to take into account that the IDS has to be averaged over $\phi$. 

Note that in the rational case (\ref{eq-IDS-cut}) and (\ref{eq-IDS-shift}) are Diophantine equations which do not allow to solve for $n_1$ or $\nu$ in a unique way. Indeed, $n_1$ and $\nu$ are only determined modulo $q$ by the IDS. This so-called $q$-ambiguity is already visible on the level of $K$-theory.

\section{The $C^*$-algebraic approach to solids}
To study the topological invariants of aperiodic materials in the one-particle picture one should not look at a single Hamiltonian but consider a so-called covariant family of Hamiltonians which all have the same underlying spatial structure. This family can be described as a particular set of representatives of a single self-adjoint element $h$ (the abstract Hamiltonian) of a $C^*$-algebra. This algebra is the groupoid $C^*$-algebra of the topological groupoid defined by the spatial structure of the material (the tiling or pattern groupoid). When $h$ has a gap in its spectrum then its spectral projection onto the states of energy below the gap defines a $K_0$-class of the $C^*$-algebra. In the one-particle approximation to solid state systems, this class determines the topological phase of the material.
If the material is considered as infinitely extended and without boundary, then $h$ describes the physics in the interior of the material, its bulk, and the above algebra is referred to as the bulk algebra.
One way to describe the boundary of a material is to confine it to a half-space. Algebraically this can be described by means of an extension of the bulk algebra. The corresponding ideal will then contain the operators which are in some sense localized at the edge (boundary). 
The bulk-edge correspondence (BEC) in its standard form relates topological invariants of the bulk algebra with topological invariants of the ideal. A detailed account of this standard form of the BEC can be found in \cite{prodan2016bulk}, for its adaptation to aperiodic systems we also refer the reader to  \cite{kellendonk_prodan,kellendonk2024topological}.

\subsection{Algebra of short range pattern equivariant operators}
Let us describe the bulk algebra for one dimensional materials with finite local complexity. These have spatial structure which may be described by an infinite chain of symbols which encode the atomic types. We denote by $\mathfrak a$ the set of symbols 
and suppose that there are only finitely many. Thus the spatial structure of the material is given by a two-sided symbolic sequence 
$\omega_0\in \mathfrak a^{\Z}$. A tight binding operator describing the motion of the electrons in the material is an operator on  $\ell^2(\Z)$ of the form 
$$H\psi(n)=\sum_{m\in\Z} H_{nm}\psi(m)$$
where the kernel (or matrix) $(H_{nm})_{nm}$, $H_{nm}\in\C$, satisfies two properties: is it pattern equivariant and of short range. To explain these notions, let 
$$B_R\left[\omega_0\right]=\omega_0(-R) \cdots \omega_0(R) $$
be the word of radius $R\in\N$, that is, size $2R+1$ centered at $0$. We say that a function $f : \Z \rightarrow \C$ is pattern equivariant with range $R$ if for all $n,m \in \Z$
$$B_R [\omega_0 -n] = B_R [\omega_0 -m] \implies f (n) = f (m).$$
Here $\omega-n$ denotes the sequence $\omega$ shifted $n$ units to the left, that is $(\omega-n)(k) = \omega(k+n)$.
An operator $H$ with kernel $(H_{nm})_{nm}$ is pattern equivariant with range $R$ if for all $n_1, n_2, m_1, m_2 \in \Z$ with $n_2 - n_1 = m_2 - m_1$
$$\left(B_R [\omega_0 - n_1] = B_R [\omega_0 - m_1] \text{ and } B_R [\omega_0 - n_2] = B_R [\omega_0 - m_2] \right) \implies H_{n_1 n_2} = H_{m_1 m_2} .$$
An operator has finite interaction range if there is $M>0$ such that $H _{nm}=0$ if $|n-m|>M$.

\newcommand{\CPE}{\mathcal{C}_{\omega_0}}
\begin{definition} Let $\omega_0\subset\mathfrak a^\Z$ be two-sided sequence in the alphabet $\mathfrak a$. 
	The algebra of pattern equivariant functions 
	$\CPE$ is the smallest norm-closed sub-algebra of all bounded complex valued functions on $\Z$ containing
	all pattern equivariant functions of finite radius.  
	The algebra of pattern equivariant short range operators $\mathcal{A}_{\omega_0}$ is the smallest closed sub-algebra of $\mathcal{B}(\mathcal{H})$ containing operators of finite interaction range which are pattern equivariant with finite radius. 
\end{definition}
The simplest tight binding operators are of Kohmoto type, namely they are of the form $H = T+T^*+V$ with $V(n) = r(\omega_0(n))$ where $r:\mathfrak a \to \R$ is a function assigning to a symbol the potential value induced by the atom with that symbol, and $T\psi(n) = \psi(n-1)$. These operators 
are pattern equivariant with radius $0$ and have interaction range $1$. 


As $\CPE$ is a unital commutative $C^*$-algebra, it is isomorphic to the algebra of continuous complex valued functions on a compact Hausdorff space. This space is called the hull of $\omega_0$ and we denote it by
$\Xi_{\omega_0}$. If $\omega_0$ is the sequence underlying the spatial structure of a material we call $\Xi_{\omega_0}$ also the hull of the material. The hull can be described as follows: the set of all sequences $\mathfrak a^\Z$ is a compact totally disconnected space when equipped with the product topology where $\mathfrak a$ carries the discrete topology. The group $\Z$ acts on this space by shifting the sequence to the left. Then $\Xi_{\omega_0}$ is $\overline{Orb(\omega_0)}$, the closure of the orbit of $\omega_0$ under the shift action. An isomorphism 
$C(\Xi_{\omega_0})\ni \tilde f \mapsto f \in  \CPE$ is given by $$f(n) = \tilde f(\omega_0-n).$$

Each point $\omega\in\Xi_{\omega_0}$ is a sequence of symbols and so  describes a configuration of atoms, but it is such that all finite words of $\omega$ occur somewhere in $\omega_0$. 
When $\omega_0$ is repetitive, which means that any finite word of it reoccurs again and again to the left and to the right of the sequence and is equivalent to $\Xi_\omega = \Xi_{\omega_0}$ 
for all $\omega\in\Xi_{\omega_0}$, 
then we may say that all points of $\Xi_{\omega_0}$ are equally well suited to describe the physics of the material. This is the justification of why we look 
at families of Hamiltonians instead of only one Hamiltonian. Indeed, any sequence $\omega\in\Xi_{\omega_0}$ defines a potential 
through $V_\omega(n) = r(\omega(n))$ and therefore a Kohmoto model $H_\omega := T + T^* + V_\omega$.

For our main model we look at a very specific sequence $\omega_0$, namely we take $\mathfrak a = \{a,b\}$ and 
$$\omega_0(n) = \left\{\begin{array}{ll}
	a & \mbox{if } 0 < \{n\theta\} \leq \cut \\
	b & \mbox{otherwise} \end{array}\right. $$
and $r(a) = 1$, $r(b)=0$.
For $\theta=\cut=\frac{3-\sqrt{5}}{2}\approx 0.382$ $\omega_0$ is a symbolic Fibonacci sequence, a part of which looks like
\begin{figure}[!ht]
	\centering
	\begin{tikzpicture}[line cap=round,line join=round,x=1.0cm,y=1.0cm]
		\clip(-5.5,-0.6) rectangle (5.5,0.5);
		\draw [domain=-5.5:5.5] plot(\x,{(-0-0*\x)/10});
		\draw (-0.2,-0.14) node[anchor=north west] {$b$};
		\draw (-7.2,-0.14) node[anchor=north west] {$b$};
		\draw (-6.2,-0.14) node[anchor=north west] {$b$};
		\draw (-4.2,-0.14) node[anchor=north west] {$b$};
		\draw (-3.2,-0.14) node[anchor=north west] {$b$};
		\draw (-1.2,-0.14) node[anchor=north west] {$b$};
		\draw (0.8,-0.2) node[anchor=north west] {$a$};
		\draw (1.8,-0.14) node[anchor=north west] {$b$};
		\draw (3.8,-0.14) node[anchor=north west] {$b$};
		\draw (4.8,-0.16) node[anchor=north west] {$a$};
		\draw (5.8,-0.14) node[anchor=north west] {$b$};
		\draw (2.8,-0.2) node[anchor=north west] {$a$};
		\draw (-2.2,-0.2) node[anchor=north west] {$a$};
		\draw (-5.2,-0.2) node[anchor=north west] {$a$};
		\begin{scriptsize}
			\fill [color=gray] (0,0) circle (1.5pt);
			\draw[color=gray] (0.09,0.15) node {0};
			\fill [color=black] (1,0) circle (1.5pt);
			\draw[color=black] (1.09,0.15) node {1};
			\fill [color=black] (2,0) circle (1.5pt);
			\draw[color=black] (2.09,0.15) node {2};
			\fill [color=gray] (3,0) circle (1.5pt);
			\draw[color=gray] (3.09,0.15) node {3};
			\fill [color=black] (4,0) circle (1.5pt);
			\draw[color=black] (4.09,0.15) node {4};
			\fill [color=gray] (5,0) circle (1.5pt);
			\draw[color=gray] (5.09,0.15) node {5};
			\fill [color=black] (-1,0) circle (1.5pt);
			\draw[color=black] (-0.88,0.15) node {-1};
			\fill [color=gray] (-2,0) circle (1.5pt);
			\draw[color=gray] (-1.88,0.15) node {-2};
			\fill [color=black] (-3,0) circle (1.5pt);
			\draw[color=black] (-2.89,0.15) node {-3};
			\fill [color=black] (-4,0) circle (1.5pt);
			\draw[color=black] (-3.89,0.15) node {-4};
			\fill [color=gray] (-5,0) circle (1.5pt);
			\draw[color=gray] (-4.89,0.15) node {-5};
			\fill [color=black] (-6,0) circle (1.5pt);
			\draw[color=black] (-5.89,0.15) node {-6};
			\fill [color=black] (6,0) circle (1.5pt);
			\draw[color=black] (6.09,0.15) node {6};
			\fill [color=black] (7,0) circle (1.5pt);
			\draw[color=black] (7.09,0.15) node {7};
			\fill [color=black] (-7,0) circle (1.5pt);
			\draw[color=black] (-6.88,0.15) node {-7};
		\end{scriptsize}
	\end{tikzpicture}
\end{figure}

The particular form of this sequence allows us to view the hull in a far more concrete way. Let
\begin{equation*}\label{cut_up_circle}
	\R_{\theta, \cut}:=\{(\phi,\epsilon) | \phi\in \R,\epsilon \in \{\pm 1\}\mbox{ if }\phi\in
	\{0,\cut\}+\Z+\theta\Z, 
	\epsilon = o \mbox{ otherwise}\}. 
\end{equation*}
Thus we have doubled the points of the form $m+n\theta$ or $\cut+m+n\theta$ with $m,n\in\Z$.   
Consider the total order 
$$  (\phi,\epsilon) < (\phi',\epsilon') \Leftrightarrow \phi < \phi' \mbox{ or } 
\phi = \phi', \epsilon <\epsilon ' $$
(with the understanding that $o<o$ is false). This order induces a topology on $\R_{\theta, \cut}$, namely its open intervals are those of the form
$$((\phi,\epsilon) , (\phi',\epsilon'))   = \{(\phi'',\epsilon'' )| (\phi,\epsilon) < (\phi'',\epsilon'')  < (\phi',\epsilon')\}.$$
Note that the open interval 
$((n\theta+m,-),(n'\theta+m',+))$, $n\theta+m<n'\theta+m'$, is equal to the closed interval $[(n\theta+m,+),(n'\theta+m',-)]$. 
We have disconnected the real line along the cut points $$\mathcal C:=\{0,\cut\}+\Z+\theta\Z$$ by doubling them. Finally, we take the quotient
\begin{equation}\label{eq-hull}
\omega_{\theta, \cut} :=\R_{\theta, \cut}/\Z
\end{equation}
by the action of $\Z$ on $\R_{\theta, \cut}$ by translation of the first coordinate $(\phi,\epsilon)\mapsto (\phi+1,\epsilon)$. 
Define
$$v_-(x) = \left\{\begin{array}{ll}
	a & \mbox{if } 0 < \{x\} \leq \cut \\
	b & \mbox{otherwise} \end{array}
\right.\!\!\!\!\!,\:\:\:
v_+(x) = \left\{\begin{array}{ll}
	a & \mbox{if } 0 \leq \{x\} < \cut \\
	b & \mbox{else} \end{array}
\right.\!\!\!\!\!,\:\:\:
v_o(x) = \left\{\begin{array}{ll}
	a & \mbox{if } 0<\{x\} < \cut \\
	b & \mbox{else} \end{array}
\right.$$
Note that the above $\omega_0$ is given by $\omega_0(n)=v_-(n)$.
\begin{theorem}  
	The map $f:\Xi_{\theta, \cut}\to \overline{Orb(\omega_0)}$, 
	$$ f(\phi,\epsilon)=\omega_{\phi,\epsilon}(n) := v_\epsilon(n\theta + \phi)$$
	is a homeomorphism which intertwines the rotation action on $\Xi_{\theta, \cut}$ with the shift action on $\overline{Orb(\omega_0)}\subset \{a,b\}^\Z$.
\end{theorem}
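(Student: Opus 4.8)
The plan is to show that $f$ is a continuous, shift-equivariant bijection and then conclude that it is a homeomorphism by compactness. Throughout I use that $\theta$ is irrational, so the orbit $\{n\theta\bmod 1:n\in\Z\}$ is dense in $\mathbb S^1$. I would first dispose of the formal points: since $v_-,v_+,v_o$ depend on their argument only through its fractional part, $f(\phi+1,\epsilon)=f(\phi,\epsilon)$, so $f$ descends to $\Xi_{\theta,\cut}=\R_{\theta,\cut}/\Z$; and since $v_\epsilon((n+1)\theta+\phi)=v_\epsilon(n\theta+(\phi+\theta))$ one has $f\circ R_\theta=S\circ f$, where $R_\theta\colon(\phi,\epsilon)\mapsto(\phi+\theta,\epsilon)$ is the well-defined rotation of $\Xi_{\theta,\cut}$ (it permutes the cut orbit $\{0,\cut\}+\Z+\theta\Z$ and preserves $\epsilon$) and $S$ is the left shift. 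Evaluating at $(\phi,\epsilon)=(n\theta,-)$ gives $f(n\theta,-)(k)=v_-((k+n)\theta)=\omega_0(k+n)=(\omega_0-n)(k)$, so $f(n\theta,-)=\omega_0-n$ and $Orb(\omega_0)\subseteq f(\Xi_{\theta,\cut})$.

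Next I would prove continuity. It suffices that for each fixed $n$ the $\{a,b\}$-valued function $(\phi,\epsilon)\mapsto v_\epsilon(n\theta+\phi)$ be locally constant on $\Xi_{\theta,\cut}$. The key observation is that $v_-,v_+,v_o$ disagree only at the two points $0,\cut\bmod1$, where $v_-$ is left-continuous and $v_+$ right-continuous. Hence, if $n\theta+\phi_0\notin\{0,\cut\}\bmod1$, a short real interval around $\phi_0$ keeps $n\theta+\phi$ off $\{0,\cut\}$ and all three $v$'s are constant there; and if $n\theta+\phi_0\in\{0,\cut\}\bmod1$ then $\phi_0$ lies in the cut orbit, the point is doubled, and a basic one-sided neighbourhood of $(\phi_0,-)$, namely $(\,\cdot\,,(\phi_0,-)]=(\,\cdot\,,(\phi_0,+))$, consists of points just below $\phi_0$ on which (after shrinking) $v_\epsilon(n\theta+\phi)$ is the constant $v_-(n\theta+\phi_0)$, with the symmetric statement at $(\phi_0,+)$. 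Granting continuity, the doubled points $(\{n\theta\},-)$ are dense in $\Xi_{\theta,\cut}$ (every non-empty basic interval has strictly ordered endpoints and $\{n\theta\}$ is dense), so $f(\Xi_{\theta,\cut})\subseteq\overline{Orb(\omega_0)}$. Surjectivity then comes for free: $\Xi_{\theta,\cut}$ is compact — a ``doubled circle'', realizable as the inverse limit of circles with finitely many cut points split, each a finite disjoint union of closed arcs and hence compact — so $f(\Xi_{\theta,\cut})$ is a closed set containing $Orb(\omega_0)$, hence equal to $\overline{Orb(\omega_0)}$.

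For injectivity I would suppose $f(\phi,\epsilon)=f(\phi',\epsilon')$ with $\phi,\phi'\in[0,1)$ and derive a contradiction unless $(\phi,\epsilon)=(\phi',\epsilon')$. Writing $I_-=(0,\cut]$, $I_+=[0,\cut)$, $I_o=(0,\cut)$, the hypothesis says the dense orbit $\{n\theta+\phi\bmod1\}$ avoids $D:=I_\epsilon\,\triangle\,(I_{\epsilon'}-\delta)$, where $\delta=\phi'-\phi$. If $\delta\neq0$, $D$ is the symmetric difference of two distinct arcs of length $\cut\in(0,1)$ on $\mathbb S^1$, which always contains a non-empty open arc — impossible to avoid with a dense set. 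Hence $\phi=\phi'$, so $D\subseteq\{0,\cut\}$; if moreover $\epsilon\neq\epsilon'$ then $\phi$ is in the cut orbit, so there is $n_0$ with $n_0\theta+\phi\in\{0,\cut\}\bmod1$, and since $\{0,\cut\}=I_-\triangle I_+$ this yields a coordinate at which the two sequences differ — again a contradiction.

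Putting this together, $f$ is a continuous bijection from the compact space $\Xi_{\theta,\cut}$ onto the Hausdorff space $\overline{Orb(\omega_0)}$, hence a homeomorphism, and the intertwining $f\circ R_\theta=S\circ f$ was already noted. I expect the main obstacle to be the topological bookkeeping around the doubled points: correctly matching the order topology of $\Xi_{\theta,\cut}$ near $(\phi,\pm)$ with the one-sided continuity of $v_\pm$ (both for continuity and, implicitly, for identifying limits of translates of $\omega_0$), together with a clean justification that the doubled circle is compact. By comparison, once density of the orbit and the symmetric-difference picture are in place, injectivity is routine.
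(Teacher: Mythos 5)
Your proposal is correct, and while its skeleton (intertwining, continuity, density of the doubled orbit) matches the paper's, the two decisive steps are handled by different mechanisms. The paper proves continuity by checking that the preimage of the cylinder set $Z_0(a)$ is the clopen interval $((0,-),(\cut,+))$ and propagating this by equivariance, and it then concludes that $f$ is a homeomorphism by observing that $f^{-1}$ induces a bijection between sub-bases of the two spaces — injectivity is only implicit there, hidden in the claim that the orbit of $(0,-)$ is mapped bijectively onto $Orb(\omega_0)$ and in the sub-base correspondence. You instead prove continuity by local constancy of each coordinate function (equivalent in substance, since matching the one-sided neighbourhoods of $(\phi_0,\pm)$ with the one-sided continuity of $v_\pm$ is exactly what makes the cylinder preimages clopen), and you make injectivity fully explicit: $f(\phi,\epsilon)=f(\phi',\epsilon')$ forces the dense orbit $\{n\theta+\phi\}$ to avoid $I_\epsilon\triangle(I_{\epsilon'}-\delta)$, which contains a nonempty open arc unless $\delta=0$, and the residual case $\epsilon\neq\epsilon'$ is killed because $\phi$ then lies in the cut orbit and $v_-\neq v_+$ at $0,\cut$. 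You then close with the compact-to-Hausdorff argument rather than the sub-base bijection. What your route buys is a complete and checkable injectivity proof (the weakest point of the paper's write-up) at the modest cost of having to justify compactness of the doubled circle, which you correctly identify as the remaining bookkeeping (the inverse-limit-of-circles description, or a direct order-topology argument as for the split interval, settles it). One remark: you assume $\theta$ irrational throughout; this is indeed tacitly assumed in the theorem (for rational $\theta$ the statement is false, as $\overline{Orb(\omega_0)}$ is then finite while $\Xi_{\theta,\cut}$ is not), and the paper's own proof uses the same density hypothesis, so this is not a defect of your argument.
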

\begin{proof}
	The intertwining property is easily verified.
Clearly $f$ maps $\Xi_{\theta, \cut}$ into $\{a,b\}^\Z$. A sub-base of the topology of $\{a,b\}^\Z$ is given by the sets  $Z_n(x) := \{\omega\in \{0,1\}^\Z:\omega(n) = x\}$ where $x\in \{a,b\}$. We have 
	$\omega_{\phi,+}(0)=a$ if and only if $0\leq \{\phi\} <\cut$ if and only if 
	$(0,-)<(\{\phi\},+) < (\cut,+)$. Similarly $\omega_{\phi,-}(0)=a$ if and only if $0< \{\phi\} \leq \cut$ if and only if 
	$(0,-)<(\{\phi\},-) < (\cut,+)$, and also $\omega_{\phi,o}(0)=a$ if and only if $(0,-)<(\{\phi\},o) < (\cut,+)$. We thus see that the preimage of $Z_0(a)$ is the clopen interval $((0,-),(\cut,+))$. Clearly $Z_0(b)$ is its complement, which is also clopen. The intertwining property now implies that also the preimages of the other cylinder sets are clopen. Hence $f$ is continuous. Since the orbit of $(0,-)$ is dense in $\Xi_{\theta, \cut}$ and mapped bijectively to the orbit of $\omega_0$, the image of $f$ must be the closure of this orbit. We thus see that the preimage map $f^{-1}$ is a bijection between sub-bases of $\overline{Orb(\omega_0)}$ and $\Xi_{\theta, \cut}$. As these spaces are Hausdorff this implies that $f$ is a homeomorphism.  
\end{proof}
\begin{remark}\label{cutandproject_remark}{\rmfamily
A powerful feature of the $C^*$-algebraic formulation of the topological invariants is, that different models can be treated on the same footing, if they are attached to the same abstract dynamical system. The above Kohmoto model is the simplest quasicrystal model.
Its hull $\Xi_{\theta, \cut}$ can equivalently be described via a cut \& project scheme and this leads naturally to other models which have the same hull. To fix ideas, we consider $\cut\le\theta<\frac{1}{2}$. Consider the integer lattice $\Z^2$ in $\R^2$, and parametrize the anti-diagonal $D$ of the unit cell $[0, 1]^2\subset \R^2$ by $[0, 1]: [0, 1]\ni\phi \mapsto i(\phi) \coloneqq \phi e_2 + (1 -\phi)e_1 \in D$. Let $L_\theta$ be the line in $\R^2$ which passes through the origin $0 \in\R^2$ and the point on $D$ parametrized by $\theta$. We define
		$$D'\coloneqq\{i(\phi)\in D|\theta\le\phi\le\theta+\cut\}$$
		and we choose an arbitrary point $v_\delta\in L_\theta$  at distance $\delta$ (small enough) from the origin. Then we define
		$$W'\coloneqq -\pi^\perp (\overline{D\setminus D'})\cup (-\pi^\perp (D')+v_\delta).$$
		
		To obtain a sequence with intercept $\phi \in [0, 1)$ consider the intersection of $L_\phi\coloneqq L_\theta+ i(\phi-\theta)$ with $\Z^2 + W'$. This pattern of points of $L_\phi$ can be ordered, as $L_\phi$ is a line, and so is given by a sequence $w(\phi) = (w_n (\phi))_{n\in\Z}$ of points. In figure \ref{CutProj_twoCuts} we can visualise the construction of the sequence for a choice of $\phi$. Each of these points can be uniquely specified by their $\varphi_n=\{\phi+n\theta\}$ coordinate on $D$. We call $NS$ the set of values for $\phi$ for which $L_\phi$ does not intersect the boundary of $\Z^2+W'$. When $\phi\in NS$, the difference vectors $\omega_\phi(n)=w_n - w_{ n-1}$ take four different values, namely
		$$\omega_\phi(n+1)=\left\{
		\begin{array}{llll}
			a+\delta\eqqcolon c & \mbox{if } \varphi_n<\cut\\
			a & \mbox{if } \cut<\varphi_n<\theta\\
			b-\delta\eqqcolon d & \mbox{if } \theta<\varphi_n<\theta+\cut\\
			b & \mbox{if } \varphi_n>\theta+\cut, 
		\end{array} \right.$$
		where $a$ is the distance between two consecutive segments in the vertical direction and $b$ in the horizontal direction.  
		We see that we can complete the subset of sequences $\omega_\phi(n)$ for $\phi\in NS$ in the topology of $\mathfrak{a} ^\Z$ defining for a singular value $\phi_s\notin NS$ the left and right limits
		$$\omega_{\phi_s}(n)=\lim_{\phi\rightarrow \phi_s^\pm}\omega_\phi(n)$$
		thus doubling, as above, the singular points. A natural tight binding Hamiltonian based on the sequence $\omega_\phi$ is one in which the hoping probabilities $H_{n\!-\!1\, n}$ depend on $w_n - w_{ n-1}$, while the potential may be constant. But from the topological perspective, such a model is equivalent to a Kohmoto model. In fact, a closer look reveals that the hulls and more generally the algebras of pattern equivariant operators of short range are isomorphic.
		
		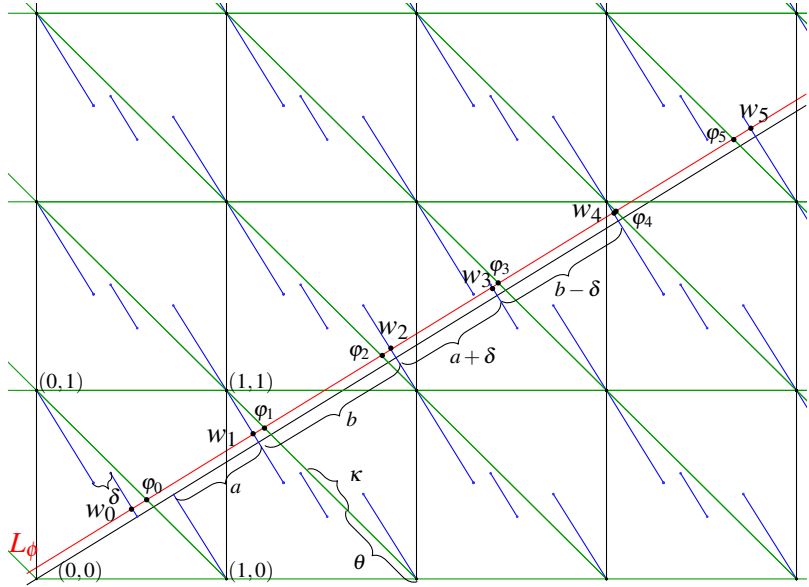
\begin{figure}
			\centering
			\definecolor{ttttff}{rgb}{0.2,0.2,1}
			\definecolor{qqzzqq}{rgb}{0,0.6,0}
			\begin{tikzpicture}[line cap=round,line join=round,x=2.5cm,y=2.5cm]
				\clip(-0.15,-0.05) rectangle (4.05,3.05);
				\draw [color=qqzzqq] (-1,1)-- (0,0);
				\draw [color=qqzzqq] (1,0)-- (0,1);
				\draw [domain=-0.05:4.05] plot(\x,{(-0--0.62*\x)/1});
				\draw [color=qqzzqq] (-1,1)-- (0,1);
				\draw [line width=0.4pt,color=blue] (0.72,0.45)-- (1,0);
				\draw [line width=0.4pt,color=blue] (0,1)-- (0.3,0.51);
				\draw [color=qqzzqq] (0,0)-- (1,0);
				\draw [line width=0.4pt,dash pattern=on 2pt off 2pt,color=blue] (0.3,0.51)-- (0.3,0.51);
				\draw [color=qqzzqq] (0,1)-- (1,0);
				\draw [color=qqzzqq] (2,0)-- (1,1);
				\draw [color=qqzzqq] (0,1)-- (1,1);
				\draw [line width=0.4pt,color=blue] (1.72,0.45)-- (2,0);
				\draw [line width=0.4pt,color=blue] (1,1)-- (1.3,0.51);
				\draw [color=qqzzqq] (1,0)-- (2,0);
				\draw [line width=0.4pt,dash pattern=on 2pt off 2pt,color=blue] (1.3,0.51)-- (1.3,0.51);
				\draw [color=qqzzqq] (1,1)-- (2,0);
				\draw [color=qqzzqq] (3,0)-- (2,1);
				\draw [color=qqzzqq] (1,1)-- (2,1);
				\draw [line width=0.4pt,color=blue] (2.72,0.45)-- (3,0);
				\draw [line width=0.4pt,color=blue] (2,1)-- (2.3,0.51);
				\draw [color=qqzzqq] (2,0)-- (3,0);
				\draw [line width=0.4pt,dash pattern=on 2pt off 2pt,color=blue] (2.3,0.51)-- (2.3,0.51);
				\draw [color=qqzzqq] (2,1)-- (3,0);
				\draw [color=qqzzqq] (4,0)-- (3,1);
				\draw [color=qqzzqq] (2,1)-- (3,1);
				\draw [line width=0.4pt,color=blue] (3.72,0.45)-- (4,0);
				\draw [line width=0.4pt,color=blue] (3,1)-- (3.3,0.51);
				\draw [color=qqzzqq] (3,0)-- (4,0);
				\draw [line width=0.4pt,dash pattern=on 2pt off 2pt,color=blue] (3.3,0.51)-- (3.3,0.51);
				\draw [color=qqzzqq] (3,1)-- (4,0);
				\draw [color=qqzzqq] (5,0)-- (4,1);
				\draw [color=qqzzqq] (3,1)-- (4,1);
				\draw [line width=0.4pt,color=blue] (4,1)-- (4.3,0.51);
				\draw [color=qqzzqq] (4,0)-- (5,0);
				\draw [color=qqzzqq] (4,1)-- (5,0);
				\draw [color=qqzzqq] (4,1)-- (5,1);
				\draw [color=qqzzqq] (-1,2)-- (0,1);
				\draw [color=qqzzqq] (1,1)-- (0,2);
				\draw [color=qqzzqq] (-1,2)-- (0,2);
				\draw [line width=0.4pt,color=blue] (0.72,1.45)-- (1,1);
				\draw [line width=0.4pt,color=blue] (0,2)-- (0.3,1.51);
				\draw [color=qqzzqq] (0,1)-- (1,1);
				\draw [line width=0.4pt,dash pattern=on 2pt off 2pt,color=blue] (0.3,1.51)-- (0.3,1.51);
				\draw [color=qqzzqq] (0,2)-- (1,1);
				\draw [color=qqzzqq] (2,1)-- (1,2);
				\draw [color=qqzzqq] (0,2)-- (1,2);
				\draw [line width=0.4pt,color=blue] (1.72,1.45)-- (2,1);
				\draw [line width=0.4pt,color=blue] (1,2)-- (1.3,1.51);
				\draw [color=qqzzqq] (1,1)-- (2,1);
				\draw [line width=0.4pt,dash pattern=on 2pt off 2pt,color=blue] (1.3,1.51)-- (1.3,1.51);
				\draw [color=qqzzqq] (1,2)-- (2,1);
				\draw [color=qqzzqq] (3,1)-- (2,2);
				\draw [color=qqzzqq] (1,2)-- (2,2);
				\draw [line width=0.4pt,color=blue] (2.72,1.45)-- (3,1);
				\draw [line width=0.4pt,color=blue] (2,2)-- (2.3,1.51);
				\draw [color=qqzzqq] (2,1)-- (3,1);
				\draw [line width=0.4pt,dash pattern=on 2pt off 2pt,color=blue] (2.3,1.51)-- (2.3,1.51);
				\draw [color=qqzzqq] (2,2)-- (3,1);
				\draw [color=qqzzqq] (4,1)-- (3,2);
				\draw [color=qqzzqq] (2,2)-- (3,2);
				\draw [line width=0.4pt,color=blue] (3.72,1.45)-- (4,1);
				\draw [line width=0.4pt,color=blue] (3,2)-- (3.3,1.51);
				\draw [color=qqzzqq] (3,1)-- (4,1);
				\draw [line width=0.4pt,dash pattern=on 2pt off 2pt,color=blue] (3.3,1.51)-- (3.3,1.51);
				\draw [color=qqzzqq] (3,2)-- (4,1);
				\draw [color=qqzzqq] (5,1)-- (4,2);
				\draw [color=qqzzqq] (3,2)-- (4,2);
				\draw [line width=0.4pt,color=blue] (4,2)-- (4.3,1.51);
				\draw [color=qqzzqq] (4,1)-- (5,1);
				\draw [color=qqzzqq] (4,2)-- (5,1);
				\draw [color=qqzzqq] (4,2)-- (5,2);
				\draw [color=qqzzqq] (-1,3)-- (0,2);
				\draw [color=qqzzqq] (1,2)-- (0,3);
				\draw [color=qqzzqq] (-1,3)-- (0,3);
				\draw [line width=0.4pt,color=blue] (0.72,2.45)-- (1,2);
				\draw [line width=0.4pt,color=blue] (0,3)-- (0.3,2.51);
				\draw [color=qqzzqq] (0,2)-- (1,2);
				\draw [line width=0.4pt,dash pattern=on 2pt off 2pt,color=blue] (0.3,2.51)-- (0.3,2.51);
				\draw [color=qqzzqq] (0,3)-- (1,2);
				\draw [color=qqzzqq] (2,2)-- (1,3);
				\draw [color=qqzzqq] (0,3)-- (1,3);
				\draw [line width=0.4pt,color=blue] (1.72,2.45)-- (2,2);
				\draw [line width=0.4pt,color=blue] (1,3)-- (1.3,2.51);
				\draw [color=qqzzqq] (1,2)-- (2,2);
				\draw [line width=0.4pt,dash pattern=on 2pt off 2pt,color=blue] (1.3,2.51)-- (1.3,2.51);
				\draw [color=qqzzqq] (1,3)-- (2,2);
				\draw [color=qqzzqq] (3,2)-- (2,3);
				\draw [color=qqzzqq] (1,3)-- (2,3);
				\draw [line width=0.4pt,color=blue] (2.72,2.45)-- (3,2);
				\draw [line width=0.4pt,color=blue] (2,3)-- (2.3,2.51);
				\draw [color=qqzzqq] (2,2)-- (3,2);
				\draw [line width=0.4pt,dash pattern=on 2pt off 2pt,color=blue] (2.3,2.51)-- (2.3,2.51);
				\draw [color=qqzzqq] (2,3)-- (3,2);
				\draw [color=qqzzqq] (4,2)-- (3,3);
				\draw [color=qqzzqq] (2,3)-- (3,3);
				\draw [line width=0.4pt,color=blue] (3.72,2.45)-- (4,2);
				\draw [line width=0.4pt,color=blue] (3,3)-- (3.3,2.51);
				\draw [color=qqzzqq] (3,2)-- (4,2);
				\draw [line width=0.4pt,dash pattern=on 2pt off 2pt,color=blue] (3.3,2.51)-- (3.3,2.51);
				\draw [color=qqzzqq] (3,3)-- (4,2);
				\draw [color=qqzzqq] (5,2)-- (4,3);
				\draw [color=qqzzqq] (3,3)-- (4,3);
				\draw [line width=0.4pt,color=blue] (4,3)-- (4.3,2.51);
				\draw [color=qqzzqq] (4,2)-- (5,2);
				\draw [color=qqzzqq] (4,3)-- (5,2);
				\draw [color=qqzzqq] (4,3)-- (5,3);
				\draw [color=qqzzqq] (-1,4)-- (0,3);
				\draw [color=qqzzqq] (1,3)-- (0,4);
				\draw [line width=0.4pt,color=blue] (0.72,3.45)-- (1,3);
				\draw [color=qqzzqq] (0,3)-- (1,3);
				\draw [color=qqzzqq] (0,4)-- (1,3);
				\draw [color=qqzzqq] (2,3)-- (1,4);
				\draw [line width=0.4pt,color=blue] (1.72,3.45)-- (2,3);
				\draw [color=qqzzqq] (1,3)-- (2,3);
				\draw [color=qqzzqq] (1,4)-- (2,3);
				\draw [color=qqzzqq] (3,3)-- (2,4);
				\draw [line width=0.4pt,color=blue] (2.72,3.45)-- (3,3);
				\draw [color=qqzzqq] (2,3)-- (3,3);
				\draw [color=qqzzqq] (2,4)-- (3,3);
				\draw [color=qqzzqq] (4,3)-- (3,4);
				\draw [line width=0.4pt,color=blue] (3.72,3.45)-- (4,3);
				\draw [color=qqzzqq] (3,3)-- (4,3);
				\draw [color=qqzzqq] (3,4)-- (4,3);
				\draw [color=qqzzqq] (5,3)-- (4,4);
				\draw [color=qqzzqq] (4,3)-- (5,3);
				\draw [color=qqzzqq] (4,4)-- (5,3);
				\draw [color=red,domain=-0.05:4.05] plot(\x,{(--0.06--0.62*\x)/1});
				\draw (0,0)-- (0,4);
				\draw (1,0)-- (1,4);
				\draw (2,0)-- (2,4);
				\draw (3,0)-- (3,4);
				\draw (4,0)-- (4,4);
				\draw [color=blue] (0.39,0.56)-- (0.53,0.33);
				\draw [color=blue] (0.39,1.56)-- (0.53,1.33);
				\draw [color=blue] (0.39,2.56)-- (0.53,2.33);
				\draw [color=blue] (1.39,0.56)-- (1.53,0.33);
				\draw [color=blue] (2.39,0.56)-- (2.53,0.33);
				\draw [color=blue] (3.39,0.56)-- (3.53,0.33);
				\draw [color=blue] (1.39,1.56)-- (1.53,1.33);
				\draw [color=blue] (1.39,2.56)-- (1.53,2.33);
				\draw [color=blue] (2.39,1.56)-- (2.53,1.33);
				\draw [color=blue] (2.39,2.56)-- (2.53,2.33);
				\draw [color=blue] (3.39,1.56)-- (3.53,1.33);
				\draw [color=blue] (3.39,2.56)-- (3.53,2.33);
				\begin{scriptsize}
					\draw[color=red] (-0.06,0.17) node {\normalsize $L_{\phi}$};
					\draw[color=black] (0.23,0.04) node {$(0, 0)$};
					\draw[color=black] (1.13,0.04) node {$(1, 0)$};
					\draw[color=black] (0.13,1.04) node {$(0, 1)$};
					\draw[color=black] (1.13,1.04) node {$(1, 1)$};
					
					\fill [color=black] (0,0) circle (0.5pt);
					\fill [color=black] (1,0) circle (0.5pt);
					\fill [color=black] (0,1) circle (0.5pt);
					\fill [color=black] (-1,1) circle (0.5pt);
					\fill [color=ttttff] (0.3,0.51) circle (0.5pt);
					\fill [color=black] (0,1) circle (0.5pt);
					\fill [color=black] (1,0) circle (0.5pt);
					\fill [color=black] (2,0) circle (0.5pt);
					\fill [color=black] (1,1) circle (0.5pt);
					\fill [color=black] (0,1) circle (0.5pt);
					\fill [color=black] (1,1) circle (0.5pt);
					\fill [color=ttttff] (1.72,0.45) circle (0.5pt);
					\fill [color=black] (2,0) circle (0.5pt);
					\fill [color=black] (1,1) circle (0.5pt);
					\fill [color=ttttff] (1.3,0.51) circle (0.5pt);
					\fill [color=black] (1,0) circle (0.5pt);
					\fill [color=black] (2,0) circle (0.5pt);
					\fill [color=ttttff] (1.3,0.51) circle (0.5pt);
					\fill [color=ttttff] (1.3,0.51) circle (0.5pt);
					\fill [color=black] (1,1) circle (0.5pt);
					\fill [color=black] (2,0) circle (0.5pt);
					\fill [color=black] (3,0) circle (0.5pt);
					\fill [color=black] (2,1) circle (0.5pt);
					\fill [color=black] (1,1) circle (0.5pt);
					\fill [color=black] (2,1) circle (0.5pt);
					\fill [color=ttttff] (2.72,0.45) circle (0.5pt);
					\fill [color=black] (3,0) circle (0.5pt);
					\fill [color=black] (2,1) circle (0.5pt);
					\fill [color=ttttff] (2.3,0.51) circle (0.5pt);
					\fill [color=black] (2,0) circle (0.5pt);
					\fill [color=black] (3,0) circle (0.5pt);
					\fill [color=ttttff] (2.3,0.51) circle (0.5pt);
					\fill [color=ttttff] (2.3,0.51) circle (0.5pt);
					\fill [color=black] (2,1) circle (0.5pt);
					\fill [color=black] (3,0) circle (0.5pt);
					\fill [color=black] (4,0) circle (0.5pt);
					\fill [color=black] (3,1) circle (0.5pt);
					\fill [color=black] (2,1) circle (0.5pt);
					\fill [color=black] (3,1) circle (0.5pt);
					\fill [color=ttttff] (3.72,0.45) circle (0.5pt);
					\fill [color=black] (4,0) circle (0.5pt);
					\fill [color=black] (3,1) circle (0.5pt);
					\fill [color=ttttff] (3.3,0.51) circle (0.5pt);
					\fill [color=black] (3,0) circle (0.5pt);
					\fill [color=black] (4,0) circle (0.5pt);
					\fill [color=ttttff] (3.3,0.51) circle (0.5pt);
					\fill [color=ttttff] (3.3,0.51) circle (0.5pt);
					\fill [color=black] (3,1) circle (0.5pt);
					\fill [color=black] (4,0) circle (0.5pt);
					\fill [color=black] (4,1) circle (0.5pt);
					\fill [color=black] (3,1) circle (0.5pt);
					\fill [color=black] (4,1) circle (0.5pt);
					\fill [color=black] (4,1) circle (0.5pt);
					\fill [color=black] (4,0) circle (0.5pt);
					\fill [color=black] (4,1) circle (0.5pt);
					\fill [color=black] (4,1) circle (0.5pt);
					\fill [color=black] (-1,2) circle (0.5pt);
					\fill [color=black] (0,1) circle (0.5pt);
					\fill [color=black] (1,1) circle (0.5pt);
					\fill [color=black] (0,2) circle (0.5pt);
					\fill [color=black] (-1,2) circle (0.5pt);
					\fill [color=black] (0,2) circle (0.5pt);
					\fill [color=ttttff] (0.72,1.45) circle (0.5pt);
					\fill [color=black] (1,1) circle (0.5pt);
					\fill [color=black] (0,2) circle (0.5pt);
					\fill [color=ttttff] (0.3,1.51) circle (0.5pt);
					\fill [color=black] (0,1) circle (0.5pt);
					\fill [color=black] (1,1) circle (0.5pt);
					\fill [color=ttttff] (0.3,1.51) circle (0.5pt);
					\fill [color=ttttff] (0.3,1.51) circle (0.5pt);
					\fill [color=black] (0,2) circle (0.5pt);
					\fill [color=black] (1,1) circle (0.5pt);
					\fill [color=black] (2,1) circle (0.5pt);
					\fill [color=black] (1,2) circle (0.5pt);
					\fill [color=black] (0,2) circle (0.5pt);
					\fill [color=black] (1,2) circle (0.5pt);
					\fill [color=ttttff] (1.72,1.45) circle (0.5pt);
					\fill [color=black] (2,1) circle (0.5pt);
					\fill [color=black] (1,2) circle (0.5pt);
					\fill [color=ttttff] (1.3,1.51) circle (0.5pt);
					\fill [color=black] (1,1) circle (0.5pt);
					\fill [color=black] (2,1) circle (0.5pt);
					\fill [color=ttttff] (1.3,1.51) circle (0.5pt);
					\fill [color=ttttff] (1.3,1.51) circle (0.5pt);
					\fill [color=black] (1,2) circle (0.5pt);
					\fill [color=black] (2,1) circle (0.5pt);
					\fill [color=black] (3,1) circle (0.5pt);
					\fill [color=black] (2,2) circle (0.5pt);
					\fill [color=black] (1,2) circle (0.5pt);
					\fill [color=black] (2,2) circle (0.5pt);
					\fill [color=ttttff] (2.72,1.45) circle (0.5pt);
					\fill [color=black] (3,1) circle (0.5pt);
					\fill [color=black] (2,2) circle (0.5pt);
					\fill [color=ttttff] (2.3,1.51) circle (0.5pt);
					\fill [color=black] (2,1) circle (0.5pt);
					\fill [color=black] (3,1) circle (0.5pt);
					\fill [color=ttttff] (2.3,1.51) circle (0.5pt);
					\fill [color=ttttff] (2.3,1.51) circle (0.5pt);
					\fill [color=black] (2,2) circle (0.5pt);
					\fill [color=black] (3,1) circle (0.5pt);
					\fill [color=black] (4,1) circle (0.5pt);
					\fill [color=black] (3,2) circle (0.5pt);
					\fill [color=black] (2,2) circle (0.5pt);
					\fill [color=black] (3,2) circle (0.5pt);
					\fill [color=ttttff] (3.72,1.45) circle (0.5pt);
					\fill [color=black] (4,1) circle (0.5pt);
					\fill [color=black] (3,2) circle (0.5pt);
					\fill [color=ttttff] (3.3,1.51) circle (0.5pt);
					\fill [color=black] (3,1) circle (0.5pt);
					\fill [color=black] (4,1) circle (0.5pt);
					\fill [color=ttttff] (3.3,1.51) circle (0.5pt);
					\fill [color=ttttff] (3.3,1.51) circle (0.5pt);
					\fill [color=black] (3,2) circle (0.5pt);
					\fill [color=black] (4,1) circle (0.5pt);
					\fill [color=black] (4,2) circle (0.5pt);
					\fill [color=black] (3,2) circle (0.5pt);
					\fill [color=black] (4,2) circle (0.5pt);
					\fill [color=black] (4,2) circle (0.5pt);
					\fill [color=black] (4,1) circle (0.5pt);
					\fill [color=black] (4,2) circle (0.5pt);
					\fill [color=black] (4,2) circle (0.5pt);
					\fill [color=black] (-1,3) circle (0.5pt);
					\fill [color=black] (0,2) circle (0.5pt);
					\fill [color=black] (1,2) circle (0.5pt);
					\fill [color=black] (0,3) circle (0.5pt);
					\fill [color=black] (-1,3) circle (0.5pt);
					\fill [color=black] (0,3) circle (0.5pt);
					\fill [color=ttttff] (0.72,2.45) circle (0.5pt);
					\fill [color=black] (1,2) circle (0.5pt);
					\fill [color=black] (0,3) circle (0.5pt);
					\fill [color=ttttff] (0.3,2.51) circle (0.5pt);
					\fill [color=black] (0,2) circle (0.5pt);
					\fill [color=black] (1,2) circle (0.5pt);
					\fill [color=ttttff] (0.3,2.51) circle (0.5pt);
					\fill [color=ttttff] (0.3,2.51) circle (0.5pt);
					\fill [color=black] (0,3) circle (0.5pt);
					\fill [color=black] (1,2) circle (0.5pt);
					\fill [color=black] (2,2) circle (0.5pt);
					\fill [color=black] (1,3) circle (0.5pt);
					\fill [color=black] (0,3) circle (0.5pt);
					\fill [color=black] (1,3) circle (0.5pt);
					\fill [color=ttttff] (1.72,2.45) circle (0.5pt);
					\fill [color=black] (2,2) circle (0.5pt);
					\fill [color=black] (1,3) circle (0.5pt);
					\fill [color=ttttff] (1.3,2.51) circle (0.5pt);
					\fill [color=black] (1,2) circle (0.5pt);
					\fill [color=black] (2,2) circle (0.5pt);
					\fill [color=ttttff] (1.3,2.51) circle (0.5pt);
					\fill [color=ttttff] (1.3,2.51) circle (0.5pt);
					\fill [color=black] (1,3) circle (0.5pt);
					\fill [color=black] (2,2) circle (0.5pt);
					\fill [color=black] (3,2) circle (0.5pt);
					\fill [color=black] (2,3) circle (0.5pt);
					\fill [color=black] (1,3) circle (0.5pt);
					\fill [color=black] (2,3) circle (0.5pt);
					\fill [color=ttttff] (2.72,2.45) circle (0.5pt);
					\fill [color=black] (3,2) circle (0.5pt);
					\fill [color=black] (2,3) circle (0.5pt);
					\fill [color=ttttff] (2.3,2.51) circle (0.5pt);
					\fill [color=black] (2,2) circle (0.5pt);
					\fill [color=black] (3,2) circle (0.5pt);
					\fill [color=ttttff] (2.3,2.51) circle (0.5pt);
					\fill [color=ttttff] (2.3,2.51) circle (0.5pt);
					\fill [color=black] (2,3) circle (0.5pt);
					\fill [color=black] (3,2) circle (0.5pt);
					\fill [color=black] (4,2) circle (0.5pt);
					\fill [color=black] (3,3) circle (0.5pt);
					\fill [color=black] (2,3) circle (0.5pt);
					\fill [color=black] (3,3) circle (0.5pt);
					\fill [color=ttttff] (3.72,2.45) circle (0.5pt);
					\fill [color=black] (4,2) circle (0.5pt);
					\fill [color=black] (3,3) circle (0.5pt);
					\fill [color=ttttff] (3.3,2.51) circle (0.5pt);
					\fill [color=black] (3,2) circle (0.5pt);
					\fill [color=black] (4,2) circle (0.5pt);
					\fill [color=ttttff] (3.3,2.51) circle (0.5pt);
					\fill [color=ttttff] (3.3,2.51) circle (0.5pt);
					\fill [color=black] (3,3) circle (0.5pt);
					\fill [color=black] (4,2) circle (0.5pt);
					\fill [color=black] (4,3) circle (0.5pt);
					\fill [color=black] (3,3) circle (0.5pt);
					\fill [color=black] (4,3) circle (0.5pt);
					\fill [color=black] (4,3) circle (0.5pt);
					\fill [color=black] (4,2) circle (0.5pt);
					\fill [color=black] (4,3) circle (0.5pt);
					\fill [color=black] (4,3) circle (0.5pt);
					\fill [color=black] (0,3) circle (0.5pt);
					\fill [color=black] (1,3) circle (0.5pt);
					\fill [color=black] (1,3) circle (0.5pt);
					\fill [color=black] (0,3) circle (0.5pt);
					\fill [color=black] (1,3) circle (0.5pt);
					\fill [color=black] (1,3) circle (0.5pt);
					\fill [color=black] (2,3) circle (0.5pt);
					\fill [color=black] (2,3) circle (0.5pt);
					\fill [color=black] (1,3) circle (0.5pt);
					\fill [color=black] (2,3) circle (0.5pt);
					\fill [color=black] (2,3) circle (0.5pt);
					\fill [color=black] (3,3) circle (0.5pt);
					\fill [color=black] (3,3) circle (0.5pt);
					\fill [color=black] (2,3) circle (0.5pt);
					\fill [color=black] (3,3) circle (0.5pt);
					\fill [color=black] (3,3) circle (0.5pt);
					\fill [color=black] (4,3) circle (0.5pt);
					\fill [color=black] (4,3) circle (0.5pt);
					\fill [color=black] (3,3) circle (0.5pt);
					\fill [color=black] (4,3) circle (0.5pt);
					\fill [color=black] (4,3) circle (0.5pt);
					\fill [color=black] (4,3) circle (0.5pt);
					\fill [color=black] (1.14,0.77) circle (1.0pt);

					\draw[color=black] (0.34,0.35) node {\normalsize $w_0$};
					\draw[color=black] (0.98,0.75) node {\normalsize $w_1$};
					\draw[color=black] (1.88,1.32) node {\normalsize $w_2$};
					\draw[color=black] (2.32,1.59) node {\normalsize $w_3$};
					\draw[color=black] (3.78,2.47) node {\normalsize $w_5$};
					\draw[color=black] (2.9,1.94) node {\normalsize $ w_4$};
					\draw[color=black] (0.61,0.49) node {$\varphi_0$};
					\draw[color=black] (1.20,0.88) node {$\varphi_1$};
					\draw[color=black] (1.7,1.2) node {$\varphi_2$};
					\draw[color=black] (2.44,1.65) node {$\varphi_3$};
					\draw[color=black] (3.19,1.9) node {$ \varphi_4$};
					\draw[color=black] (3.58,2.35) node {$\varphi_5$};
					
					\fill [color=black] (3.76,2.39) circle (1.0pt);
					\fill [color=black] (3.04,1.94) circle (1.0pt);
					\fill [color=black] (0.58,0.42) circle (1.0pt);
					\fill [color=black] (1.2,0.8) circle (1.0pt);
					\fill [color=black] (1.82,1.185) circle (1.0pt);
					\fill [color=black] (2.43,1.57) circle (1.0pt);
					\fill [color=black] (3.05,1.95) circle (1.0pt);
					\fill [color=black] (3.67,2.33) circle (1.0pt);
					\fill [color=black] (1.865,1.224) circle (1.0pt);
					
					\fill [color=ttttff] (0.39,0.56) circle (0.5pt);
					\fill [color=ttttff] (0.53,0.33) circle (0.5pt);
					\fill [color=ttttff] (0.39,1.56) circle (0.5pt);
					\fill [color=ttttff] (0.53,1.33) circle (0.5pt);
					\fill [color=ttttff] (0.39,2.56) circle (0.5pt);
					\fill [color=ttttff] (0.53,2.33) circle (0.5pt);
					\fill [color=ttttff] (1.39,0.56) circle (0.5pt);
					\fill [color=ttttff] (1.53,0.33) circle (0.5pt);
					\fill [color=ttttff] (2.39,0.56) circle (0.5pt);
					\fill [color=ttttff] (2.53,0.33) circle (0.5pt);
					\fill [color=ttttff] (3.39,0.56) circle (0.5pt);
					\fill [color=ttttff] (3.53,0.33) circle (0.5pt);
					\fill [color=ttttff] (1.39,1.56) circle (0.5pt);
					\fill [color=ttttff] (1.53,1.33) circle (0.5pt);
					\fill [color=ttttff] (1.39,2.56) circle (0.5pt);
					\fill [color=ttttff] (1.53,2.33) circle (0.5pt);
					\fill [color=ttttff] (2.39,1.56) circle (0.5pt);
					\fill [color=ttttff] (2.53,1.33) circle (0.5pt);
					\fill [color=ttttff] (2.39,2.56) circle (0.5pt);
					\fill [color=ttttff] (2.53,2.33) circle (0.5pt);
					\fill [color=ttttff] (3.39,1.56) circle (0.5pt);
					\fill [color=ttttff] (3.53,1.33) circle (0.5pt);
					\fill [color=ttttff] (3.39,2.56) circle (0.5pt);
					\fill [color=ttttff] (3.53,2.33) circle (0.5pt);
					\fill [color=black] (0.5,0.37) circle (1.0pt);
					\fill [color=black] (2.4,1.54) circle (1.0pt);
					
					\draw [decorate,decoration={brace,amplitude=5pt, mirror,raise=4ex}](0.19,0.675) -- (0.287,0.735);
					\draw[color=black] (0.41,0.44) node {$\delta$};

					\draw [decorate,decoration={brace,amplitude=5pt,mirror,raise=4ex}]
					(1.77,0.50) -- (2.144,0.125);
					\draw[color=black] (1.7,0.09) node {$\theta$};
					\draw [decorate,decoration={brace,amplitude=5pt,raise=4ex}](1.285,0.445) -- (1.495,0.225);
					\draw[color=black] (1.68,0.55) node {$\cut$};
					
					\draw [decorate,decoration={brace,amplitude=5pt, mirror,raise=4ex}]
					(0.63,0.5955) -- (1.08,0.870) node[midway,yshift=-3em]{};
					\draw[color=black] (1.05,0.48) node {$a$};
					\draw [decorate,decoration={brace,amplitude=5pt, mirror,raise=4ex}]
					(1.1,0.8755) -- (1.81,1.310375) node[midway,yshift=-3em]{};
					\draw[color=black] (1.66,0.85) node {$b$};
					
					\draw [decorate,decoration={brace,amplitude=5pt, mirror,raise=4ex}]
					(1.82,1.315) -- (2.34,1.6335) node[midway,yshift=-3em]{};
					\draw[color=black] (2.3,1.18) node {$a+\delta$};
					\draw [decorate,decoration={brace,amplitude=5pt, mirror,raise=4ex}]
					(2.345,1.642) -- (2.975,2.027875) node[midway,yshift=-3em]{};
					\draw[color=black] (2.85,1.54) node {$b-\delta$};
				\end{scriptsize}
			\end{tikzpicture}
			\caption{The cut \& project scheme for the model $\Xi_{\theta, \cut}$.}
			\label{CutProj_twoCuts}
		\end{figure}
	}    
\end{remark}

\subsection{\texorpdfstring{The crossed product $C^*$-algebra and its Toeplitz extension}{}}
It is very useful to describe the algebra of short range pattern equivariant operators $\mathcal{A}_{\omega_0}$ more abstractly, that is, to see it as a faithful representation of a $C^*$-algebra whose other representations yield other microscopically different but macroscopically indistinguishable realisations of the solid. This algebra is in general a groupoid $C^*$-algebra, but in one dimension it reduces to the crossed product algebra 
$C(\Xi_{\omega_0})\rtimes_\alpha\Z$. Besides, such crossed products with $\Z$ have a natural extension, the so-called Toeplitz extension, which plays an fundmental role in the bulk edge correspondence.

As we will need this further down, let us briefly describe the construction of 
the crossed product algebra and its Toeplitz extension in the case of a $\Z$-action $\alpha:X\to X$ by homeomorphisms on a metrizable space $X$. 
Then $\alpha(f)=f\circ\alpha^{-1}$ is a $\Z$-action on the algebra $C_0(X)$ of continuous functions complex valued functions which vanish at infinity. The algebraic crossed product $C_0(X)_\alpha\Z$ of $C_0(X)$ with $\Z$ is defined as follows. The elements of this algebra are non-commutative Laurent polynomials in one variable $u$ with coefficients in $C_0(X)$ (hence finite linear combinations of formal expressions $fu^n$, $f \in C_0(X)$, $n\in \Z$) which satisfy the relation
\begin{equation}
	\label{eq:1.1}
	uf=\alpha(f)u.
\end{equation}
and $fu^0=f$. 
The algebra is equipped with the adjoint map 
$$(fu^n)^* = u^{-n}  \bar{f}$$
and so it is a complex $*$-algebra and $u$ is a unitary\footnote{if $X$ is not compact so that $C_0(X)$ not unital then $u$ is not in the algebra but $f u^{-1} = f u^*$}.

Any point $x\in X$ gives rise to a representation $\rho_x$ of $C_0(X)_\alpha \Z$ on $\ell^2(\Z)$
$$\rho_{x}(f u^n)\psi(k) = f(\alpha^k(x)) \psi(k-n).$$
The $C^*$-crossed product $C_0(X)\rtimes_\alpha\Z $ is the completion of $C_0(X)_\alpha\Z$ in the norm
$$\lVert a \rVert=\sup_{x\in X}\lVert \rho_{x}(a) \rVert_{\mathcal{B}(\ell^2(\Z))}.$$
Stated differently, the family $\{\rho_x\}_{x\in X}$ is a faithful family of representations, and $C_0(X)\rtimes_\alpha\Z $ is isomorphic to the direct sum representation. 

Given an element $h$ of the algebra, we get a covariant family of operators $H_X=\{H_x\}_{x\in X}$, where $H_x=\rho_x(h)$. 
Covariance means here that $H_{\alpha(x)} = U H_{x} U^*$ where $U$ is the translation operator on $\ell^2(\Z)$. Furthermore, the spectrum of $h$ is the spectrum of this family,
$$\mathrm{spec}(h)=\mathrm{spec}(H_X):=\bigcup_{x\in X} \mathrm{spec}(H_x).$$
If $X$ has a transitive point $x$ (a point with a dense orbit) then $\rho_x$ is faithful and
$\mathrm{spec}(H_X)= \mathrm{spec}(H_x)$. But in general none of the representations $\rho_x$ need to be faithful and so the individual Hamiltonians $H_x$ might have a smaller spectrum. 

The crossed product algebra has a natural extension. For that we consider 
the $*$-algebra $C_0(X)_\alpha \N$ given by non-commutative polynomials in $\hat u$ and its adjoint $\hat u^*$ with coefficients in $C_0(X)$. These satisfy the relations
$$\hat{u}f  = \alpha(f)\hat{u},\quad \hat{u}^*\hat{u} = 1,\quad  
\hat{u}\hat{u}^* = 1 - \hat{p},\quad \hat{p}\hat{p}=\hat{p}, \quad \hat{p}^*=\hat{p},\quad \hat{p} f= f\hat{p} .$$
Its elements are finite sums of elements of the form
$$f \hat u^n \hat p^l ({\hat u^*})^m$$ 
where $n,m\in\N$ and $l=0,1$. 

The map $\pi:C_0(X)_\alpha \N\to C_0(X)_\alpha \Z$,
$$\pi(f \hat u^n \hat p^l ({\hat u^*})^m) = \delta_{l 0}f u^{n-m}$$
is a surjective $*$-algebra morphism. Its kernel are the elements of the form $f \hat u^n \hat p ({\hat u^*})^m$. This kernel is isomorphic to 
$C_0(X)\otimes F(\ell^2(\N))$ where $F(\ell^2(\N))$ are the finite rank operators. An isomorphism between $C_0(X)\otimes \mathcal{F}(\ell^2(\N))$ and $\ker \pi$ is given by  $\b(f \otimes E_{nm}) = f \hat{u}^n \hat p (\hat{u}^*)^m$ where $E_{nm}$ is the elementary matrix when we identify $F(\ell^2(\N))$ with half sided infinite matrices which have only finitely many non-zero entries.

The representations $\rho_x$ restrict to 
representations $\hat\rho_x$ of $C_0(X)_\alpha \N$ on $\ell^2(\N)$
$$\hat\rho_{x}(f)\psi(k) = f(\alpha^n(x)) \psi(k-n), \quad \hat\rho_{x}(\hat u)\psi(k) = \psi(k-1)$$
where $\psi(k-1)=0$ if $k-1=0$. This implies that $\hat\rho_{x}(\hat u^*)\psi(k) = \psi(k+1)$ and hence $\hat\rho_{x}(\hat p)\psi(k) = \delta_{k0} \psi(k)$. 
The completion of  $C_0(X)_\alpha \N$ in the norm
$\lVert a \rVert=\sup_{x\in X}\lVert \hat\rho_{x}(a) \rVert_{\mathcal{B}(\ell^2(\N))}$ is called the Toeplitz extension of $C_0(X)_\alpha \Z$
and denoted by $\mathcal T(C_0(X),\alpha)$. The epimorphism $\pi$ extends to the closures, $\pi:\mathcal T(C_0(X),\alpha)\to C_0(X)_\alpha \Z$ and its kernel is isomorphic to $C_0(X)\otimes\mathcal K(\ell^2(\N))$,
where $\mathcal K$ denotes the compact operators. 

In the case $X=\Xi_{\omega_0}$, the representation$$\rho_{\omega_0}(f u^n)\psi(k) = f (\omega_0-k)\psi(n-k)$$
is faithful, as $\omega_0$ has a dense orbit by construction.
As the cylinder sets $Z_n(a)$ 
form a sub-base of the topology of $\Xi$, 
$C(\Xi_{\omega_0})_\alpha\Z$ is given by finite sums of products of the form $c\chi_{Z_n(a)}u^m$, $c\in\C$, $a\in\mathfrak a$, $n,m\in\Z$, and these are pattern equivariant with range $n$ and interaction range $m$. Thus $\rho_{\omega_0}(C(\Xi_{\omega_0})\rtimes_\alpha\Z)=\mathcal A_{\omega_0}$.
In particular, if $\Xi = \Xi_{\theta,\cut}$ is the hull defined in \eqref{eq-hull} then 
\begin{equation}
	\label{our_model}
	h = h(\lambda) := u+u^*+\lambda(1-2\chi_{[0, \cut]}).
\end{equation}
with $\lambda$ a coupling constant and 
$\chi_{[0, \cut]}$ the characteristic function on 
$[(0,+),(\cut,-)]$, is an example of an operator whose representative $\rho_{0,-}(h)$ is a pattern equivariant operator of short range and $\rho_{\omega_0}(h(-\frac12)+\frac12)$ with $\omega_0$ as above yields (up to an additive constant) the Hamiltonian of the introduction.

\subsection{$K$-theory and topological phases}\label{sec-K-theory}
In the $C^*$-algebraic approach, the topological invariants are the elements of the $K$-groups of the algebra $\mathcal A$ of pattern equivariant operators with short range. In this approach, two physical systems described by Hamiltonians $h_1$, $h_2$ of the algebra of pattern equivariant operators of short range, which have a gap at the Fermi energy, belong to the same topological phase if they are homotopic in that algebra along a path along which the gap does not close. Adding constants, we may always arrange the situation to be such that the gaps are around $0$ and then a such Hamiltonians are invertible self-adjoint elements of $\mathcal A$. We make one compromise which is essential to apply the algebraic topology methods we need for the bulk edge correspondence, namely we allow for stacked systems. Stacking of two systems can be interpreted as addition in the $K$-group of $\mathcal A$. Let us recall the essential notions of the description of topological phases of insulators using van Daele's picture of $K$-theory \cite{kellendonk2017c}.

Given a unital $C^*$-algebra $A$ we denote by $GL_n^{s.a.}(A)$ the invertible self-adjoint elements in $M_n(A)$. This is a topological space equipped with the norm topology of $M_n(A)$.
We view $GL_n^{s.a.}(A)$ as a subset of $GL_{n+m}^{s.a.}(A)$, namely every $n\times n$ matrix $h$ of $GL_n^{s.a.}(A)$ can be completed with the identity $m\times m$ matrix $1_m$ to the element $h\oplus 1_m\in GL_{n+m}^{s.a.}(A)$. Denoting by $GL_\infty^{s.a.}(A)/\sim$ the union over all this spaces modulo homotopy we define addition of homotopy classes as 
$$[h_1]\oplus [h_2]=[h_1\oplus h_2]=\left[\begin{pmatrix}
h_1 & 0\\
0 & h_2
\end{pmatrix}\right].$$
It turns out that $h_1\oplus h_2$ is homotopic to $h_2\oplus h_1$ and so $GL_\infty^{s.a.}(A)/\sim$ is an abelian semigroup. Physically, this addition can be interpreted as stacking, $\begin{pmatrix}
h_1 & c\\
c^* & h_2
\end{pmatrix}$ is a two layer system, one layer described by $h_1$ the other by $h_2$, and these two layers may weakly interact, $c$ denoting a coupling. If $c$ is small enough so that $\begin{pmatrix}
h_1 & c\\
c^* & h_2
\end{pmatrix}$ is homotopic to $\begin{pmatrix}
h_1 & 0\\
0 & h_2
\end{pmatrix}$
then the two layer system is in the same topological phase as the sum of the two systems. Furthermore, a system with a Hamiltonian $h$ which has only strictly positive energy states is in the topological phase of $1$ and the latter is trivial from a physical point of view, as the Fermi energy is below the energies of the states of the Hamiltonian. And indeed, $[1]$ is the $0$-element of the semigroup and $h\oplus 1$ in the same class as $h$.
Elements of the class $[-1]$ are Hamiltonians which have only negative energy states. They are always homotopic to so-called atomic limits, that is, operators without kinetic part. In the tight binding approximation these correspond to systems with filled localised orbitals. In physics they are also considered as topologically trivial, although they have non-zero integrated density of states and define non-trivial elements in the above semigroup. 

The $K_0$-group of $A$ is obtained via Grothendieck's construction and defines relative topological phases: Given two pairs $([h_1],[h'_1])$ and 
$([h_2],[h'_2])$ of $GL_\infty^{s.a.}(A)/\sim\times GL_\infty^{s.a.}(A)/\sim$, say that they are equivalent in the sense of Grothendieck if there is a class $[k]\in GL_\infty^{s.a.}(A)/\sim$ such that
$$[h_1]+[h'_2] + [k] = [h'_1]+[h_2] + [k].$$
The possibility of adding a class $[k]$ in the above definition is important, as usually the semigroup is not cancellative, but it can be shown that it suffices to require $[k]$ to be a sum of classes $[1]$ and $[-1]$.
So the elements of $K_0(A)$ are equivalence classes of pairs $[[h],[h']]$. 
Addition is component wise $[[h_1],[h'_1]]+[[h_2],[h'_2]] = [[h_1+h_2],[h'_1+h_2']]$ and $-[[h],[h']]=[[h'],[h]]$.
This looks quite complicated, but it turns out that any class can be brought into the form $[[h],[1_m\oplus -1_{n-m}]]$ where $h$ is a self-adjoint invertible element of $M_n(A)$ and $m\leq n$. In particular, the negative of $[[h],[1_n]]$ is 
$[[1_n],[h]] = [[-h],[-1_n]]$. 
We simply denote $[[h],[1_n]]$ by $[h]_0$.  Let us mention that the group structure on $K_0(A)$ will be important in what follows, as we will have to subtract elements. 

The definition of $K_1(A)$ is similar to that of $K_0(A)$, except one drops the requirement that the elements are self-adjoint. As $U_n(A)$, the unitary elements of $M_n(A)$,  are a deformation retract of the invertible elements, we can work with $U_\infty(A)$.
Furthermore, $u\oplus v$ is homotopic to $uv\oplus 1$ and therefore $U_\infty(A)/\sim$ is already an abelian group, $-[u]_1 = [u^{-1}]_1$. It follows that the Grothendieck construction does not add anything new so that we can see $K_1(A)$ as this  group $U_\infty(A)/\sim$. We denote its elements by $[u]_1$. 

Given a unital $C^*$-algebra morphism $f:A\to B$ between unital $C^*$-algebras, we define $f_*:K_i(A)\to K_i(B)$ through $f_*([x]) = [f(x)]$, where either $[x] = [h]_0$ or $[x]=[u]_1$. 

The definition of the $K$-groups becomes more subtle if $A$ is not unital. One first adds a unit to $A$. This gives $A^+:= A\times\C$ with multiplication $(a,\lambda)(b,\mu) = (ab+\mu a+\lambda b,\lambda\mu)$ and  $(a,\lambda)^* =  (a^*,\bar \lambda)$. The map $(a,\lambda)\mapsto \lambda$ is a unital epimorphism $s:A^+\to \C$ whose kernel is $A$. By definition, $K_i(A) = \ker s_*$. 

In the standard literature on $K$-theory, like \cite{rordam2000introduction}, $K_0(A)$ is defined using projections. This is equivalent to the above approach, as the map which assigns to an invertible self-adjoint element
$h$ its projection onto the negative spectral states yields a bijection between $GL_n^{s.a.}(A)/\sim$ and $Proj_n(A)/\sim$ where $Proj_n(A)$ denotes the space of orthogonal projections of $M_n(A)$. We will mostly follow this formulation, which in the physical context means that we consider homotopy classes of Fermi projections.

One of the fundamental results of importance for us is the six-term exact sequence.
\begin{theorem}[Six-Term exact sequence]
Let
\begin{equation}\label{SES}
    \begin{tikzcd}
	0 & I & A & B & 0.
	\arrow[from=1-1, to=1-2]
	\arrow["\varphi", from=1-2, to=1-3]
	\arrow["\psi", from=1-3, to=1-4]
	\arrow[from=1-4, to=1-5]
\end{tikzcd}
\end{equation}
be a short exact sequence of $C^*$-algebras.
    There are group homomorphisms $\exp : K_0 (B) \rightarrow K_1 (I)$ and $\mathrm{ind} : K_1 (B) \rightarrow K_0 (I)$ such that the six-term sequence
    $$\begin{tikzcd}
	{K_0(I)} && {K_0(A)} && {K_0(B)} \\
	\\
	{K_1(B)} && {K_1(A)} && {K_1(I)}
	\arrow["{\varphi_*}", from=1-1, to=1-3]
	\arrow["{\psi_*}", from=1-3, to=1-5]
	\arrow["{\exp}", from=1-5, to=3-5]
	\arrow["{\psi_*}", from=3-3, to=3-1]
	\arrow["{\varphi_*}", from=3-5, to=3-3]
    \arrow["{\mathrm{ind}}", from=3-1, to=1-1]
\end{tikzcd}$$
    of Abelian groups is exact.
\end{theorem}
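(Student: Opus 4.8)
The plan is to follow the standard route in operator $K$-theory (complete accounts are in \cite{rordam2000introduction}); I indicate the structure and where the work sits. Everything rests on half-exactness of the functors $K_i$ and on explicit formulas for the two connecting maps. First I would prove exactness at the middle terms, i.e.\ that $K_0(I)\xrightarrow{\varphi_*}K_0(A)\xrightarrow{\psi_*}K_0(B)$ is exact at $K_0(A)$, and likewise for $K_1$. As $\psi\circ\varphi=0$, only the inclusion $\ker\psi_*\subseteq\mathrm{im}\,\varphi_*$ needs an argument. Working in the projection picture of $K_0$ recalled in Section~\ref{sec-K-theory}, a class killed by $\psi_*$ is represented, after stabilising and adding a trivial summand, by a projection $p\in M_N(A^+)$ whose image $\psi(p)$ is homotopic to a standard projection $1_m\oplus 0$ in $M_N(B^+)$; lifting the implementing unitary --- which lies in the identity component --- to $U_N(A^+)$ replaces $p$ by a conjugate projection with $\psi(p)=1_m\oplus 0$, hence a projection of $M_N(I^+)$ whose class lies in $\mathrm{im}\,\varphi_*$. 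The $K_1$ statement is identical with unitaries in place of projections; the technical inputs are the standard facts that projections, partial isometries and homotopies lift along the surjection $\psi^+\colon A^+\to B^+$.

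Next I would construct the connecting homomorphisms. For the index map, given $u\in U_n(B^+)$ the block unitary $\mathrm{diag}(u,u^*)$ lies in the identity component of $U_{2n}(B^+)$, hence is a finite product $\prod_j e^{ih_j}$ with $h_j=h_j^*\in M_{2n}(B^+)$; choosing self-adjoint lifts $\tilde h_j\in M_{2n}(A^+)$ and setting $v=\prod_j e^{i\tilde h_j}$ yields a unitary lift of $\mathrm{diag}(u,u^*)$, and one puts $\mathrm{ind}[u]_1 = [\,v(1_n\oplus 0)v^*\,]_0 - [1_n\oplus 0]_0\in K_0(I)$; the two projections have the same image under $\psi$, so their difference lies in $M_{2n}(I)$ and the scalar parts cancel. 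For the exponential map, given a projection $p\in M_n(B^+)$ one picks a self-adjoint lift $a\in M_n(A^+)$; then $e^{2\pi i a}$ is unitary with $\psi(e^{2\pi i a})=e^{2\pi i p}=1_n$, so $e^{2\pi i a}\in U_n(I^+)$, and one puts $\exp[p]_0 = -[e^{2\pi i a}]_1\in K_1(I)$, extended by additivity. Well-definedness (independence of $n$, of the representative and of the lifts) follows because two lifts differ by a unitary of $I^+$, resp.\ a self-adjoint of $M_n(I)$, joined to $1$, resp.\ to $0$, by a path of lifts; additivity is read off block-diagonal representatives. These checks are routine but numerous.

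With the explicit maps in hand, exactness at the four remaining terms $K_0(B)$, $K_1(I)$, $K_1(B)$, $K_0(I)$ is obtained by four diagram chases of the same flavour: a class lies in the relevant kernel exactly when the lift chosen above can be improved, by a homotopy supported in the ideal, to an honest lift in the larger algebra --- e.g.\ $[p]_0\in\ker\exp$ iff $e^{2\pi i a}$ is trivial in $U_n(I^+)$ iff $a$ can be deformed through self-adjoint lifts of $p$ to one with spectrum in $\Z$, whose spectral projections then lift $p$ into $M_n(A^+)$, iff $[p]_0\in\mathrm{im}\,\psi_*$. Combining the three steps already gives the full six-term sequence; the statement as formulated needs no Bott periodicity, though the usual textbook shortcut replaces the direct treatment of $K_0(B)$ and $K_1(I)$ by the observation that $\exp$ is, up to the suspension and Bott isomorphisms, the index map of the suspended extension $0\to SI\to SA\to SB\to 0$, so those two exactness assertions follow from the index-map part applied to that sequence.

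The main obstacle is not a single deep step but the accumulation of lifting lemmas in the first two steps together with the four chases of the third; the most delicate point is the improvement-of-lifts argument behind exactness at $K_0(B)$ and, symmetrically, $K_1(B)$, where a homotopy living in the ideal has to be turned into a genuine lift in the ambient algebra --- this is exactly where $K$-theory's tolerance for stabilisation enters, and it is the abstract mechanism behind the concrete spectral-flow formulas such as \eqref{eq-n1} used later in this paper. An alternative that fits the present setting well is to carry out the whole argument inside van Daele's picture \cite{kellendonk2017c}, where $K_0$ and $K_1$ are defined symmetrically and both connecting maps are of the same ``index'' type, so the hexagon closes with no separate periodicity input.
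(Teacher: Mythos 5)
This theorem is stated in the paper without proof: it is the standard six-term exact sequence of operator $K$-theory, quoted from the literature (\cite{rordam2000introduction}, \cite{blackadar1998k}), with only the explicit formula for $\exp$ recalled because it is needed later. Your outline is the correct standard argument, and your definition of the exponential map --- lift $p$ to a self-adjoint $a\in M_n(A)$ and set $\exp[p]_0=-[e^{2\pi\mathrm{i}a}]_1$ --- agrees exactly with the formula the paper records and uses. One caveat: your remark that the statement ``needs no Bott periodicity'' is optimistic. Exactness of the segment $K_1(I)\to K_1(A)\to K_1(B)\xrightarrow{\mathrm{ind}}K_0(I)\to K_0(A)\to K_0(B)$ is indeed periodicity-free, but closing the hexagon --- i.e.\ defining $\exp$ as a connecting map and proving exactness at $K_0(B)$ and $K_1(I)$ --- is in essentially all treatments obtained by applying the index-map result to the suspended extension $0\to SI\to SA\to SB\to 0$ and invoking the Bott isomorphism $K_0(B)\cong K_1(SB)$; a genuinely direct chase at those two spots, while possible, is substantially harder than the phrase ``four diagram chases of the same flavour'' suggests. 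With that understood, your sketch is a faithful account of the standard proof.
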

We will need the expression of the exponential map $\exp$. As we will need it only in this case, we suppose that the $C^*$-algebra $A$ (hence also $B$) in \eqref{SES} is unital. We extend $\varphi$ to a unital morphism $\bar{\varphi}:I^+\rightarrow A$, $\bar{\varphi}(x,\alpha)=\varphi(x)+\alpha 1$. 
Let $p$ be an orthogonal  projection in $Proj_n(B)$, and let $a$ be a self-adjoint element in $M_n(A)$ for which $\psi(a)=p$. Then, $\bar{\varphi}(u)=\exp(2\pi\text{i}a)$ for precisely one unitary element $u$ in $U_n(I^+)$, and $\exp([p]_0)=-[u]_1$.

Recall that the suspension $SA$ and cone $CA$ of a $C^*$-algebra $A$ are defined to be
    \begin{align*}
        SA &=\{f\in C([0,1], A)|f(0)=f(1)=0 \}\\
        CA &=\{f\in C([0,1], A)|f(0)=0 \}.
    \end{align*}
They are connected by the exact sequence 
$$\begin{tikzcd}
	0 & SA & CA & A & 0.
	\arrow[from=1-1, to=1-2]
	\arrow["i", from=1-2, to=1-3]
	\arrow["\text{ev}", from=1-3, to=1-4]
	\arrow[from=1-4, to=1-5]
\end{tikzcd}$$
where $i$ is the inclusion and $\text{ev}$ the evaluation at $0$. The cone $CA$ has trivial $K$-groups. The six-term exact sequence implies therefore that $\mathrm{ind}$ and $\exp$ are isomorphisms:
$$K_1(A)\cong K_0(SA),\quad K_0(A)\cong K_1(SA).$$
The second isomorphism is given by
\begin{equation}\label{Bottmap}
    K_0(A)\ni [p]_0-[s(p)]_0 \mapsto [(1_n-p)+\exp(2\pi\text{i}t)p]_1\in K_1(SA),
\end{equation}
for $p\in Proj_n(A^+)$ where $s:A^+\to\C$ is as above.

Another fundamental result which we will make use of is the six-term exact sequence of Pimsner and Voiculescu \cite{blackadar1998k}[Thm.~10.2.2] 
\begin{theorem}[Pimsner-Voiculescu Exact Sequence]\label{PVES_thm}
    Let $A$ be a $C^*$-algebra and $\alpha \in \text{Aut}(A)$. Then there is a cyclic six-term exact sequence
    \begin{equation}\label{PVES}
        \begin{tikzcd}
	{K_0(A)} && {K_0(A)} && {K_0(A\rtimes_\alpha\Z)} \\
	\\
	{K_1(A\rtimes_\alpha\Z)} && {K_1(A)} && {K_1(A)}
	\arrow["{1-\alpha_*}", from=1-1, to=1-3]
	\arrow["{\iota_*}", from=1-3, to=1-5]
	\arrow[from=1-5, to=3-5]
	\arrow["{\iota_*}", from=3-3, to=3-1]
	\arrow["{1-\alpha_*}", from=3-5, to=3-3]
    \arrow[from=3-1, to=1-1]
\end{tikzcd}
    \end{equation}
\end{theorem}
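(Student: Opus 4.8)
The plan is to deduce the Pimsner--Voiculescu sequence from the six-term exact sequence applied to the Toeplitz extension of $A\rtimes_\alpha\Z$ --- exactly the construction set up above for $A=C_0(X)$, now carried out for an arbitrary $C^*$-algebra $A$ and $\alpha\in\mathrm{Aut}(A)$. So first I would form the Toeplitz algebra $\mathcal T_\alpha$ generated by a copy of $A$ and an isometry $\hat u$ with $\hat u a=\alpha(a)\hat u$ and $\hat u\hat u^*=1-\hat p$, where $\hat p$ commutes with $A$; just as in the commutative case the assignment $f\hat u^n\hat p^l(\hat u^*)^m\mapsto fu^{n-m}$ extends to a surjection $\pi\colon\mathcal T_\alpha\to A\rtimes_\alpha\Z$ whose kernel, via $f\otimes E_{nm}\mapsto f\hat u^n\hat p(\hat u^*)^m$, is $A\otimes\mathcal K(\ell^2(\N))$. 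This yields the short exact sequence
\[
0\longrightarrow A\otimes\mathcal K(\ell^2(\N))\xrightarrow{\ \varphi\ }\mathcal T_\alpha\xrightarrow{\ \pi\ }A\rtimes_\alpha\Z\longrightarrow 0 ,
\]
to which the six-term exact sequence will be applied. I would record for later use that $\pi\circ j=\iota$, where $j\colon A\hookrightarrow\mathcal T_\alpha$ and $\iota\colon A\hookrightarrow A\rtimes_\alpha\Z$ are the canonical inclusions.

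The step I expect to be the genuine obstacle is the lemma that $j$ induces isomorphisms $j_*\colon K_i(A)\xrightarrow{\cong}K_i(\mathcal T_\alpha)$ for $i=0,1$; equivalently, $[j]$ is a $K$-theory equivalence. I would prove this by the classical argument of \cite{blackadar1998k}[\S 10.1]: $\mathcal T_\alpha$ carries the gauge circle action that fixes $A$ and scales $\hat u$, its fixed-point subalgebra $\overline{\bigcup_N\sum_{n\le N}\hat u^n A(\hat u^*)^n}$ is an inductive limit of algebras stably isomorphic to $A$ and hence has $K$-theory $K_*(A)$, and a comparison of $\mathcal T_\alpha$ with this core --- or, equivalently, a Fredholm-module/homotopy construction of a $K$-theoretic inverse to $j_*$ --- shows that $j_*$ is an isomorphism. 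Crucially this uses nothing about $A\rtimes_\alpha\Z$, so the argument is not circular.

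Granting the lemma, the remainder is bookkeeping. I would feed the displayed sequence into the six-term exact sequence, use stability to identify $K_i(A\otimes\mathcal K(\ell^2(\N)))\cong K_i(A)$ through the corner $q\mapsto q\otimes E_{00}\leftrightarrow q\hat p$, and use $j_*$ to identify $K_i(\mathcal T_\alpha)\cong K_i(A)$. Under $j_*$ the map $\pi_*$ becomes $\iota_*$ because $\pi\circ j=\iota$. It then remains to compute $\varphi_*$. For a projection $q\in M_k(A)$ one has inside $\mathcal T_\alpha$ the orthogonal decomposition $q=q\hat u\hat u^*+q\hat p$; the partial isometry $q\hat u$ satisfies $(q\hat u)^*(q\hat u)=\hat u^*q\hat u=\alpha^{-1}(q)$ and $(q\hat u)(q\hat u)^*=q\hat u\hat u^*$, so $q\hat u\hat u^*$ is Murray--von Neumann equivalent to $\alpha^{-1}(q)$, whence $[q\hat u\hat u^*]=j_*(\alpha_*^{-1}[q])$ and therefore
\[
\varphi_*\big([q\otimes E_{00}]\big)=[q\hat p]=[q]-[q\hat u\hat u^*]=j_*\big((\mathrm{id}-\alpha_*^{-1})[q]\big)\quad\text{in }K_0(\mathcal T_\alpha).
\]
Thus in these coordinates $\varphi_*$ is $\mathrm{id}-\alpha_*^{-1}$; since $\alpha_*$ is invertible one has $\mathrm{id}-\alpha_*^{-1}=-\alpha_*^{-1}(\mathrm{id}-\alpha_*)$, so this has the same image and kernel as $\mathrm{id}-\alpha_*$, and after an obvious change of conventions (replace $\alpha$ by $\alpha^{-1}$, i.e.\ $\hat u$ by $\hat u^*$, when setting up the extension) one gets $\mathrm{id}-\alpha_*$ exactly as in \eqref{PVES}. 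The case $i=1$ follows from the same computation in one higher suspension degree, using $K_1(\,\cdot\,)\cong K_0(S\,\cdot\,)$, the identity $S\mathcal T_\alpha\cong\mathcal T(SA,\,\mathrm{id}\otimes\alpha)$, and naturality of the Toeplitz extension in $A$. Renaming the connecting maps $\mathrm{ind}$ and $\exp$ of the Toeplitz sequence as the (unlabelled) boundary maps of \eqref{PVES} then produces the claimed cyclic six-term exact sequence. The only place where real care is needed is tracking whether one lands on $\alpha_*$ or $\alpha_*^{-1}$ together with the attendant signs, which, as just noted, does not affect the content of the statement.
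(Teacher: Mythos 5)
The paper does not prove this statement: it is quoted as background, with a citation to Blackadar, Theorem 10.2.2, so there is no in-paper proof to compare against. Your proposal is the standard Cuntz-style derivation of the Pimsner--Voiculescu sequence from the Toeplitz extension $0\to A\otimes\mathcal K\to\mathcal T_\alpha\to A\rtimes_\alpha\Z\to 0$, which is essentially the proof given in the cited reference; the bookkeeping (orthogonal decomposition $q=q\hat u\hat u^*+q\hat p$, the Murray--von Neumann equivalence $q\hat u\hat u^*\sim\alpha^{-1}(q)$, hence $\varphi_*=\mathrm{id}-\alpha_*^{-1}$ up to the convention swap $\alpha\leftrightarrow\alpha^{-1}$, and the suspension argument for $K_1$) all checks out. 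You correctly identify that the entire mathematical weight rests on the lemma that $j_*\colon K_i(A)\to K_i(\mathcal T_\alpha)$ is an isomorphism, which you defer to the classical gauge-action/core argument rather than prove; in the context of a result the paper itself only cites, that deferral is appropriate, but be aware that your proposal is a proof outline conditional on that lemma rather than a self-contained proof.
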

This sequence can be split into two exact sequences, $i=0,1$
\begin{equation}\label{eq-PV}
0\to \mathrm{Coinv}_{\alpha_*}K_i(A) \to
K_i(A\rtimes_\alpha\Z) \to
\mathrm{Inv}_{\alpha_*} K_{i-1}(A) \to  0
\end{equation}
where
$\mathrm{Coinv}_{\alpha_*}(K_i(A)) = \mathrm{coker}(\alpha_*-\mathrm{id})=K_i(A)/\mathrm{Im}(\alpha_*-\mathrm{id})$ and 
$    \mathrm{Inv}_{\alpha_*}(K_i(A)) = \mathrm{ker}(\alpha_*-\mathrm{id})
$.

\section{Augmentation and bulk-edge correspondence}
In this section we describe the general setup in the framework of $K$-theory which we use to relate the integrated density of states to other topological invariants which, in the one-dimensional situation, manifest themselves in spectral flows. 
As already observed in \cite{kellendonk_prodan}, the simple approach to the bulk edge correspondence, which is based on the Toeplitz extension of the bulk algebra, yields a trivial correspondence if the hull is totally disconnected, as the $K_1$-group of the edge algebra is trivial. The process of augmentation remedies this. We discuss in this section the general framework and specify in the following section to the two cases of particular interest which we already mentioned in the introduction.

Consider a crossed product $C^*$-algebra $A\rtimes_\alpha\Z$. As we have seen, such algebras arise in the context of aperiodic models with $A=C(\Xi)$, where $\Xi$ is the hull, but the added generality will be useful for future work, for instance on higher dimensional systems. The group $K_0(A\rtimes_\alpha\Z)$ then corresponds to the set of topological phases of the bulk of the system. By an augmentation of the system we mean an extension  
of $A$, that is, a $C^*$-algebra $\tilde{A}$ together with a surjective morphism $q: \tilde{A}\to A$. This is usually summarized in a short exact sequence (SES)
\begin{equation}\label{eq-SES1}
0 \rightarrow I \xrightarrow{i}  \tilde{A} \xrightarrow{q} A \rightarrow 0.
\end{equation}
where $I$ is the ideal given by $\ker q$ ($i$ is just the inclusion map). When we focus on $A=C(\Xi)$ then the augmentation comes from an augmented hull $\tilde\Xi$. We suppose that also the action $\alpha$ extends to $\tilde A$ (denoted by the same symbol $\alpha$ or, if more clarity is needed, by $\tilde\alpha$) in such a way that it preserves the ideal $I$. We then obtain an exact sequence of crossed product algebras. 
\begin{equation}\label{eqn:firstSES}
	0\rightarrow I\rtimes_\alpha \Z \xrightarrow{i}  \tilde{A}\rtimes_\alpha \Z \xrightarrow{q} A\rtimes_\alpha \Z \rightarrow0.
\end{equation}

On the other hand, the Toeplitz extension of $\tilde{A}\rtimes_\alpha \Z $ provides us two more exact sequences. 
The short exact sequence
\begin{equation}\label{eqn:secondSES}
	0 \rightarrow \tilde{A} \otimes \mathcal{K} \xrightarrow{\b} \mathcal{T}(\tilde{A}, \alpha) \xrightarrow{\pi} \tilde{A} \rtimes_\alpha \mathbb{Z} \rightarrow 0
\end{equation}
and the augmented exact sequence which arises if we
compose the two surjective maps, $\mathcal{T}(\tilde{A}, \alpha) \stackrel{\pi}\rightarrow \tilde{A}\rtimes_\alpha \Z \stackrel{q}\rightarrow A\rtimes_\alpha \Z $
\begin{equation}\label{eqn:thirdSES}
	0 \rightarrow J \xrightarrow{j} \mathcal{T}(\tilde{A}, \alpha) \xrightarrow{\tilde{\pi}} A \rtimes_\alpha \mathbb{Z} \rightarrow 0
\end{equation}
where $J$ is by definition the kernel of $\tilde{\pi}\coloneqq q \circ \pi$. 
We get the following diagram. 
\begin{equation}\label{eqn:diagram}
	\begin{tikzcd}
		&&&&&& 0 \\
		\\
		&& 0 && 0 && {I \rtimes_\alpha \mathbb{Z}} \\
		\\
		0 && {\tilde{A} \otimes \mathcal{K}} && {\mathcal{T}(\tilde{A}, \alpha)} && {\tilde{A} \rtimes_\alpha \mathbb{Z}} && 0 \\
		&&&&  \\
		0 && J && {\mathcal{T}(\tilde{A}, \alpha)} && {A \rtimes_\alpha \mathbb{Z}} && 0 \\
		\\
		&& {I \rtimes_\alpha \mathbb{Z}} && {0} && 0 \\
		\\
		&& 0
		\arrow[from=1-7, to=3-7]
		\arrow[from=3-3, to=5-3]
		\arrow[from=3-5, to=5-5]
		\arrow["i", from=3-7, to=5-7]
		\arrow[from=5-1, to=5-3]
		\arrow["\b", from=5-3, to=5-5]
		\arrow["\b", from=5-3, to=7-3]
		\arrow["\pi", from=5-5, to=5-7]
		\arrow[from=5-7, to=5-9]
		\arrow["q", from=5-7, to=7-7]
		\arrow[from=7-1, to=7-3]
		\arrow["j", from=7-3, to=7-5]
		\arrow["\pi", from=7-3, to=9-3]
		\arrow["{\tilde{\pi}}", from=7-5, to=7-7]
		\arrow[from=7-5, to=9-5]
		\arrow[from=7-7, to=7-9]
		\arrow[from=7-7, to=9-7]
		\arrow[from=9-3, to=11-3]
		\arrow[equal, from=5-5, to=7-5]
	\end{tikzcd}
\end{equation}
The first (upper) horizontal sequence is the SES (\ref{eqn:secondSES}), the second is the SES (\ref{eqn:thirdSES}) and the right vertical sequence is (\ref{eqn:firstSES}). 
The left vertical SES follows from the snake lemma, which says that $\mathrm{coker}(\b)\cong I \rtimes_\alpha \mathbb{Z}$. 

The algebra $A\rtimes_\alpha \mathbb{Z}$ describes the bulk, while $J$ is the analogue of the edge algebra in the simple bulk edge correspondence. Indeed, if $\tilde A = A$ then $J=A$ and the diagram reduces to the Toeplitz exact sequence of $A\rtimes_\alpha\Z$ which is the exact sequence underlying the simple bulk edge correspondence. However, if $A=C(\Xi)$ with totally disconnected $\Xi$, which is the case for systems with finite local complexity, then $K_1(A)=0$. With a proper augmentation $J$ is, of course, no longer the same as $A$. It is the algebra whose $K_1$-group hosts the topological invariants to which we want to relate the bulk invariants.
We will see that, in contrast to the case of \cite{kellendonk_prodan}, $J$ can contain invariants which are of bulk type. 

We apply the $K$-functor to obtain the following diagram which also includes the exponential maps arising from the short exact sequences and
further maps which we will explain below.
\begin{equation}\label{eqn:focus_on_K-diagram}
	\begin{tikzcd}
		&& {K_0(I \rtimes_\alpha \mathbb{Z})}  \\
		\\
		{K_0(\mathcal{T}(\tilde{A}, \alpha))} && {K_0(\tilde{A} \rtimes_\alpha \mathbb{Z})} && {K_1(\tilde{A})} && {\mathbb{C}} \\
		\\
		{K_0(\mathcal{T}(\tilde{A}, \alpha))} && {K_0(A \rtimes_\alpha \mathbb{Z})} && {K_1(J)} \\
		\\
		\C && {K_1(I \rtimes_\alpha \mathbb{Z})} && {K_1(I \rtimes_\alpha \mathbb{Z})}  && {\mathbb{C}} 
		\arrow["{i_*}", from=1-3, to=3-3]
		\arrow["{\mathrm{exp}}", from=3-3, to=3-5]
		\arrow["{q_*}", from=3-3, to=5-3]
		\arrow[dashed, "{\che}", from=3-5, to=3-7]
		\arrow["{\b_*}", from=3-5, to=5-5]
		\arrow["{\mathrm{exp}_J}", from=5-3, to=5-5]
		\arrow[color={rgb,10:red,0;green,0;blue,10}, "{\beta}", from=5-3, to=3-5]
		\arrow[color={rgb,10:red,10;green,0;blue,0},"{\gamma}"', from=5-3, to=7-5]
		\arrow["{\hat{\mathrm{exp}}}", from=5-3, to=7-3]
		\arrow["{\pi_*}", from=5-5, to=7-5]
		\arrow["s"', bend right=18, from=7-5, to=5-3]
		\arrow[equal,  from=7-3, to=7-5]
		\arrow[dashed, "{\chab}", from=7-5, to=7-7]
		\arrow[dashed, "{\chb}", from=5-3, to=7-1]
		\arrow["{\tilde{\pi}_*}", from=5-1, to=5-3]
		\arrow[equal, from=3-1, to=5-1] 
		\arrow["{\pi_*}", from=3-1, to=3-3]
	\end{tikzcd}
\end{equation}
An abstract insulator is a self-adjoint element $h$ of $A\rtimes_\alpha\Z$ (or, if it is matrix valued, of $M_N(A\rtimes_\alpha\Z)$) with a gap at the Fermi energy $E_F$. Its Fermi projection $P_F(h)$ is the projection onto the states below that energy. It is important to note that $P_F(h)$ is a continuous function of $h$. More precisely, by  
a density with support in a bounded interval $\Delta$ we mean a positive continuous function $\varphi:\R\to \R^+$ with support in $\Delta$ and $\int_{-\infty}^{+\infty}\varphi(x) \mathrm{d} x = 1$. Let
\begin{equation}\label{eq-g}
   g(t)=\int_{-\infty}^t\varphi(x)\mathrm{d}x.
\end{equation}   
Then $P_F(h) = 1 - g(h)$. 
The $K_0$-class $[P_F(h)]_0$ describes the topological phase. This class is mapped  
by $\exp_J$ to an element of $K_1(J)$ which we want to relate to spectral flow. $K_1(J)$ itself is part of the right vertical exact sequence in the above diagram.
Our plan is therefore to split the elements of the image of $\exp_J$ into two parts, which correspond to the images of the two maps
\begin{align}\label{beta}
	\begin{split}
		\beta:K_0(A \rtimes_\alpha \mathbb{Z}) \cap \mathrm{ker}(\hat{\mathrm{exp}}) & \rightarrow K_1(\tilde{A})/\mathrm{exp}\circ i_*(K_0(I \rtimes_\alpha \mathbb{Z}))\\
		x &\mapsto \b_*^{-1}\circ \mathrm{exp}_J(x)=\mathrm{exp}\circ q_*^{-1}(x) 
	\end{split} \\
	\begin{split}
		\gamma:K_0(A \rtimes_\alpha \mathbb{Z})  & \rightarrow K_1(I \rtimes_\alpha \mathbb{Z}))\\
		x&\mapsto \pi_*\circ \mathrm{exp}_J(x)=\hat \exp (x) 
	\end{split}
\end{align}
Note that, to compute $\beta$ and $\gamma$ on the $K_0$-class $[P_F(h)]_0$ defined by $h$ we need to look for a pre-image $\tilde h$ of $h$ under $q$. We call such a pre-image $\tilde h$ an augmentation of $h$. Furthermore, assuming that $E_F<1$, we may also take a pre-image of $1\oplus h \in M_2(A\rtimes_\alpha\Z))$, because $P_F(1\oplus h) = 0 \oplus P_F(h)$ and this defines the same $K_0$-class as $P_F(h)$. Through the choice of pre-image $\beta$ is defined only up to an element in $\mathrm{exp}\circ i_*(K_0(I \rtimes_\alpha \mathbb{Z}))$. This introduces an ambiguity. 
Moreover, we choose a section $s$ for $\gamma$, that is, a group morphism $s:\mathrm{im}\gamma\to K_0(A \rtimes_\alpha \mathbb{Z})$
which satisfies $\gamma\circ s=\mathrm{Id}_{\mathrm{Im}\gamma}$. We will see that such a section exists indeed in our models, but in general we need to assume its existence. 

The next step is to define so-called Chern-cocycles
$\chb$, $\che$ and $\chab$ which give rise to morphisms from the $K$-groups of interest into $\C$ via Connes pairing \cite{connes1994noncommutative,prodan2016bulk}, and which we can give a physical interpretion. The superscripts $b$, $e$ and $ab$ mean \textit{bulk}, \textit{edge} and \textit{augmented bulk} respectively. Our attention lies on the middle vertical sequence
\begin{equation}\label{eqn:K-vertical}
\begin{matrix}
K_0(I \rtimes_\alpha \mathbb{Z}) & \stackrel{i_*}\to &
 K_0(\tilde A \rtimes_\alpha \mathbb{Z}) &  \stackrel{q_*}\to &
K_0(A \rtimes_\alpha \mathbb{Z}) &  \stackrel{\hat\exp}\to & 
K_1(I \rtimes_\alpha \mathbb{Z})\\
 & &\:\:\:\:\:\:\:\:\: \downarrow \che\!\circ\exp & &\:\:\:\:\:\: \downarrow \chb & &\:\:\:\:\:\:\:\:\: \downarrow \chab  \\
 & &\!\!\!\!\!\!\!\!\!\!\!\!\!\!\!\!\!\!\!\!\!\C  & & \!\!\!\!\!\!\!\!\!\!\!\!\!\C  & & \!\!\!\!\!\!\!\!\!\!\!\!\C
\end{matrix}
\end{equation}
with the aim to establish a relation of the form, for $x\in K_0(A \rtimes_\alpha \mathbb{Z})$, 
\begin{equation}\label{theFormula}
	\chb(x) \equiv \che(\tilde\beta(x))-\chab(\gamma(x))\mod{\mathrm{Amb}_e+\mathrm{Amb}_b}
\end{equation}
where 
$$\tilde\beta  = \beta\circ (\mathrm{id}-s\circ \gamma) $$
and
\begin{align*}
	\mathrm{Amb}_e &:= \che(\mathrm{exp}\circ i_*(K_0(I \rtimes_\alpha \mathbb{Z})))\\
	\mathrm{Amb}_b &:= \chb(\tilde{\pi}_*(K_0(\mathcal{T}(\tilde{A}, \alpha))))
\end{align*}
are what we call the ambiguity groups of the edge and of the bulk respectively. The first ambiguity comes from the definition of $\beta$. The second one is due to the fact that $\beta$ and $\gamma$ do not see the elements $\tilde{\pi}_*(K_0(\mathcal{T}(\tilde{A}, \alpha)))$, as the two horizontal lines in (\ref{eqn:focus_on_K-diagram}) are exact.  
The formula says that, up to an ambiguity, the bulk numerical invariant
$\chb(x)$--which will have the interpretation as value of the IDS if $x$ is the class of the Fermi projection--can be computed as a sum of two other numerical invariants which are obtained from a direct sum decomposition of $\exp_J(x)\in K_1(J)$. Indeed, any element $w$ in the image of $\exp_J$ can be uniquely written as a sum $w=y+z$ with $y\in \b_*(K_1(\tilde A))$ and $z\in s'(K_1(I\rtimes_\alpha\Z)\cap \mathrm{im} \gamma)$, where $s' = \exp_J\circ s$.
We then use a cocycle $\che$ on $\tilde A$ and a cocycle 
$\chab$ on $K_1(I\rtimes_\alpha\Z)$ to pair with the two summands in the decomposition of $\exp_J(x)$.

\subsection{Chern cocycles}
In general, a Chern cocycle of degree $n$ over a $C^*$-algebra $B$ can be defined with the help of $n$ commuting derivations $\delta_1,\dots, \delta_n$ with domains $\mathcal{D}_{\delta_1}, \dots \mathcal{D}_{\delta_n}$ and a positive trace $\tau$ with domain $\mathcal{D}_{\tau}$ satisfying the following properties. 
\begin{enumerate}
	\item The domains $\mathcal{D}_{\delta_1},\dots \mathcal{D}_{\delta_n}$ are dense sub-algebras of $B$.
	\item $\mathcal{D}_{\tau}$ is an ideal of $B$.
	\item $\tau$ is cyclic, i.e. $\tau(ab)=\tau(ba)$, for all $a\in \mathcal{D}_{\tau}$ and $b\in B$.
	\item $\lvert \tau(ab) \rvert \le \lVert b\rVert \tau(\lvert a\rvert)$, for all $a\in \mathcal{D}_{\tau}$ and $b\in B$.
	\item The set $\{a\in A|\delta_i(a)\in \mathcal{D}_{\tau}\}$ is a dense sub-algebra.
	\item $\tau\circ \delta_i(a)=0$ for all $a\in \mathcal{D}_{\delta_i}$ such that $\delta_i(a)\in \mathcal{D}_{\tau}$.
	\item There exists a dense Fréchet sub-algebra $\mathcal B$ which is closed under holomorphic functional calculus and such that the inclusion $i:\mathcal{B}\rightarrow A $ is continuous and 
	$$\mathcal{B}^{n+1}\ni (a_0,\dots a_n)\xmapsto{\eta} \sum_{\sigma\in \mathrm{Perm}_n}\mathrm{sgn}(\sigma)\tau(a_0\delta_{\sigma(1)}(a_1)\dots\delta_{\sigma(n)}(a_1))$$
	is well defined.    
\end{enumerate}
Then, we can define the Chern cocycles of degree $n$, $\mathrm{ch}_n:K_n(B)\rightarrow\C$. If $n$ is even $\mathrm{ch}_n([p]_0)=c_n \eta(p,\dots, p)$,
while if $n$ odd $\mathrm{ch}_n([u]_1)=c_n \eta((u^*-1),u,u^*,\dots, u)$, where $c_n\in\C$ are constants. In these formulas we need to evaluate the chern cocycle on a representative which belongs to $\mathcal B$; there is always such a representative as the $K$-groups of $\mathcal B$ and $B$ are isomorphic. 
In our case, we will see that $\mathcal{B}$ is a Banach algebra such that $\lVert \cdot\rVert_\mathcal{B}\ge \lVert \cdot\rVert_B$, so that the inclusion is continuous and stable by holomorphic functional calculus.

We apply this to $B=A\rtimes_\alpha \Z$. 
Given an $\alpha$-invariant trace $\tau_A$ on $A$ we obtain a trace $\hat{\tau}_A$ on (a dense subalgebra of) $A\rtimes_\alpha \Z$ via
\begin{equation}\label{tau_hat}
	\hat{\tau}_A(au^n)=\delta_{0n}\tau_A(a), \quad a\in A.
\end{equation}
It also induces a trace on $M_m(A\rtimes_\alpha \Z)$ (for an arbitrary $m$) simply by composing with the standard trace for $m\times m$ matrices $\mathrm{Tr}_m$. To simplify notations, the composition with $\mathrm{Tr}_m$ remains implicit.

We define $\chb$ on $ K_0(A\rtimes_\alpha \Z)$ as the $0$-cocycle $\ch_0$ defined by the trace $\hat \tau_A$:
\begin{equation}\label{chern_bulk}
	\chb([p]_0)=\hat{\tau}_A(p).
\end{equation}

Let us consider now $1$-cocycles $\ch_1$ defined by a trace $\tau$ and derivation $\delta$ satisfying the properties (1)-(6) above. The following result will be useful to interprete the cocycles $\che$ and $\chab$  for the models under consideration. 
\begin{prop}\label{useful_formula_for_trace}
	Let $B$ be a $C^*$-algebra, $\tau$ a positive trace on $B$ and 
$\delta$ be a derivation on $B$ satisfying the properties (1)-(6) above. 
	Let $\varphi$ be a density with support in a bounded interval $\Delta$ and $g$ as in \eqref{eq-g}. We choose a self-adjoint element $b\in B$ and we set $U=\mathrm{e}^{2\pi\mathrm{i}g(b)}$. If $U$ is in $\mathcal{D}_\delta$ and $(U^*-1)\delta U$ is in $\mathcal{D}_\tau$, then
	$$\tau((U^*-1) \delta U)=-\tau(\varphi(b)\delta b).$$

\end{prop}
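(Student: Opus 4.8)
The plan is to prove the identity $\tau((U^*-1)\delta U)=-\tau(\varphi(b)\delta b)$ by reducing, via functional calculus, to the elementary derivative formula $\delta(f(b))=f'(b)\delta b$ for smooth $f$, and then computing $(U^*-1)\delta U$ explicitly for $U=\mathrm{e}^{2\pi\mathrm{i}g(b)}$. First I would fix notation by writing $G=g(b)$, a self-adjoint element (bounded, since $g$ is bounded), so that $U=\mathrm{e}^{2\pi\mathrm{i}G}$ is unitary. The key computation is that the Leibniz rule for $\delta$ applied to a smooth function of a single self-adjoint element gives $\delta U = 2\pi\mathrm{i}\,\Phi(b,b)[\delta b]$ where $\Phi$ is the divided-difference / Duhamel kernel of $x\mapsto\mathrm{e}^{2\pi\mathrm{i}g(x)}$; but rather than carrying the Duhamel integral around, the cleaner route is: since $\tau$ is cyclic on its domain (property (3)) and $(U^*-1)\delta U\in\mathcal D_\tau$, we can freely cyclically permute. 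Using $\delta(UU^*)=0$ (as $\delta$ is a derivation and $UU^*=1$), one gets $U^*\delta U = -(\delta U^*)U$, hence $(U^*-1)\delta U = U^*\delta U - \delta U = -(\delta U^*)U - \delta U$. Under $\tau$, using cyclicity, $\tau((U^*-1)\delta U)=\tau(U^*\delta U)-\tau(\delta U)$, and $\tau(\delta U)=0$ by property (6) (provided $\delta U\in\mathcal D_\tau$, which one checks from the hypothesis that $(U^*-1)\delta U\in\mathcal D_\tau$ together with $U$ bounded and property (4); I'd spell this out). So the claim reduces to $\tau(U^*\delta U)=-\tau(\varphi(b)\delta b)$.

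Next I would compute $U^*\delta U$ via the holomorphic/smooth functional calculus. Write $g$ as a uniform limit (on the spectrum of $b$, a compact set) of polynomials, or better, use the standard fact — valid because $\delta$ is a closed derivation and $b\in\mathcal D_\delta$ — that for $C^1$ functions $f$ on a neighbourhood of $\mathrm{spec}(b)$ one has $\delta(f(b))=\int f'$ against the spectral-type Duhamel expansion, and in particular $U^*\delta U = \mathrm{e}^{-2\pi\mathrm{i}G}\,\delta(\mathrm{e}^{2\pi\mathrm{i}G})$. Expanding $\delta(\mathrm{e}^{2\pi\mathrm{i}G})=\sum_{k\ge1}\frac{(2\pi\mathrm{i})^k}{k!}\sum_{j=0}^{k-1}G^{j}(\delta G)G^{k-1-j}$ and multiplying by $\mathrm{e}^{-2\pi\mathrm{i}G}$ gives, after applying the trace $\tau$ and using cyclicity to collapse all the $G^j$ and $G^{k-1-j}$ factors together with the $\mathrm{e}^{-2\pi\mathrm{i}G}$, the scalar series $\tau(U^*\delta U)=\big(\sum_{k\ge1}\frac{(2\pi\mathrm{i})^k}{k!}k\big)\tau(\delta G)\cdot\frac1{2\pi\mathrm{i}}$ — wait, more carefully: cyclicity turns each of the $k$ terms in the inner sum into $\tau(\mathrm{e}^{-2\pi\mathrm{i}G}G^{k-1}\delta G)$, independent of $j$, so the sum over $j$ just produces a factor $k$, and we get $\tau(U^*\delta U)=2\pi\mathrm{i}\,\tau\!\big(\mathrm{e}^{-2\pi\mathrm{i}G}\,\mathrm{e}^{2\pi\mathrm{i}G}\,(\delta G)\big)=2\pi\mathrm{i}\,\tau(\delta G)$, using $\sum_{k\ge1}\frac{(2\pi\mathrm{i})^k}{(k-1)!}G^{k-1}=2\pi\mathrm{i}\,\mathrm{e}^{2\pi\mathrm{i}G}$. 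But now $\tau(\delta G)=\tau(\delta(g(b)))=0$ by property (6) again — so this naive pass gives $0$, which is wrong unless $\tau(\varphi(b)\delta b)=0$, signalling that the cyclic collapse is illegitimate because $G$ and $\delta b$ need not commute and $\tau$ is only a trace, not a character. I must therefore be more careful: cyclicity is allowed but it does not let me merge $\mathrm{e}^{-2\pi\mathrm{i}G}$ with $G^{k-1}$ unless everything sits to one side of $\delta G$, which it does not after only a cyclic rotation.

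The correct accounting is the one obstacle I expect to be delicate, and it is exactly the point of the proposition. Here is the right way: expand $\delta U$ via the Duhamel formula $\delta\,\mathrm{e}^{2\pi\mathrm{i}G}=2\pi\mathrm{i}\int_0^1 \mathrm{e}^{2\pi\mathrm{i}sG}\,(\delta G)\,\mathrm{e}^{2\pi\mathrm{i}(1-s)G}\,ds$, so $U^*\delta U=2\pi\mathrm{i}\int_0^1 \mathrm{e}^{-2\pi\mathrm{i}(1-s)G}(\delta G)\mathrm{e}^{2\pi\mathrm{i}(1-s)G}ds$; apply $\tau$ and use cyclicity on each $s$-slice to move $\mathrm{e}^{2\pi\mathrm{i}(1-s)G}$ to the left, obtaining $\tau(U^*\delta U)=2\pi\mathrm{i}\,\tau(\delta G)$ after the $s$-integral — and this genuinely is $2\pi\mathrm{i}\,\tau(\delta(g(b)))$. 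To avoid the contradiction above one must \emph{not} invoke property (6) here, because $\delta G=\delta(g(b))$ need not lie in $\mathcal D_\tau$ even though $(U^*-1)\delta U$ does; the quantity $\tau(\delta G)$ is not defined on its own. Instead I would keep the expression $(U^*-1)\delta U$ intact throughout, write $(U^*-1)\delta U = 2\pi\mathrm{i}\int_0^1(U^*-1)\mathrm{e}^{-2\pi\mathrm{i}(1-s)G}(\delta G)\mathrm{e}^{2\pi\mathrm{i}(1-s)G}ds$, apply $\tau$ with cyclicity, and get $\tau((U^*-1)\delta U)=2\pi\mathrm{i}\int_0^1\tau\!\big(\mathrm{e}^{2\pi\mathrm{i}(1-s)G}(U^*-1)\mathrm{e}^{-2\pi\mathrm{i}(1-s)G}\,\delta G\big)ds=2\pi\mathrm{i}\int_0^1\tau\!\big((U^*-1)\,\delta G\big)ds=2\pi\mathrm{i}\,\tau\!\big((U^*-1)\delta(g(b))\big)$, since $U$ commutes with every function of $G$. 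Now $\delta(g(b))=g'(b)\,\delta b=\varphi(b)\,\delta b$ by the $C^1$ functional-calculus Leibniz rule and $g'=\varphi$, giving $\tau((U^*-1)\delta U)=2\pi\mathrm{i}\,\tau((U^*-1)\varphi(b)\delta b)$. Finally I would show the factor $2\pi\mathrm{i}(U^*-1)$ can be replaced by $-1$ under the trace: on $\mathrm{supp}\,\varphi=\Delta$, $g$ is $C^1$ but more to the point we only need the spectral calculus identity that $2\pi\mathrm{i}(U^*-1)\varphi(b) = h(b)$ where $h(x)=2\pi\mathrm{i}(\mathrm{e}^{-2\pi\mathrm{i}g(x)}-1)\varphi(x)$; this is \emph{not} equal to $-\varphi(x)$ pointwise, so one more cancellation is needed. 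The genuine mechanism is: write $2\pi\mathrm{i}(U^*-1)\delta U$ differently — namely $(U^*-1)\delta U = \delta U^* \cdot(U-1) + (\text{telescoping})$, or most transparently, note $\delta\big((U^*-1)(U-1)\big)=\delta(2-U-U^*)=-\delta U-\delta U^*$ lies in $\mathcal D_\tau$ and has zero trace by (6), whence $\tau((U^*-1)\delta U)=\tau(U^*\delta U)$ \emph{and} $=-\tau((\delta U^*)U)=-\tau(U\delta U^*)$; averaging, $2\tau((U^*-1)\delta U)=\tau(U^*\delta U-U\delta U^*)$, and a parallel Duhamel computation of the right side produces exactly $-2\tau(\varphi(b)\delta b)$ because the two contributions $U^*\delta U$ and $-U\delta U^*$ symmetrize the kernel so that, after the $s$-integral, the $\mathrm{e}^{\pm2\pi\mathrm{i}sG}$ conjugations combine with a sign to leave the \emph{derivative of} $\cos(2\pi g(x))-1$-type factor whose total $x$-integral against $\delta b$ collapses — via property (6) applied now to the legitimately-$\mathcal D_\tau$ element — to $-\tau(\varphi(b)\delta b)$. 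The crux, and the step I would write out in full, is this last symmetrization: property (6) ($\tau\circ\delta=0$) is the only tool that converts the Duhamel-kernel integral into the clean answer, and it must be applied to $(U^*-1)\delta U$ or to $\delta((U-1)(U^*-1))$ — never to $\delta(g(b))$ alone — which is precisely why the hypotheses are stated the way they are. Everything else (closedness of $\delta$, continuity of functional calculus on the Fréchet algebra $\mathcal B$ of property (7), boundedness estimates from property (4)) is routine bookkeeping to justify the manipulations.
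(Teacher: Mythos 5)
Your proposal correctly locates the central difficulty --- that $\delta U$ and $\delta(g(b))$ need not lie in $\mathcal{D}_\tau$ individually, so the naive cyclic collapse $\tau(U^*\delta U)=2\pi\mathrm{i}\,\tau(\delta g(b))=0$ is illegitimate --- and your Duhamel-plus-cyclicity step, keeping the factor $U^*-1$ attached, does yield the correct intermediate identity $\tau((U^*-1)\delta U)=2\pi\mathrm{i}\,\tau((1-U)\varphi(b)\delta b)$. (Note, though, that your justification ``$\delta(g(b))=g'(b)\delta b$'' is false as an operator identity; it only becomes valid under the trace after commuting the $b$-dependent prefactor through, which is exactly what the paper's second Duhamel computation with the Fourier transform of $(1-g(E))\mathrm{e}^E$ accomplishes.) But the proof does not close. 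First, the reduction $\tau((U^*-1)\delta U)=\tau(U^*\delta U)$ requires $\tau(\delta U)=0$ and hence $\delta U\in\mathcal{D}_\tau$, which cannot be deduced from $(U^*-1)\delta U\in\mathcal{D}_\tau$: the factor $U^*-1$ is not invertible, and in the applications it lies in a proper ideal and is precisely what renders the product traceable. Second, and more seriously, the final ``symmetrization'' is not a proof: averaging $U^*\delta U$ with $-U\delta U^*$ only reproduces the $k=\pm1$ instances of the same defective identity, and there is no mechanism turning the factor $2\pi\mathrm{i}(1-U)\varphi(b)$ into $-\varphi(b)$ under the trace --- the function $x\mapsto\bigl(2\pi\mathrm{i}(1-\mathrm{e}^{2\pi\mathrm{i}g(x)})+1\bigr)\varphi(x)$ is not zero, and property (6) does not apply to it.

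The missing idea is the one the paper uses. Homotopy invariance of $U\mapsto\nu(U)=\tau((U^*-1)\delta U)$ together with the Whitehead lemma gives $\nu(U^k)=k\nu(U)$ for every $k$, and the Duhamel/cyclicity computation applied to $U^k$ then yields the whole family of identities $\nu(U)=\tau((1-U^k)\varphi(b)\delta b)$ for $k\neq0$, not just $k=\pm1$. Summing these against the Fourier coefficients $c_k$ of a smooth bump $\phi$ supported in $(0,1)$ (so that $\sum_kc_k=\phi(0)=0$) converts $\sum_kc_k(U^k-1)$ into $\phi\circ g(b)$ and gives $c_0\,\nu(U)=-\tau(\phi\circ g(b)\,\varphi(b)\,\delta b)$; letting $\phi$ tend to $\chi_{[0,1]}$ then produces $-\tau(\varphi(b)\delta b)$. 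It is precisely this summation over all powers of $U$ that eliminates the unwanted factor; no single value of $k$, nor any finite symmetric combination, suffices.
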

\begin{proof}
	The statement is a slight generalisation of that of proposition 7.1.3 of \cite{prodan2016bulk} since our hypothesis is a little bit weaker, but the proof follows that same lines. 
	
	It is well known that the map $U\mapsto \nu(U):=\tau((U^*-1) \delta U)$ is invariant under homotopies defined by differentiable paths $t\mapsto U(t)$ where $U(t)$ is a unitary in $B$ (if necessary with a unit attached) and such that $\delta U(t)\in \mathcal{D}_\delta$ and $(U(t)^*-1)\delta U(t)\in \mathcal{D}_\tau$. Extending $\tau$ and $\delta$ to matrices over $B$ in the standard way one can use that $\nu(U^k) = k \nu(U)$ (Whitehead lemma).
	
	Let $b$ be a self-adjoint element such that $U$ satisfies the hypothesis of the theorem.
	By Duhamel formula, we have
	$$\delta U^k=2\pi\mathrm{i}k\int_0^1U^{(1-s)k}\delta g(b)U^{sk}\mathrm{d}s$$
	and by cyclicity of $\tau$, we get
	$$\nu(U)=\tau((1-U^k)\delta g(b)).$$
	Let $\tilde g$ be the Fourier transform of the function $E\mapsto (1-g(E))\mathrm{e}^E$ (which is well-defined as the function is exponentially decaying), hence
	$$(1-g(E))\mathrm{e}^E=-\frac{1}{2\pi}\int_{-\infty}^\infty \tilde{g}(-t)\mathrm{e}^{-\mathrm{i}Et}\mathrm{d}t$$
	and therefore
	$$g'(E)=\frac{1}{2\pi}\int_{-\infty}^\infty \tilde{g}(-t)\frac{\mathrm{d}}{\mathrm{d}E}\mathrm{e}^{-(1+\mathrm{i}t)E}\mathrm{d}t=-\frac{1}{2\pi}\int_{-\infty}^\infty \tilde{g}(-t)(1+\mathrm{i}t)\mathrm{e}^{-(1+\mathrm{i}t)E}\mathrm{d}t.$$
	Replacing $\delta g(b)$, by the Duhamel formula and cyclicity of the trace we get
	\begin{align*}
		\begin{split}
			\nu(U)&=\frac{1}{2\pi}\int_{-\infty}^\infty \tilde{g}(-t)\tau\left((1-U^k)\delta \mathrm{e}^{-(1+\mathrm{i}t)b}\right)\mathrm{d}t\\
			&=\frac{1}{2\pi}\int_{-\infty}^\infty \tilde{g}(-t)(1+\mathrm{i}t)\tau\left((U^k-1) \int_0^1 \mathrm{e}^{-q(1+\mathrm{i}t)b}\delta b\mathrm{e}^{-(1-q)(1+\mathrm{i}t)b}\mathrm{d}q\right)\mathrm{d}t\\
			&= \tau\left(\frac{1}{2\pi}\int_{-\infty}^\infty \tilde{g}(-t)(1+\mathrm{i}t)\mathrm{e}^{-(1+\mathrm{i}t)b}(U^k-1)\mathrm{d}t\delta b \right)\\
			&= \tau\left((1-U^k)g'(b)\delta b \right)\\
			&= \tau\left((1-U^k)\varphi(b)\delta b \right)
		\end{split}
	\end{align*}
	where we used the fact that $U^k-1$ commutes with functions in the variable $b$.
	Now, let $\phi$ be a function in $C^2([0,1])$ such that the support of $\phi$ is contained in $]0,1[$
	and let $\{c_k\}_{k\in \Z}$ be the Fourier coefficients of $\phi$, namely
	$$S_\phi^{(N)}(x)=\sum_{\lvert n\rvert \le N}c_k\mathrm{e}^{-2\pi\mathrm{i}kx}, \quad c_k=\int_0^1\phi(t)\mathrm{e}^{2\pi\mathrm{i}kt}\mathrm{d}t,$$
	as $\phi$  is $C^2$, $S_\phi^{(N)}$ converges normally to $\phi$. As $\phi(0)=0$,
	$\sum_{k\in\Z}c_k=0$ hence $\sum_{k\in\Z\setminus\{0\}}c_k=-c_0$.
	Therefore,
	\begin{align*}
		\begin{split}
			c_0\nu(U)&=-\sum_{k\in\Z\setminus\{0\}}c_k\nu(U)\\
			&= \sum_{k\in\Z}c_k \tau\left((U^k-1)\varphi(b)\delta b \right)\\
			&= \left(\sum_{k\in\Z}c_k\right) \tau\left(\varphi(b)\delta b \right)-  \tau\left(\sum_{k\in\Z}\left(c_k\mathrm{e}^{2\pi\mathrm{i}kg(b)}\right)\varphi(b)\delta b \right)\\
			&= -\tau\left(\lim_{N\rightarrow\infty} S_\phi^{(N)}\circ g (b)\varphi(b)\delta b\right)\\ 
			&= -\tau\left(\phi\circ g (b)\varphi(b)\delta b\right).
		\end{split}
	\end{align*}
	Finally, we let $\phi$ converge to the indicator function of $[0, 1]$. Then $c_0 \rightarrow 1$, while on the other hand $\phi\circ g (b)\varphi(b) \rightarrow\varphi(b)$ (the Gibbs phenomenon is damped). This concludes the proof.
\end{proof}
\subsection{Interpretation of $\chb$ for aperiodic chains}\label{sec-IDS}
In the context of our physical models, when $P_F(h)$ is the Fermi projection of an abstract insulator $h$ whose Fermi energy lies in the gap, the number $\chb([P_F(h)]_0)$ has a well-known interpretation as integrated density of states up to the Fermi energy \cite{Bellissard1992}. 
We recall some details.

Let $\Xi$ be the hull of the system with its $\Z$-action $\alpha$. 
To obtain a formula for $\chb([P_F(h)]_0)$ which involves the Hamiltonians $H_\xi$, $\xi\in\Xi$, which arise through the representations $\rho_\xi$ of $C(\Xi)\rtimes_\alpha\Z$, we express $\chb$ in such a representation. For that we assume that the measure $\mu$ is ergodic w.r.t.\ to the action $\alpha$. Such measures always exist, and are sometimes even unique. This is for instance the case for our system with one or two cuts when the rotation angle is irrational. 

By Birkhoff's theorem, for $\mu$-almost all $\xi_0\in\Xi$ and $b = f u^n$, $f\in C(\Xi)$, we have 
$$\hat\tau_A(b)=\lim_{L\rightarrow \infty}\frac{1}{2L+1}\delta_{n0} \sum_{-L\le n\le L}f(\xi_0-n\theta) = \lim_{L\rightarrow \infty}\frac{\mathrm{Tr}(\rho_{\xi_0}(b)|_{\Lambda_L})}{2L+1}$$
where $\Lambda_L\subset \ell^2(\Z)$ are the $2L+1$ sites around the origin. The limit on the r.h.s.\ is called the trace per unit volume of $\rho_{\xi_0}(b)$.

Let $\Delta$ be a gap in the spectrum of $h$. If the Fermi energy $E_F$ is in the gap $\Delta$, then 
$P_F(H_\xi):=1-g(H_\xi)$ is the Fermi projection of $H_\xi$, where $g$ is as in \eqref{eq-g}.  
As $P_F(H_\xi) = \rho_\xi(P_F(h))$ we obtain 
$$\hat{\tau}_A(P_F(h))=\lim_{L\rightarrow \infty}\frac{\mathrm{Tr}(P_F(H_\xi)|_{\Lambda_L})}{2L+1}=\lim_{L\rightarrow \infty} \frac{\mathrm{Tr}(P_F(H_\xi|_{\Lambda_L}))}{2L+1}$$
where the second equality is Shubin's formula which is valid, as $h$ can be approximated by an operator of finite interaction range. 
Therefore, $\hat{\tau}_A(P_F(h))$ corresponds for $\mu$-almost $\xi$ to the number of states of $H_\xi$ below the Fermi energy $E_F$ per unit volume. This quantity is also referred to as the Integrated Density of States at the Fermi level and denoted by $\mathrm{IDS}(E_F)$. Thus
\begin{equation}\label{eq-int-chb}
\chb([P_F(h)]_0) = \mathrm{IDS}(E_F)
\end{equation}
\section{Augmentation via the mapping torus}
We discuss here an augmentation procedure which is based on the mapping torus construction. Applied to $A=C(\Xi)$ we obtain a tight binding version of a bulk edge  correspondence which is known for models described by differential operators \cite{ZoisKellendonk2005rotation}: the correspondence relates the integrated density to the force induced on the edge states by moving the boundary of the system. In one dimension it manifests itself through a spectral flow of edge states which is induced by the moving of the boundary. Upon using a particular choice of augmentation we obtain an alternative proof of the equality between Bellissard's and Johnson's gap labelling groups.

Given a $C^*$-algebra $A$ with an action $\alpha$ of $\Z$, its mapping torus is the algebra
$$M_\alpha A=\{f\in C([0,1], A)|f(1)=\alpha(f(0)) \}.$$
The sequence
\begin{equation*}
	0 \rightarrow SA \xrightarrow{i}  M_\alpha A \xrightarrow{q} A \rightarrow 0.
\end{equation*}
is exact when we consider $q: f\mapsto f(0)$. The action $\alpha$ induces an action $\beta$ of $\R$ on $M_\alpha A$ defined as
$$(\beta^s(f))(t)=\alpha^{\lfloor t+s\rfloor}(f(\{t+s\})), \quad \forall s\in\R, \forall t\in [0,1].$$
By restriction of $s$ to $\Z$ we get the action $\tilde{\alpha}$ of $\Z$ on $M_\alpha A$. 

\begin{lemma}\label{lem-split-exact}
	The boundary maps of the central vertical sequence of \eqref{eqn:focus_on_K-diagram} are zero. 
\end{lemma}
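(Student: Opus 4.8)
The plan is to recognise the central vertical sequence of \eqref{eqn:focus_on_K-diagram} as the six-term sequence of a \emph{mapping torus} extension associated with an \emph{inner} automorphism, for which the connecting maps are known to vanish. First I would write out \eqref{eqn:firstSES} in the present situation, where $\tilde A=M_\alpha A$, $I=SA$, and the $\Z$-action on both is $\tilde\alpha$, the restriction of the flow $\beta$ to $\Z$. Unwinding the definition of $\beta$ one sees that $\tilde\alpha$ acts pointwise, $\tilde\alpha(f)(t)=\alpha(f(t))$ for $f\in M_\alpha A$; in particular $SA$ carries the action $\mathrm{id}\otimes\alpha$, so $SA\rtimes_{\tilde\alpha}\Z\cong S(A\rtimes_\alpha\Z)$ and \eqref{eqn:firstSES} reads
\begin{equation*}
	0\to S(A\rtimes_\alpha\Z)\xrightarrow{\;i\;}(M_\alpha A)\rtimes_{\tilde\alpha}\Z\xrightarrow{\;q\;}A\rtimes_\alpha\Z\to 0 .
\end{equation*}
The lemma says that the exponential and index maps of the associated six-term sequence vanish, equivalently that it splits into the short exact sequences $0\to K_i(S(A\rtimes_\alpha\Z))\to K_i((M_\alpha A)\rtimes_{\tilde\alpha}\Z)\to K_i(A\rtimes_\alpha\Z)\to 0$, $i=0,1$.

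Next I would put $B=A\rtimes_\alpha\Z$ and let $u\in B$ be the crossed product unitary, so $\mathrm{Ad}(u)$ is the inner automorphism of $B$ restricting to $\alpha$ on $A\subset B$, and I would show the sequence above is isomorphic to the mapping torus sequence $0\to SB\to M_{\mathrm{Ad}(u)}B\xrightarrow{\mathrm{ev}_0}B\to 0$, with $M_{\mathrm{Ad}(u)}B=\{g\in C([0,1],B)\mid g(1)=ug(0)u^*\}$. Indeed $M_\alpha A$ includes into $M_{\mathrm{Ad}(u)}B$, the constant function $u$ lies in $M_{\mathrm{Ad}(u)}B$, and conjugation by it implements $\tilde\alpha$ on $M_\alpha A$; integrating this covariant pair gives a surjection $(M_\alpha A)\rtimes_{\tilde\alpha}\Z\to M_{\mathrm{Ad}(u)}B$, which I would check is injective by producing a faithful conditional expectation $M_{\mathrm{Ad}(u)}B\to M_\alpha A$, $g\mapsto(t\mapsto E(g(t)))$ with $E\colon B\to A$ the canonical expectation, that pulls back to the canonical expectation of the crossed product. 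Under this isomorphism $q$ becomes $\mathrm{ev}_0$ and the ideal $S(A\rtimes_\alpha\Z)$ becomes $\ker\mathrm{ev}_0=SB$, so the two six-term sequences are identified.

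Finally I would use the standard computation that the connecting maps of a mapping torus extension $0\to SB\to M_\gamma B\xrightarrow{\mathrm{ev}_0}B\to 0$ are, under the suspension isomorphisms $K_i(SB)\cong K_{i-1}(B)$, given by $\pm(\mathrm{id}-\gamma_*)$ --- this is exactly what makes the mapping torus sequence agree with the Pimsner--Voiculescu sequence \cite{blackadar1998k} --- together with the fact that an inner automorphism acts as the identity on $K$-theory. Since $\gamma=\mathrm{Ad}(u)$ is inner, $\mathrm{id}-\gamma_*=0$, so both connecting maps vanish. I expect the main obstacle to be the verification in the middle paragraph that the integrated $*$-morphism is a genuine isomorphism of $C^*$-algebras (matching the norms), rather than just a surjection; once that is in place the rest is formal. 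If one prefers to stay entirely within the tools recalled in Section~\ref{sec-K-theory}, I would instead compute the exponential map directly: for $[p]_0$ with $p\in\mathrm{Proj}_n(B)$ take the self-adjoint lift $a(t)=(1-t)p+t\,u_npu_n^*$ (where $u_n=u\otimes 1_n$) of $p$ along $\mathrm{ev}_0$; then $\mathrm{e}^{2\pi\mathrm{i}a(0)}=1_n=\mathrm{e}^{2\pi\mathrm{i}a(1)}$, so $U(t)=\mathrm{e}^{2\pi\mathrm{i}a(t)}$ is a loop of unitaries based at $1_n$ and $\exp([p]_0)=-[U]_1$, which by \eqref{Bottmap} equals $[u_npu_n^*]_0-[p]_0=0$; the same argument treats the index map, and the only step of substance is matching this particular loop with the class $[u_npu_n^*]_0-[p]_0$ under \eqref{Bottmap}.
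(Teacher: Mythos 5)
Your argument is correct and follows essentially the same route as the paper: both identify $(M_\alpha A)\rtimes_{\tilde\alpha}\Z$ with the mapping torus $M_{\mathrm{Ad}_u}(A\rtimes_\alpha\Z)$ of the inner automorphism $\mathrm{Ad}_u$, invoke the standard fact (Blackadar, Prop.~10.4.1) that the connecting maps of a mapping torus extension are $\pm(\mathrm{id}-\gamma_*)$, and conclude from triviality of inner automorphisms on $K$-theory. Your extra care over the injectivity of the integrated morphism and the alternative direct computation of $\exp$ are sound but not needed beyond what the paper records.
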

\begin{proof}
	We easily check that we have the isomorphism $M_\alpha A \rtimes_{\tilde{\alpha}} \mathbb{Z}\cong M_{\bar{\alpha}}(A \rtimes_\alpha \mathbb{Z}) $ where $\bar{\alpha}$ is given by $\bar{\alpha}(f(t) u^n)=\alpha(f(t))u^n= u f(t) u^{n-1} $ and therefore $\bar{\alpha}=\mathrm{Ad}_u$. Hence,
	$$[x]=[uxu^*]=[\bar{\alpha}(x)]=\bar{\alpha}_*[x],$$
	where the equalities are in $K_i(A \rtimes_\alpha \mathbb{Z})$. In particular, $\bar\alpha$ induces the identity on $K_i(M_\alpha A)$. From  (\cite{blackadar1998k} proposition 10.4.1), we have that the boundary maps $\partial_i:K_i(A\rtimes_\alpha \Z) \rightarrow K_i(A\rtimes_\alpha \Z)$ are of the form $1-\bar{\alpha}_*=0$. Hence $\hat{\exp}=0$. 
\end{proof}

It follows that $\gamma=0$ so that $\chab$ is irrelevant and (\ref{eqn:K-vertical}) simplifies for $\tilde A = M_\alpha A$ as follows 
\begin{equation}\label{eqn:K-vertical_mappingTorus}
\begin{matrix}
K_0(SA \rtimes_\alpha \mathbb{Z}) & \stackrel{i_*}\to &
 K_0(M_{\alpha}(A) \rtimes_{\tilde\alpha} \mathbb{Z}) &  \stackrel{q_*}\to &
K_0(A \rtimes_\alpha \mathbb{Z}) &  \stackrel{0}\to & 
K_1(SA \rtimes_\alpha \mathbb{Z})\\
 & &\:\:\:\:\:\:\:\:\: \downarrow \che\!\circ\exp & &\:\:\:\:\:\: \downarrow \chb & &  \\
 & &\!\!\!\!\!\!\!\!\!\!\!\!\!\!\!\!\!\!\!\!\!\C  & & \!\!\!\!\!\!\!\!\!\!\!\!\!\C  & & 
 \end{matrix}
\end{equation}

\subsection{Chern cocycles}
We now turn our attention to the Chern cocycles. Let $\tau_A$ be an $\alpha$-invariant trace on $A$. Its extension $\hat{\tau}_A$ (see \eqref{tau_hat}) defines $\chb$ as in \eqref{chern_bulk}. 
It also extends to 
a trace $\tau_{\tilde{A}}$ on $M_\alpha A$:  for $f\in M_\alpha A$
$$\tau_{\tilde{A}}(f)=\int_0^1 \tau_A(f(t))\mathrm{d}t$$
and $\mathcal{D}_\tau=M_\alpha A$. Define the derivation $\delta_1$  
$$\delta_1(f)=\dot f  =\partial_t f$$
on the domain $\mathcal{D}_\delta $ of functions $f \in M_\alpha A$ for which $\delta_1(f)\in M_\alpha A$. Then $\mathcal{D}_\delta $ is a dense subalgebra of $M_\alpha A$ which is a Banach algebra w.r.t.\ the norm
$$\lVert f\rVert_\mathcal{A}= \sup_t \lVert f(t)\rVert_A+ \sup_t \lVert \partial_t f(t)\rVert_A.$$
We can verify the hypothesis on $\mathcal{D}_\tau$ and $\mathcal{D}_\delta$ so that
$$ \che([u]_1)= \frac{1}{2\pi\mathrm{i}}\tau_{\tilde{A}}((u^*-1)\partial_t u)  $$
is a well defined $1$-cocycle. Extend $\delta_1$ to the crossed product $M_\alpha A\rtimes_{\tilde\alpha}\Z\cong M_{\bar{\alpha}}(A \rtimes_\alpha \mathbb{Z})$ through $\delta_1 u = 0$ and define a second derivation on (a dense subalgebra of) that algebra through
	$$\delta_2(fu^n)= (fu^n)' = \mathrm{i}nfu^n.$$
The following result is a special case of a general result for crossed product algebras \cite{nest1988cyclic,elliott1988cyclic}. A proof can also be found in \cite{KelRichBal,prodan2016bulk}. It makes use of the homotopy invariance of cyclic cocycles which allows to reduce it to a direct calculation.
\begin{prop}\label{prop-ENN}
Let $P$ be a projection in $M_\alpha A\rtimes_{\tilde\alpha}\Z$ and $U$ be a representative for the class $\exp([p]_0)\in K_1(M_\alpha A)$ then
	\begin{equation}\label{winding_number_b}
	\frac{1}{2\pi\mathrm{i}}\tau_{\tilde{A}}((U^*-1)\partial_t U) 
=\frac1{\mathrm{i}}\hat{\tau}_{\tilde{A}}(P[\delta_1(P), \delta_2(P)]).
	\end{equation}
\end{prop}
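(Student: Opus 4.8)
The plan is to recognise \eqref{winding_number_b} as the codimension-one case of the bulk--edge correspondence for $\Z$-crossed products, i.e.\ the compatibility between cyclic cocycles and the Pimsner--Voiculescu boundary map, and to reduce it to a finite direct calculation by homotopy invariance, in the spirit of \cite{nest1988cyclic,elliott1988cyclic,KelRichBal,prodan2016bulk}. First I would make the exponential map of the Toeplitz extension \eqref{eqn:secondSES} (taken with $\tilde A=M_\alpha A$) explicit. Since $\che$ and the Connes pairing are continuous on a dense Fréchet subalgebra closed under holomorphic functional calculus and are homotopy invariant, I may assume $P$ is smooth, i.e.\ $P=\sum_n p_n u^n$ with $p_n\in M_N(M_\alpha A)$ differentiable in $t$ and rapidly decaying in $n$. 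Its Toeplitz compression $\hat P:=\sum_n p_n\hat u^n$ (with $\hat u^{-k}:=(\hat u^*)^k$ for $k>0$) is self-adjoint in $M_N(\mathcal T(M_\alpha A,\alpha))$, satisfies $\pi(\hat P)=P$, and, because $P^2=P$, has $\hat P-\hat P^2\in M_N(M_\alpha A\otimes\mathcal K)$ — in fact trace class in the $\ell^2(\N)$-factor, by the decay. Hence $U:=\mathrm e^{2\pi\mathrm i\hat P}$ lies in $1+M_N(M_\alpha A\otimes\mathcal K)$ and $\exp([P]_0)=-[U]_1$ under $K_1(M_\alpha A\otimes\mathcal K)\cong K_1(M_\alpha A)$, so that, up to the sign fixed by the conventions, the left-hand side of \eqref{winding_number_b} is exactly $\che(\exp([P]_0))$, the trace there being $\tau_{\tilde A}\otimes\mathrm{Tr}_{\ell^2(\N)}$ and $\delta_1=\partial_t$ acting on the $M_\alpha A$-factor.

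Next I would observe that both sides of \eqref{winding_number_b} are group homomorphisms $K_0(M_\alpha A\rtimes_{\tilde\alpha}\Z)\to\C$: the left-hand side because $\exp$ and the pairing are homomorphisms, the right-hand side because $[P]_0\mapsto\hat\tau_{\tilde A}(P[\delta_1(P),\delta_2(P)])$ is, up to a normalisation constant, the degree-$2$ Chern cocycle attached to the commuting derivations $\delta_1,\delta_2$ and the trace $\hat\tau_{\tilde A}$ — here one checks $\hat\tau_{\tilde A}\circ\delta_1=0$, using $\alpha$-invariance of $\tau_A$ together with $f(1)=\alpha(f(0))$ for $f\in M_\alpha A$, and $\hat\tau_{\tilde A}\circ\delta_2=0$ — hence an additive homotopy invariant by properties (1)--(7). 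By the Pimsner--Voiculescu sequence \eqref{eq-PV}, every class in $K_0(M_\alpha A\rtimes_{\tilde\alpha}\Z)$ has the form $\iota_*(y)+[P_v]_0$, with $y\in K_0(M_\alpha A)$ and $P_v$ a projection over $M_\alpha A\rtimes_{\tilde\alpha}\Z$ (of Rieffel--Loring type) realising the boundary class, $\exp([P_v]_0)=\pm[v]_1$ for some unitary $v$ over $M_\alpha A$ with $[v]_1\in\ker(1-\tilde\alpha_*)$. On $\iota_*(y)$ both sides vanish: a projection over $M_\alpha A$ already lifts to itself inside the Toeplitz algebra, so its exponential is $\mathrm e^{2\pi\mathrm i p}=1$ and $\exp=0$, while $\delta_2(p)=0$ kills the $2$-cocycle.

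It then remains to evaluate both sides on the generators $[P_v]_0$, which is a finite computation. On one side, $\che(\exp([P_v]_0))=\pm\che([v]_1)=\pm\tfrac1{2\pi\mathrm i}\tau_{\tilde A}((v^*-1)\partial_t v)$, the $\delta_1$-winding number of $v$. On the other, expanding $\hat\tau_{\tilde A}(P_v[\delta_1(P_v),\delta_2(P_v)])$ for the explicit $P_v$ and using Duhamel's formula, cyclicity of $\hat\tau_{\tilde A}$ and $\hat\tau_{\tilde A}\circ\delta_2=0$ exactly as in the proof of Proposition \ref{useful_formula_for_trace}, the $\delta_2$-terms telescope and one is left with the same winding number times $\tfrac1{\mathrm i}$; so the two sides coincide. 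Equivalently, one may bypass the reduction to generators and compute directly on $U=\mathrm e^{2\pi\mathrm i\hat P}$: Duhamel plus cyclicity turn $\tfrac1{2\pi\mathrm i}(\tau_{\tilde A}\otimes\mathrm{Tr})((U^*-1)\delta_1 U)$ into $(\tau_{\tilde A}\otimes\mathrm{Tr})((1-U)\delta_1\hat P)$; since $1-U$ only feels the boundary defect $\hat P-\hat P^2$, whose obstruction to being a projection is controlled by $[\Pi,\rho(P)]$ and hence by $\delta_2$, the sum over the $\ell^2(\N)$-sites collapses to the trace per unit volume of $P[\delta_1(P),\delta_2(P)]$; this is the bulk--edge computation of \cite{prodan2016bulk,KelRichBal}.

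The main obstacle is analytic rather than algebraic. One must make sure that $(U^*-1)\delta_1 U$ and $P[\delta_1(P),\delta_2(P)]$ lie in the domain of the (unbounded, trace-per-unit-volume type) trace, and that all interchanges of the trace with the Duhamel integrals and with the geometric/Fourier series are legitimate. This is handled as in Proposition \ref{useful_formula_for_trace} and \cite{prodan2016bulk}: establish the identity first on a dense Fréchet subalgebra with rapid decay in the $\Z$-direction — where $\hat P-\hat P^2$, and hence $U-1$, is trace class in the $\ell^2(\N)$-factor, so every expression literally converges — and then extend to all of $K_0$ by continuity of the pairing and homotopy invariance of both sides. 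Everything else is routine manipulation of derivations and traces.
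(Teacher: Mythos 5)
Your proposal is correct and follows essentially the same route the paper itself invokes: the paper gives no self-contained proof of Proposition~\ref{prop-ENN} but refers to \cite{nest1988cyclic,elliott1988cyclic,KelRichBal,prodan2016bulk} and describes the strategy as ``homotopy invariance of cyclic cocycles reducing the statement to a direct calculation,'' which is precisely what you carry out (additivity and homotopy invariance of both pairings, vanishing on $\iota_*K_0(M_\alpha A)$, reduction via Pimsner--Voiculescu to Rieffel-type generators, and the Duhamel/trace-class bookkeeping on a dense Fr\'echet subalgebra). Modulo the sign conventions you flag, nothing further is needed.
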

\renewcommand{\sc}{\sigma}
It is a consequence of Lemma~\ref{lem-split-exact} that $q_*$ is surjective. Moreover, we can define a right inverse $\sc:K_0(A\rtimes_\alpha \Z)\to K_0(M_\alpha A\rtimes_{\tilde\alpha}\Z)$ to $q_*$. Given a projection $p\in A \rtimes_\alpha \mathbb{Z}$ let 
\begin{equation}\label{eq-sc}
\sc[p]_0 = [P]_0
\end{equation} 
where
$P:[0,1]\rightarrow M_2(A \rtimes_\alpha \mathbb{Z})$ is given by 
	$$P(t)=U_t \begin{pmatrix}
		0 & 0\\
		0 & p
	\end{pmatrix} U_t^*$$
	where 
	$$U_t=R_t\begin{pmatrix}
		u & 0\\
		0 & 1
	\end{pmatrix}R_t^*,\quad R_t=\begin{pmatrix}
		\cos{(\frac{\pi}{2}t)} & -\sin{(\frac{\pi}{2}t)}\\
		\sin{(\frac{\pi}{2}t)} & \cos{(\frac{\pi}{2}t)}
	\end{pmatrix}.$$
Clearly $P$ is a projection in 
$M_2(M_{\bar{\alpha}} (A \rtimes_\alpha \Z) )$ 
which satisfies $q(P)=0 \oplus p$. So $q_*\circ \sigma = \mathrm{id}$.

\begin{lemma}\label{lem-tr-ch}
Given a projection $p\in A \rtimes_\alpha \mathbb{Z}$ let $P$ be as above.
Then 
\begin{equation}\label{eq-tr-ch}
\frac1{\mathrm{i}}\hat{\tau}_{\tilde{A}}(P[\delta_1(P), \delta_2(P)])
=\hat{\tau}_A(p).
\end{equation}
\end{lemma}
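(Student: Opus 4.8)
The plan is to compute the left-hand side of \eqref{eq-tr-ch} directly from the explicit representative $P(t)=U_t\,e\,U_t^{*}$ with $e=\left(\begin{smallmatrix}0&0\\0&p\end{smallmatrix}\right)$. For any derivation $\delta$ one has the standard formula $\delta P=U_t\bigl(\delta e+[\omega,e]\bigr)U_t^{*}$ with anti-self-adjoint connection one-form $\omega=U_t^{*}\,\delta U_t$. Here $\delta_1 e=0$ because $e$ does not depend on $t$; on the other hand the entries of $e$ lie in $A\rtimes_\alpha\Z$ and not merely in $A$, and (recalling $M_\alpha A\rtimes_{\tilde\alpha}\Z\cong M_{\bar\alpha}(A\rtimes_\alpha\Z)$) $\delta_2$ acts on such $t$-independent elements as $\mathrm i\,D$, where $D$ is the derivation of $A\rtimes_\alpha\Z$ with $Du=u$ and $D|_A=0$; hence $\delta_2 e=\mathrm i\left(\begin{smallmatrix}0&0\\0&Dp\end{smallmatrix}\right)$.

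First I would evaluate the two connection forms. A direct differentiation gives $\dot R_t R_t^{*}=\tfrac{\pi}{2}\left(\begin{smallmatrix}0&-1\\1&0\end{smallmatrix}\right)$, and $R_t$, being a planar rotation, commutes with $\left(\begin{smallmatrix}0&-1\\1&0\end{smallmatrix}\right)$. Feeding this into $\omega_1:=U_t^{*}\dot U_t$ and using $\left(\begin{smallmatrix}u^{*}&0\\0&1\end{smallmatrix}\right)\left(\begin{smallmatrix}0&-1\\1&0\end{smallmatrix}\right)\left(\begin{smallmatrix}u&0\\0&1\end{smallmatrix}\right)=\left(\begin{smallmatrix}0&-u^{*}\\u&0\end{smallmatrix}\right)$ one obtains $\omega_1=\tfrac{\pi}{2}\,R_t\,G\,R_t^{*}$ with $G=\left(\begin{smallmatrix}0&1-u^{*}\\u-1&0\end{smallmatrix}\right)$; similarly $\delta_2 U_t=\mathrm i\,U_t\,R_t\left(\begin{smallmatrix}1&0\\0&0\end{smallmatrix}\right)R_t^{*}$, so $\omega_2=\mathrm i\,R_t q_0 R_t^{*}$ with $q_0=\left(\begin{smallmatrix}1&0\\0&0\end{smallmatrix}\right)$.

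Next, I would substitute these into $P[\delta_1 P,\delta_2 P]$, collect the factor of $\mathrm i$ (which cancels the $\tfrac1{\mathrm i}$), use cyclicity of $\hat\tau_{\tilde A}$ and unitarity of $U_t$ to remove the outer conjugation by $U_t$, and then conjugate once more by the scalar matrix $R_t$. The latter replaces $e$ by the path of projections $e'_t:=R_t^{*} e R_t$, which runs from $\left(\begin{smallmatrix}0&0\\0&p\end{smallmatrix}\right)$ at $t=0$ to $\left(\begin{smallmatrix}p&0\\0&0\end{smallmatrix}\right)$ at $t=1$, and one is left with
\[
  \frac1{\mathrm i}\,\hat\tau_{\tilde A}\bigl(P[\delta_1(P),\delta_2(P)]\bigr)
  =\frac{\pi}{2}\int_0^1\hat\tau_A\Bigl(e'_t\,\bigl[[G,e'_t],\ D e'_t+[q_0,e'_t]\bigr]\Bigr)\,\mathrm{d}t,
\]
where $D$ now acts entrywise on the $2\times 2$ matrix $e'_t$.

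The last step is to evaluate this integral. Expanding the $2\times2$ matrix commutators, one uses that $\hat\tau_A$ is a trace on $A\rtimes_\alpha\Z$ annihilating every element of nonzero $u$-degree, that $D$ multiplies the $u^{n}$-Fourier component by $n$, and that the integrals $\int_0^1\sin^{a}(\pi t/2)\cos^{b}(\pi t/2)\,\mathrm{d}t$ are explicit rational numbers; the contributions containing $D e'_t$ cancel after integration, and those coming from $[q_0,e'_t]$ collapse to $\hat\tau_A(p)$, which is \eqref{eq-tr-ch}. I expect the only genuine obstacle to be the bookkeeping of this final computation---keeping track of the non-commuting orderings of $u$ and $p$ inside the matrix entries and checking the cancellation of the $Dp$-terms; all the structural input lies in the connection-form identities for $U_t$ and in the observation that conjugation by $R_t$ moves $e$ along the path $e'_t$.
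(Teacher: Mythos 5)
Your setup is correct and, once unwound, is essentially the paper's own computation: the paper writes $\dot P=\frac{\pi}{2}U_t(A_1+A_2)U_t^*$ and $P'=U_t(B_1+B_2)U_t^*$ and expands the trace into four commutator terms, which is exactly your connection-form decomposition ($A_1+A_2=R_t[G,e'_t]R_t^*$, $B_1=\delta_2e$, $B_2=[\omega_2,e]$) before your extra conjugation by $R_t$. Your formulas for $\omega_1$, $\omega_2$, the identity $\delta P=U_t(\delta e+[\omega,e])U_t^*$, and the reduction to the displayed integral are all correct.

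The gap is in the final step, which you assert rather than carry out, and the mechanism you predict is wrong. The contributions containing $De'_t$ do \emph{not} cancel after integration: they are precisely the paper's first and third terms (the ones involving $B_1=\delta_2 e$), and these sum to $\frac12\hat{\tau}_A\bigl(p(u+u^*)p\bigr)$, which is nonzero for a generic projection (e.g.\ for a Rieffel-type projection $p=gu+f+u^*g$ one has $\hat{\tau}_A(pu)=\tau_A(g)\neq 0$). Correspondingly, the $[q_0,e'_t]$-terms do not collapse to $\hat{\tau}_A(p)$ alone but to $\hat{\tau}_A(p)-\frac12\hat{\tau}_A\bigl(p(u+u^*)p\bigr)$; indeed they reduce to $\frac{\pi}{2}\int_0^1 s_tc_t\,\bigl[2\hat{\tau}_A(p)-\hat{\tau}_A(pu)-\hat{\tau}_A(pu^*)\bigr]\,\mathrm{d}t$ with $\int_0^1 s_tc_t\,\mathrm{d}t=\frac1\pi$, where $c_t=\cos(\frac{\pi}{2}t)$, $s_t=\sin(\frac{\pi}{2}t)$. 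The identity \eqref{eq-tr-ch} only emerges from the cancellation \emph{between} the two groups of terms, not within each group separately. So your route does reach the result, but the ``bookkeeping'' you defer is exactly where the nontrivial cancellation lives, and the way you predict it to organize would not survive the actual computation.
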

\begin{proof} 
Clearly $P(0)=\begin{pmatrix}
		0 & 0\\
		0 & p
	\end{pmatrix}$ and $P(1)=\begin{pmatrix}
		0 & 0\\
		0 & \bar{\alpha}(p)
	\end{pmatrix}$ so that $P$ is a projection in $M_2(M_{\bar{\alpha}}(A \rtimes_\alpha \mathbb{Z}))$ and $q(P)=0 \oplus p$. 
	Denote $c_t=\cos{(\frac{\pi}{2}t)}$ and $s_t=\sin{(\frac{\pi}{2}t)}$. A direct calculation leads to (we write shorter $\dot f=\delta_1(f)$ and $f' = \delta_2(f)$): 
	\begin{align*} 
		\dot{P}_t &=  \frac{\pi}{2}U_t (A_1+A_2)U_t^*,\quad A_1=p\begin{pmatrix}
			0 & 1\\ 
			1 & 0
		\end{pmatrix}, \quad A_2=\begin{pmatrix}
			0 & -(c_t^2u^*+s_t^2u)p\\ 
			-p(c_t^2u+s_t^2u^*) & c_t s_t[p,u^*-u]
		\end{pmatrix}\\
		P'_t &= U_t (B_1+B_2)U_t^*,\quad B_1=p'\begin{pmatrix}
			0 & 0\\ 
			0 & 1
		\end{pmatrix}, \quad B_2=\mathrm{i}c_t s_t p\begin{pmatrix}
			0 & 1\\ 
			-1 & 0
		\end{pmatrix}.
	\end{align*}
	Hence
	$$\frac{1}{\mathrm{i}}\hat{\tau}_{\tilde{A}}(P[\dot{P}, P'])=\frac{\pi}{2\mathrm{i}}\int_0^1\hat{\tau}_A\left( U_t \begin{pmatrix}
		0 & 0\\ 0 & p
	\end{pmatrix} ([A_1,B_1]+ [A_1,B_2]+[A_2,B_1]+[A_2,B_2])U_t^*\right)\mathrm{d}t.$$
	We compute the four terms. We see easily that the first term is zero and second one is equal to $\hat{\tau}_A(p)$. For the third term, we get
	\begin{align*}
		\mathrm{3\textsuperscript{rd} term}&= \frac{\pi}{2\mathrm{i}}\int_0^1\hat{\tau}_A\left( \begin{pmatrix}
			0 & 0\\ 0 & p
		\end{pmatrix} [A_2,B_1]\right)\mathrm{d}t\\ &= 
		\frac{1}{2\mathrm{i}}\hat{\tau}_A(p(u^*-u)p'-p(u^*-u)pp'-pp'p(u^*-u)+p'(u^*-u)p)\\
		&= \frac{1}{2\mathrm{i}}\hat{\tau}_A((p(u^*-u)p)'-p(u^*-u)'p)\\
		&= \frac{1}{2}\hat{\tau}_A(p(u^*+u)p)),
	\end{align*}
	where we used the relation $pp'p=0$, the cyclicity of the trace and that it vanishes on a total derivative. 
	Finally
	$$\mathrm{4\textsuperscript{th} term}=-\frac{1}{2}\hat{\tau}_A(p(u^*+u)p)$$
	follows quickly. Thus only the second term contributes yielding the result.
\end{proof}
\begin{coro}\label{cor-chb=che}
Let $h$ be a self-adjoint element of $A\rtimes_\alpha\Z$ which has a gap in its spectrum around $0$. Let $P_-(h)$ be the projection onto its negative energy states. Define 
$$\tilde h(t) = U_t \begin{pmatrix}
		1 & 0\\
		0 & h
	\end{pmatrix} U_t^*.$$
Then $\tilde h$ is a self-adjoint element in $M_2(M_{\bar{\alpha}}(A \rtimes_\alpha \mathbb{Z}))$ which satisfies $q(\tilde h)=1 \oplus h$ and
$$\chb([P_-(h)]_0) = \che(\exp[P_-(\tilde h)]_0).$$
\end{coro}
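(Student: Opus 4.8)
The plan is to evaluate the two sides of the asserted identity separately and to recognise that each of them equals $\hat\tau_A(P_-(h))$; the work has essentially already been done in Proposition~\ref{prop-ENN} and Lemma~\ref{lem-tr-ch}, so no new computation should be required.

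First I would record the elementary properties of $\tilde h$. Since $R_0=1$ and $R_1=\left(\begin{smallmatrix}0&-1\\1&0\end{smallmatrix}\right)$, one gets $U_0=\mathrm{diag}(u,1)$ and $U_1=\mathrm{diag}(1,u)$, hence $\tilde h(0)=1\oplus h$ and $\tilde h(1)=1\oplus uhu^*=\bar\alpha(\tilde h(0))$ because $\bar\alpha=\mathrm{Ad}_u$ acts entrywise on $M_2(A\rtimes_\alpha\Z)$. Together with the (smooth) dependence of $U_t$ on $t$ this shows $\tilde h\in M_2(M_{\bar\alpha}(A\rtimes_\alpha\Z))$; self-adjointness is immediate from unitarity of $U_t$ and self-adjointness of $1\oplus h$, and $q(\tilde h)=\tilde h(0)=1\oplus h$. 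Note that $1\oplus h$, and therefore $\tilde h(t)$ for all $t$, has a spectral gap at $0$, since $h$ does and $1>0$.

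Next I would compute the Fermi projection of $\tilde h$. Writing $P_-=1-g(\cdot)$ with $g$ as in \eqref{eq-g} for a density supported in the gap $\Delta\ni 0$, we have $g(1)=1$, so block functional calculus gives $g(\tilde h(t))=U_t\,(1\oplus g(h))\,U_t^*$ and hence
$$P_-(\tilde h)(t)=1-g(\tilde h(t))=U_t\bigl(0\oplus P_-(h)\bigr)U_t^*.$$
This is precisely the projection $P$ of Lemma~\ref{lem-tr-ch} for the choice $p=P_-(h)$ (equivalently, $[P_-(\tilde h)]_0=\sigma[P_-(h)]_0$ in the notation of \eqref{eq-sc}). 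Letting $U$ be a representative of $\exp[P_-(\tilde h)]_0\in K_1(M_\alpha A)$, where $\exp$ is the exponential map of the Toeplitz sequence \eqref{eqn:secondSES} for $\tilde A=M_\alpha A$, the definition of $\che$ together with Proposition~\ref{prop-ENN} gives
$$\che\bigl(\exp[P_-(\tilde h)]_0\bigr)=\frac1{2\pi\mathrm i}\tau_{\tilde A}\bigl((U^*-1)\partial_t U\bigr)=\frac1{\mathrm i}\hat\tau_{\tilde A}\bigl(P[\delta_1(P),\delta_2(P)]\bigr),$$
and Lemma~\ref{lem-tr-ch} with $p=P_-(h)$ turns the right-hand side into $\hat\tau_A(P_-(h))$, which by \eqref{chern_bulk} is $\chb([P_-(h)]_0)$. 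That is the claim.

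The only points that need attention are that Proposition~\ref{prop-ENN}, Lemma~\ref{lem-tr-ch} and the cocycles $\che$, $\chb$ are all being used in their $M_2$-amplified versions, which is harmless because cyclic cocycles extend canonically over matrices, and that the relevant exponential map is the one attached to \eqref{eqn:secondSES} (so that Proposition~\ref{prop-ENN} applies verbatim). I do not anticipate a genuine obstacle here; the substantive part of the argument is Lemma~\ref{lem-tr-ch}, which is already in hand.
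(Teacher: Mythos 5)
Your proof is correct and follows essentially the same route as the paper: identify $P_-(\tilde h(t))=U_t(0\oplus P_-(h))U_t^*$ with the projection $P$ of Lemma~\ref{lem-tr-ch} for $p=P_-(h)$, then combine Proposition~\ref{prop-ENN} with that lemma. The additional verifications you spell out (that $\tilde h$ lies in $M_2(M_{\bar\alpha}(A\rtimes_\alpha\Z))$, retains the gap, and that everything is applied in its matrix-amplified form) are exactly the details the paper leaves implicit.
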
 
\begin{proof}
Observe that $P_-(\tilde h(t)) = U_t P_-(1\oplus h) U^*_t = U_t(0 \oplus P_-(h))U^*_t$ and apply 
Prop.~\ref{prop-ENN} and Lemma~\ref{lem-tr-ch}.
\end{proof}
\begin{lemma}\label{lem-amb}
We have
	$$\mathrm{Amb}_e=
	\{\tau_A(p)|[p]_0\in \mathrm{Inv}_{\alpha_*} (K_0(A))\}.$$
	Furthermore $\mathrm{Amb}_b\subset \mathrm{Amb}_e$ and 
	$\mathrm{Amb}_e\subset \mathrm{im}\exp \circ \sc$.
\end{lemma}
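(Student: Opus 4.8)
\emph{Strategy.} The plan is to exploit the splitting of $K_0(\tilde A\rtimes_\alpha\Z)$ coming from the mapping torus structure. By Lemma~\ref{lem-split-exact} (and its proof) both boundary maps of \eqref{eqn:firstSES} vanish, so \eqref{eqn:firstSES} yields a short exact sequence $0\to K_0(I\rtimes_\alpha\Z)\xrightarrow{i_*}K_0(\tilde A\rtimes_\alpha\Z)\xrightarrow{q_*}K_0(A\rtimes_\alpha\Z)\to0$ with $i_*$ injective, and by construction $q_*\circ\sigma=\mathrm{id}$; hence $K_0(\tilde A\rtimes_\alpha\Z)=i_*\big(K_0(I\rtimes_\alpha\Z)\big)\oplus\sigma\big(K_0(A\rtimes_\alpha\Z)\big)$. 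Proposition~\ref{prop-ENN} together with Lemma~\ref{lem-tr-ch} (equivalently Corollary~\ref{cor-chb=che}) gives $\che\circ\exp\circ\sigma=\chb$ as maps $K_0(A\rtimes_\alpha\Z)\to\C$. Since $\che\circ\exp$ is a homomorphism, everything reduces to computing it on the other summand, i.e.\ the map $\che\circ\exp\circ i_*\colon K_0(I\rtimes_\alpha\Z)\to\C$, whose image is by definition $\mathrm{Amb}_e$.

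\emph{Identification of $\che\circ\exp\circ i_*$.} Here $I=SA$. I would first invoke naturality: the Toeplitz construction is functorial in $\alpha$-equivariant $*$-morphisms, so the ideal inclusion $SA\hookrightarrow M_\alpha A$ (which is $\alpha$-equivariant, $\tilde\alpha|_{SA}$ acting only on the $A$-factor) induces a morphism from the Toeplitz sequence \eqref{eqn:secondSES} of $SA$ to the one of $M_\alpha A$. Naturality of the exponential map then yields $\che\circ\exp\circ i_*=\che|_{SA}\circ\exp_{SA}$, where $\exp_{SA}\colon K_0(SA\rtimes_\alpha\Z)\to K_1(SA)$ is the exponential map of $0\to SA\otimes\mathcal{K}\to\mathcal{T}(SA,\alpha)\to SA\rtimes_\alpha\Z\to0$ and $\che|_{SA}$ is the $1$-cocycle on $SA$ obtained by restricting the trace $\tau_{\tilde A}$ and the derivation $\delta_1=\partial_t$ to the ideal $SA$. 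Next, $SA=C_0((0,1))\otimes A$ with $\alpha$ acting only on the second factor, whence $SA\rtimes_\alpha\Z=S(A\rtimes_\alpha\Z)$ and $\mathcal{T}(SA,\alpha)=S\mathcal{T}(A,\alpha)$; thus this Toeplitz extension is the suspension of $0\to A\otimes\mathcal{K}\to\mathcal{T}(A,\alpha)\to A\rtimes_\alpha\Z\to0$. Consequently, under the Bott isomorphisms $K_0(S(A\rtimes_\alpha\Z))\cong K_1(A\rtimes_\alpha\Z)$ and $K_1(SA)\cong K_0(A)$, the boundary map $\exp_{SA}$ is identified (up to sign) with the index map $K_1(A\rtimes_\alpha\Z)\to K_0(A)$ of the latter extension, i.e.\ with the Pimsner--Voiculescu boundary map of Theorem~\ref{PVES_thm}; by exactness of \eqref{PVES} its image is $\ker(1-\alpha_*\colon K_0(A)\to K_0(A))=\mathrm{Inv}_{\alpha_*}K_0(A)$. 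Finally, a direct computation with the Bott representative from \eqref{Bottmap}: for a projection $p$ over $A$ the unitary $v=(1_n-p)+\mathrm{e}^{2\pi\mathrm{i}t}p$ satisfies $(v^*-1)\partial_t v=2\pi\mathrm{i}(1-\mathrm{e}^{2\pi\mathrm{i}t})p$, so $\che|_{SA}([v]_1)=\frac1{2\pi\mathrm{i}}\tau_{\tilde A}\big((v^*-1)\partial_t v\big)=\int_0^1(1-\mathrm{e}^{2\pi\mathrm{i}t})\,\mathrm{d}t\;\tau_A(p)=\tau_A(p)$; hence $\che|_{SA}$ corresponds under Bott to $\tau_A$. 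Putting these facts together, $\mathrm{Amb}_e=\che|_{SA}\big(\mathrm{im}\,\exp_{SA}\big)=\tau_A\big(\mathrm{Inv}_{\alpha_*}K_0(A)\big)=\{\tau_A(p)\mid[p]_0\in\mathrm{Inv}_{\alpha_*}K_0(A)\}$, the first assertion.

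\emph{The two inclusions.} For $\mathrm{Amb}_b\subset\mathrm{Amb}_e$: exactness of \eqref{eqn:secondSES} gives $\pi_*K_0(\mathcal{T}(\tilde A,\alpha))=\ker\exp$, and $\tilde\pi=q\circ\pi$, so $\tilde\pi_*K_0(\mathcal{T}(\tilde A,\alpha))=q_*(\ker\exp)$. Given $x\in\ker\exp$, decompose $x=i_*(a)+\sigma(b)$ in the splitting above; since $q_*\circ i_*=0$ we have $b=q_*(x)$. Applying $\che$ to $0=\exp(x)=\exp(i_*a)+\exp(\sigma b)$ and using $\che\circ\exp\circ\sigma=\chb$ gives $\chb(q_*x)=\chb(b)=-\che(\exp(i_*a))\in\mathrm{Amb}_e$, because $\mathrm{Amb}_e=\che(\exp(i_*(K_0(I\rtimes_\alpha\Z))))$ is a subgroup of $\C$; hence $\mathrm{Amb}_b=\chb\big(q_*(\ker\exp)\big)\subset\mathrm{Amb}_e$. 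For $\mathrm{Amb}_e\subset\mathrm{im}\,\exp\circ\sigma$: from the first part $\mathrm{Amb}_e=\tau_A(\mathrm{Inv}_{\alpha_*}K_0(A))\subset\tau_A(K_0(A))$, and since the canonical inclusion $\iota\colon A\to A\rtimes_\alpha\Z$ satisfies $\hat\tau_A\circ\iota_*=\tau_A$ on $K_0(A)$, one gets $\tau_A(K_0(A))=\hat\tau_A\big(\iota_*K_0(A)\big)\subset\hat\tau_A\big(K_0(A\rtimes_\alpha\Z)\big)=\chb\big(K_0(A\rtimes_\alpha\Z)\big)=\che\circ\exp\circ\sigma\big(K_0(A\rtimes_\alpha\Z)\big)$; that is, every element of $\mathrm{Amb}_e$ has the form $\che(\exp(\sigma(x)))$ for some $x$, which is the content of the last inclusion.

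\emph{Main obstacle.} The only delicate part is the middle step: justifying $\che\circ\exp\circ i_*=\che|_{SA}\circ\exp_{SA}$ (naturality of the Toeplitz exponential map along the ideal inclusion $SA\hookrightarrow M_\alpha A$ and the identification $\mathcal{T}(SA,\alpha)=S\mathcal{T}(A,\alpha)$ — routine but to be spelled out), and then correctly tracking the degree shift and possible sign in identifying the suspended Toeplitz exponential map with the Pimsner--Voiculescu boundary map. Signs are harmless here, since only images, which are subgroups of $\R$, enter the statement.
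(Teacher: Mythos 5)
Your proof is correct and follows essentially the same route as the paper: the split short exact sequence with section $\sigma$, the identity $\che\circ\exp\circ\sigma=\chb$ from Corollary~\ref{cor-chb=che}, naturality of the exponential map along $SA\hookrightarrow M_\alpha A$ combined with the Pimsner--Voiculescu identification of the image of the relevant boundary map with the $\alpha_*$-invariants, and the explicit computation on the Bott representative yielding $\tau_A(p)$. The only cosmetic difference is that you obtain the invariants identification by suspending the Toeplitz extension of $A$ and passing through the index map, whereas the paper reads it off directly from exactness of the row $K_0(SA\rtimes_\alpha\Z)\xrightarrow{\exp_I}K_1(SA)\xrightarrow{1-\alpha_*}K_1(SA)$; your arguments for the two inclusions (interpreting the last one, as the paper implicitly does, as $\mathrm{Amb}_e\subset\che(\mathrm{im}\,\exp\circ\sigma)$) coincide with the paper's.
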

\begin{proof}
We focus on the first two lines of diagram 
(\ref{eqn:focus_on_K-diagram}) completing the first line with the exponential map $\mathrm{exp}_I$ coming from the Toeplitz extension exact sequence. 
	\begin{equation*}
		\begin{tikzcd}
			{K_0(SA \rtimes_\alpha \mathbb{Z})} && K_1(SA) && K_1(SA) \\
			\\
			{K_0(M_{\bar{\alpha}}(A \rtimes_\alpha \mathbb{Z}))} && {K_1(M_\alpha A)} && \C  
			\arrow["{i_*}", from=1-1, to=3-1]
			\arrow["{i_*}", from=1-3, to=3-3]
			\arrow["{\exp}", from=3-1, to=3-3]
			\arrow["{\mathrm{exp}_I}", from=1-1, to=1-3]
			\arrow[dashed, "{\che}", from=3-3, to=3-5]
			\arrow["{1-\alpha_*}", from=1-3, to=1-5]
		\end{tikzcd}
	\end{equation*}
	By commutativity of this diagram the ambiguity is given by 
	\begin{equation}
		\mathrm{Amb}_e= \{\che(i_*([u]_1)):[u]_1\in \mathrm{Inv}_{\alpha_*} (K_1(SA))\}.
	\end{equation}
	Using the Bott isomorphism \eqref{Bottmap},
	$$K_0(A)\ni [p]_0\mapsto [f_p:t\mapsto\exp(2\pi\mathrm{i}pt)]_1\in K_1(SA),$$
	which maps $\mathrm{Inv}_{\alpha_*} (K_0(A))$ to $\mathrm{Inv}_{\alpha_*} (K_1(SA))$, and
	$$\che(i_*([f_p]_1))=\frac{1}{2\pi\mathrm{i}}\int_0^1 \tau_A(\exp (-2\pi\mathrm{i}t)2\pi\mathrm{i}p \exp (2\pi\mathrm{i}t))\mathrm{d}t= \tau_A(p)$$
	we obtain the result.
	
	The group $\mathrm{Amb}_b$ is generated by values $\chb([p]_0)$ where $p=\pi(\tilde P)$ for some projection $\tilde P$ in $M_n(M_\alpha A)$ whose class defines an element of $\pi_*(K_0(\mathcal T(M_\alpha A,\tilde\alpha)))$. Let $P$ be the above lift of $p$, then $[\tilde P]_0-[P]_0\in \ker q_*$ and thus 
	$\chb([p]_0) = -\che(\exp([\tilde P]_0-[P]_0))\in \mathrm{Amb}_e$.
	
Finally, $\tau_A(p) = \che(\exp \circ \sc([p]_0))$ showing the last statement.
\end{proof}
The following well-known result is a byproduct of our approach. We include it here, as our proof is more explicit the its usual proof (sketched in \cite{Bellissard1992}) which is based on  
Connes' Thom isomorphism \cite{connes1981analogue} together with the result that $M_\alpha A \rtimes_\beta \R$ is Morita equivalent to $A \rtimes_\alpha \Z$. 
Specified to the algebra $A=C(\Xi)$ for some hull $\Xi$ of a one-dimensional aperiodic Schr\"odinger operator, it states that the $K$-theoretic gap labelling group of Bellissard coincides with the gap labelling group of Johnson. We will comment on that below.
\begin{coro}\label{cor-B=J}
We have $\chb(K_0(A\rtimes_\alpha\Z)) = \che(K_1(M_\alpha A))$.
\end{coro}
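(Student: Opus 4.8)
The plan is to prove the two inclusions separately, using the machinery already assembled. For the inclusion $\chb(K_0(A\rtimes_\alpha\Z)) \subseteq \che(K_1(M_\alpha A))$, I would start from an arbitrary class in $K_0(A\rtimes_\alpha\Z)$. By the normal form discussed in Section~\ref{sec-K-theory}, every such class is $[[h],[1_m\oplus -1_{n-m}]]$ for a self-adjoint invertible $h\in M_n(A\rtimes_\alpha\Z)$, so by additivity and $\Z$-linearity of $\chb$ and $\che$ it is enough to treat a generator of the form $[P_-(h)]_0$ for a self-adjoint $h$ with a spectral gap around $0$ (the atomic-limit pieces $[-1]$ contribute $\chb([1]_0)=\tau_A(1)$-type terms which are handled the same way, or absorbed by allowing $h$ to be matrix-valued). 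Now apply Corollary~\ref{cor-chb=che} directly: it produces an explicit augmentation $\tilde h\in M_2(M_{\bar\alpha}(A\rtimes_\alpha\Z))$ with $q(\tilde h)=1\oplus h$ and the identity $\chb([P_-(h)]_0)=\che(\exp[P_-(\tilde h)]_0)$. Since $\exp[P_-(\tilde h)]_0$ is by construction an element of $K_1(M_\alpha A)$ (the codomain of $\exp$ in the sequence \eqref{eqn:secondSES} for $\tilde A = M_\alpha A$, whose kernel ideal is $SA\otimes\mathcal K\simeq S(A\otimes\mathcal K)$ so that $K_1$ of the quotient-side edge algebra is $K_1(M_\alpha A)$), this exhibits $\chb([P_-(h)]_0)$ as a value of $\che$ on $K_1(M_\alpha A)$. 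Running over all $h$ gives the first inclusion.

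For the reverse inclusion $\che(K_1(M_\alpha A)) \subseteq \chb(K_0(A\rtimes_\alpha\Z))$, I would use surjectivity of $q_*:K_0(M_{\bar\alpha}(A\rtimes_\alpha\Z))\to K_0(A\rtimes_\alpha\Z)$ together with exactness of the middle vertical sequence of \eqref{eqn:focus_on_K-diagram}. By Lemma~\ref{lem-split-exact} the boundary map $\hat\exp$ vanishes, so $q_*$ is onto and, more to the point, $i_*:K_0(SA\rtimes_\alpha\Z)\to K_0(M_{\bar\alpha}(A\rtimes_\alpha\Z))$ is injective with image $\ker q_*$, and $\exp:K_0(M_{\bar\alpha}(A\rtimes_\alpha\Z))\to K_1(M_\alpha A)$ from \eqref{eqn:secondSES} is onto (again because $\hat\exp=0$ forces the relevant map to be surjective — alternatively $K_1(\mathcal T(M_\alpha A,\tilde\alpha))$ feeds into $K_1(M_\alpha A\rtimes_{\tilde\alpha}\Z)$, but the cleanest route is: every class of $K_1(M_\alpha A)$ is $\exp$ of some $K_0$-class of the Toeplitz algebra, hence of some projection class $[P]_0$, and that $[P]_0$ maps under $q_*$ to some $[p]_0\in K_0(A\rtimes_\alpha\Z)$). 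Given $[u]_1\in K_1(M_\alpha A)$, pick such a $[P]_0$ with $\exp[P]_0=[u]_1$, set $[p]_0=q_*[P]_0$, and replace $P$ by the canonical lift $\sigma[p]_0=[P_0]_0$ of \eqref{eq-sc}. Then $[P]_0-[P_0]_0\in\ker q_*$, so $\exp$ kills neither but their difference lands in $\exp\circ i_*(K_0(SA\rtimes_\alpha\Z))$; combined with Lemma~\ref{lem-amb}, which identifies $\mathrm{Amb}_e=\che(\exp\circ i_*(\cdots))=\{\tau_A(p)\mid [p]_0\in\mathrm{Inv}_{\alpha_*}K_0(A)\}$ and notes $\mathrm{Amb}_e\subset\mathrm{im}\,\exp\circ\sigma$, this shows $\che([u]_1)$ differs from $\che(\exp\circ\sigma([p]_0))$ by an element of $\mathrm{Amb}_e$ which is itself of the form $\chb$ of a bulk class. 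Finally Lemma~\ref{lem-tr-ch} (via Prop.~\ref{prop-ENN}) gives $\che(\exp\circ\sigma([p]_0))=\chb([p]_0)=\hat\tau_A(p)$, so $\che([u]_1)\in\chb(K_0(A\rtimes_\alpha\Z))$.

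The cleanest packaging is probably to observe that Corollary~\ref{cor-chb=che} plus Lemma~\ref{lem-tr-ch} already say $\che\circ\exp\circ\sigma=\chb$ as maps $K_0(A\rtimes_\alpha\Z)\to\C$, while $\exp\circ\sigma$ is a right inverse to $q_*$ and hence its image, together with $\ker q_* = i_*(K_0(SA\rtimes_\alpha\Z))$ on which $\che\circ\exp$ takes exactly the values $\mathrm{Amb}_e\subset\chb(K_0)$ (Lemma~\ref{lem-amb}), spans all of $K_0(M_{\bar\alpha}(A\rtimes_\alpha\Z))$ in the relevant sense; pushing this through $\che\circ\exp$ gives both inclusions at once.

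I expect the main obstacle to be bookkeeping rather than conceptual: making sure the $\exp$ appearing in $\che\circ\exp$ (the exponential of the Toeplitz sequence \eqref{eqn:secondSES} for $\tilde A=M_\alpha A$) is the same one whose image is all of $K_1(M_\alpha A)$, and tracking precisely that $\mathrm{Amb}_e$ sits inside $\chb(K_0(A\rtimes_\alpha\Z))$ — this is where Lemma~\ref{lem-amb}'s identification $\mathrm{Amb}_e=\{\tau_A(p):[p]_0\in\mathrm{Inv}_{\alpha_*}K_0(A)\}$ is doing real work, since these are genuine IDS-values of invariant projections pulled back to the crossed product, i.e.\ $\hat\tau_A$ of honest bulk $K_0$-classes. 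Once that identification is in hand, the equality of the two image groups is essentially forced by the commutativity of \eqref{eqn:focus_on_K-diagram} and the splitting $\sigma$.
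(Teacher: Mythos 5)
Your overall strategy is the paper's own: one inclusion from Cor.~\ref{cor-chb=che} (via the explicit augmentation $\tilde h$ and Lemma~\ref{lem-tr-ch}), the other from surjectivity of the Toeplitz exponential map together with the decomposition $[P]_0=\sigma(q_*[P]_0)+(\text{an element of }\ker q_*)$ and the absorption of the correction term into $\mathrm{Amb}_e$ via Lemma~\ref{lem-amb}. The first inclusion and the decomposition/ambiguity bookkeeping are fine. The genuine gap is your justification of the surjectivity of $\exp\colon K_0(M_\alpha A\rtimes_{\tilde\alpha}\Z)\to K_1(M_\alpha A)$, which is the one non-formal input of the whole argument. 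Saying that ``$\hat{\exp}=0$ forces the relevant map to be surjective'' conflates two different extensions: $\hat{\exp}$ is the boundary map of the augmentation sequence $0\to SA\rtimes_\alpha\Z\to M_\alpha A\rtimes_{\tilde\alpha}\Z\to A\rtimes_\alpha\Z\to 0$ treated in Lemma~\ref{lem-split-exact}, whereas the $\exp$ you need to be surjective is the boundary map of the Toeplitz extension \eqref{eqn:secondSES} with $\tilde A=M_\alpha A$, and the vanishing of the former says nothing about the image of the latter. Your fallback (``every class of $K_1(M_\alpha A)$ is $\exp$ of some $K_0$-class of the Toeplitz algebra'') is precisely the statement that needs proof, so as written the second inclusion is circular.

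What is actually needed, and what the paper supplies in one line, is that $\tilde\alpha_*=\mathrm{id}$ on $K_1(M_\alpha A)$: for a unitary $f$ in (matrices over) $M_\alpha A$, the path $s\mapsto f(\cdot+s)$, $s\in[0,1]$, is a homotopy of unitaries from $f$ to $\tilde\alpha(f)$, using the boundary condition $f(1)=\alpha(f(0))$ defining the mapping torus. By the Pimsner--Voiculescu splitting \eqref{eq-PV}, the image of the Toeplitz exponential map is $\mathrm{Inv}_{\tilde\alpha_*}K_1(M_\alpha A)=\ker(\mathrm{id}-\tilde\alpha_*)$, which therefore equals all of $K_1(M_\alpha A)$. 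With that step inserted, the rest of your argument closes up exactly as in the paper; your alternative observation that $\mathrm{Amb}_e\subset\chb(K_0(A\rtimes_\alpha\Z))$ because its elements are $\hat\tau_A$ of invariant projections pushed into the crossed product is an acceptable substitute for the paper's use of $\mathrm{Amb}_e\subset\che(\mathrm{im}\,\exp\circ\sigma)$.
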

\begin{proof}
Let $f\in M_\alpha A$ be a unitary. Then $[0,1]\ni s\mapsto f(t+s)$ is a homotopy of unitaries between $f$ and $\tilde\alpha(f)$ showing that $\mathrm{id}-\tilde\alpha_*$ is the zero map on $K_1(M_\alpha A)$. It follows that $\exp:K_0(M_\alpha A\rtimes_{\tilde\alpha}\Z) \to K_1(M_\alpha A)$ is surjective. Hence $K_1(M_\alpha A)=\mathrm{im}\exp$. 
As $\che(\mathrm{im}\exp)=\che(\mathrm{im}\exp \circ \sc) + \mathrm{Amb}_e = \che(\mathrm{im}\exp \circ \sc)$ (the second equality follows from Lemma~\ref{lem-amb}) we deduce from Cor.~\ref{cor-chb=che} the result.
\end{proof}

The self-adjoint element $\tilde h$ used for Cor.~\ref{cor-chb=che} has two advantages. First, the spectrum of $h(t)$ is independent of $t$, and second, the identity of Cor,~\ref{cor-chb=che} does not involve an ambiguity. Its choice is nevertheless a bit special in the way the two stacked insulators $1$ and $h$ are coupled. From a physical point of view it is more natural to augment the system using an interpolation procedure. This comes with the price that we cannot guarantee that the spectrum of the augmented Hamiltonian does not fill the bulk gap, neither can we easily control the ambiguity. The first problem may (but need not) keep us from seeing the spectral flow (while the edge invariant is still well-defined, we loose only this interpretation). The second has the consequence that our result is only true up to an ambiguity. 
\begin{coro}[BEC for mapping torus, general case]\label{BBC_mapping_torus}
	Given $x\in K_0(A \rtimes_\alpha \mathbb{Z})$, then
	$$\chb(x)\equiv \che(\beta(x))
	\mod{	\{\tau_A(p)|[p]_0\in \mathrm{Inv}_{\alpha_*} (K_0(A))\}}.$$
\end{coro}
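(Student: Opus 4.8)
The plan is to deduce the statement directly from three facts already in place: the vanishing of the boundary map $\hat{\exp}$ (Lemma~\ref{lem-split-exact}), the exact identity $\chb([P_-(h)]_0)=\che(\exp[P_-(\tilde h)]_0)$ for the ``rotation'' augmentation $\tilde h$ of Corollary~\ref{cor-chb=che} (which carries \emph{no} ambiguity), and the identification of $\mathrm{Amb}_e$ in Lemma~\ref{lem-amb}. The conceptual content is that the augmentation $\tilde h$ used in Corollary~\ref{cor-chb=che} realises the section $\sigma$ of \eqref{eq-sc}, so it is a legitimate preimage for the map $\beta$ of \eqref{beta}; any other preimage differs from it by a class coming from $I\rtimes_\alpha\Z$, and $\che\circ\exp$ of such a class is, by definition, an element of $\mathrm{Amb}_e$. (This is just \eqref{theFormula} in the present case, where $\gamma=0$ so $\tilde\beta=\beta$ and $\chab$ drops out.)

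First I would observe that, since $\hat{\exp}=0$ by Lemma~\ref{lem-split-exact}, the domain $\ker\hat{\exp}$ of $\beta$ is all of $K_0(A\rtimes_\alpha\Z)$, that $q_*$ is surjective, and that the explicit $\sigma:K_0(A\rtimes_\alpha\Z)\to K_0(M_\alpha A\rtimes_{\tilde\alpha}\Z)$ of \eqref{eq-sc} is a right inverse to $q_*$. By additivity of $\chb$, $\che$ and $\beta$ it suffices to treat $x=[p]_0$ for $p$ a projection in some $M_n(A\rtimes_\alpha\Z)$, the general element being a difference of two such. Given such a $p$, set $h=1_n-2p$: this is a self-adjoint element with a gap at $0$ and $P_-(h)=p$, so $[P_-(h)]_0=x$. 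Form $\tilde h(t)=U_t\,\mathrm{diag}(1,h)\,U_t^*$ as in Corollary~\ref{cor-chb=che}. Since $P_-(1)=0$, we get $P_-(\tilde h(t))=U_t\,\mathrm{diag}(0,p)\,U_t^*=P(t)$, the projection appearing in \eqref{eq-sc}; hence $[P_-(\tilde h)]_0=\sigma(x)$ and $q_*(\sigma(x))=x$, so $\sigma(x)$ is a valid preimage of $x$ under $q_*$.

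Consequently $\beta(x)$ is represented by $\exp(\sigma(x))\in K_1(M_\alpha A)$, the choice being immaterial modulo $\exp\circ i_*(K_0(I\rtimes_\alpha\Z))$: two preimages of $x$ under $q_*$ differ by an element of $\ker q_*=\mathrm{im}\,i_*$, by exactness of the central vertical sequence of \eqref{eqn:focus_on_K-diagram}. Applying $\che$ and invoking Corollary~\ref{cor-chb=che} gives $\che(\exp(\sigma(x)))=\chb([P_-(h)]_0)=\chb(x)$, an exact equality, whence $\chb(x)\equiv\che(\beta(x))$ modulo $\che(\exp\circ i_*(K_0(I\rtimes_\alpha\Z)))=\mathrm{Amb}_e$. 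Finally Lemma~\ref{lem-amb} rewrites $\mathrm{Amb}_e$ as $\{\tau_A(p)\mid[p]_0\in\mathrm{Inv}_{\alpha_*}K_0(A)\}$, which is the claimed modulus; note also $\mathrm{Amb}_b\subset\mathrm{Amb}_e$, so no further ambiguity enters if one prefers to phrase the result as \eqref{theFormula}.

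I expect the only real work to be bookkeeping rather than a genuine obstacle: one must check that the map $\exp$ entering \eqref{beta} (the exponential of the Toeplitz extension \eqref{eqn:secondSES} of $\tilde A\rtimes_\alpha\Z$) is the very same map used in Corollary~\ref{cor-chb=che} and Proposition~\ref{prop-ENN}, and that the ambiguity built into the target quotient of $\beta$ is exactly $\exp\circ i_*(K_0(I\rtimes_\alpha\Z))$ and nothing coarser. Once the canonical augmentation $\tilde h$ is recognised as realising $\sigma$, the rest is a short diagram chase requiring no computation beyond what is already contained in Corollary~\ref{cor-chb=che} and Lemma~\ref{lem-amb}.
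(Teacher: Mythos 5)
Your proposal is correct and follows essentially the same route as the paper: the paper's proof is precisely the observation that Corollary~\ref{cor-chb=che} gives the exact identity for the special augmentation realising $\sigma$, and that any other choice of preimage under $q_*$ shifts the right-hand side by an element of $\mathrm{Amb}_e$, which Lemma~\ref{lem-amb} identifies as $\{\tau_A(p)\mid[p]_0\in\mathrm{Inv}_{\alpha_*}(K_0(A))\}$. You have merely made explicit the bookkeeping (the choice $h=1_n-2p$, the identification $[P_-(\tilde h)]_0=\sigma(x)$, and the exactness argument $\ker q_*=\mathrm{im}\,i_*$) that the paper leaves implicit.
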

\begin{proof} The formula of Cor.~\ref{cor-chb=che} holds only for a special choice of augmentation. Any other choice of augmentation may change the value of its r.h.s.\ by an element of $\mathrm{Amb}_e$ which we computed in Lemma~\ref{lem-amb}.
\end{proof}
\subsection{The case $A=C(\Xi)$}
We specialise the previous construction to the case needed for our models, in which $A=C(\Xi)$. Here $\Xi$ is the hull of a material, which can be any compact metrizable space with a $\Z$ action $\alpha$ which has a dense orbit. 
We choose a $\alpha$-invariant (Borel) 
probability measure $\mu$ on $\Xi$ to define a trace $\tau_A$ on $A$ as above through  $\tau_A(f)=\int_\Xi f(\xi)\mathrm{d}\mu(\xi)$.  
\begin{coro}\label{cor-chb=che1}
	Let $\Xi$ be a compact metrizable space with a topologically transitive action $\alpha$ by $\Z$ and $\mu$ an $\alpha$-invariant probability measure on $\Xi$. Then, for $x\in K_0(C(\Xi) \rtimes_\alpha \mathbb{Z})$ 
	$$\chb(x)\equiv \che(\beta(x)) \mod{1} .$$
\end{coro}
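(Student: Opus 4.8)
The plan is to deduce the statement from Cor.~\ref{BBC_mapping_torus} by computing the ambiguity group in the present situation. That corollary gives, for $x\in K_0(C(\Xi)\rtimes_\alpha\Z)$,
$$\chb(x)\equiv\che(\beta(x))\ \mod\ \mathrm{Amb}_e,\qquad \mathrm{Amb}_e=\{\tau_A(p)\mid [p]_0\in\mathrm{Inv}_{\alpha_*}(K_0(C(\Xi)))\},$$
so it suffices to show $\mathrm{Amb}_e=\Z$.

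First I would note that, since $\Xi$ is compact, $C(\Xi)$ is unital and $\tau_A(f)=\int_\Xi f\,\mathrm{d}\mu$ extends (via $\mathrm{Tr}$ on matrices) to a group homomorphism $\tau_{A*}\colon K_0(C(\Xi))\to\R$. This homomorphism factors through the dimension map
$$\mathrm{dim}\colon K_0(C(\Xi))\longrightarrow C(\Xi,\Z),\qquad [p]_0-[q]_0\longmapsto\big(\xi\mapsto\mathrm{rank}\,p(\xi)-\mathrm{rank}\,q(\xi)\big),$$
whose image consists of locally constant (hence continuous) $\Z$-valued functions, and one has $\tau_{A*}(y)=\int_\Xi\mathrm{dim}(y)\,\mathrm{d}\mu$ for every $y$. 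Moreover $\mathrm{dim}$ is natural for the automorphism $\alpha(f)=f\circ\alpha^{-1}$: conjugating a projection-valued function by $\alpha^{-1}$ does not change its pointwise rank, so $\mathrm{dim}(\alpha_*y)=\mathrm{dim}(y)\circ\alpha^{-1}$.

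Next, let $y\in\mathrm{Inv}_{\alpha_*}(K_0(C(\Xi)))$, i.e.\ $\alpha_*y=y$. Then $\mathrm{dim}(y)\circ\alpha^{-1}=\mathrm{dim}(y)$, so $\mathrm{dim}(y)$ is an $\alpha$-invariant continuous function. Since $\alpha$ is topologically transitive it has a dense orbit $\{\alpha^n\xi_0\}_{n\in\Z}$; $\mathrm{dim}(y)$ is constant on this orbit and, being continuous, constant on all of $\Xi$, equal to some $c\in\Z$. Hence $\tau_{A*}(y)=c\,\mu(\Xi)=c\in\Z$, giving $\mathrm{Amb}_e\subseteq\Z$. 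Conversely $[1]_0\in\mathrm{Inv}_{\alpha_*}(K_0(C(\Xi)))$ with $\tau_A(1)=\mu(\Xi)=1$, so $\Z\subseteq\mathrm{Amb}_e$. Therefore $\mathrm{Amb}_e=\Z$, and substituting into Cor.~\ref{BBC_mapping_torus} yields the claim.

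I do not expect a genuine obstacle here; the argument amounts to identifying an ambiguity group. The one point that needs care is the elementary topological fact that an $\alpha$-invariant continuous $\Z$-valued function on $\Xi$ is constant, which uses precisely the hypothesis that $\alpha$ admits a dense orbit, together with the naturality of the dimension map with respect to $\alpha_*$. The remaining ingredients — that $\tau_A$ descends to a homomorphism on $K_0$ and that it factors through $\mathrm{dim}$ — are standard.
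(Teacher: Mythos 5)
Your proposal is correct and follows essentially the same route as the paper: both reduce to Cor.~\ref{BBC_mapping_torus} and then show that the ambiguity group lies in $\Z$ by observing that an $\alpha_*$-invariant $K_0$-class of $C(\Xi)$ has a continuous, integer-valued, $\alpha$-invariant pointwise rank function, which must be constant because the action has a dense orbit. The only cosmetic difference is that you package this via the dimension map $K_0(C(\Xi))\to C(\Xi,\Z)$, whereas the paper works directly with a representative projection $p$ and a unitary implementing $\alpha(p)\sim p$; the content is identical.
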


\begin{proof} Let $p$ be a projection in $M_N(C(\Xi))$ which satisfies $\alpha_*([p]_0)= [\alpha(p)]_0 = [p]_0$. Perhaps after adding $0$ projections this implies $\alpha(p)=upu^*$ for some unitary $u\in M_N(C(\Xi))$. Then $\mathrm{Tr}(p(\alpha(\xi))) = \mathrm{Tr} (up(\xi)u^*) = \mathrm{Tr} (p(\xi))$. Hence $\xi\mapsto \mathrm{Tr} (p(\xi)) $ is a continuous integer-valued $\alpha$-invariant function on $\Xi$. As $\Xi$ has a dense orbit this function must be constant. It follows that 
$\tau_A (p) = \int_\Xi \mathrm{Tr} (p(\xi)) d\mu(\xi)\in \Z$ and we conclude with Cor.~\ref{BBC_mapping_torus}.
\end{proof} 
The mapping torus of $C(\Xi)$ can be identified with $\tilde A = C(\tilde\Xi)$, where $\tilde \Xi=\Xi\times\R/\sim_\alpha$ is the suspension space of $\Xi$, namely $(\xi,t+1) \sim_\alpha (\alpha(\xi),t)$. The extension $\tau_{
\tilde A}$ of the trace $\tau_A$ to $\tilde A$ is given by integration w.r.t.\ the extended measure $\mu\times \lambda$, where $\lambda$ is the Lebesgue measure on $\R$. 
The derivation $\delta$ to define the $1$-cocycle $\che$ is derivation along the flow, that is, w.r.t.\ $t$. 

Let $h$ be a self-adjoint element in $C( \Xi)\rtimes_\alpha\Z$ which has a spectral gap $\Delta$, say around $0$. 
To evaluate $\che(\beta([P_F(h)]_0))$ we can procede as follows: We choose an augmentation $\tilde h$ of $h$ which also has a gap around $0$. This can be an element of $C(\tilde\Xi )\rtimes_\alpha \Z$ such that $q(\tilde h) = h$ but since we are calculating in $K$-theory it is sufficient to find 
$\tilde h$ in $M_N(C(\tilde\Xi )\rtimes_\alpha \Z)$ such that $q(P_F(\tilde h))$ defines the same $K_0$-class as $P_F(h)$. We could, for instance, take the augmentation which we used for Cor.~\ref{cor-chb=che} but for physical reasons this might not the best choice. 
Then, $\beta([P_F(h)]_0)$ is $\exp([P_F(\tilde h)]_0)$ modulo $\mathrm{Amb}_e$. To obtain $\exp([P_F(\tilde h)]_0)$ we choose a pre-image $\hat h\in \mathcal{T}(C(\tilde\Xi), \bar{\alpha})$ of $\tilde{h}$ under $\pi$ (perhaps matrix valued). 
In this way, $1-g(\hat h)$ ($g$ as in proposition \ref{useful_formula_for_trace}) is a lift of the projection $P_F(\tilde h)$.
Hence $U=\mathrm{e}^{2\pi\mathrm{i}g(\hat h)}$ is a unitary in $\mathcal{T}(C(\tilde\Xi), \bar{\alpha}) $ such that $U-1$ lies in the ideal 
$\tilde{\mathcal E}$ of the Teoplitz extension 
and we have 
$\exp ([1-g( \tilde h)]_0)=[\psi^{-1}(U)]_1$. If $\psi^{-1}(U)-1$ is trace class we get, with Prop.~\ref{useful_formula_for_trace}
\begin{equation}\label{eq-che-mapping}
\che(\beta([P_F(h)]_0)) 
\equiv - \tau_{\tilde{\mathcal  E}}\big( \varphi(\hat h)\delta \hat h\big) \mod{1}
\end{equation}
where $\tau_{\tilde{\mathcal  E}} = \tau_{\tilde A}\circ\psi^{-1}$.
What happens if our augmentation has no gap at $0$? Then $1-g(\tilde h)$ is no longer a projection so that we cannot argue as above with $\exp$. Nevertheless $U=\mathrm{e}^{2\pi\mathrm{i}g(\hat h)}$ is still a unitary in $\mathcal{T}(C(\tilde\Xi), \bar{\alpha}) $ but now $U-1$ only lies in the (larger) ideal $J$. We thus have  $\exp_J ([1-g( h)]_0)=[U]_1$. Moreover $\pi_*\circ \exp_J=0$ so that $[U]_1$ lies in the image of $\b_*$. The $K_1$-class $[\psi^{-1}(U)]_1$ has therefore a representative $U'$ to which we can apply $\ch^{(e)}$. However, it is not clear whether $\psi^{-1}(U)$ itself lies in the domain of $\ch^{(e)}$. For the following physical interpretation we therefore have to assume that $\psi^{-1}(U)$ lies in that domain. Having said that, our simulations below show that it does not seem necessary that the augmentation of $h$ has a gap around zero for the spectral flow to be visible.

\subsection{Interpretation of $\che$}\label{sec-interpretation-mapping}
We provide a interpretation of formula \eqref{eq-che-mapping} in a physical context, in which $\Xi$ is the hull of a one dimensional aperiodic material. For that we reformulate the chern cocycle on the level of the representations $\hat\rho_{\tilde\xi}$ of 
$\tilde {\mathcal E} \subset $
$\mathcal{T}(C(\tilde\Xi), \bar{\alpha})$. These representations are defined on $\ell^2(\N)$. Given
$b = f \hat u^n\hat p (\hat{u}^*)^m\in \tilde {\mathcal E}$ with $f\in C(\tilde\Xi)$, then
$$\hat\rho_{\tilde\xi}(b) \psi (k) = \delta_{0,k-n} f(\alpha^{-k}(\tilde\xi)) \psi(k-(n-m))$$
with $\psi(k)=0$ if $k<0$. The trace of such an element is given by $\mathrm{Tr}_{\ell^2(\N)}(\hat\rho_{\tilde\xi}(b))=\delta_{n,m} f(\alpha^{-n} (\tilde\xi))$.
From this we see, using $\alpha$-invariance of the measure, that
\begin{equation}\label{eq-tau-m}
\tau_{\tilde{\mathcal  E}}(b) = \delta_{n,m}  \int_{\tilde\Xi }f(\tilde{\xi})\mathrm{d}\tilde \mu(\tilde{\xi}) =  \int_{\tilde \Xi} \mathrm{Tr}_{\ell^2(\N)}(\hat\rho_{\tilde\xi}(b)) \mathrm{d}\tilde\mu(\tilde\xi).
\end{equation}
The derivation $\delta$ on $C(\tilde \Xi)\otimes \mathcal K$ is derivation along the flow (namely, $\partial_t$).
We thus obtain 
\begin{equation}\label{eq-che-m}
\tau_{\tilde{\mathcal  E}}(\varphi(\hat h)\delta \hat h) = \int_{\Xi}\int_0^1   \mathrm{Tr}_{\ell^2(\N)}(\varphi(\hat H_{\xi,t})\partial_t \hat H_{\xi,t} ) \mathrm{d} t \mathrm{d}\mu(\xi)
\end{equation}
where $\hat H_{\tilde\xi}=\hat\rho_{\tilde\xi}(\hat h)$. 
Assuming that $\mu$ is ergodic, the ergodicity theorem implies that for $\mu$-almost $\xi_0$ 
\begin{equation*}
	\int_{\Xi}\mathrm{Tr}_{\ell^2(\N)}(\varphi(\hat{H}_{\xi,t})\partial_t \hat{H}_{\xi,t} ) \mathrm{d}\mu(\xi)
	=  \lim_{L\to +\infty} \frac1{L} \sum_{l=0}^{L-1} \mathrm{Tr}_{\ell^2(\N)}(\varphi(\hat{H}_{\alpha^{-l}(\xi_0),t})\partial_t \hat{H}_{\alpha^{-l}(\xi_0),t} )
\end{equation*}
so that
$$\tau_{\tilde{\mathcal  E}}\big( \varphi(\hat h)\delta \hat h\big)= \lim_{L\to +\infty} \frac1{L} \int_0^{L}   \mathrm{Tr}_{\ell^2(\N)}(\varphi(\hat{H}_{\beta^{-t}(\xi_0,0)})\partial_t \hat{H}_{\beta^{-t}(\xi_0,0)} ) \mathrm{d}t .$$
We may choose $\hat h \in {\mathcal T}(C(\tilde\Xi),\alpha))$  
by replacing $u$ and $u^*$ with $\hat u$ and $\hat u^*$ in the expression for $\tilde h$. Then, perhaps up to topologically irrelevant terms,  
$\hat H_{\tilde\xi}$ is the restriction of $H_{\tilde\xi}$ to a half line, that is, to $\ell^2(\N)$. 
The derivative $\partial_t \hat{H}_{\beta^{-t}(\xi_0,0)}$ can be understood as a generalized force which arises when we move the system relative to the half line. This becomes clearer if we take as augmentation of $H_\xi$ the operator which we obtain if we interpolate the potential $V_\xi$, which is to begin with only defined at integer values, to real values. Then 
$\partial_t \hat{H}_{\alpha^{-n}(\xi_0),t}=\partial_t V_{\alpha^{-n}(\xi_0),t}$ is the force exhibited on the particles of the system when
shifting the potential against the boundary (edge) of the half line.
$\varphi(\hat{H}_{\tilde\xi}$) can be understood as a density matrix of states with energy in the gap. It has the physical dimension of inverse energy. It may be taken to approximate $\frac1{|\Delta|}$ times the characteristic function on the gap $\Delta$ so that $\varphi(\hat{H}_{\tilde\xi})$ becomes $\frac1{|\Delta|}$ times the spectral projection $P_{\Delta}(\hat{H}_{\tilde\xi})$ onto the states of $\hat{H}_{\tilde\xi}$ in $\Delta$. 
Consequently, 
$$ \mathrm{Tr}_{\ell^2(\N)}(\varphi(\hat{H}_{\beta^{-t}(\xi_0,0)})\partial_t \hat{H}_{\beta^{-t}(\xi_0,0)} )= \frac1{|\Delta|}\mathrm{Tr}_{\ell^2(\N)}(P_{\Delta}(\hat{H}_{\beta^{-t}(\xi_0,0)})\partial_t \hat{H}_{\beta^{-t}(\xi_0,0)} )$$ 
is the expectation value of this force w.r.t.\ to the edge states in the gap divided by the width of the gap. Finally the integration over the interval $[0,L]$ yields the work exhibited on the system if we move the system relative to the edge over a distance $L$. Thus 
$|\Delta| \tau_{\tilde{\mathcal  E}}(\varphi(\hat h)\delta \hat h) $ is the work 
per unit length exhibited on the system if we move the system relative to the edge. The fact that $\tau_{\tilde{\mathcal  E}}(\varphi(\hat h)\delta \hat h) $ does not depend on the choice of density function $\varphi$ means that we can let the width of the support of this function go to zero so that   
$\tau_{\tilde{\mathcal  E}}(\varphi(\hat h)\delta \hat h) $ is the work at energy $0$ (or any other energy in the gap) per unit length and energy which is exhibited on the augmented system, restricted to a half line, if we move the potential against the boundary of the half line. Our generalised bulk edge correspondence \eqref{eq-che-mapping} thus says that this quantity coincides modulo an integer to (minus) the number of states per unit length  below energy $0$ of the system without boundary.

Let us also give an interpretation of $\tau_{\tilde{\mathcal  E}}(\varphi(\hat h)\delta \hat h) $ through spectral flow. The spectrum of $\hat H_{\tilde\xi}$ in the gap of 
$H_{\tilde\xi}$ consists of eigenvalues $E_i(\xi,t)$, $i$ indexing these eigenvalues, whose associated eigenvectors are localized at the edge. If $h$ is an operator of finite interaction range (polynomial in $u$ and $u^*$) then there are only finitely many such eigenvalues and $(U^*-1)$ 
is trace class. As 
$\mathrm{Tr}_{\ell^2(\N)}(\varphi(\hat H_{\tilde\xi})\partial_t \hat H_{\tilde\xi} ) = \sum_i \varphi(E_i({\tilde\xi}))\partial_t E_i({\tilde\xi})$ we obtain
\begin{align*}\tau_{\tilde{\mathcal  E}}(\varphi(\hat h)\delta \hat h)  & = - \int_{\Xi}\sum_i\mathrm{SF}(E_i(\xi,t);t\in [0,1]) \mathrm{d}\mu(\xi)\\
& = -\lim_{L\to +\infty} \frac1{L} \sum_i\mathrm{SF}(E_i(\xi_0,t);t\in [0,L])\end{align*}
where
$$\mathrm{SF}(E_i(\xi,t);t\in [0,L]) = \int_0^L\varphi(E_i({\xi,t}))\partial_t E_i({\xi,t})) \mathrm{d} t $$
is the spectral flow of the eigenvalue $E_i(\xi,t)$ though the gap when $t$ varies between $0$ and $L$. With Cor.~\ref{cor-chb=che1}, \eqref{eq-che-mapping} and \eqref{eq-int-chb} we thus get
\begin{equation}\label{eq-che-mapping-int}
\mathrm{IDS}(E_F)  
\equiv \lim_{L\to +\infty} \frac1{L}\sum_i \mathrm{SF}(E_i(\xi,t);t\in [0,L])
\mod{1}
\end{equation}

\subsection{Choice of augmented Hamiltonian}
Given an abstract Hamiltonian $h\in C(\Xi)\rtimes_\alpha\Z$, the augmented Hamiltonian is an element $\tilde h$ of $C(\tilde\Xi)\rtimes_\alpha\Z$, or more generally $M_N(\tilde h\in C(\tilde\Xi)\rtimes_\alpha\Z)$, such that $q_*([P_F(\tilde h)]_0)=[P_F(h)]_0$. 
There are two natural choices which we label with A and B.

The more physically motivated augmentation is obtained by interpolation, for instance, by linear interpolation. 
Consider $h = \sum_{k} h_k u^k$ where $h_k\in C(\Xi)$. We define the augmentation $\tilde h^A$ by extending each $h_k$ to $\tilde h_k\in C(\tilde\Xi)$ as follows
\begin{equation}\label{eq-augA}
\tilde h^A: = \sum_{k} \tilde h_k u^k,\quad \tilde h_k(t) = (1-t) h_k + t \alpha(h_k) .
\end{equation}
In particular, if $h = u + u^{-1} + v$ with $v\in C(\Xi)$ is a Hamiltonian of Kohmoto type, then the covariant family of augmented Hamiltonians is $H_{\xi,t}^A=\tilde\rho_{\xi,t}(\tilde h^A)$, $(\xi,t)\in \tilde\Xi$ where
\begin{equation}\label{eq-augAb}
H_{\xi,t}^A\psi(n) = \psi(n+1)+\psi(n-1) + V_{\xi,t}(n)\psi(n), \quad V_{\xi,t}(n) = (1-t) V_\xi(n) + t V_\xi(n+1)
\end{equation}
where $ V_{\xi}(n) = v(\alpha^{-n}(\xi))$. 

The other augmentation, which was used in Cor.~\ref{cor-chb=che} is more mathematically motivated. It has the advantage that it has the same gap in its spectrum as $h$ and that the bulk edge correspondence holds for it without ambiguity. As addition in $K$-theory corresponds to the stacking of insulators it can be described as follows. 
We may assume that the gap of $h$ in question is around $0$. The identity operator $h_0=1$ can be understood as the trivial abstract insulator. It has a gap at $0$ energy and so its Fermi projection is zero. The direct sum $1\oplus h$ is therefore a stacked insulator which is in the same topological phase as $h$, indeed, $P_F(1\oplus h) = 0 \oplus P_F(h)$.
Now we couple the two layers in a peculiar way,
\begin{equation}\label{eq-augB}
\tilde h^B(t) := U_t \begin{pmatrix}
		1 & 0\\
		0 & h
	\end{pmatrix} U_t^*.
\end{equation}	
By construction, $\tilde h^B(t)$ is homotopic to $\tilde h^B(0) = 1\oplus h$ and unitary equivalent, so for given $t$, $	\tilde h^B(t)$ and $h$ are in the same topological phase when regarded as elements of the bulk algebra. However, this has to be distinguished from the topological phase of $\tilde h^B$ seen as an element of the augmented bulk algebra.  

The main interest in working with this augmentation is that it allows to give a simple proof of Cor.~\ref{cor-B=J}. Specified to the algebra $A=C(\Xi)$ for some hull $\Xi$ of a one-dimensional aperiodic Schr\"odinger operator, the corollary states that the $K$-theoretic gap labelling group of Bellissard coincides with the gap labelling group of Johnson and Moser \cite{johnson1982rotation,johnson1986exponential}. Indeed, the elements of $K_1(M_\alpha C(\Xi))$ are represented by homotopy classes of continuous functions $f:\tilde\Xi\to U_n(\C)$ and  
Johnson and Moser's gap labelling group is the subgroup of $\R$ obtained from the 
$\tilde\mu$-averages of $\frac1{2\pi i} \det(f)^{-1} \partial_t\det(f)$ where $\partial_t$ is derivative along the $\R$-flow.
Our way of proving this equality seems more explicit and it allows for an interpretation of the gap labels of Johnson and Moser in terms of a spectral flow, see below and Figure~\ref{spectrum_edge_MT_proofslift} where we visualise this with the example of a 1-cut Sturmian system.

\section{Augmentation with arcs}
We consider an augmentation of the hull $\Xi = \Xi_{\theta,\cut}$ of the  quasi-periodic Kohmoto models described in the introduction.
 It is obtained by joining the doubled points $(\phi,\epsilon)$, $\epsilon = \pm 1$ of $\Xi$ with an arc $\{(\phi,t):t\in [0,1]\}$. To get an intuition of what this means, the reader may think on the Cantor set which is obtained by taking out iteratively the middle third interval from $[0,1]$. What we do here is the inverse procedure, we add intervals to obtain a circle from a Cantor set. 
This is indicated in Figure~\ref{fig:1.2} where the inserted intervals appear as arcs.

\begin{figure*}
    \centering
    \begin{subfigure}[t]{0.45\textwidth}
        \centering
    \begin{tikzpicture}[scale=1.2, line cap=round,line join=round,>=triangle 45,x=2.000cm,y=2.000cm]
    	\clip(-1.500,-1.500) rectangle (1.500,1.500);
    	\draw [line width=1.5pt] (0.,0.) circle (2.cm);
    	\begin{scriptsize}
    		\SquareMarker{1}{0}
    		\draw(0.906,0.077) node {\textbf{0}};
    		\SquareMarker{0.540}{0.841}
    		\draw(0.437,0.781) node {\textbf{1}};
    		\SquareMarker{-0.416}{0.909}
    		\draw(-0.388,0.819) node {\textbf{2}};
    		\SquareMarker{-0.990}{0.141}
    		\draw (-0.866,0.099) node {\textbf{3}};
    		\SquareMarker{-0.654}{-0.757}
    		\draw (-0.600,-0.650) node {\textbf{4}};
    		\SquareMarker{0.284}{-0.959}
    		\draw (0.225,-0.794) node {\textbf{5}};
    		\SquareMarker{0.960}{-0.279}
    		\draw (0.831,-0.181) node {\textbf{6}};
    		\SquareMarker{0.754}{0.657}
    		\draw (0.657,0.592) node {\textbf{7}};
    		\SquareMarker{-0.146}{0.989}
    		\draw(-0.139,0.864) node {\textbf{8}};
    		\SquareMarker{0.622}{0.783}
    		\draw (0.558,0.690) node {\textit{0}};
    		\SquareMarker{-0.323}{0.946}
    		\draw (-0.290,0.834) node {\textit{1}};
    		\SquareMarker{-0.973}{0.240}
    		\draw(-0.858,0.251) node {\textit{2}};
    		\SquareMarker{-0.7237}{-0.684}
    		\draw(-0.676,-0.575) node {\textit{3}};
    		\SquareMarker{0.187}{-0.982}
    		\draw(0.119,-0.840) node {\textit{4}};
    		\SquareMarker{0.927}{-0.374}
    		\draw (0.800,-0.294) node {\textit{5}};
    		\SquareMarker{0.816}{0.578}
    		\draw (0.725,0.493) node {\textit{6}};
    		\SquareMarker{-0.046}{0.999}
    		\draw (-0.063,0.872) node {\textit{7}};
    	\end{scriptsize}
    \end{tikzpicture}
\end{subfigure}%
~ 
\begin{subfigure}[t]{0.45\textwidth}
\centering
\begin{tikzpicture}[scale=1.2, line cap=round,line join=round,>=triangle 45,x=2.000cm,y=2.000cm]
	\clip(-1.500,-1.500) rectangle (1.500,1.500);
	\draw [line width=1.5pt] (0.,0.) circle (2.cm);
	\draw [shift={(0.609,0.946)},line width=2pt, color=black]  plot[domain=-0.845:2.851,variable=\t]({1.*0.049*cos(\t r)+0.*0.049*sin(\t r)},{0.*0.049*cos(\t r)+1.*0.049*sin(\t r)});
	\draw [line width=2pt,color=black] (0.514,0.863)-- (0.563,0.960);
	\draw [line width=2pt,color=black] (0.568,0.83)-- (0.641,0.910);
	\draw [shift={(-0.467,1.024)},line width=2pt,color=black]  plot[domain=0.155:3.851,variable=\t]({1.*0.049*cos(\t r)+0.*0.049*sin(\t r)},{0.*0.049*cos(\t r)+1.*0.049*sin(\t r)});
	\draw [shift={(-1.114,0.160)},line width=2pt,color=black]  plot[domain=1.155:4.851,variable=\t]({1.*0.049*cos(\t r)+0.*0.049*sin(\t r)},{0.*0.049*cos(\t r)+1.*0.049*sin(\t r)});
	\draw [shift={(-0.737,-0.851)},line width=2pt,color=black]  plot[domain=2.155:5.851,variable=\t]({1.*0.049*cos(\t r)+0.*0.049*sin(\t r)},{0.*0.049*cos(\t r)+1.*0.049*sin(\t r)});
	\draw [shift={(0.318,-1.080)},line width=2pt,color=black]  plot[domain=-3.129:0.568,variable=\t]({1.*0.049*cos(\t r)+0.*0.049*sin(\t r)},{0.*0.049*cos(\t r)+1.*0.049*sin(\t r)});
	\draw [shift={(1.080,-0.316)},line width=2pt,color=black]  plot[domain=-2.129:1.568,variable=\t]({1.*0.049*cos(\t r)+0.*0.049*sin(\t r)},{0.*0.049*cos(\t r)+1.*0.049*sin(\t r)});
	\draw [shift={(0.849,0.738)},line width=2pt,color=black]  plot[domain=-1.129:2.568,variable=\t]({1.*0.049*cos(\t r)+0.*0.049*sin(\t r)},{0.*0.049*cos(\t r)+1.*0.049*sin(\t r)});
	\draw [shift={(-0.162,1.114)},line width=2pt,color=black]  plot[domain=-0.129:3.568,variable=\t]({1.*0.049*cos(\t r)+0.*0.049*sin(\t r)},{0.*0.049*cos(\t r)+1.*0.049*sin(\t r)});
	\draw [shift={(0.701,0.881)},line width=1pt,color=black]  plot[domain=-0.945:2.751,variable=\t]({1.*0.049*cos(\t r)+0.*0.049*sin(\t r)},{0.*0.049*cos(\t r)+1.*0.049*sin(\t r)});
	\draw [shift={(-0.363,1.066)},line width=1pt,color=black]  plot[domain=0.055:3.751,variable=\t]({1.*0.049*cos(\t r)+0.*0.049*sin(\t r)},{0.*0.049*cos(\t r)+1.*0.049*sin(\t r)});
	\draw [shift={(-1.093,0.271)},line width=1pt,color=black]  plot[domain=1.055:4.751,variable=\t]({1.*0.049*cos(\t r)+0.*0.049*sin(\t r)},{0.*0.049*cos(\t r)+1.*0.049*sin(\t r)});
	\draw [shift={(-0.818,-0.773)},line width=1pt,color=black]  plot[domain=2.055:5.751,variable=\t]({1.*0.049*cos(\t r)+0.*0.049*sin(\t r)},{0.*0.049*cos(\t r)+1.*0.049*sin(\t r)});
	\draw [shift={(0.209,-1.106)},line width=1pt,color=black]  plot[domain=-3.229:0.468,variable=\t]({1.*0.049*cos(\t r)+0.*0.049*sin(\t r)},{0.*0.049*cos(\t r)+1.*0.049*sin(\t r)});
	\draw [shift={(1.043,-0.422)},line width=1pt,color=black]  plot[domain=-2.229:1.468,variable=\t]({1.*0.049*cos(\t r)+0.*0.049*sin(\t r)},{0.*0.049*cos(\t r)+1.*0.049*sin(\t r)});
	\draw [shift={(0.919,0.650)},line width=1pt,color=black]  plot[domain=-1.229:2.468,variable=\t]({1.*0.049*cos(\t r)+0.*0.049*sin(\t r)},{0.*0.049*cos(\t r)+1.*0.049*sin(\t r)});
	\draw [shift={(-0.050,1.124)},line width=1pt,color=black]  plot[domain=-0.229:3.468,variable=\t]({1.*0.049*cos(\t r)+0.*0.049*sin(\t r)},{0.*0.049*cos(\t r)+1.*0.049*sin(\t r)});
	\draw [line width=1pt,color=black] (0.83,0.56)-- (0.936,0.605);
	\draw [line width=1pt,color=black] (0.8,0.59)-- (0.881,0.680);
	\draw [line width=2pt,color=black] (0.779,0.633)-- (0.870,0.695);
	\draw [line width=2pt,color=black] (0.74,0.685)-- (0.809,0.765);
	\draw [line width=1pt,color=black] (0.637,0.77)-- (0.729,0.841);
	\draw [line width=1pt,color=black] (0.612,0.795)-- (0.656,0.899);
	\draw [line width=1pt,color=black] (-0.025,0.999)-- (-0.003,1.113);
	\draw [line width=1pt,color=black] (-0.068,0.996)-- (-0.096,1.109);
	\draw [line width=2pt,color=black] (-0.115,0.995)-- (-0.114,1.108);
	\draw [line width=2pt,color=black] (-0.176,0.989)-- (-0.207,1.094);
	\draw [line width=1pt,color=black] (-0.314,1.068)-- (-0.303,0.95);
	\draw [line width=1pt,color=black] (-0.402,1.038)-- (-0.343,0.936);
	\draw [line width=2pt,color=black] (-0.39,0.92)-- (-0.419,1.031);
	\draw [line width=2pt,color=black] (-0.445,0.90)-- (-0.504,0.992);
	\draw [line width=1pt,color=black] (-1.069,0.313)-- (-0.965,0.260);
	\draw [line width=1pt,color=black] (-1.091,0.222)-- (-0.973,0.220);
	\draw [line width=2pt,color=black] (-1.094,0.205)-- (-0.990,0.17);
	\draw [line width=2pt,color=black] (-1.107,0.112)-- (-0.998,0.11);
	\draw [line width=1pt,color=black] (-0.841,-0.730)-- (-0.741,-0.668);
	\draw [line width=1pt,color=black] (-0.776,-0.798)-- (-0.711,-0.698);
	
	\draw [line width=2pt,color=black] (-0.763,-0.810)-- (-0.68,-0.735);
	\draw [line width=2pt,color=black] (-0.693,-0.871)-- (-0.637,-0.785);
	
	\draw [line width=1pt,color=black] (0.160,-1.102)-- (0.167,-0.985);
	\draw [line width=1pt,color=black] (0.252,-1.084)-- (0.207,-0.979);
	
	\draw [line width=2pt,color=black] (0.269,-1.080)-- (0.255,-0.968);
	\draw [line width=2pt,color=black] (0.359,-1.054)-- (0.315,-0.955);
	
	\draw [line width=1pt,color=black] (0.919,-0.389)-- (1.014,-0.460);
	\draw [line width=1pt,color=black] (0.930,-0.357)-- (1.048,-0.374);
	
	\draw [line width=2pt,color=black] (0.955,-0.31)-- (1.055,-0.357);
	\draw [line width=2pt,color=black] (0.97,-0.25)-- (1.080,-0.267);
	\draw [shift={(1.126,-0.001)},line width=2pt,color=black]  plot[domain=-1.845:1.851,variable=\t]({1.*0.049*cos(\t r)+0.*0.049*sin(\t r)},{0.*0.049*cos(\t r)+1.*0.049*sin(\t r)});
	\draw [line width=2pt,color=black] (1.112,0.045)-- (1.,0.03);
	\draw [line width=2pt,color=black] (1.112,-0.048)-- (1.,-0.03);
	\begin{scriptsize}
		\SquareMarker{1}{0}
		\draw[color=black] (0.906,0.077) node {\textbf{0}};
		\SquareMarker{0.540}{0.841}
		\draw[color=black] (0.437,0.781) node {\textbf{1}};
		\SquareMarker{-0.416}{0.909}
		\draw[color=black] (-0.388,0.819) node {\textbf{2}};
		\SquareMarker{-0.990}{0.141}
		\draw[color=black] (-0.866,0.099) node {\textbf{3}};
		\SquareMarker{-0.654}{-0.757}
		\draw[color=black] (-0.600,-0.650) node {\textbf{4}};
		\SquareMarker{0.284}{-0.959}
		\draw[color=black] (0.225,-0.794) node {\textbf{5}};
		\SquareMarker{0.754}{0.657}
		\draw[color=black] (0.657,0.592) node {\textbf{7}};
		\SquareMarker{0.622}{0.783}
		\draw[color=black] (0.558,0.690) node {\textit{0}};
		\SquareMarker{0.960}{-0.279}
		\draw[color=black] (0.831,-0.181) node {\textbf{6}};
		\SquareMarker{-0.323}{0.946}
		\draw[color=black] (-0.290,0.834) node {\textit{1}};
		\SquareMarker{-0.973}{0.240}
		\draw[color=black] (-0.858,0.251) node {\textit{2}};
		\SquareMarker{-0.7237}{-0.684}
		\draw[color=black] (-0.676,-0.575) node {\textit{3}};
		\SquareMarker{0.187}{-0.982}
		\draw[color=black] (0.119,-0.840) node {\textit{4}};
		\SquareMarker{0.927}{-0.374}
		\draw[color=black] (0.800,-0.294) node {\textit{5}};
		\SquareMarker{0.816}{0.578}
		\draw[color=black] (0.725,0.493) node {\textit{6}};
		\SquareMarker{-0.046}{0.999}
		\draw[color=black] (-0.063,0.872) node {\textit{7}};
		\SquareMarker{-0.146}{0.989}
		\draw[color=black] (-0.139,0.864) node {\textbf{8}};
	\end{scriptsize}
\end{tikzpicture}
    \end{subfigure}
    \caption{Graphical representation of $\Xi_{\theta,\cut}$ (left) and its augmented version $\tilde{\Xi}_{\theta,\cut}$ (right). The bold (resp. italic) numbers correspond to the points of the orbit of $\theta$ (resp. $\cut$). 
In the right figure the set $X_0$ is depicted with thick arcs and $X_\cut$ with thin arcs.     
    Both figures represent only an approximation to the topological spaces in that they show only finitely many cuts or added arcs, respectively. For irrational $\theta$,  $\Xi_{\theta,\cut}$ is actually a Cantor set.}
\end{figure*}
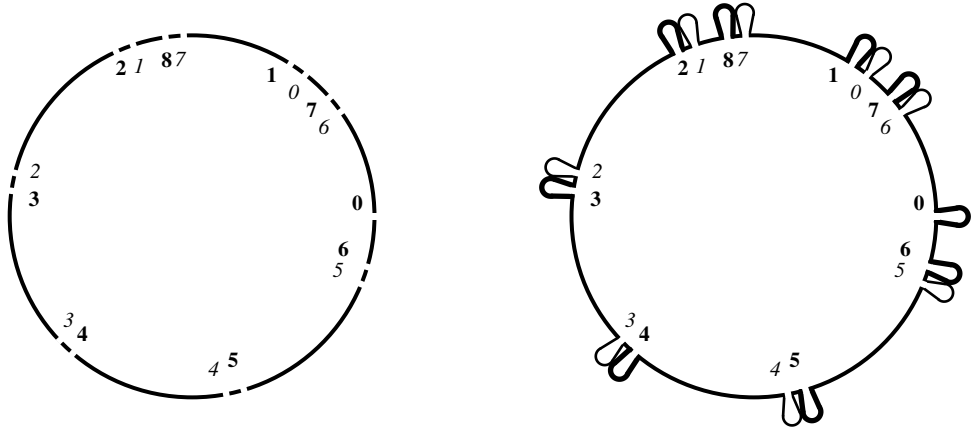\label{fig:1.2}
More rigorously, we define 
\begin{equation*}\label{cut_up_circle_augmented}
	\tilde{\R}_{\theta, \cut}\coloneqq \{(\phi,t) | \phi\in \R,t \in [0,1]\mbox{ if }\phi\in
	\{0,\cut\}+\Z+\theta\Z, 
	t = o \mbox{ otherwise}\} 
\end{equation*}
with the topology given by the total order 
$$  (\phi,t) < (\phi',t') \Leftrightarrow \phi < \phi' \mbox{ or } 
\phi = \phi', t <t' $$
and the augmented hull it its quotient by  
the action of $\Z$ on $\tilde{\R}_{\theta, \cut}$ by translation of the first coordinate $(\phi,t)\mapsto (\phi+1,t)$:
$$\tilde{\Xi}=\tilde{\Xi}_{\theta, \cut}\coloneqq \tilde{\R}_{\theta, \cut}/\Z.$$ 
By identifying $(\phi,-)$ with $(\phi,0)$ and $(\phi,+)$ with $(\phi,1)$ we see that $\tilde{\Xi}_{\theta, \cut}$ contains ${\Xi}_{\theta, \cut}$ as a closed subset. 
The action $\alpha$ of $\Z$ on $\tilde\Xi$ is given by $(\phi, t)\mapsto (\phi+\theta, t)$. 
We claim that $\tilde{\Xi}_{\theta, \cut}$ is homeomorphic to $S^1$. If $\theta$ is rational, this is clear, as only finitely many arcs are added to ${\Xi}_{\theta, \cut}$. If $\theta$ is irrational then $\tilde{\Xi}_{\theta, \cut}$ can be obtained through an inverse limit of circles in which we only add the arcs one by one. 
Figure~\ref{fig:1.2} shows only the addition of finitely many arcs.
We stress, however, that no such homeomorphism can conjugate the $\Z$-action with the rotation action on the circle, as the augmented dynamical system $(\Xi_{\theta,\cut},\alpha)$ is not minimal, while for irrational $\theta$ the rotation action on the circle is minimal.

With $\tilde{A}=C(\tilde{\Xi})$ the exact sequence (\ref{eq-SES1}) becomes
\begin{equation}\label{eq-SES2cut}
	0 \rightarrow C_0(\mathcal{S}) \xrightarrow{i}  C(\tilde{\Xi}) \xrightarrow{q} C({\Xi})
	\rightarrow 0.
\end{equation}
where $q$ is the restriction map and 
$\mathcal{S}\coloneqq\tilde{\Xi}\setminus\Xi$. We can write $\mathcal S=X_{0} \cup X_{\cut}$, where 
$$X_{\phi'} \coloneqq \{(\phi,t) | \phi\in \phi'+\theta\Z,t \in (0,1)
\}/\Z
$$ corresponding to the set of the arcs along the orbit of the cut point $\phi'$.
$X_0$ and $X_{\cut}$ coincide if $\cut\in \theta\Z+\Z$ but are otherwise disjoint.

Our goal is to analyze diagram (\ref{eqn:K-vertical}) which here becomes
\begin{equation}\label{eqn:K-vertical-cut}
\begin{matrix}
K_0(C_0(\Ss) \rtimes_\alpha \mathbb{Z}) & \stackrel{i_*}\to &
 K_0(C(\tilde \Xi) \rtimes_\alpha \mathbb{Z}) &  \stackrel{q_*}\to &
K_0(C(\Xi) \rtimes_\alpha \mathbb{Z}) &  \stackrel{\hat\exp}\to & 
K_1(C_0(\Ss) \rtimes_\alpha \mathbb{Z})\\
 & &\:\:\:\:\:\:\:\:\: \downarrow \che\!\circ\exp & &\:\:\:\:\:\: \downarrow \chb & &\:\:\:\:\:\:\:\:\: \downarrow \chab  \\
 & &\!\!\!\!\!\!\!\!\!\!\!\!\!\!\!\!\!\!\!\!\!\C  & & \!\!\!\!\!\!\!\!\!\!\!\!\!\C  & & \!\!\!\!\!\!\!\!\!\!\!\!\C
\end{matrix}
\end{equation}
The result will depend on whether when $\theta$ is irrational or not and on the number $\ell$ of orbits of cut points of $\Xi_{\theta,\cut}$. 
Here $\ell=1$ if $\cut$ is a multiple of $\theta$ and otherwise $\ell=2$. 
This could be easily generalized to models with more that $2$ cuts, but we leave that to the reader. 

Given two cut points $\phi_1$, $\phi_2$,
we write more briefly $\chi_{[\phi_1,\phi_2]}$ for 
the indicator function on the clopen subset $[(\phi_1,+),(\phi_2,-)]$. 
$\chi_{[\phi_1,\phi_2]}$ is a projection in $C(\Xi)$ and hence also in
$C(\Xi)\rtimes_\alpha\Z$.

To construct a pre-image in $C(\tilde\Xi)\rtimes_\alpha\Z$ for $\chi_{[0, \theta]}$ under $q$ we adapt the Rieffel projection (\cite{rieffel1981c}) to our context. Define the two real continuous function $f,g$ on $\tilde{\Xi}$
\begin{equation}\label{eq-Rieffel}
f(\phi,t) = \left\{\begin{array}{ll}
1 & \mbox{if } 0<\phi<\theta \\
0 & \mbox{if } \theta<\phi<1 \\
t & \mbox{if } \phi = 0 \\
1-t & \mbox{if } \phi = \theta 
\end{array}\right.
\qquad
g(\phi,t) = \left\{\begin{array}{ll}
\sqrt{t(1-t)} & \mbox{if } \phi =\theta \\
0 & \mbox{if } \phi\neq \theta 
\end{array}\right.
\end{equation}
and set 
		$$P_\theta = gu+f +u^* g.$$
		Clearly $P_\theta$ is self-adjoint and $P_\theta^2=P_\theta$ as
		$$P_\theta^2=(f^2+g^2+\alpha^{-1}(g^2))+(fg+g\alpha(f))u+u^*(\alpha(f)g+gf)=f+gu+u^*g$$
		and therefore $P_\theta$ is a projection in $C(\tilde{\Xi})\rtimes_\alpha\Z$. Moreover, it is clear that $q(g) = 0$ and $q(f ) = \chi_{[0 ,\theta]}$, so $P_\theta$ is a lift of $\chi_{[0 ,\theta]}$.

Let us compute $\exp([P_\theta]_0)$ where $\exp:K_0(C(\tilde\Xi)\rtimes_\alpha\Z)\to K_1(C(\tilde \Xi)$ is the exponential map of the Toeplitz extension exact sequence. For that we take as a lift of $P_\theta$ 
	$$\hat P_\theta = g\hat u+f +\hat u^* g.$$
It is easily seen that $\hat P_\theta^2 = \hat P_\theta - g^2 \hat p$.
We thus have
$$\hat\rho_{\phi,t}(\hat P_\theta^2) = \hat\rho_{\phi,t}(\hat P_\theta) - g^2(\phi,t) \hat\rho_{\phi,t}(\hat p).$$
Thus, if $\phi\neq \theta$ then $g^2(\phi,t)=0$ and $\hat\rho_{\phi,t}(\hat P_\theta)$ is a projection. On the other hand, we have $\hat\rho_{\theta,t}(\hat u^* g) = \alpha^{-1}(g)(\theta,t) \hat\rho_{\theta,t}(\hat u^*) = 0$ so that
$$\hat\rho_{\theta,t}(\hat P_\theta \hat p) = f(\theta,t)  \hat\rho_{\theta,t}( \hat p) = (1-t) \hat\rho_{\theta,t}( \hat p)$$
while $\hat\rho_{\theta,t}(\hat P_\theta^2 \hat p^\perp) = \hat\rho_{\theta,t}(\hat P_\theta \hat p^\perp)$. It follows that
$$e^{-2\pi \imath \hat P_\theta} - 1 = (e^{-2\pi \imath \hat P_\theta} - 1)\hat p$$
and, upon the identification of $\ker \pi_*$ with $C(\tilde\Xi)\otimes\mathcal K$, this becomes the function $U\in C(\tilde\Xi)\otimes\mathcal K^+$
\begin{equation}\label{eq-U}
(\phi,t)\mapsto  U(\phi,t)-1 = \delta_{\phi,\theta}(e^{2\pi \imath t}  - 1) \otimes e_{00}.
\end{equation}
The class of $U$ represents $\exp([P_\theta]_0)$.
\begin{lemma}\label{lem-K-aug}
	Let $\tilde\Xi$ be the augmentation of $\Xi=\Xi_{\theta,\cut}$ as described above. 
$K_1(C(\tilde\Xi))\cong \Z$ and $\exp([P_\theta]_0)$ is a generator of $K_1(C(\tilde\Xi))$.
Moreover, $[1]_0$ and $[P_\theta]_0$ are generators of $K_0(C(\tilde\Xi)\rtimes_\alpha\Z)\cong\Z^2$.
\end{lemma}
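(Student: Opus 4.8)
The plan is to reduce everything to standard facts about $C(S^1)$, using the homeomorphism $\tilde\Xi\cong S^1$ established before the lemma and the explicit representative $U$ of $\exp([P_\theta]_0)$ obtained in \eqref{eq-U}. Since $\tilde\Xi\cong S^1$ we get $K_1(C(\tilde\Xi))\cong K^1(S^1)\cong\Z$ and $K_0(C(\tilde\Xi))\cong K^0(S^1)\cong\Z$, the latter generated by $[1]_0$. Next I would check that $[U]_1$ is a \emph{generator} of $K_1(C(\tilde\Xi))$, not merely a nonzero class: under the identification of $\ker\pi_*$ with $C(\tilde\Xi)\otimes\mathcal K$, formula \eqref{eq-U} shows that $U$ is, at every point, a rank-one perturbation of the identity, equal in the top-left corner to $e^{2\pi\mathrm{i}t}$ on the single arc $\{(\theta,t)\colon t\in[0,1]\}\subset\tilde\Xi$ (with endpoints glued into $\Xi$) and to $1$ everywhere else, and $1$ off that corner. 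Its Fredholm determinant $\det U$ is therefore a continuous map $\tilde\Xi\cong S^1\to U(1)$ which traverses $U(1)$ exactly once along that arc and is constant elsewhere, hence has winding number $\pm1$; under the standard isomorphism $K^1(S^1)\cong\Z$ by the winding number of the determinant, $[U]_1$ is thus a generator.

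Then I would compute $K_0(C(\tilde\Xi)\rtimes_\alpha\Z)$ with the Pimsner--Voiculescu sequence. The action $\alpha\colon(\phi,t)\mapsto(\phi+\theta,t)$ induces a homeomorphism of $\tilde\Xi\cong S^1$ which preserves the circular order and is therefore orientation-preserving, so $\alpha_*=\mathrm{id}$ on $K_0(C(\tilde\Xi))\cong\Z$ and on $K_1(C(\tilde\Xi))\cong\Z$. Hence $\mathrm{Coinv}_{\alpha_*}K_0(C(\tilde\Xi))=\Z$ and $\mathrm{Inv}_{\alpha_*}K_1(C(\tilde\Xi))=\Z$, and the split sequence \eqref{eq-PV} for $i=0$ becomes
$$0\longrightarrow \Z\xrightarrow{\ \iota_*\ }K_0(C(\tilde\Xi)\rtimes_\alpha\Z)\xrightarrow{\ \partial\ }\Z\longrightarrow0 ,$$
which splits because $\Z$ is free; so $K_0(C(\tilde\Xi)\rtimes_\alpha\Z)\cong\Z^2$. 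Here $\iota$ is the canonical inclusion $C(\tilde\Xi)\hookrightarrow C(\tilde\Xi)\rtimes_\alpha\Z$; since $\alpha_*-\mathrm{id}=0$ on $K_0(C(\tilde\Xi))$, $\iota_*$ is injective and $\iota_*[1]_0=[1]_0$ generates its image.

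It remains to identify the second generator. I would recall that the Pimsner--Voiculescu sequence is built from the Toeplitz extension of $C(\tilde\Xi)\rtimes_\alpha\Z$ (cf. \eqref{eqn:secondSES} and \cite{blackadar1998k}): the canonical inclusion $\iota_0\colon C(\tilde\Xi)\hookrightarrow\mathcal T(C(\tilde\Xi),\alpha)$ induces an isomorphism on $K$-theory with $\pi\circ\iota_0=\iota$, so exactness of the Toeplitz six-term sequence at $K_0(C(\tilde\Xi)\rtimes_\alpha\Z)$ gives $\ker\exp=\mathrm{im}\,\pi_*=\mathrm{im}\,\iota_*$, while $\mathrm{im}\,\exp=\mathrm{Inv}_{\alpha_*}K_1(C(\tilde\Xi))=K_1(C(\tilde\Xi))$. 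Thus $\exp\colon K_0(C(\tilde\Xi)\rtimes_\alpha\Z)\to K_1(C(\tilde\Xi))$ realizes the map $\partial$ above (up to sign), and by \eqref{eq-U} we have $\exp([P_\theta]_0)=[U]_1$, a generator of the target. Consequently, for any $x\in K_0(C(\tilde\Xi)\rtimes_\alpha\Z)$ there is $n\in\Z$ with $\exp(x)=n\,\exp([P_\theta]_0)$, whence $x-n[P_\theta]_0\in\ker\exp=\Z\,[1]_0$ and $x=m[1]_0+n[P_\theta]_0$ for some $m\in\Z$; so $[1]_0$ and $[P_\theta]_0$ generate $K_0(C(\tilde\Xi)\rtimes_\alpha\Z)\cong\Z^2$.

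The only genuinely geometric inputs are the already-granted homeomorphism $\tilde\Xi\cong S^1$ and the orientation statement for $\alpha$; after that the argument is formal manipulation of the Toeplitz and Pimsner--Voiculescu sequences together with \eqref{eq-U}. I expect the step most in need of care to be the claim that $[U]_1$ is a \emph{generator} of $K_1(C(\tilde\Xi))\cong\Z$: this requires being explicit about the isomorphism $K^1(S^1)\cong\Z$ (winding number of $\det$) and confirming that a single inserted arc contributes exactly $\pm1$. One could alternatively run this through the exact sequence \eqref{eq-SES2cut}, using $K_1(C(\Xi))=0$ and $K_0(C_0(\mathcal S))=0$ to see $K_1(C(\tilde\Xi))$ as a quotient of $K_1(C_0(\mathcal S))$ and tracing $[U]_1$ to a single interval's generator, but this needs the same normalization.
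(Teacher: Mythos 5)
Your proof is correct and follows essentially the same route as the paper's: identify $\tilde\Xi\cong S^1$, read off that $U$ from \eqref{eq-U} has winding number $\pm1$ and hence generates $K_1(C(\tilde\Xi))\cong\Z$, and then use Pimsner--Voiculescu with $\alpha_*=\mathrm{id}$ (orientation-preserving action) to conclude that $K_0(C(\tilde\Xi)\rtimes_\alpha\Z)\cong\Z^2$ is generated by $[1]_0$ and a pre-image of that generator under $\exp$, which $[P_\theta]_0$ is. The paper's proof is terser but identical in substance; your extra care about normalizing the winding-number isomorphism and identifying the PV boundary map with the Toeplitz $\exp$ just fills in steps the paper leaves implicit.
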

\begin{proof} 
We saw that $\tilde{\Xi}$ is homeomorphic to $S^1$. Hence $K_0(C(\tilde{\Xi}))\cong K_1(C(\tilde{\Xi}))\cong\Z$. Since $U(\phi,t)$ has winding number $1$ it represents a generator of $K_1(C(\tilde{\Xi}))$. Whatever the action $\alpha$, if it preserves orientation we have $\alpha_*=\mathrm{id}$. This implies with \eqref{eq-PV} that $K_0(C(\tilde{\Xi})\rtimes_\alpha\Z)\cong \Z^2$ with generators $[1]_0$ and a pre-image of the generator of  	$K_1(C(\tilde{\Xi}))$ under $\exp$. $[P_\theta]_0$ is such a pre-image.
\end{proof}

\subsection{Calculations of the $K$-groups}
\begin{lemma}\label{lem-K-irrational}
	Let $\Xi=\Xi_{\theta,\cut}$ with action $\alpha$ by rotation by $\theta$ as above. We denote by $\ell$ the number of orbits of cut points of $\Xi_{\theta,\cut}$.
	Let $\tilde\Xi$ be the augmentation of $\Xi$ as described above. 
\begin{enumerate}
\item If $\theta$ is irrational then
the exact sequence \eqref{eqn:K-vertical-cut} is given by 
\begin{equation}\label{eq-KES2cuti}\nonumber
0\xrightarrow{}  \Z^2 \xrightarrow{} \Z^{\ell+1}
	\xrightarrow{} \Z^{\ell}  .
\end{equation}
Moreover,  $[1]_0=q_*([1]_0)$, $[\chi_{[0,\theta]}]_0=q_*([P_\theta]_0)$, $[\chi_{[0,\cut]}]_0$ generate $K_0(C(\Xi)\rtimes_\alpha\Z)\cong\Z^{\ell+1}$. If $\ell=1$ then $[\chi_{[0,\cut]}]_0$ is an integer linear combination of $[1]_0$ and $[\chi_{[0,\theta]}]_0$. If $\ell=2$ then $\hat\exp([\chi_{[0,\cut]}]_0) = (-1,1)$.
\item If $\theta=\frac{\p}{\q}$ with $\p,\q$ coprime then the exact sequence \eqref{eqn:K-vertical-cut} is given by 
\begin{equation}\label{eq-KES2cutq}\nonumber
\Z^\ell \xrightarrow{}  \Z^2 \xrightarrow{1} \Z^{\ell}
	\xrightarrow{} \Z^{\ell}  
\end{equation}
 and $q_*$ has rank $1$.
Moreover, 
$[\chi_{[0,\frac1{\q}]}]_0$ and $[\chi_{[0,\cut]}]_0$ generate $K_0(C(\Xi)\rtimes_\alpha\Z)\cong\Z^{\ell}$. If $\ell=1$ then $[\chi_{[0,\cut]}]_0$ is a multiple of $[\chi_{[0,\frac1{\q}]}]_0$. If $\ell=2$ then $\hat\exp([\chi_{[0,\cut]}]_0) = (-1,1)$.
\end{enumerate}
\end{lemma}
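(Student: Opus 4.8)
The plan is to read off all the data in the statement from the six-term exact sequence in $K$-theory of the short exact sequence $0\to C_0(\Ss)\rtimes_\alpha\Z\xrightarrow{i}C(\tilde\Xi)\rtimes_\alpha\Z\xrightarrow{q}C(\Xi)\rtimes_\alpha\Z\to0$, whose degree-zero segment is \eqref{eqn:K-vertical-cut}. The three $K$-groups entering this sequence would be computed first. For the ideal, one uses that $\Ss=\tilde\Xi\setminus\Xi$ is the topological disjoint union of the $\ell$ orbits of arcs $X_{\phi_i'}$, each arc clopen in $\Ss$ and homeomorphic to $(0,1)$, so that $C_0(\Ss)\rtimes_\alpha\Z=\bigoplus_{i=1}^\ell\big(c_0(O_i)\otimes C_0((0,1))\big)\rtimes_\alpha\Z$, with $O_i$ the $\Z$-set of arcs in the $i$-th orbit. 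For irrational $\theta$ one has $O_i\cong\Z$ with the shift, hence $c_0(\Z)\rtimes_{\mathrm{shift}}\Z\cong\mathcal K$ and the $i$-th summand is $\cong\mathcal K\otimes C_0((0,1))$; therefore $K_0(C_0(\Ss)\rtimes_\alpha\Z)=0$ and $K_1(C_0(\Ss)\rtimes_\alpha\Z)\cong\Z^\ell$, the $i$-th copy of $\Z$ generated by a single-arc winding-one loop in the $i$-th orbit. For $\theta=\p/\q$, $O_i$ is a cyclic $\q$-point $\Z$-set; replacing $c_0(\Z)\rtimes_{\mathrm{shift}}\Z\cong\mathcal K$ by the corresponding formula for the cyclic group (equivalently a direct Pimsner--Voiculescu computation) gives $K_0\cong K_1\cong\Z$ per orbit, so both groups are $\cong\Z^\ell$. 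For the middle group, Lemma~\ref{lem-K-aug} gives $K_0(C(\tilde\Xi)\rtimes_\alpha\Z)\cong\Z^2$ with generators $[1]_0$ and $[P_\theta]_0$, and since $\tilde\Xi\cong S^1$ with $\alpha$ orientation preserving (hence $\alpha_*=\mathrm{id}$, as in the proof of Lemma~\ref{lem-K-aug}), Pimsner--Voiculescu also gives $K_1(C(\tilde\Xi)\rtimes_\alpha\Z)\cong\Z^2$. For the quotient, $K_0(C(\Xi)\rtimes_\alpha\Z)$ is $\Z^{\ell+1}$ for irrational $\theta$ (Pimsner--Voiculescu gives the dimension group $C(\Xi,\Z)/\{f-f\circ\alpha^{-1}\}$, which a count of the clopen generators $\chi_{[\phi_1,\phi_2]}$ identifies with $\Z^{\ell+1}$, cf.\ \cite{bellissard1992gap}), and $\Z^\ell$ for $\theta=\p/\q$ ($C(\Xi)\simeq\C^{\ell\q}$ with $\alpha_*$ a sum of $\ell$ cyclic permutations, so the coinvariants are $\Z^\ell$).

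Next I would compute the two maps relevant to the statement on explicit classes. From $q(1)=1$, $q(P_\theta)=\chi_{[0,\theta]}$ and (by the same Rieffel construction applied to a length-$1/\q$ interval) $q(P_{1/\q})=\chi_{[0,1/\q]}$, one gets $[1]_0=q_*[1]_0$, $[\chi_{[0,\theta]}]_0=q_*[P_\theta]_0$, $[\chi_{[0,1/\q]}]_0=q_*[P_{1/\q}]_0$, and by Lemma~\ref{lem-K-aug} these generate $\mathrm{im}(q_*)$. To compute $\hat\exp([\chi_{[0,\cut]}]_0)$ I would take the obvious real self-adjoint lift $\tilde f\in C(\tilde\Xi)$ of the projection $\chi_{[0,\cut]}$, namely the indicator of $[0,\cut]$ linearly interpolated along the arcs: with the conventions $(\phi,-)\leftrightarrow(\phi,0)$, $(\phi,+)\leftrightarrow(\phi,1)$ and the form of $v_\pm$ near $0$ and $\cut$, this interpolation increases from $0$ to $1$ along the arcs over the orbit of $0$ and decreases from $1$ to $0$ along the arcs over the orbit of $\cut$. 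Then $U:=e^{2\pi\mathrm{i}\tilde f}$ satisfies $U-1\in C_0(\Ss)$ and is a winding-$(+1)$ loop on one arc of the $0$-orbit and a winding-$(-1)$ loop on one arc of the $\cut$-orbit, while $\hat\exp([\chi_{[0,\cut]}]_0)=-[U]_1$ (exponential map of \eqref{eq-SES2cut} crossed with $\Z$). If $\ell=2$ the two orbits are distinct and, reading each component off through $c_0(\Z)\rtimes_{\mathrm{shift}}\Z\cong\mathcal K$ (resp.\ its cyclic analogue), $[U]_1=(1,-1)$, hence $\hat\exp([\chi_{[0,\cut]}]_0)=(-1,1)$; if $\ell=1$ the two contributions lie in the same copy of $\Z$ and cancel, so $\hat\exp([\chi_{[0,\cut]}]_0)=0$ --- consistently with $\chi_{[0,\cut]}$ then admitting a genuine projection lift in $C(\tilde\Xi)\rtimes_\alpha\Z$.

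Feeding these into the six-term sequence gives the claimed shape of \eqref{eqn:K-vertical-cut} and the generator statements by a diagram chase. For irrational $\theta$, $K_0(C_0(\Ss)\rtimes_\alpha\Z)=0$ forces $q_*$ injective, leaving $0\to\Z^2\xrightarrow{q_*}\Z^{\ell+1}\xrightarrow{\hat\exp}\Z^\ell\xrightarrow{i_*}\Z^2$. If $\ell=1$ then $\mathrm{im}(q_*)=K_0(C(\Xi)\rtimes_\alpha\Z)$, so $[1]_0,[\chi_{[0,\theta]}]_0$ already generate and $[\chi_{[0,\cut]}]_0\in\ker\hat\exp=\mathrm{im}(q_*)$ is an integer combination of them. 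If $\ell=2$ then by exactness $\mathrm{im}(\hat\exp)=\ker(i_*|_{K_1})$; this group has rank one (each single-arc winding-one loop has degree one in $\tilde\Xi\cong S^1$, so $i_*|_{K_1}$ identifies both standard generators of $\Z^2$ with one and the same primitive vector), and $\hat\exp([\chi_{[0,\cut]}]_0)=(-1,1)$ is a generator of it, so $[\chi_{[0,\cut]}]_0$ generates $K_0(C(\Xi)\rtimes_\alpha\Z)/\mathrm{im}(q_*)$ and $\{[1]_0,[\chi_{[0,\theta]}]_0,[\chi_{[0,\cut]}]_0\}$ generates $\Z^3$. For $\theta=\p/\q$ the same chase applied to $\Z^\ell\xrightarrow{i_*}\Z^2\xrightarrow{q_*}\Z^\ell\xrightarrow{\hat\exp}\Z^\ell\xrightarrow{i_*}\Z^2$, comparing ranks, forces $\mathrm{rank}(q_*)=1$, and the generator statements come out with $[\chi_{[0,1/\q]}]_0,[\chi_{[0,\cut]}]_0$ as the generators in place of $[1]_0,[\chi_{[0,\theta]}]_0,[\chi_{[0,\cut]}]_0$.

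I expect the main obstacle to be the precise value $\hat\exp([\chi_{[0,\cut]}]_0)=(-1,1)$: one must pick the lift of the projection $\chi_{[0,\cut]}$ carefully, determine \emph{which} orbit of arcs carries which winding (the orientation of the interpolation being forced by the doubling conventions and by the behaviour of $v_\pm$ at the two endpoints of the interval), and verify the genuine cancellation when $\ell=1$. This amounts to a variant of the Rieffel-projection computation \eqref{eq-U} already done for $P_\theta$, with the two endpoints now in possibly different cut-orbits. A secondary, purely bookkeeping obstacle is that in the rational case every identification $c_0(O_i)\rtimes_\alpha\Z\cong\mathcal K$ must be replaced by its cyclic-$\Z/\q$ analogue throughout, which makes the $K$-group computations and the map $i_*$ slightly heavier while the structure of the argument is unchanged.
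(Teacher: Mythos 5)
Your proposal is correct and follows essentially the same route as the paper: the ideal is computed orbit by orbit (your identification $c_0(\Z)\rtimes\Z\cong\mathcal K$ is equivalent to the paper's Pimsner--Voiculescu computation of $\mathrm{Inv}$ and $\mathrm{Coinv}$), the quotient via coinvariants of $C(\Xi,\Z)$, $q_*$ via the Rieffel projection, and $\hat\exp([\chi_{[0,\kappa]}]_0)$ via the linearly interpolated lift with opposite windings on the two orbits of arcs, exactly as in the paper's proof. The one imprecision is the claim that $\chi_{[0,1/q]}$ lifts by ``the same Rieffel construction applied to a length-$1/q$ interval'' --- the construction built on $u$ only works for an interval of length $\theta$ (one would need $u^{n'}$ with $n'\theta\equiv 1/q$) --- but this is harmless, since $[\chi_{[0,1/q]}]_0\in\mathrm{im}\,q_*$ already follows from $q_*[1]_0=q[\chi_{[0,1/q]}]_0$, $q_*[P_\theta]_0=p[\chi_{[0,1/q]}]_0$ and $\gcd(p,q)=1$.
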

\begin{proof} We consider first (1), that $\theta$ is irrational. 
We have $C_0(\Ss)=C_0((0,1)\times \Z)^\ell$ with action $\alpha(t,n) = (t,n-1)$. As $K_i(C_0((0,1)\times \Z)) \cong K_{i-1}(C_0(\Z))$
and  $K_{1}(C_0(\Z)) = 0$ and $K_{0}(C_0(\Z)) \cong C_0(\Z,\Z)$ with essentially the same formula for the action we deduce
$$K_0(C_0((0,1)\times \Z)\rtimes_\alpha\Z) \cong \mathrm{Inv}_\alpha C_0(\Z,\Z) = 0,\quad
K_1(C_0((0,1)\times \Z)\rtimes_\alpha\Z) \cong \mathrm{Coinv}_\alpha C_0(\Z,\Z) = \Z$$
which proves that the groups at the left and right end of \eqref{eqn:K-vertical-cut} are as stated.

As $\Xi$ is totally disconnected we can apply results of \cite{Bellissard1992} which show that $K_0(C(\Xi) \rtimes_\alpha \mathbb{Z})$ is isomorphic to $\mathrm{Coinv}_{\alpha}(C(\Xi,\Z))$ and under this isomorphism $[\chi_I]_0$ is mapped to the coinvariance class $[\chi_I]_\alpha$. All coinvariance classes are integer linear combinations of classes $[\chi_I]_\alpha$ where $I$ is a clopen subset of $\Xi$. The Boolean algebra of clopen subsets of $\Xi$ is generated by the intervals $[(\phi,+),(\phi',-)]$ where $\phi<\phi'$ are cut points.
It follows that the classes $[\chi_{[0,\phi]}]_\alpha$ 
		with $\phi\in \{0,\cut\}+\Z+\theta\Z\cap [0,1]$ generate $\mathrm{Coinv}_{\alpha}(C(\Xi,\Z))$. If $\phi = \{n\theta\}$ (rational part) then $[\chi_{[(0,+),(\phi,-)]}]_\alpha = n [\chi_{[(0,+),(\theta,-)]}]_\alpha + \lfloor n \theta \rfloor [1]_\alpha$. If $\phi = \{\cut+n\theta\}$ then $[\chi_{[0,\phi]}]_\alpha = [\chi_{[0,\cut]}]_\alpha + n [\chi_{[0,\theta]}]_\alpha + \lfloor \cut+ n \theta \rfloor [1]_\alpha$. This shows that $[1]_0$, $ [\chi_{[0,\cut]}]_0$, and $  [\chi_{[0,\theta]}]_0 $ generate $K_0(C(\Xi)\rtimes_\alpha\Z)$. To see that $[1]_0$, $ [\chi_{[0,\theta]}]_0$ are independent we apply $\chb$ to them obtaining $1$ and $\theta$ which are rationally independent. We have already seen above that $q(P_\theta) = \chi_{[0,\theta]}$ and clearly $q(1)=1$.

If $\ell = 1$ then $\cut\in \Z+\theta\Z$ which implies that   $[\chi_{[0,\cut]}]_\alpha$ is an integer linear combination of $[1]_\alpha$ and $  [\chi_{[0,\theta]}]_\alpha $. 

Let $\ell=2$. Define $\tilde{\chi}_{[0, \cut]}\in C(\tilde{\Xi}) \rtimes_\alpha \mathbb{Z}$ as the function which, in the arc parametrized by $0 \times (0, 1)$, increases linearly from $0$ to $1$, in the arc parametrized by $\cut \times (0, 1)$, decreases linearly from $1$ to $0$, and is otherwise constant. Then $q(\tilde\chi_{[0, \cut]})={\chi}_{[0, \cut]}$.
We compute
$$	
		\exp(-2\pi\text{i}\tilde{\chi}_{[0, \cut]})(\phi,t)= \left\{
		\begin{array}{lll}
			1 & \mbox{if } \phi \ne 0, \cut \\
			\text{e}^{-2\pi \text{i}t} &\mbox{if } \phi=0\\ \text{e}^{2\pi \text{i}t} &\mbox{if } \phi=\cut
		\end{array} \right. $$
so that $\hat\exp([{\chi}_{[0, \cut]}]_0)$ is  the class in $K_1(C_0(\Ss)\rtimes_\alpha\Z)$ represented by the function
\begin{equation}\label{gamma_g3}	
(t_1\otimes n_1,t_2\otimes n_2)\mapsto ((\text{e}^{-2\pi \text{i}t_1}-1)\otimes \delta_0(n_1)+1\otimes1, (\text{e}^{2\pi \text{i}t_2}-1)\otimes \delta_0(n_2)+1\otimes1),
\end{equation}
 $\delta_0\in C_0(\Z)$ taking value $1$ in $0$, and value $0$ elsewhere. Here we have identified 
$C_0(\mathcal{S}) $ with 
$(C_0(\left]0,1\right[)\otimes C_0( \Z))\oplus (C_0(\left]0,1\right[)\otimes C_0( \Z))$ and note that the representative of the class lies in the subalgebra $C_0(\mathcal{S})$ of $C_0(\Ss)\rtimes_\alpha\Z$.

We now consider the case (2), that $\theta = \frac{\p}{\q}$.
Now $C_0(\Ss)=C_0((0,1)\times \Z/\q\Z)^\ell$ with action $\alpha(t,n) = (t,n-1)$. As 
$$K_i(C_0((0,1)\times \Z/\q\Z)) \cong K_{i-1}(C(\Z/\q\Z))\cong \left\{\begin{array}{ll} 0 & \mbox{for } i = 0 \\
C(\Z/\q\Z,\Z) & \mbox{for } i = 1 \end{array}\right.$$
we get 
$$K_0(C_0((0,1)\times \Z/\q\Z)\rtimes_\alpha \Z) \cong \mathrm{Inv}_\alpha C(\Z/\q\Z,\Z) \cong \Z$$ 
$$K_1(C_0((0,1)\times \Z/\q\Z)\rtimes_\alpha \Z) \cong \mathrm{Coinv}_\alpha C(\Z/\q\Z,\Z) \cong \Z$$
which proves that the groups at the left and right end of \eqref{eqn:K-vertical-cut} are as stated.

While $\Xi$ is not totally disconnected, it is contractible to a finite set so that  $K_0(C(\Xi) \rtimes_\alpha \mathbb{Z})\cong \mathrm{Coinv}_{\alpha}(C(\Xi,\Z))$. We conclude as in the irrational case that the classes $[1]_0$, $ [\chi_{[0,\cut]}]_0$, and $  [\chi_{[0,\frac{\p}{\q}]}]_0 $ generate $K_0(C(\Xi)\rtimes_\alpha\Z)$. Let $n',N\in\Z$ be a solution to $n'\frac{\p}{\q} + N = \frac{1}{\q}$. Then  $[\chi_{[0,\frac{1}{\q}]}]_\alpha   = n'  [\chi_{[0,\frac{\p}{\q}]}]_\alpha + N [1]_\alpha$ showing that $[\chi_{[0,\frac{1}{\q}]}]_0$ and  $[\chi_{[0,\cut]}]_0$ generate  $\mathrm{Coinv}_{\alpha}(C(\Xi,\Z))$.

If $\ell = 1$ then $\cut\in \frac1{\q}\Z$ which implies that   $[\chi_{[0,\cut]}]_\alpha$ is a multiple of $  [\chi_{[0,\frac1{\q}]}]_\alpha $. 

If $\ell=2$ the same calculation as in the irrational case yields $\hat\exp([\chi_{[0,\cut]}]_0) = (-1,1)$.
\end{proof}
For the rest of this section we restrict our attention to the case that $\cut$ is not a multiple of $\theta$, hence $\ell=2$ and $\gamma$ has rank $1$. The case that $\cut=\theta$ has been analyzed for irrational $\theta$ in \cite{kellendonk_prodan} and the remaining cases are similar to that.
 
We make a choice of section $s$ for the remainder of this section, namely we take $s:\mathrm{im}\gamma \to K_0(C(\Xi)\rtimes_\alpha\Z)$ to be
\begin{equation}\label{eq-section}
s(\gamma([\chi_{[0, \cut]}]_0)) = [\chi_{[0, \cut]}]_0.
\end{equation}
\subsection{Chern cocycles}
To define the cocycles we consider $\alpha$-invariant measures on the spaces $\Xi$ and $\mathcal S$. On $\Xi$ we take the Borel probability measure 
$$\mu([(\phi,\epsilon),(\phi',\epsilon')])=\phi'-\phi,\quad\mbox{if } 0\leq\phi<\phi'\leq 1.$$
This defines a the trace $\tau_A$  on $C(\Xi)$ and $\hat \tau_A$ on the crossed product. 
Consequently the cocycle for the bulk algebra is 
$  \chb([p]_0)=\hat{\tau}_A(p) $ as in \eqref{chern_bulk}.

Note that $\mathcal S$ is a countable disjoint union of open intervals (the arcs). We consider on $\mathcal S$ the $\sigma$-finite Borel measure which restricts on each of these open intervals to the Lebesgue measure. This measure is $\alpha$-invariant, but, if $\theta$ is irrational, not finite. The corresponding densely defined trace on $I=C_0(\mathcal S)$ is 
\begin{equation}\label{eq-trI}
\tau_I(f) :=  \int_{\mathcal S}f(\tilde{\xi})\mathrm{d}\tilde{\mu}(\tilde{\xi}) = \sum_{\phi\in \{0,\cut\}+\theta\Z} \int_0^1 f(\phi,t)\mathrm{d}t,
\end{equation}
with domain $\mathcal{D}_{\tau_I}=L^1(\tilde\Xi,\tilde\mu)\cap C(\tilde\Xi)$, which is an ideal.
Any function  
of $C_0(\mathcal S)$ whose support is contained in only finitely many of the intervals is trace class. 

Define the derivation
$$\delta_1 f(\phi,t) = 
\left\{\begin{array}{ll} 
	0 & \mbox{if } t=o,0,1 \\
	\partial_t f(\phi,t) &
	\mbox{otherwise}
\end{array}\right.$$
with domain $\mathcal{D}_{\delta_1}=\{f\in C_0(\mathcal S)|\delta_1 f \in C_0(\mathcal S)\}$. We define the dense sub-algebra $\mathcal{A}=\{f\in \mathcal{D}_{\delta_1}|{\delta_1}f\in \mathcal{D}_{\tau_I}\}$ with norm 
$$\lVert f\rVert_\mathcal{A}= \lVert f\rVert_\infty +\lVert \delta_1f\rVert_\infty+\lVert \delta_1f\rVert_1.$$
We easily check that the hypothesis on $\mathcal{D}_{\delta_1}$, $\mathcal{D}_{\tau_I}$ and $\mathcal{A}$ (which is a Banach algebra) are verified. We obtain thus two densely defined $1$-cocycles on $\mathcal A\subset C_0(\mathcal S)$ 
through the formula, $\phi' \in \{0,\cut\}$,
$$ \mathrm{ch}^{(\phi')}(f,g) = \frac1{2\pi i} \int_{X_\varphi} f\delta_1 g \mathrm{d}\tilde\mu = \frac1{2\pi i} \sum_{\phi\in \varphi+\theta\Z} \int_{0}^1 f(\phi,t)
\partial_t g(\phi,t) d\mathrm{d}t.$$
We set
$$\mathrm{ch}^{(\pm)}=\mathrm{ch}^{(\cut)}\pm\mathrm{ch}^{(0)}.$$
While we have defined the $1$-cocycle $\mathrm{ch}^{(+)}$ on $C_0(X_0)\oplus C_0(X_{\cut})= C_0(\mathcal S)$ it extends to a $1$-cocycle on $C(\tilde\Xi)$. Indeed, as we have seen, $\tilde\Xi$ is homeomorphic to $S^1$ and $K_1(C(\tilde\Xi))$ is isomorphic to $\Z$ the  isomorphism assigning to the class of a unitary function $f\in M_n(C(\tilde\Xi))$ the winding number of its determinant. If the restriction of the derivative $\partial_t f$ to $\mathcal S$ is $\tilde\mu$-integrable then this winding number is exactly $\mathrm{ch}^{(+)}(f^{-1}, f)$ (as above, we absorb the matrix trace in the definition of $\mathrm{ch}$). In this way we obtain the morphism 
\begin{equation}
\che:=\theta \mathrm{ch}^{(+)}:K_1(C(\tilde\Xi))\to \C
\end{equation}

On the other hand, we can extend $\mathrm{ch}^{(-)}$ to the crossed product $C_0(\mathcal S)\rtimes_\alpha\Z$ by taking the trace $\hat \tau_I$ from \eqref{eq-trI} and extending the derivation $\delta_1$ trivially on the unitary $u$ which induces the action $\alpha$. Indeed, this is possible, as the derivation commutes with the action. We thus obtain the morphism 
\begin{equation}\label{eq-chab-cut}
\chab=\frac{\cut}{2}\mathrm{ch}^{(-)}:K_1(C_0(\mathcal S)\rtimes_\alpha\Z)\to\C
\end{equation}
\begin{prop}[BEC for extension with arcs, $\theta$ irrational] 
\label{propositionBBC_TwoCuts_irrational}
Let $\Xi=\Xi_{\theta,\cut}$ as above with irrational $\theta$. 
Given $x\in K_0(C(\Xi) \rtimes_\alpha \mathbb{Z})$ there are $N,n_1,n_2\in \Z$ such that 
	\begin{align*}
		\chb(x)&=N+n_1\theta+n_2\cut\\
		\che(\beta(x-s(\gamma(x))))&=-n_1\theta\\
		\chab(\gamma(x))&=-  n_2\cut.
	\end{align*}
	In particular, we have
	$$\chb(x)+\che(\beta(x-s(\gamma(x))))+ \chab(\gamma(x))\in \mathrm{Amb}_b=\Z$$
	while $\mathrm{Amb}_e = 0$.
\end{prop}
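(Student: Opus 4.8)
The plan is to reduce everything to a computation on the generators of $K_0(C(\Xi)\rtimes_\alpha\Z)$ furnished by Lemma~\ref{lem-K-irrational}. First I would write $x = N[1]_0 + n_1[\chi_{[0,\theta]}]_0 + n_2[\chi_{[0,\cut]}]_0$ with $N,n_1,n_2\in\Z$, which is possible since those three classes generate $K_0(C(\Xi)\rtimes_\alpha\Z)\cong\Z^3$. Since $\chb = \hat\tau_A$ and $\hat\tau_A(\chi_{[\phi_1,\phi_2]}) = \mu([\phi_1,\phi_2]) = \phi_2-\phi_1$, applying $\chb$ to this decomposition gives immediately $\chb(x) = N + n_1\theta + n_2\cut$, which fixes the integers $N,n_1,n_2$ occurring in the other two identities.

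Second, I would compute $\gamma = \hat\exp$ on the generators. By Lemma~\ref{lem-K-irrational}, $[1]_0 = q_*([1]_0)$ and $[\chi_{[0,\theta]}]_0 = q_*([P_\theta]_0)$ both lie in $\operatorname{im}q_* = \ker\hat\exp$, so they are killed by $\gamma$, while $\hat\exp([\chi_{[0,\cut]}]_0) = (-1,1)\in K_1(C_0(\Ss)\rtimes_\alpha\Z)\cong\Z^2$ is the class represented by the explicit function \eqref{gamma_g3}. Hence $\gamma(x) = n_2\,\hat\exp([\chi_{[0,\cut]}]_0)$, and with the section \eqref{eq-section} we get $s(\gamma(x)) = n_2[\chi_{[0,\cut]}]_0$, so that $x - s(\gamma(x)) = N[1]_0 + n_1[\chi_{[0,\theta]}]_0 \in \operatorname{im}q_*$; its unique $q_*$-preimage is $N[1]_0 + n_1[P_\theta]_0$. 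Therefore $\beta(x-s(\gamma(x))) = \exp(N[1]_0 + n_1[P_\theta]_0)$, where $\exp$ is the exponential map of the Toeplitz extension \eqref{eqn:secondSES}. Since $1$ lifts to the projection $1\in\mathcal{T}(C(\tilde\Xi),\alpha)$ we have $\exp([1]_0)=0$, and $\exp([P_\theta]_0)$ is represented by the function $U$ of \eqref{eq-U}, a generator of $K_1(C(\tilde\Xi))$ by Lemma~\ref{lem-K-aug}; thus $\beta(x-s(\gamma(x))) = n_1[U]_1$ (up to the sign in the exponential map).

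The heart of the argument is the two explicit Connes pairings. The function $U$ of \eqref{eq-U} is constant off the single arc sitting over the cut point $\theta$, and $\theta$ lies in the orbit of $0$ but not in that of $\cut$ (because $\cut$ is not a multiple of $\theta$); hence only $\mathrm{ch}^{(0)}$ detects $U$, and since $U$ restricts to $t\mapsto\mathrm{e}^{2\pi\mathrm{i}t}$ on that arc one obtains $\mathrm{ch}^{(+)}([U]_1) = \mathrm{ch}^{(0)}([U]_1) = 1$, whence $\che(\beta(x-s(\gamma(x)))) = -n_1\theta$, the sign being dictated by the orientation of the added arc together with the $-[u]_1$ in the exponential map. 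Likewise the representative \eqref{gamma_g3} of $\gamma([\chi_{[0,\cut]}]_0)$ is, apart from the constant $1$, supported on the arc over $0$ (where it equals $t\mapsto\mathrm{e}^{-2\pi\mathrm{i}t}$, winding $-1$) and on the arc over $\cut$ (where it equals $t\mapsto\mathrm{e}^{2\pi\mathrm{i}t}$, winding $+1$); hence $\mathrm{ch}^{(0)}$ contributes $-1$ and $\mathrm{ch}^{(\cut)}$ contributes $+1$, so $\mathrm{ch}^{(-)} = \mathrm{ch}^{(\cut)} - \mathrm{ch}^{(0)}$ evaluates to $-2$ on $\gamma([\chi_{[0,\cut]}]_0)$ and $\chab(\gamma(x)) = \tfrac{\cut}{2}(-2)n_2 = -n_2\cut$. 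Adding the three identities gives $\chb(x) + \che(\beta(x-s(\gamma(x)))) + \chab(\gamma(x)) = N$.

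Finally I would identify the ambiguity groups. Since $K_0(C_0(\Ss)\rtimes_\alpha\Z) = 0$ by Lemma~\ref{lem-K-irrational}, $\mathrm{Amb}_e = \che(\exp\circ i_*(0)) = 0$. For $\mathrm{Amb}_b = \chb(\tilde\pi_*(K_0(\mathcal{T}(C(\tilde\Xi),\alpha))))$ I would use that the Toeplitz extension \eqref{eqn:secondSES} yields $K_0(\mathcal{T}(C(\tilde\Xi),\alpha)) \cong K_0(C(\tilde\Xi)) \cong \Z$, generated by $[1]_0$, together with $\tilde\pi = q\circ\pi$ being unital, so that $\tilde\pi_*(K_0(\mathcal{T}(C(\tilde\Xi),\alpha))) = \Z[1]_0\subset K_0(C(\Xi)\rtimes_\alpha\Z)$ and $\mathrm{Amb}_b = \chb(\Z[1]_0) = \Z$. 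This gives the integrality $\chb(x) + \che(\beta(x-s(\gamma(x)))) + \chab(\gamma(x)) = N\in\mathrm{Amb}_b$, together with $\mathrm{Amb}_e = 0$, as claimed. The main obstacle I anticipate is the sign and normalisation bookkeeping in the two pairing computations — matching winding numbers on the arcs to the normalisations of $\mathrm{ch}^{(0)}$, $\mathrm{ch}^{(\cut)}$ and to the convention $\exp([p]_0) = -[u]_1$ — and, secondarily, pinning down $K_0(\mathcal{T}(C(\tilde\Xi),\alpha))$ and the image of $\tilde\pi_*$.
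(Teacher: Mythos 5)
Your proposal is correct and follows essentially the same route as the paper: verify the three identities on the generators $[1]_0$, $[\chi_{[0,\theta]}]_0$, $[\chi_{[0,\cut]}]_0$, using the Rieffel projection $P_\theta$ and the unitary $U$ of \eqref{eq-U} for the $\che$-pairing, the representative \eqref{gamma_g3} for the $\chab$-pairing, and the vanishing of $K_0(C_0(\Ss)\rtimes_\alpha\Z)$ together with $K_0(\mathcal T(C(\tilde\Xi),\alpha))\cong\Z$ for the ambiguity groups. The only blemish is the sign bookkeeping you yourself flag: for instance $\mathrm{ch}^{(\cut)}-\mathrm{ch}^{(0)}=(+1)-(-1)=+2$, not $-2$, so the minus sign in $\chab(\gamma(x))=-n_2\cut$ must come from the orientation/exponential-map conventions rather than from that subtraction — but this does not affect the structure of the argument.
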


\begin{proof} We provide the proof for $\cut$ not being a multiple of $\theta$, the other case is similar and can also be found in \cite{kellendonk_prodan}. 
	To verify the formulas, we check their validity on the three generators of $K_0(C(\Xi) \rtimes_\alpha \mathbb{Z})$. 
	
	Applying $\chb$ to the generators, we get
	$$\chb([1]_0)=1,\quad \chb([\chi_{[0, \theta]}]_0)=\theta,\quad \chb([\chi_{[0, \cut]}]_0)=\cut$$
	and hence the first statement of the proposition. 
	
	Clearly, $\beta([1]_0)=\gamma([1]_0) = 0$ .
	
	We consider the generator $[\chi_{[0, \theta]}]_0$. Let $P_\theta$ be the Rieffel projection as defined above such that $q_*([P_\theta]_0)=[\chi_{[0, \theta]}]_0$. Hence $\gamma([\chi_{[0, \theta]}]_0)=0$ so that $\chab(\gamma([\chi_{[0, \theta]}]_0))=0$. Furthermore, $\beta([\chi_{[0, \theta]}]_0)=\exp ([P_\theta]_0)$ which is represented by the unitary $U$ from \eqref{eq-U}. Hence
$$\che(\beta([\chi_{[0, \theta]}]_0)) =\theta \ch^{+}(U^* , U) = -\theta $$	
as only the term $\phi = \theta$ contributes to the sum.	

	Finally, we consider  the generator $[\chi_{[0, \cut]}]_0$. Then $\che(\beta([\chi_{[0, \cut]}]_0-s(\gamma([\chi_{[0, \cut]}]_0))))=0$, because of our choice for $s$. In the previous proof we computed 
	$\gamma([\chi_{[0 ,\cut]}]_0)=\hat\exp([{\chi}_{[0, \cut]}]_0)$ (see \eqref{gamma_g3}).
	Hence 
	\begin{align*}
		\chab(\gamma([\chi_{[0 ,\cut]}]_0))& =\frac{\kappa}{\pi\mathrm{i}}
		\int_0^1 \exp(2\pi\mathrm{i}t))\partial_t\exp(-2\pi\mathrm{i}t)) =-\kappa.
	\end{align*}
	We remark that the group ambiguity $\mathrm{Amb}_e=\che(\mathrm{exp}\circ i_*(K_0(C_0(\mathcal{S}) \rtimes_\alpha \mathbb{Z})))$ is trivial (as $i_*=0$), 
	while $$\mathrm{Amb}_b=\chb(\tilde{\pi}_*(K_0(\mathcal{T}(C(\tilde{\Xi}), \alpha))))=\Z,$$ 
	as 
	$K_0(\mathcal{T}(C(\tilde{\Xi}), \alpha))\cong K_0(C(\tilde{\Xi}))\cong \Z$
	and the image of its generator under $\tilde\pi_*$ is $[1]_0$.
	
	Taking linear integer combinations of the three generators we get the statement of the proposition.
\end{proof}
\begin{prop}[BEC for extension with arcs, $\theta$ rational]
	\label{propositionBBC_TwoCuts_rational}
Let $\Xi=\Xi_{\theta,\cut}$ as above with $\theta=\frac{\p}{\q}$, $\p$ and $\q$ coprime. 
Given $x\in K_0(C(\Xi) \rtimes_\alpha \mathbb{Z})$ there are $N,n_1,n_2\in \Z$ such that
	\begin{align*}
		\chb(x)&=N+n_1\frac{\p}{\q}+n_2\cut\\
		\che(\beta(x-s(\gamma(x))))&\equiv -n_1\frac{\p}{\q} \mod{p}\\
		\chab(\gamma(x))&=-  n_2\cut.
	\end{align*}
	In particular, we have
	$$\chb(x)+\che(\beta(x-s(\gamma(x))))+ \chab(\gamma(x))  \equiv 0 \mod{1}.$$
\end{prop}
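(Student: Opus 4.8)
The plan is to run the argument of Proposition~\ref{propositionBBC_TwoCuts_irrational} essentially verbatim, with part~(1) of Lemma~\ref{lem-K-irrational} replaced by part~(2), and to check the three identities on a generating family of $K_0(C(\Xi)\rtimes_\alpha\Z)$. By Lemma~\ref{lem-K-irrational}(2) this group is isomorphic to $\Z^\ell$ with $\ell=2$, and $[1]_0$, $[\chi_{[0,\theta]}]_0=[\chi_{[0,\p/\q]}]_0$, $[\chi_{[0,\cut]}]_0$ still generate it. The feature that is new relative to the irrational case is that these three classes are no longer independent: rotation by $\p/\q$ cyclically permutes the $\q$ elementary clopen arcs $[\tfrac{j}{\q},\tfrac{j+1}{\q}]$, $j=0,\dots,\q-1$, so in $\mathrm{Coinv}_\alpha(C(\Xi,\Z))$ one has $[1]_\alpha=\q\,[\chi_{[0,1/\q]}]_\alpha$ and $[\chi_{[0,\theta]}]_\alpha=\p\,[\chi_{[0,1/\q]}]_\alpha$, hence the same relations in $K_0$. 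Since $\gcd(\p,\q)=1$, the elements $\q[\chi_{[0,1/\q]}]_0$ and $\p[\chi_{[0,1/\q]}]_0$ still span $\Z\,[\chi_{[0,1/\q]}]_0$, so any $x$ can be written, non-uniquely, as $x=N[1]_0+n_1[\chi_{[0,\theta]}]_0+n_2[\chi_{[0,\cut]}]_0$; I would fix one such triple $(N,n_1,n_2)$ and verify the formulas on it.

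The computations of $\chb$, of $\gamma=\hat\exp$, and of $\chab\circ\gamma$ then go through unchanged. Applying $\chb$ of \eqref{chern_bulk} gives $\chb([1]_0)=1$, $\chb([\chi_{[0,\theta]}]_0)=\p/\q$, $\chb([\chi_{[0,\cut]}]_0)=\cut$, which yields the first identity. Because $[1]_0=q_*([1]_0)$ and $[\chi_{[0,\theta]}]_0=q_*([P_\theta]_0)$, exactness of \eqref{eqn:K-vertical-cut} gives $\hat\exp([1]_0)=\hat\exp([\chi_{[0,\theta]}]_0)=0$, while the unitary $\exp(-2\pi\mathrm{i}\,\tilde\chi_{[0,\cut]})$ and formula \eqref{gamma_g3} from the proof of Lemma~\ref{lem-K-irrational} remain valid for rational $\theta$ — now only finitely many arcs support it, but the winding computation is identical — so $\gamma(x)=n_2(-1,1)$. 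With the section \eqref{eq-section} this gives $s(\gamma(x))=n_2[\chi_{[0,\cut]}]_0$, and the same integral as before yields $\chab(\gamma(x))=-n_2\cut$, the third identity.

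For the edge term I would note $x-s(\gamma(x))=N[1]_0+n_1[\chi_{[0,\theta]}]_0\in\ker\hat\exp=\mathrm{im}\,q_*$, choose the pre-image $N[1]_0+n_1[P_\theta]_0$ under $q_*$, and use $\exp([1]_0)=0$ together with the class $\exp([P_\theta]_0)$, which is represented by the Rieffel-type unitary $U$ of \eqref{eq-U} supported on a single arc over the orbit of $0$, to obtain $\beta(x-s(\gamma(x)))=n_1\exp([P_\theta]_0)$ and $\che(\beta(x-s(\gamma(x))))=-n_1\p/\q$ for this pre-image, with the sign as in Proposition~\ref{propositionBBC_TwoCuts_irrational}. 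The point where the argument genuinely differs is that here $q_*$ is not injective, so this value is defined only modulo $\mathrm{Amb}_e$. From $\mathrm{im}\,i_*=\ker q_*$ and $q_*(a[1]_0+b[P_\theta]_0)=(a\q+b\p)[\chi_{[0,1/\q]}]_0$ one reads off $\ker q_*=\Z(\p[1]_0-\q[P_\theta]_0)$, and applying $\che\circ\exp$ (using $\exp([1]_0)=0$ and $\che(\exp([P_\theta]_0))=-\p/\q$) gives $\mathrm{Amb}_e=\p\Z$; this is exactly the reduction modulo $\p$ in the second identity. As in the irrational case $\mathrm{Amb}_b=\chb(\tilde\pi_*(K_0(\mathcal T(C(\tilde\Xi),\alpha))))=\chb(\Z[1]_0)=\Z$ — that argument uses only $\tilde\Xi\cong S^1$ and $\alpha_*=\mathrm{id}$ — so $\mathrm{Amb}_e+\mathrm{Amb}_b=\Z$, and adding the three identities gives $\chb(x)+\che(\beta(x-s(\gamma(x))))+\chab(\gamma(x))\equiv 0 \mod{1}$.

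The hard part is precisely the loss of injectivity of $q_*$, i.e.\ the $\q$-ambiguity: the argument requires pinning down $\ker q_*=\Z(\p[1]_0-\q[P_\theta]_0)$ exactly, which rests on the coinvariance relations $[1]_\alpha=\q[\chi_{[0,1/\q]}]_\alpha$ and $[\chi_{[0,\theta]}]_\alpha=\p[\chi_{[0,1/\q]}]_\alpha$, and then checking that $\che$ maps $\exp(\ker q_*)$ onto the full group $\p\Z$ rather than a proper subgroup. Everything else is a transcription of the irrational proof, up to keeping track of the finiteness of the arc system.
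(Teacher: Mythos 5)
Your proof is correct and, for the three displayed identities, follows the paper's route exactly: the paper's own proof simply says that everything carries over verbatim from Proposition~\ref{propositionBBC_TwoCuts_irrational} and that the only new content is the computation $\mathrm{Amb}_e=\p\Z$. Where you genuinely diverge is in how you obtain that ambiguity group. The paper stays in the top row of diagram \eqref{eqn:focus_on_K-diagram}: by commutativity of the square involving $\mathrm{exp}_I$ it identifies $\mathrm{Amb}_e$ with $\che(i_*\,\mathrm{Inv}_{\alpha_*}K_1(C_0(\mathcal S)))$, uses $K_1(C_0(\mathcal S))\cong\Z^{\q}\oplus\Z^{\q}$, and notes that $\mathrm{ch}^{(+)}\circ i_*$ sums the entries, so the invariant (constant) classes map onto $\q\Z$ and $\che=\theta\,\mathrm{ch}^{(+)}$ gives $\p\Z$. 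You instead work one step down the vertical exact sequence: you use $\mathrm{im}\,i_*=\ker q_*$, pin down $\ker q_*=\Z(\p[1]_0-\q[P_\theta]_0)$ from the coinvariance relations $[1]_\alpha=\q[\chi_{[0,1/\q]}]_\alpha$ and $[\chi_{[0,\theta]}]_\alpha=\p[\chi_{[0,1/\q]}]_\alpha$ together with $\gcd(\p,\q)=1$ and the fact that $[\chi_{[0,1/\q]}]_0$ is non-torsion (detected by $\chb$), and then evaluate $\che\circ\exp$ on the generator using $\che(\exp([P_\theta]_0))=-\theta$. The two computations are dual faces of the same exact sequence and give the same answer; yours has the advantage of exhibiting the $\q$-ambiguity directly as the non-uniqueness of the lift of $x$ under $q_*$, which is the interpretation the paper emphasizes in the subsequent discussion of the Diophantine equation \eqref{eq-dioph}, while the paper's version is slightly more economical in that it never needs the explicit structure of $\ker q_*$. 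Your remaining observations (consistency of the mod-$\p$ statement under the non-unique choice of $(N,n_1,n_2)$, and $\mathrm{Amb}_b=\Z$) are correct and complete the argument.
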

\begin{proof}
Everything can be proven exactly as in the irrational case, except that now $\mathrm{Amb}_e =\che(\mathrm{exp}\circ i_*(K_0(C_0(\mathcal{S}) \rtimes_\alpha \mathbb{Z})))$ is not trivial. To determine it we look again at the first two lines of diagram (\ref{eqn:focus_on_K-diagram}) completing the first line with the exponential map $\mathrm{exp}_I$:
	\begin{equation*}
		\begin{tikzcd}
			{K_0(C_0(\mathcal{S}) \rtimes_\alpha \mathbb{Z})} && K_1(C_0(\mathcal{S}))&& K_1(C_0(\mathcal{S})) \\
			\\
			{K_0(C(\tilde{\Xi})\rtimes_\alpha \mathbb{Z})} && {K_1(C(\tilde{\Xi}))} && \C  \\
			\arrow["{i_*}", from=1-1, to=3-1]
			\arrow["{i_*}", from=1-3, to=3-3]
			\arrow["{\exp}", from=3-1, to=3-3]
			\arrow["{\mathrm{exp}_I}", from=1-1, to=1-3]
			\arrow[dashed, "{\che}", from=3-3, to=3-5]
			\arrow["{1-\alpha_*}", from=1-3, to=1-5]
		\end{tikzcd}
	\end{equation*}	
By commutativity of the first square $\mathrm{Amb}_e = \che(i_* \mathrm{Inv}_{\alpha_*} K_1(C_0(\mathcal S)))$. We have seen that $K_1(C_0(\mathcal S))\cong \Z^q\oplus \Z^q$ and under this isomorphism $\ch^{(+)}\circ i_*$ maps $(n,m)$ to $n+m$. Hence $\ch^{(+)}(i_* \mathrm{Inv}_{\alpha_*} K_1(C_0(\mathcal S)))=q\Z$ which implies $\mathrm{Amb}_e = p\Z$. Therefore $\mathrm{Amb}_e \subset \mathrm{Amb}_b=\Z$.
\end{proof}
\subsection{Diophantine equation and $q$-ambiguitiy in the rational case}
If  $1$, $\theta$ and $\cut$ are rationally independent then the three numbers $N$, $n_1$, $n_2$ are uniquely determined by $\chb(x)$. 
For rational $\theta$ this is, of course, never possible. 

Let us consider $\theta=\frac{\p}{\q}$. 
We saw that then $\che(x)$ is only determined modulo $\p$, or equivalently, $n_1$ is determined through $\che(x)$ modulo $\q$.
We call this the $q$-\emph{ambiguity} of $n_1$. It comes about as there is no canonical lift of $x$ under $q_*$. In the physical context, $x$ is defined by the Fermi projection of a Hamiltonian $h$ and there might be other, physically motivated principles which make the choice of augmentation $\tilde h$ unique up to homotopy. For instance, it seems reasonable to require that $\tilde h$ is obtained from $h$ by interpolating the potential. Then $\che([P_F(h)]_0)$ is uniquely determined by $h$. 
But the equation of Prop.~\ref{propositionBBC_TwoCuts_rational} between the numerical invariants holds only up to an integer. The $q$-ambiguity can therefore also be understood as the non-uniqueness of the solution to this equation. Indeed, as $x-s(\gamma(x))\in \mathrm{im} q_*$ this element must be a multiple of $[\chi_{[0,\frac1{\q}]}]_0$ and hence 
$\chb(x) - \chab(x) = \chb(x-s(\gamma(x))) \in \frac{1}{\q}\Z$. We therefore get for $n_1$ the Diophantine equation (in which $n_1$ and $N$ are unkowns) 
\begin{equation}\label{eq-dioph}
N \q + \p n_1 = \q(\chb(x) - \chab(x))
\end{equation}
which has the solution $n_1 \equiv \p^{-1}\q(\chb(x) - \chab(x))\mod{\q}$ where $\p^{-1}$ is the inverse of $\p$ in $\Z/\q\Z$.
 
 \subsection{\texorpdfstring{Interpretation of the invariants}{}}
We interprete the numerical invariants in the physical context in which $x$ is defined by the Fermi projection of a Hamiltonian. For that we formulate these invariants in representations of the abstract $C^*$-algebras. 
The interpretation of  $\chb([P_F(h)]_0)$ as value of the IDS at the gap is always the same and has been given in Section~\ref{sec-IDS}.

\subsubsection{Interpretation of $\che$ for primary gaps}
For primary gaps we have $\gamma([P_F(h)]_0))=0$ and therefore the boundary invariant is given by $\che(\beta([P_F(h)]_0))$. We
rewrite this expression 
in a representation $\hat\rho_{\xi,t}$ of $\tilde{\mathcal E} \subset \mathcal{T}(C(\tilde\Xi),\alpha)$ on $\ell^2(\N)$. 
Here $\Xi$ is the 2 cut model based on a rotation by $\theta$, $\tilde\Xi$ its augmentation by arcs and the measure $\tilde\mu$ is supported on the added arcs $\mathcal S$ and on each arc equal to the Lebegue  measure. This measure is infinite if $\theta$ is irrational. It induces the trace $\tau_{I}$ which can be seen as a trace on $\tilde A=C(\tilde\Xi)$ whose domain is only dense in the ideal $I$. Defining $\tau_{\tilde{\mathcal E}}=\tau_I\circ \Psi^{-1}$ we find as in Section~\ref{sec-interpretation-mapping}
\begin{equation}\label{eq-trace-cute}
\tau_{\tilde{\mathcal E}}(b) = \int_{\mathcal S} \mathrm{Tr}_{\ell^2(\N)}(\hat\rho_{\tilde \xi}(b)) \mathrm{d}\tilde\mu(\tilde\phi)=\sum_{\phi\in{\mathcal C}}\int_0^1
 \mathrm{Tr}_{\ell^2(\N)}\big(\hat\rho_{\phi,t}(b) \big) \mathrm{d} t
 \end{equation}
for $b\in\tilde{\mathcal E}$ with $\Psi(b)=f\otimes E_{nm}$ with $\tilde\mu$-measurable $f$ ($\mathcal C$ is the set of cut points). 

$\delta$ is derivation along the arcs (and zero elsewhere). 

We first interpret $\che(\beta([P_F(h)]_0))$ in the case that $\gamma(h)=0$ (primary gap). 
This holds, for instance, if the augmentation $\tilde h$ of $h$ still has a gap around $0$. We then can define $U= e^{2\pi \imath g(\hat h)}$ as in Section~\ref{sec-interpretation-mapping} using the lift $\hat h$ of $\tilde h$ obtained by replacing $u$ by $\hat u$. We cannot expect that $U^*-1$ is trace class. However, we argue that $\delta U$ is trace class if we choose a good lift. For that we use the freedom to deform the bulk Hamiltonian $h$ in such a way that it is a Laurent polynomial in $u$, that is $h = \sum_{n\in F} h_n u^n$ for some finite subset $F\subset \Z$, whose coefficients  $h_n$ belong to $C(\Xi_N)$ where $\Xi_N$ is an approximation of $\Xi$ by a circle with $N$ cuts. Lift $h$ to $\tilde h=\sum_{n\in F} \tilde h_n u^n$ where $\tilde h_n$ belongs to  $C(\tilde \Xi_N)$ where $\tilde\Xi_N$ is the augmentation of $\Xi_N$ obtained by gluing $N$ arcs into the cuts. Compressing $\tilde h$ to the half-line to obtain $\hat h$ we see that $\partial_t \hat h$ has support only on $N$ arcs. But then also $\partial_t U$ has support only on finitely many arcs and hence is trace class. We thus can apply Prop.~\ref{useful_formula_for_trace} to obtain, with the same computations as Section~\ref{sec-interpretation-mapping}, 
$$\che(\beta([P_F(h)]_0)) \equiv - \theta \tau_{\tilde{\mathcal E}}(\varphi(\hat h) \delta \hat h) \mod{1} $$
where 
\begin{equation}\label{eq-che-cut}
\tau_{\tilde{\mathcal E}}(\varphi(\hat h) \delta \hat h)
=\sum_{\phi\in{\mathcal C}} \int_0^1   \mathrm{Tr}_{\ell^2(\N)}\big(\varphi(\hat H_{\phi,t})\partial_t \hat H_{\phi,t}\big) \mathrm{d}t. 
\end{equation}
Compared to \eqref{eq-che-m} we see that integration over $\Xi$ is replaced by summing over the cut points. This has an impact on the interpretation. While the derivative $\partial_t \hat{H}_{\xi,t}$ can still be understood as a generalized force which arises when we vary $t$, and 
$ \mathrm{Tr}_{\ell^2(\N)}\big(P_{\Delta}(\hat H_{\phi,t})\partial_t \hat H_{\phi,t}\big)$ is the expectation value of this force,
this variation does not come from a relative shift of the augmented potential against the boundary of the half-line, but from, what one calls in quasicrystal physics the phason motion. \eqref{eq-che-cut} says that 
$|\Delta|\tau_{\tilde{\mathcal E}}(P_\Delta(\hat h) \delta \hat h)$
is the work the system exhibits on the edge states of $\hat H_{\phi,t}$ with energy in the gap $\Delta$ which is induced by moving $t$ through all the arcs (even infinitely many if $\theta$ is irrational). This is the same as the work we obtain if the parameter $\tilde\xi$ varies around $\tilde\Xi$, which is a circle, since $\delta$ is supported on these arcs only and, as can be shown as in \cite{kellendonk_prodan}, that the Lebesgue measure of the set of eigenvalues of $H_{\phi,o}$ in the gap for $\phi\notin {\mathcal C}$ is zero. The work is different from the work in Section~\ref{sec-interpretation-mapping}, but our results show that they are related (modulo $1$) by a factor of $\theta$.

As in Section~\ref{sec-interpretation-mapping} we also have a spectral flow intepretation. Also here the spectrum of $\hat H_{\tilde\xi}$ in the gap of 
$H_{\tilde\xi}$ consists of eigenvalues $E_i(\xi,t)$ whose associated eigenvectors are localized at the edge. Thus with 
$\mathrm{Tr}_{\ell^2(\N)}(\varphi(\hat H_{\tilde\xi})\partial_t \hat H_{\tilde\xi} ) = \sum_i \varphi(E_i({\tilde\xi}))\partial_t E_i({\tilde\xi})$ 
we obtain
\begin{equation}\label{eq-SF-edge}
\tau_{\tilde{\mathcal  E}}(\varphi(\hat h)\delta \hat h)   = \sum_{\phi\in{\mathcal C}} \sum_i\mathrm{SF}(E_i(\phi,t);t\in [0,1])
\end{equation}
which is the spectral flow of the eigenvalues $E_i(\phi,t)$ though the gap when $t$ varies through all the arcs. 
We visualize this in Figure~\ref{primary_gaps_WN}. This is simple to imagine if $\theta$ is rational, because then we have only finitely many cut points. That the result is finite also in the irrational case comes about as the edge states do not feel potential changes which are far out in the chain. 

\subsubsection{Interpretation of $\che$ for secondary gaps}\label{sec-secondary-gaps}
We now look at the case in which $\gamma(h)\neq 0$, which means that the $n_2$ from above is not $0$ (secondary gap). In that case, the quantity to compute is $\che(\beta(x))$ where $x=[P_F(h)]_0-s\circ\gamma([P_F(h)]_0)$.   
With our choice of section \eqref{eq-section} we have $s\circ\gamma([P_F(h)]_0 = n_2[\chi_{[0,\cut]}]_0$ so that $x=[P_F(h)]_0 - n_2[\chi_{[0,\cut]}]_0$. To better understand the difference class $x$ we express it in the van Daele picture of $K_0$ by means of invertible self-adjoint elements, which we described in Section~\ref{sec-K-theory}. Then $x$ has the form $[[h-E_F],[h']]$ where $h'$ is a self-adjoint invertible element whose spectral projection onto negative energy states is homotopic to the direct sum of $n_2$ copies of $\chi_{[0,\cut]}$. Let us first consider the situation that $n_2=1$. Then we may take $h' = \lambda'(1-2\chi_{[0,\cut]})$ with $\lambda'>0$. The class $[[h-E_F],[h']]$ is the same as $[[h-E_F\oplus -h'],[1\oplus -1]]$ and, if $\lambda'>|E_F|$ then $h-E_F\oplus -h'$ is homotopic to  $h^s-E_F\in GL^{s.a.}_2(C(\Xi)\rtimes_\alpha\Z)$ where
$$h^s:=\begin{pmatrix}
	h & 0 \\ 0 & -h' 
\end{pmatrix}.$$
The Hamiltonian $h^s$ describes the stacking of the original system with an insulator which is an atomic limit, as its Hamiltonian has no kinetic term.
Going back to the description of $K_0$-classes by means of projections, we have
$$\che(\beta(x)) = \che\big(\beta([P_F(h^s)]_0- [P_F(1\oplus -1)]_0)\big)= \che\big(\beta([P_F(h^s)]_0\big)$$
as the projection $P_F(1\oplus -1)$ lies in the image of $\tilde\pi$. We thus can compute  
$\che(\beta(x))$ as for primary gaps but with $h$ replaced by $h^s$. 
For this computation needs an augmentation $\tilde h^s$ of $h^s$, that is, an element 
$$\tilde h^s:=\begin{pmatrix}
	\tilde h & \tilde c \\ \tilde c^* & -\tilde h' 
\end{pmatrix}$$
such that $q(\tilde h^s) = h^s$ and $\tilde h^s$ has a gap inside the gap $\Delta$ of $h^s$. For $h$ and $h'$ we follow the same augmentation procedure by interpolation of the potential as above. Furthermore, 
$\tilde c\in C(\tilde \Xi)\rtimes_\alpha\Z$ is an interaction term coupling the two layers. We cannot take it to be zero, as the secondary gaps of $h$ are filled in the spectrum of $\tilde h$. But since the potentials of $\tilde h$ and $-\tilde h'$ are multiples of each other with opposite sign, the spectrum of their direct sum is degenerated at one point in the gap $\Delta$ so that perturbation theory predicts the opening of a gap in $\tilde h^s$ provided the coupling $\tilde c$ is chosen appropriately. In that case we obtain the same formulas for 
$\che(\beta(x))$ as for primary gaps but with $\tilde h$ replaced by $\tilde h^s$. 
Different positive values for $n_2$ can be handled by going over to matrix valued operators, viewing $h$ and an element of $M_{n_2}(C(\Xi)\rtimes_\alpha\Z)$ as described in Section~\ref{sec-K-theory}. If $n_2$ is negative we replace $h'$ by $-h'$. 

\subsubsection{Interpretation of $\chab$}
Since $\gamma=0$ if $\cut$ is a multiple of $\theta$ we assume in this section that this is not the case. We express $\chab([P_F(h)]_0)$ in the representation $\rho_{\tilde \xi}$ of $C(\tilde\Xi)\rtimes_\alpha\Z$ on $\ell^2(\Z)$ restricted to its ideal $C_0(\mathcal S)\rtimes_\alpha\Z$. The relevant trace on the latter algebra is $\hat\tau_I$. Given $b = f u^k\in C_0(\mathcal S)\rtimes_\alpha\Z$ with $f$ $\tau_I$-trace-class, we have
$$\hat\tau_I(b) = \delta_{0,k} \int_{\mathcal S} f (\tilde \xi) \mathrm{d}\tilde\mu(\tilde\xi) = 
\delta_{0,k}\sum_{\phi\in \mathcal C} \int_0^1 f(\phi,t) \mathrm{d}t
$$
As $\rho_{\phi,t}(b)\psi(n) = f(\phi+n\theta,t)\psi(n-k)$ we find 
$$\mathrm{Tr}(\rho_{\phi,t}(b)) = \delta_{k,0} \sum_{n\in\Z} f(\phi+n\theta,t) .$$ 
Therefore, if $\theta$ is irrational we obtain
\begin{equation}\label{eq-trace-cutab}
\hat\tau_I(b) =  \sum_{\phi=0,\cut} \int_0^1 \mathrm{Tr}( \rho_{\phi,t}(b)) .\mathrm{d}t .
\end{equation}
Note the difference of this equation with the corresponding equation \eqref{eq-trace-cute} for the edge invariant: the sum is only over two cut points. For rational $\theta = \frac{\p}{\q}$ we can rewrite $\hat\tau_I(b)$ with the help of the trace per unit length $\mathrm{Tr}_{\mathrm{vol}}(\rho_{\phi,t}(b)) = \delta_{0,k} \frac1{\q} \sum_{n=0}^{\q-1} f(\phi+n\theta,t)$. This yields
\begin{equation}\label{eq-trace-cutab-rat}
\hat\tau_I(b) =  \sum_{\phi=0,\cut} \q \int_0^1 \mathrm{Tr}_{\mathrm{vol}}( \rho_{\phi,t}(b)) \mathrm{d}t = \sum_{\phi\in\mathcal C} \int_0^1 \mathrm{Tr}_{\mathrm{vol}}( \rho_{\phi,t}(b)) \mathrm{d}t .
\end{equation}
The derivation
$\delta$ is derivation along the arcs, but with negative orientation for the arcs in the orbit of  $0$, and extended trivially to the crossed product. 
More precisely, $\delta f(\phi+n\theta,t) = \sigma(\phi) \partial_t f(\phi+n\theta,t)$ where $\sigma(\phi)=+1$ if $\phi$ belongs to the orbit of  $\cut$ whereas $\sigma(\phi)=+1$ if $\phi$ belongs to the orbit of $0$. As above, we can argue that the lift $\tilde h$ of the bulk Hamiltonian $h$ into the augmented bulk algebra has the property 
that $\partial_t \tilde h$ is supported only on finitely many arcs and therefore $\delta \tilde h$ is $\tau$-trace class. Then, applying proposition \ref{useful_formula_for_trace}, we have
$$\chab(\hat\exp([P_F(h)]_0))=-\frac{\cut}2 \hat\tau_I(\varphi(\tilde h)\delta \tilde h).$$
For irrational $\theta$ we have
\begin{eqnarray} \label{physical_interpr_n2_a}
	\hat\tau_I(\varphi(\tilde h)\delta \tilde h) &=&  \int_0^1 
	\mathrm{Tr}
	\big( 
	\varphi(H_{\cut,t})\partial_tH_{\cut,t}
	-
	\varphi(H_{0,t})\partial_t H_{0,t}\big) \mathrm{d}t\\
 \label{physical_interpr_n2_b}	&=& \sum_i
\big(	\mathrm{SF}(E_i(\cut,t);t\in [0,1])-\mathrm{SF}(E_i(0,t);t\in [0,1])\big)
\end{eqnarray}
where $E_i(\phi,t)$ are the eigenvalues of $H_{\phi,t}$ in the gap $\Delta$. This is the difference of spectral flows when $t$ varies from $0$ to $1$ at the two cut points $\cut$ and $0$. 
For rational $\theta$ we have
\begin{eqnarray}\nonumber
	\hat\tau_I(\varphi(\tilde h)\delta \tilde h) & = &
	q \int_0^1 
	{\mathrm{Tr}_\mathrm{vol}}
	\big( 
	\varphi(H_{\cut,t})\partial_tH_{\cut,t}
	-
	\varphi(H_{0,t})\partial_t H_{0,t}\big) 
	\mathrm{d}t \\
 \label{physical_interpr_n2_c}	&=& \sum_i
\big(	\mathrm{SF}(E_i(\cut,t);t\in [0,1])-\mathrm{SF}(E_i(0,t);t\in [0,1])\big)
	\end{eqnarray}
Let us note that in the rational case we also can write 
$$\hat\tau_I(\varphi(\tilde h)\delta \tilde h) =	\sum_{\phi\in\mathcal C} \frac{\sigma(\phi)}{\q}  \sum_i
	\mathrm{SF}(E_i(\phi,t);t\in [0,1]).$$ 
	This is the signed spectral flow per unit length of the eigenvalues of $H_{\phi,t}$ in the gap as $\tilde\phi$ varies around $\tilde\Xi$, which is a circle.
We now show that, for our specific model \eqref{our_model}, the above quantity can take only the values $0$ or $\pm 2$, and the first is the case precisely if the augmentation does not fill the gap.
\begin{prop} \label{prop-n2}
Let $\Xi = \Xi_{\theta,\cut}$ be the hull as defined in \eqref{eq-hull}. Let $h(\lambda)\in C(\Xi)\rtimes_\alpha\Z$ be given by 
\begin{equation}
	\nonumber
	h(\lambda) := u+u^*+\lambda(1-2\chi_{[0, \cut]})
\end{equation}
where  
$\chi_{[0, \cut]}$ denotes the characteristic function on the closed interval
$[(0,+),(\cut,-)]$ and $\lambda\neq 0$. We assume that $\cut$ is not a multiple of $\theta$ so that $0$ and $\cut$ lie in different orbits under the rotation action. 
Let $\tilde h$ be the augmentation of $h$ given by linear interpolation which we discussed above. Either $\Delta$ is also a gap in the spectrum of $\tilde h$ and then $\hat\tau_I(\varphi(\tilde h)\delta \tilde h)=0$
or the gap is filled and 
$\hat\tau_I(\varphi(\tilde h)\delta \tilde h)=-2\mathrm{sgn}(\lambda)$.
\end{prop}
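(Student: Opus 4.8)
The plan is to reduce the left-hand side to a pair of rank-one perturbation problems, one sitting over the orbit of $0$ and one over the orbit of $\cut$, and then to tie the two together by a reflection symmetry that is available precisely because $\cut$ is not a multiple of $\theta$. First the trivial half: if $\Delta$ stays a spectral gap of $\tilde h$ then $\varphi(\tilde h)=0$, since the density $\varphi$ is supported in $\Delta$, and hence $\hat\tau_I(\varphi(\tilde h)\delta\tilde h)=0$; so all the content is in the other case. For $h=h(\lambda)$ as in \eqref{our_model}, the interpolated lift $\tilde h$ agrees with its restriction to $\Xi$ except through the function $\tilde\chi_{[0,\cut]}$, which is non-constant only on the arcs over the two cut points $0$ and $\cut$ (it runs linearly from $0$ to $1$ over $0$ and from $1$ to $0$ over $\cut$, and is locally constant on every other added arc). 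Consequently $\delta\tilde h$, regarded in $C_0(\mathcal S)\rtimes_\alpha\Z$, is supported on these two arc–orbits only, so by covariance $\hat\tau_I$ collapses onto the two representatives $\phi\in\{0,\cut\}$, as in \eqref{eq-trace-cutab}--\eqref{physical_interpr_n2_a}. Along either arc only the on-site potential at $n=0$ varies with $t$, so $\partial_t H_{\phi,t}=\epsilon_\phi\,2\lambda\,P_0$ with $P_0=|\delta_0\rangle\langle\delta_0|$ the rank-one projection onto the site-$0$ vector $\delta_0$ and $\epsilon_0=-1$, $\epsilon_\cut=+1$. Inserting this into \eqref{physical_interpr_n2_a} gives
\[
\hat\tau_I(\varphi(\tilde h)\delta\tilde h)\;=\;\varepsilon\,2\lambda\sum_{\phi\in\{0,\cut\}}\int_0^1\langle\delta_0|\varphi(H_{\phi,t})|\delta_0\rangle\,\mathrm{d}t ,
\]
for a sign $\varepsilon\in\{\pm1\}$ that is fixed by the orientation of $\delta$ chosen before \eqref{eq-chab-cut}.

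\emph{The two rank-one problems.} For fixed $\phi\in\{0,\cut\}$ write $H_{\phi,t}=H_\phi^{\mathrm{bg}}+s\,P_0$, where $H_\phi^{\mathrm{bg}}$ has vanishing potential at $n=0$ and $s=s(t)\in[-|\lambda|,|\lambda|]$ is the site-$0$ potential value, an affine function of $t$ running between $\lambda$ and $-\lambda$. Changing variables $t\mapsto s$ and using $g'=\varphi$ from \eqref{eq-g},
\[
\int_0^1\langle\delta_0|\varphi(H_{\phi,t})|\delta_0\rangle\,\mathrm{d}t
=\frac1{2|\lambda|}\int_{-|\lambda|}^{|\lambda|}\mathrm{Tr}\bigl(\varphi(H_\phi^{\mathrm{bg}}+sP_0)P_0\bigr)\,\mathrm{d}s
=\frac1{2|\lambda|}\,\mathrm{Tr}\bigl(g(H_{\phi,0})-g(H_{\phi,1})\bigr)=:\frac{m_\phi}{2|\lambda|},
\]
the middle equality being the fundamental theorem of calculus for the trace-class valued map $s\mapsto g(H_\phi^{\mathrm{bg}}+sP_0)-g(H_\phi^{\mathrm{bg}})$ (trace class since $\varphi$ is supported in a gap of the common essential spectrum, which is $s$-independent). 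The endpoints $H_{\phi,0},H_{\phi,1}$ lie in the covariant family over $\Xi$, so $\Delta$ is a genuine gap for them and $g(H_{\phi,0}),g(H_{\phi,1})$ are honest spectral projections; hence $m_\phi\in\Z$, and $m_\phi\ge0$ because $\langle\delta_0|\varphi(H_{\phi,t})|\delta_0\rangle\ge0$. Since $H_{\phi,0}$ and $H_{\phi,1}$ differ by a rank-one perturbation of fixed sign and magnitude $2|\lambda|$, the associated spectral shift function $\xi_\phi$ satisfies $|\xi_\phi|\le1$ (Krein), so $|m_\phi|=\bigl|\int g'\xi_\phi\bigr|\le\int\varphi=1$; thus $m_\phi\in\{0,1\}$. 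Moreover, monotonicity of $t\mapsto H_{\phi,t}$ together with the gappedness of the endpoints forces any eigenvalue that enters $\Delta$ to cross it entirely and forbids two eigenvalues sitting in $\Delta$ at once (that would give $|\xi_\phi|\ge2$), so $m_\phi=1$ precisely when the arc over $\phi$ contributes spectrum of $\tilde h$ inside $\Delta$, and $m_\phi=0$ precisely when it does not.

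\emph{Symmetry and conclusion.} The reflection $\mathcal R\psi(n)=\psi(-n)$ on $\ell^2(\Z)$ conjugates $H_{0,t}$ onto $H_{\cut,1-t}$: the site-$0$ potentials coincide (both equal $\lambda(1-2t)$), and for $n\ne0$ the two background potentials coincide because $\{-n\theta\}\in(0,\cut)\Leftrightarrow\{n\theta+\cut\}\in(0,\cut)$, an equivalence which could only fail if $\{n\theta\}=1-\cut$ for some $n$, i.e.\ if $\cut$ were a multiple of $\theta$ --- excluded by hypothesis; the doubled/limit points at $t\in\{0,1\}$ match up by the same computation. Hence $m_0=m_\cut$. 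Combining with the reduction,
\[
\hat\tau_I(\varphi(\tilde h)\delta\tilde h)=\varepsilon\,2\lambda\cdot\frac{m_0+m_\cut}{2|\lambda|}=\varepsilon\,\mathrm{sgn}(\lambda)\,(m_0+m_\cut)=2\varepsilon\,\mathrm{sgn}(\lambda)\,m_0\in\{0,\ \pm2\,\mathrm{sgn}(\lambda)\},
\]
which is $0$ exactly when $m_0=m_\cut=0$, that is, exactly when neither arc adds spectrum to $\Delta$, that is, exactly when $\Delta$ remains a gap of $\tilde h$; and equals $-2\,\mathrm{sgn}(\lambda)$ otherwise once $\varepsilon$ is read off the orientation convention of the statement. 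In particular this also establishes the claim from the introduction that, for $H_\phi$, the gap label $n_2$ can only be $0$ or $1$.

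I expect the main obstacle to be the symmetry step: pinning down cleanly that $\mathcal R$ really intertwines the two augmented families, including a careful treatment of the doubled points, and verifying that the sole obstruction to the pointwise identity of backgrounds is exactly "$\cut$ a multiple of $\theta$"; without this symmetry one would only get $\hat\tau_I\in\{0,\pm2,\pm4,\dots\}$ a priori, and the value could differ between the two arcs. A secondary, more routine, point is to make rigorous the trace-class bookkeeping --- that $g(H_{\phi,0})-g(H_{\phi,1})$ is trace class with integer trace and that the fundamental-theorem step above is legitimate --- which is where one uses that $\varphi$ lives strictly inside the gap of the essential spectrum, a gap unchanged along the arc since the perturbation is finite rank.
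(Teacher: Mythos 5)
Your argument for irrational $\theta$ is essentially sound and rests on the same two pillars as the paper's proof: the fact that along each arc the perturbation is rank one and monotone in $t$ (so at most one eigenvalue sits in $\Delta$ and it can only traverse the gap monotonically), and the reflection symmetry $V_{\cut,t}(n)=V_{0,1-t}(-n)$, which you correctly trace back to the hypothesis that $\cut$ is not a multiple of $\theta$, and which forces the two arcs to contribute equally. Where you differ is in the packaging: the paper reads off the spectral flow directly from first--order perturbation theory ($\partial_t E=\langle\psi|\partial_tV|\psi\rangle$ has a fixed sign, hence exactly one full crossing if an eigenvalue enters the gap), whereas you convert the $t$--integral into $\mathrm{Tr}\bigl(g(H_{\phi,0})-g(H_{\phi,1})\bigr)$ by the fundamental theorem of calculus and then pin $m_\phi\in\{0,1\}$ using integrality of the trace of a difference of projections together with Krein's bound $|\xi_\phi|\le 1$. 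This buys you a cleaner quantization statement ($m_\phi\in\Z$ for free), at the cost of more trace--class bookkeeping; but note that you still have to invoke monotonicity to identify ``$m_\phi=1$'' with ``the arc fills the gap'', so the spectral--shift machinery does not actually replace the paper's key observation.

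Two genuine gaps remain. First, you never determine the sign: the proposition asserts the value $-2\,\mathrm{sgn}(\lambda)$, and ending with ``$2\varepsilon\,\mathrm{sgn}(\lambda)m_0$ where $\varepsilon$ is read off the orientation convention'' defers exactly the part of the computation that distinguishes $+2\,\mathrm{sgn}(\lambda)$ from $-2\,\mathrm{sgn}(\lambda)$; you need to combine the sign $\sigma(\phi)$ built into the derivation $\delta$ with the sign of $\partial_tV_{\phi,t}(0)$ at each of the two cut orbits and check that the two contributions add rather than cancel. Second, your proof covers only irrational $\theta$. For $\theta=\p/\q$ the set $\{n:\;n\theta+\phi\in\{0,\cut\}+\Z\}$ is $\q\Z$, so $\partial_tH_{\phi,t}$ is a $\q$--periodic array of rank--one bumps, not a single one: $g(H_{\phi,0})-g(H_{\phi,1})$ is then not trace class, Krein's rank--one bound does not apply as stated, and the relevant functional is the trace per unit volume of \eqref{eq-trace-cutab-rat} and \eqref{physical_interpr_n2_c}. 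The conclusion survives (one argues fiberwise in Bloch theory, as the paper does in two sentences), but your proof as written does not reach it, and the rational case is precisely the one used in the numerical section.
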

\begin{proof}
Recall that 
$\rho_{\phi,t}(\tilde h)= H_{\phi,t}$, 
$H_{\phi,t}\psi(n) = \psi(n+1)+\psi(n-1) + V_{\phi,t}(n)\Psi(n)$ with 
$$V_{\phi, t}(n) = \left\{\begin{array}{ll}
\lambda(1-2\chi_{[0, \cut]}(n\theta+\phi)) & \mbox{if } n\theta+\phi \neq 0,\cut \\
\lambda(1-2t) & \mbox{if } n\theta+\phi = 0 \\
\lambda(2t-1) & \mbox{if } n\theta+\phi =\cut
\end{array}
\right.$$
We consider the irrational case first. Then, if
$H_{\phi,t} - H_{\phi,0}\neq 0$, it is a rank one perturbation. It follows from perturbation theory that there is at most one eigenvalue of $H_{\phi,t}$ in the gap. Suppose there is an eigenvalue $E(\phi,t)$ in the gap for some $t$. Let $\psi(E,t)$ be its normalised eigenvector. Then 
$${\mathrm{Tr}}
	\varphi(H_{\phi,t})\partial_tH_{\phi,t} = \langle \psi(\phi,t)|\varphi(H_{\phi,t})\partial_t H_{\phi,t}|\psi(\phi,t) \rangle = \varphi(E(\phi,t))\partial_t E(\phi,t)$$
Moreover, $\partial_t E(\phi,t)$ must have the same sign as $\langle \psi(\phi,t)|\partial_t V_{\phi,t}|\psi(\phi,t) \rangle$ which is $\sigma(\phi)\mathrm{sgn}(\lambda)$. Therefore $E(\phi,t)$ must cross the gap exactly once when we vary $t$ from $0$ to $1$ so that the spectral flow is
$$\mathrm{SF}(E(\phi,t);t\in [0,1]) = \sigma(\phi)\mathrm{sgn}(\lambda).$$ 
As $V_{\cut,t}(n) = V_{0,1-t}(-n)$ we see that, for $0<t<1$, 
either both, $H_{\cut,t}$ and $H_{0,t}$, or none has an eigenvalue in the gap. Thus the r.h.s.\ of \eqref{physical_interpr_n2_b} is $0$ or $-2\mathrm{sgn}(\lambda)$.
 
The argument for the rational case, $\theta=\frac\p\q$, is essentially the same, except that,
 if
$H_{\phi,t} - H_{\phi,0}\neq 0$, it is the $\q$-periodic repetition of a rank one perturbation. The analogous arguments show that either all of 
$H_{\phi,t}$, $\phi\in \mathcal C$ or none have an eigenvalue in the gap, although this eigenvalue is then infinitely degenerated. Nevertheless, the spectral flow of this eigenvalue is then also $\mathrm{SF}(E(\phi,t);t\in [0,1]) = \sigma(\phi)\mathrm{sgn}(\lambda)$ and we conclude with \eqref{physical_interpr_n2_c}.
\end{proof}


\section{Numerical simulations of topological effects in the Kohmoto model}
\label{numerical_simulations}
The goal of this section is to visualize the spectral flow of the edge states and, if present, of the augmented bulk states of the Kohmoto model. We do this for both types of augmentations considered above. While we are primarily interested in aperiodic systems, simulation on the computer is only possible for rational values of the rotation angle $\theta$ 
and so the idea is to approximate an irrational angle by rational ones. However, it is not obvious what this means. Indeed, in our approach this means that we approximate the dynamical system given of an irrational rotation by one with a rational rotation. As we will see, for the 2-cut model, this is not the same as approximating an aperiodic operator by a periodic operator. Instead, it will be approximated by two periodic operators.
 
Let $\Xi = \Xi_{\theta,\cut}$ be the hull as defined in \eqref{eq-hull}. Our abstract Hamiltonian $h\in C(\Xi)\rtimes_\alpha\Z$ from \eqref{our_model} 
defines the covariant family of operators
$H_\Xi := \{H_\xi\}_{\xi\in\Xi}$ where 
$H_{\phi,\epsilon} =\rho_{\phi,\epsilon}(h)$ acts on $\ell^2(\Z)$ as 
$$H_{\phi,\epsilon} \Psi(n)=\Psi(n+1)+\Psi(n-1)+V_{\phi,\epsilon}(n) \Psi(n)$$
with 
\begin{equation}
	V_{\phi,\epsilon}(n) = \left\{\begin{array}{ll}
\lambda & \mbox{if } (\{n\theta+\phi\},\epsilon) \geq (\cut,+) \\
-\lambda & \mbox{if } (\{n\theta+\phi\},\epsilon) \leq (\cut,-)
\end{array}\right.
\end{equation}
We present below three different augmentations $\tilde h$ of $h$. They lead as well to covariant families 
$H_{\tilde \Xi}:= \{H_{\tilde\xi}\}_{\tilde\xi\in\tilde\Xi}$ acting on  $\ell^2(\Z)$ as 
$$H_{\tilde\xi} \Psi(n)=\Psi(n+1)+\Psi(n-1)+V_{\tilde \xi}(n) \Psi(n)$$
where the potential $V_{\tilde \xi}$ depends on the choice of augmentation.
We are interested in the spectral flow of the edge states of the family of operators $\hat H_{\tilde \Xi} := \{\hat H_{\tilde\xi}\}_{\tilde\xi\in\tilde\Xi}$ which we obtain if we restrict $\hat H_{\tilde\xi}$ to $\ell^2(\N)$. 

In our numerical simulations, we consider $\theta=\frac{F_{n-2}}{F_n}$, where $F_n$ is the $n^{th}$ Fibonacci number; hence $\theta$ is a rational approximation of the square of the golden ratio. The operators $H_{\xi}$ and $H_{\tilde \xi}$ are thus periodic with period $\q=F_n$.
Using Bloch theory each Hamiltonian could be represented by $k$-dependent matrices of size $\q$. However, it will be useful in cases to represent it by matrices of size $D$. If $H_\zeta$ is periodic then we take $D$ to be a multiple of $\q$. Its matrix representation has the form  
\[
H_\zeta(k) = \begin{pmatrix}
	V_\zeta(0) & 1 & 0 & \cdots & 0 & \mathrm{e}^{-2\pi k\mathrm{i}} \\
	1 & V_\zeta(1) & 1 & \cdots & 0 & 0 \\
	0 & 1 & V_\zeta(2) & \cdots & 0 & 0 \\
	\vdots & \vdots & \vdots & \ddots & \vdots & \vdots \\
	0 & 0 & 0 & \cdots & V_\zeta(D-2) & 1 \\
	\mathrm{e}^{2\pi k\mathrm{i}} & 0 & 0 & \cdots & 1 & V_\zeta(D-1)
\end{pmatrix}
\]
where $\zeta = \xi$ or $\zeta=\tilde\xi$. The spectrum of $H_\zeta$ is the union of the spectra of $H_\zeta(k)$ over all $k\in
[0, 2\pi)$.
The operator $\hat H_{\tilde\xi}$ is represented by the above matrix for $H_{\tilde\xi}$ in which we delete the entries at the upper right and lower left corner so that it becomes tri-diagonal. In this way we simulate Dirichlet boundary conditions. We must be aware that we are introducing left and right boundaries at the same time. To distinguish left and right edge states we test whether the corresponding wave function is sufficiently localised at one of the boundaries. To represent the half space operator $\hat H_{\tilde\xi}$ we do not need that $D$ is a multiple of $\q$.

\subsection{Augmentation of the 1-cut model with the mapping torus}
We consider here the two augmentations defined in \eqref{eq-augA} and \eqref{eq-augB} for the mapping torus construction, restricting our attention to the case that $\cut$ is a multiple of $\theta$ so that we have only one orbit of cut points.
For this model, the spectrum of $H_{\phi,\epsilon}$ does not depend on 
${\phi,\epsilon}$, and so the spectrum of $h$ coincides with that of $H_{0,-}$. 

For the simulations, we choose the parameters $\theta = \frac{55}{144}$, $\lambda=2$. In 
Figure~\ref{spectrum_Xitilde_MT} we see the spectrum of the augmented Hamiltonians $H^A_{\xi,t}$ for $t\in [0,1]$ (with $D=\q$).

\begin{figure}
	\centering
	\begin{tikzpicture}
		\node[inner sep=0pt] (img) at (0,0)
		{\includegraphics[
			height=7.5cm,
			trim=7.7cm 7.85cm 6.17cm 0cm,   
			clip
			]{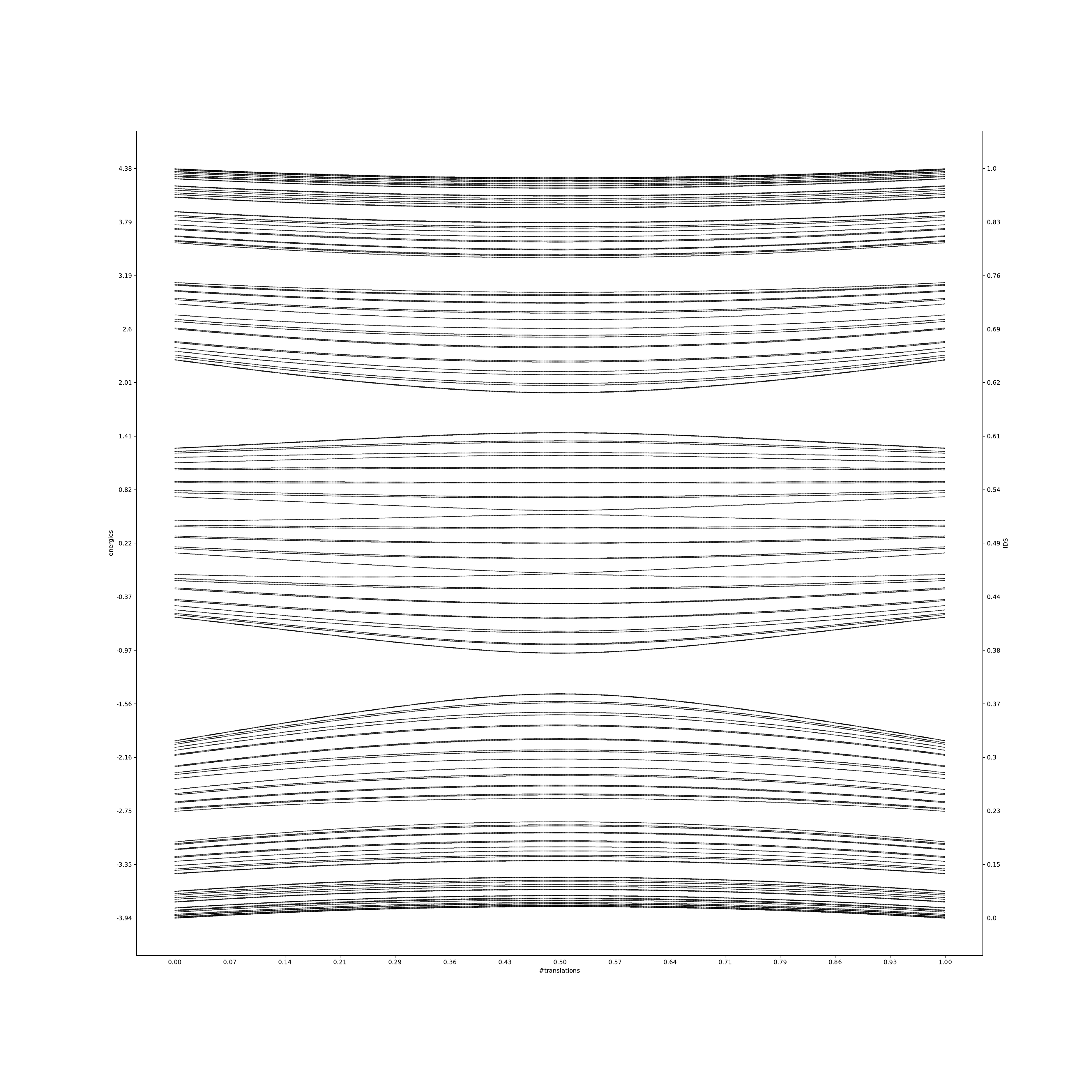}};
		
		\node at (0,-4) {\tiny $t$};
		
		\node[rotate=90] at (-3.8,-0.1) {\tiny energy};
		\node[rotate=90] at (3.8,0) {\tiny IDS};
		
		\node[scale=0.4] at (-3,-3.85) {$0$};
		\node[scale=0.4] at (-1.75,-3.85) {$0.21$};
		\node[scale=0.4] at (0,-3.85) {$0.5$};
		\node[scale=0.4] at (1.8,-3.85) {$0.79$};
		\node[scale=0.4] at (3,-3.85) {$1$};
		
		\node[scale=0.4] at (-3.5,-3.45) {$-3.9$};
		\node[scale=0.4] at (-3.5,-0.5) {$-0.2$};
		\node[scale=0.4] at (-3.5,2.4) {$4.4$};
		
		\node[scale=0.4] at (3.5,-3.45) {$0$};
		\node[scale=0.4] at (3.5,-0.5) {$0.5$};
		\node[scale=0.4] at (3.5,2.4) {$1$};
	\end{tikzpicture}
	
	\caption{Spectrum of $H^A_{\xi,t}$ for $t\in [\xi,1]$. 
		The $x$-axis corresponds to the mapping torus parameter $t\in [0,1]$. The vertical scale on the left $y$-axis corresponds to the energies and on the right to the IDS.}
	\label{spectrum_Xitilde_MT}
\end{figure}

The spectrum does not depend on the choice of $\xi$ which is to be expected as the spectrum of $H_{\xi}$ does not depend on $\xi$. The spectrum of $H_\Xi$ can therefore by seen on the vertical line above $t=0$.
We observe that the gaps become smaller as $t$ goes to $\frac12$ but don't close (we make no attempt to give a mathematical explanation for this).

In order to simulate the spectrum of the half space Hamiltonians $\hat H^A_{\xi,t}$ we 
consider the parameter $t$ (whose variation was interpreted above as shifting the potential continuously relative to the boundary) in an interval $[0,T]$ where we recall that $(\phi,t+1) \sim (\phi+\theta,t)$. We take $T$ large but such that $T+D\le q $: in this way we do not create potential arrangements $V_0(T), V_0(T+1)\dots V_0(T+D-1)$ which are not already contained in $V_0(0), V_0(1)\dots V_0(q-1)$.  
Concretely, we take $\theta$ and $\lambda$ as above but $D=89$ and $T=q-D=55$. 
In Figure~\ref{spectrum_edge_MT} we see the spectrum of ${\hat H}^A_{\phi,t}$ for $t\in [0,T]$ which allows us to count the spectral flow of the eigenvalues associated to left edge states over $T=55$ translations.

\begin{figure}
	\centering
	\includegraphics[width=1\textwidth]{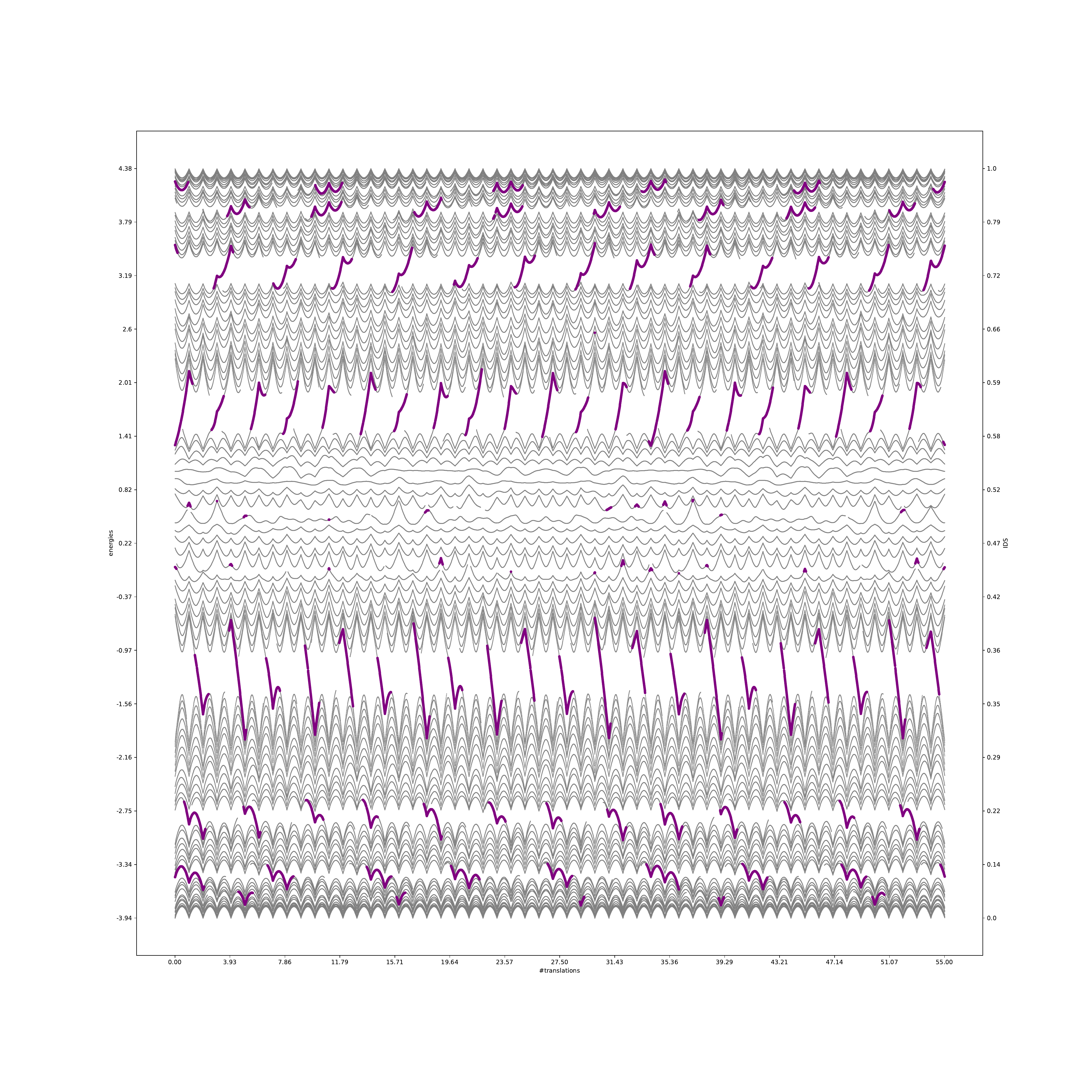}
	\begin{tikzpicture}[overlay]
		\node[anchor=center, color=blue, font=\bfseries] at (6,3.65) {A};    
		\draw[blue, line width=0.01pt] (-5.45,3.65) -- (5.8,3.65);
		\node[anchor=center, color=blue, font=\bfseries] at (6,4.5) {B};    
		\draw[blue, line width=0.01pt] (-5.45,4.5) -- (5.8,4.5);
		\node[anchor=center, color=blue, font=\bfseries] at (6,6.64) {C};    
		\draw[blue, line width=0.01pt] (-5.45,6.64) -- (5.8,6.64);
		\node[anchor=center, color=blue, font=\bfseries] at (6,10.35) {D};    
		\draw[blue, line width=0.01pt] (-5.45,10.35) -- (5.8,10.35);
		\node[anchor=center, color=blue, font=\bfseries] at (6,12.4) {E};    
		\draw[blue, line width=0.01pt] (-5.45,12.4) -- (5.8,12.4);
		\node[anchor=center, color=blue, font=\bfseries] at (6,13.4) {F};    
		\draw[blue, line width=0.01pt] (-5.45,13.4) -- (5.8,13.4);
		\fill[white] (-0.5,2.2) rectangle (0.5,2.3) ;
		\fill[white] (-6.5,8) rectangle (-6.3,9) ;
		\fill[white] (6.67,8) rectangle (6.9,9) ;
		
		\node[scale=0.7] at (-0.5,2.2) {$t$};
		\node[rotate=90,scale=0.7] at (-6.5,8) {energy};
		\node[rotate=90,scale=0.7] at (6.67,8) {IDS};
	\end{tikzpicture}
	\caption{Spectrum of $\hat{H}_{\tilde{\Xi}}$. The thicker lines (in purple in the online version) correspond to the part of the spectrum localised on the left boundary. The horizontal lines correspond to a choice of Fermi energy levels in the gaps. We observe the following spectral flow: \textcolor{blue}{A, $-8$}; \textcolor{blue}{B, $-13$}; \textcolor{blue}{C, $-21$}; \textcolor{blue}{D, $+21$}; \textcolor{blue}{E, $+13$}; \textcolor{blue}{F, $+8$}.}
	\label{spectrum_edge_MT}
\end{figure}

Comparing
for each gap, the IDS of the bulk Hamiltonian (a fraction with denominator equal to the matrix size $89$) with the observed spectral flow per $55$ translations we obtain
\begin{equation}
	\centering
	\renewcommand{\arraystretch}{1.5}
	\begin{tabular}{|c|c|c|c|c|c|c|}
		\hline
		Gap & A &  B &  C &D & E & F\\
		\hline
		IDS & $\frac{13}{89}$ &  $\frac{21}{89}$ &  $\frac{34}{89}$ &$\frac{55}{89}$ &  $\frac{68}{89}$ &  $\frac{76}{89}$ \\
		\hline
		SF per unit length & $-\frac{8}{55}$ &  $-\frac{13}{55}$ &  $-\frac{21}{55}$ &$\frac{21}{55}$ &  $\frac{13}{55}$ &  $\frac{8}{55}$\\
		\hline
	\end{tabular}
\end{equation}	
The IDS coincides with minus the spectral flow per unit length 	
up to the third digit after the comma modulo the ambiguity.
In particular, for the last three columns we need to substract 1 to have an equality. This confirms \eqref{eq-che-mapping-int}. Note that the sum of the spectral flows over all gaps is $0$.

\bigskip

In Figure~\ref{spectrum_edge_MT_proofslift} 
we see the spectrum of ${\hat H}^B_{\phi,t}$ for $t\in [0,55]$. This figure shows clearly that the gaps of the bulk model are not affected by the augmentation and also that the spectral flow per unit length of the eigenvalues of the left edge states coincides with (minus) the IDS of the bulk model without any ambiguity. Note that the values for the spectral flows over all gaps 
 do not add up to $0$.
\begin{figure}
	\centering
	\includegraphics[width=1\textwidth]{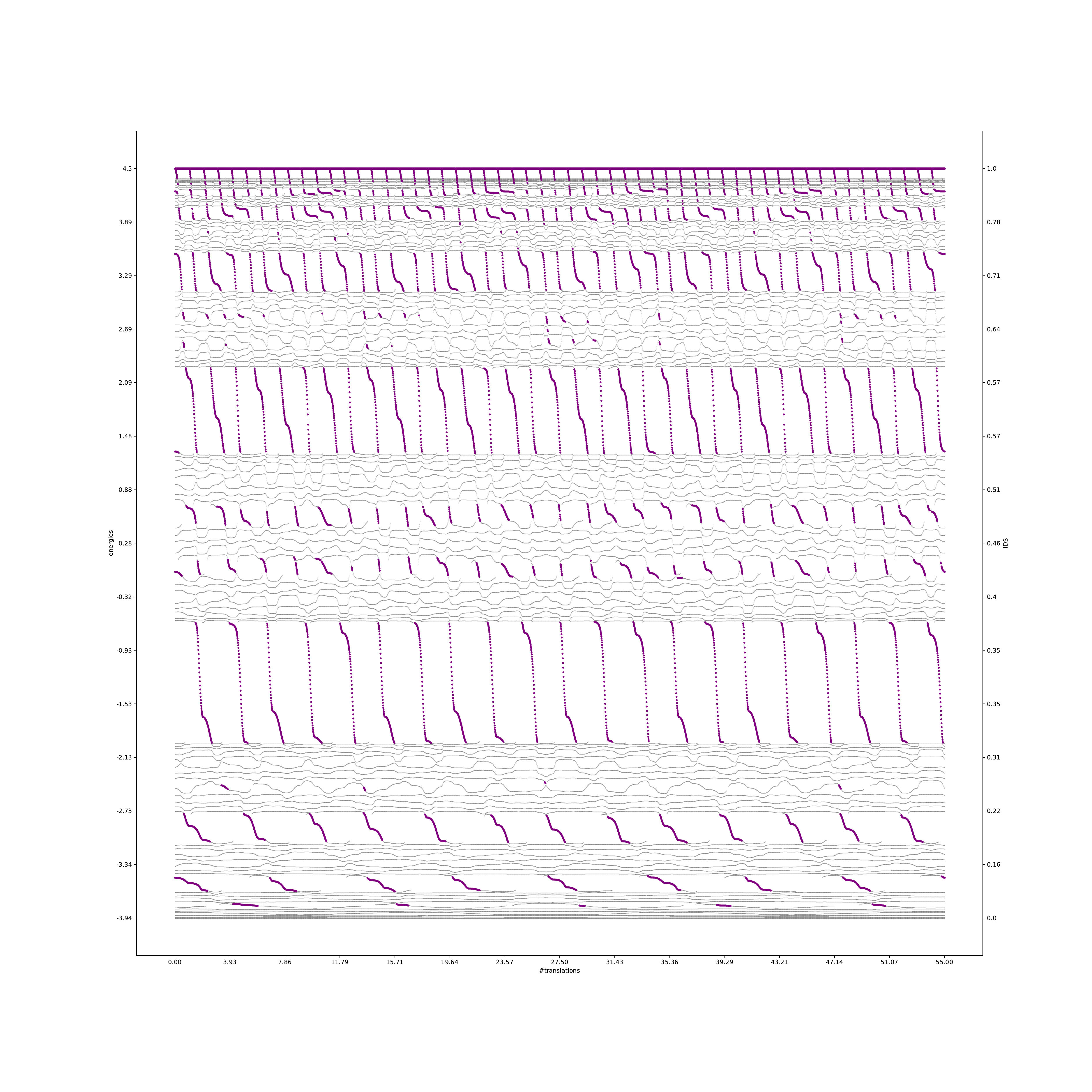}
	\begin{tikzpicture}[overlay]
		\node[anchor=center, color=blue, font=\bfseries] at (6,3.5) {A};    
		\draw[blue, line width=0.01pt] (-5.45,3.5) -- (5.8,3.5);
		\node[anchor=center, color=blue, font=\bfseries] at (6,4.35) {B};    
		\draw[blue, line width=0.01pt] (-5.45,4.35) -- (5.8,4.35);
		\node[anchor=center, color=blue, font=\bfseries] at (6,6.64) {C};    
		\draw[blue, line width=0.01pt] (-5.45,6.64) -- (5.8,6.64);
		\node[anchor=center, color=blue, font=\bfseries] at (6,10.35) {D};    
		\draw[blue, line width=0.01pt] (-5.45,10.35) -- (5.8,10.35);
		\node[anchor=center, color=blue, font=\bfseries] at (6,12.4) {E};    
		\draw[blue, line width=0.01pt] (-5.45,12.4) -- (5.8,12.4);
		\node[anchor=center, color=blue, font=\bfseries] at (6,13.35) {F};    
		\draw[blue, line width=0.01pt] (-5.45,13.35) -- (5.8,13.35);
		
		\fill[white] (-0.5,2.2) rectangle (0.5,2.3) ;
		\fill[white] (-6.5,8) rectangle (-6.3,9) ;
		\fill[white] (6.67,8) rectangle (6.9,9) ;
		
		\node[scale=0.7] at (-0.5,2.2) {$t$};
		\node[rotate=90,scale=0.7] at (-6.5,8) {energy};
		\node[rotate=90,scale=0.7] at (6.67,8) {IDS};
	\end{tikzpicture}
	\caption{Spectrum of $\hat{H}_{\tilde{\Xi}}$. The thicker lines (in purple in the online version) correspond to eigenvalues of states which are localized at the left boundary. The horizontal lines correspond to a choice of Fermi energy in the gaps. We observe the following spectral flow: \textcolor{blue}{A, $-8$}; \textcolor{blue}{B, $-13$}; \textcolor{blue}{C, $-21$}; \textcolor{blue}{D, $-34$}; \textcolor{blue}{E, $-42$}; \textcolor{blue}{F, $-47$}.}
	\label{spectrum_edge_MT_proofslift}
\end{figure}
\color{black}

\subsection{Augmentation of the 2-cuts model with arcs} We consider here the augmentation defined by adding arcs to join the doubled points. We focus on the case in which $\cut$ is not a multiple of $\theta$, as these are the new results compared to \cite{kellendonk_prodan}. 
The augmentation $\tilde h$ of $h$ is defined by replacing $\chi_{[0, \cut]}$ with the function which takes value $1$ on the interval $[(0,+),(\cut,-)]$, value $0$ on the interval $[(\cut,+),(1,-)]$, and increases from $0$ to $1$ on the added arc $\{(0,t)| t\in [0,1]\}$ while it decreases from $1$ to $0$ on the added arc $\{(\cut,t)| t\in [0,1]\}$. $H_{\phi,t}=\rho_{\phi,t}(\tilde h)$ is thus given on $\ell^2(\Z)$ by 
$$H_{\phi,t} \Psi(n)=\Psi(n+1)+\Psi(n-1)+V_{\phi,t}(n) \Psi(n)$$
with 
\begin{equation}
	V_{\phi,t}(n) = \left\{\begin{array}{ll}
\lambda & \mbox{if } (\cut,1) < (\{n\theta+\phi\},t) < (1,0) \\
-\lambda & \mbox{if } (0,1) < (\{n\theta+\phi\},t) < (\cut,0)\\
\lambda (1-2t) & \mbox{if } \{n\theta+\phi\} = 0\\
\lambda (2t-1) & \mbox{if } \{n\theta+\phi\} = \cut
\end{array}\right.
\end{equation}
Finally, the lift $\hat h$ of $\tilde h$ in the Toeplitz extension, which is obtained by replacing in $\tilde h$ the element $u$ with $\hat u$ and $u^*$ with ${\hat u}^*$ is represented by the family of operators 
$\hat\rho_{\phi,t}(\hat h)=\hat{H}_{\phi,t}$ which are the restrictions of $H_{\phi,t}$ to $\ell^2(\N)$.

For the simulations, we consider a multiple $D$ of $\q$ which is large enough to distinguish the edge states when we enforce Dirichlet boundary conditions. On the other hand, we take a small value for $\q$ in order to see properly the effects we are going to describe. 
Concretely, we fix the parameters $\theta = \frac{2}{5}$, $\cut=\frac{1}{3}$, $\lambda=1$ and $D=35$. 

\subsubsection{The IDS of the bulk system}
In Figure~\ref{spectrum_bulk_Xi} we see the spectrum of the bulk operator $H_\xi$. The jumps in the spectrum are at the cut points. 
\begin{figure}
	\centering
	\includegraphics[width=0.8\textwidth]{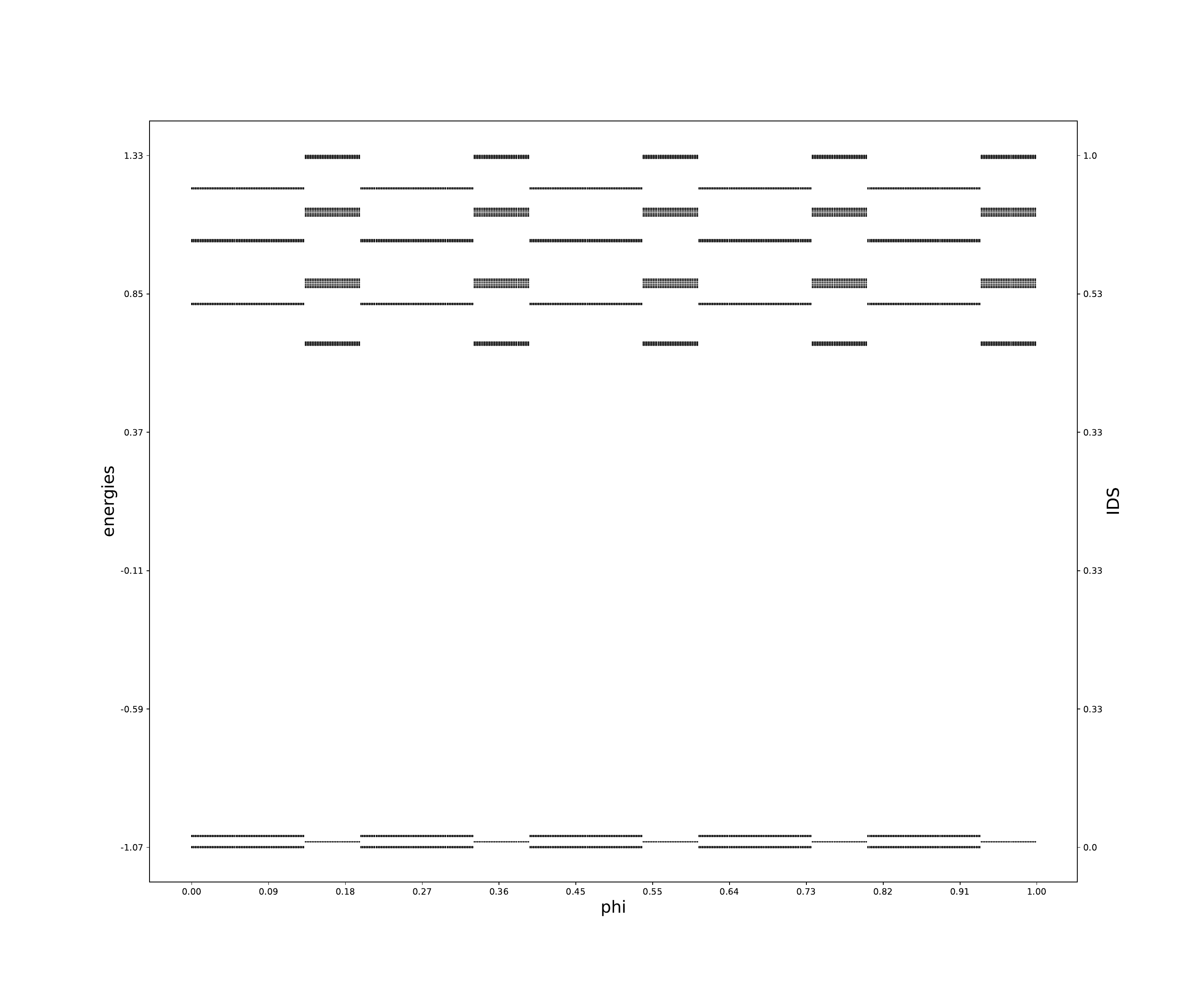}
	\caption{Spectrum of $H_\Xi$. The $x$-axis corresponds to the parameter $\xi$: for each value $\xi$ we plot the spectrum of $H_\xi$. The $y$-axis corresponds to energy. At the right we denote the values of the IDS at the gaps.}
	\label{spectrum_bulk_Xi}
	\begin{tikzpicture}[overlay]	
		\fill[white] (-0.5,3.1) rectangle (0.5,3.3) ;
		\fill[white] (-5.4,6) rectangle (-5.1,9) ;
		\fill[white] (5.4,6) rectangle (5.7,9) ;
		
		\node[scale=0.7] at (0,3.1) {$\xi$};
		\node[rotate=90,scale=0.7] at (-5.4,7.5) {energy};
		\node[rotate=90,scale=0.7] at (5.4,7.5) {IDS};
	\end{tikzpicture}
\end{figure}

It is important to realize that the spectrum of $h$, which is the union of the spectra of the members $H_\xi$ of the family $H_\Xi$ (and which we refer to as the spectrum of $H_\Xi$), is no longer given by the spectrum of a single member, but it is the union of two of its members, for instance
$$\mathrm{spec } H_\Xi =  \mathrm{spec } H_{0,+} \cup \mathrm{spec } H_{1,-}.$$
It therefore has $2q-1$ interior gaps and not only $q-1$. Indeed, when approximating the topological results for irrational $\theta$ by means of a rational value, we need to approximate the hull, or, what amounts to the same, the covariant family defined by the hull, and not a single member of that family. An consequence of that is, that the IDS of the family $H_\Xi$ at some energy $E$ in some gap is, by definition, the $\mu$-average of the values of the integrated density of states at $E$ for the individual members of the family. In particular, the IDS of the family $H_\Xi$ is either of the form
\begin{equation}\label{eq-primary-gap}
 \mathrm{IDS}(E)= \frac{m}{\q}
 \end{equation}
 for some $m\in\N$, or of the form
 \begin{equation}\label{eq-secondary-gap}
	\mathrm{IDS}(E)= \frac{m+1}{q}\{q\cut\}+\frac{m}{q}(1-\{q\cut\})
 \end{equation}
 for some $m\in\N$. Gaps to which the first option applies will be called \emph{primary}, the others \emph{secondary}.
Indeed, we have $2q$ cut points, and when we move over these points there are $D/\q$ sites of the chain where the potential jumps from $-1$ to $1$ (or in the other direction). Therefore, on a fraction $\{q\cut\}$ of values for $\phi$ the energy of the bands of $H_{\phi,\epsilon}$ is lower and therefore, for some gaps, these operators have one more band below the value of the energy $E$. This can nicely be seen in 
Figure~\ref{spectrum_bulk_Xitilde} and rigorously be shown along the lines of the proof of Prop.~\ref{prop-n2}.
Averaging on $\phi$, we get the formula.

For our choice of parameters, we see that the IDS takes the values 
$$	\mathrm{IDS}\in \left\{ \left.\frac{m}{5}, \frac{2}{15}+\frac{m}{5} \right| m\in \{0,\cdots,4\} \right\}.$$
Solving the Diophantine equation \eqref{eq-dioph} we get 
the following solutions for the gaps. 
\begin{equation}
	\renewcommand{\arraystretch}{1.5}
	\begin{array}{c | c c c c c c c c c c} 
	\mbox{Gap} & \textcolor{orange}{\textbf{A}} & \textcolor{magenta}{\textbf{a}} & \textcolor{orange}{\textbf{B}} & \textcolor{magenta}{\textbf{b}}& \textcolor{orange}{\textbf{C}} & \textcolor{magenta}{\textbf{c}} & \textcolor{orange}{\textbf{D}} & \textcolor{magenta}{\textbf{d}}& \textcolor{orange}{\textbf{E}} & \textcolor{magenta}{\textbf{e}} \\
		\hline
		\mathrm{IDS} & \frac{2}{15} & \frac{1}{5} & \frac{5}{15} & \frac{2}{5} & \frac{8}{15} & \frac{3}{5} & \frac{11}{15} & \frac{4}{5} & \frac{14}{15} & 1 \\
		\hline
		n_1\!\!\!\!\!\mod5 & 2 & 3 & 0 & 1 & 3 & 4 & 1 & 2 & 4 & 0 \\
		\hline
		n_2 & 1 & 0 & 1 & 0 & 1 & 0 & 1 & 0 & 1 & 0 \\
		\hline
	\end{array}
	\label{table_solution}
\end{equation}
We label the primary gaps with small and the secondary with capital letters for better comparison with the numerical simulations. We see that IDS does not take all the values $\frac{1}{15}\Z\cap [0,1]$. This is a consequence of Prop.~\ref{prop-n2} which depends on the particularity of our model.

\subsubsection{The spectrum of the augmented family and $\chab$}
In Figure~\ref{spectrum_bulk_Xitilde} we see the simulation of the spectrum of the augmented family $H_{\tilde\Xi}$. We observe that the primary gaps are still open, while the secondary gaps are closed. To better understand the different nature of these two types of gaps, we recall the interpretation \eqref{physical_interpr_n2_c} of $n_2$ as the spectral flow of the eigenvalues of the augmented bulk system through the gaps per unit length. In this way, we can characterize primary gaps as those for which $n_2=0$ (no spectral flow of the augmented bulk system), while for the secondary gaps we observe that $n_2 = 1$. Indeed, to read off $n_2$, we take the difference of the spectral flow over the arcs corresponding to the orbit of $\cut$ (the red intervals on the $x$-axis) with the spectral flow over the arcs corresponding to the orbit of $0$ and then divide by $2q$.
\begin{figure}
	\centering
	\includegraphics[width=0.8\textwidth]{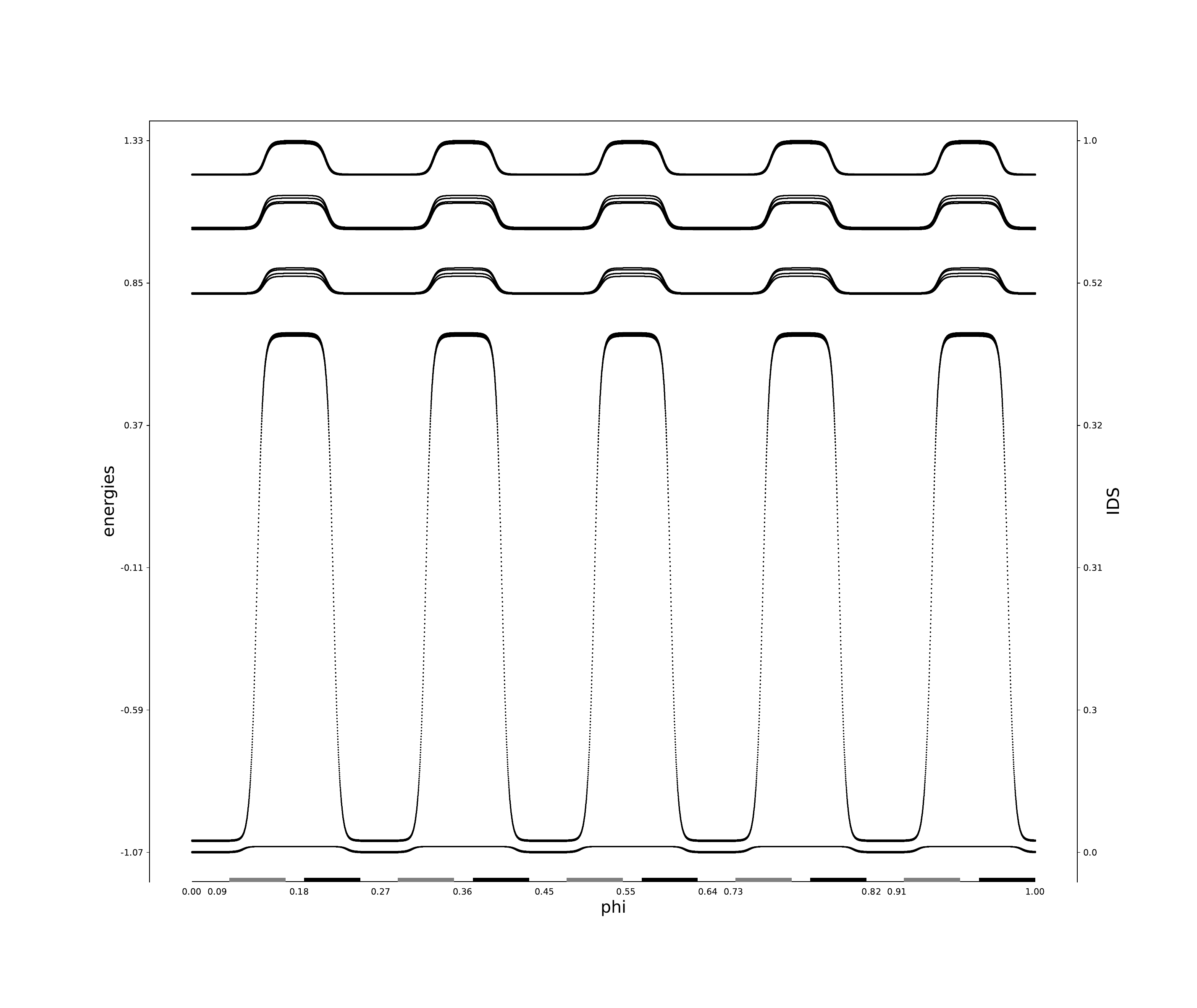}
	\caption{Spectrum of $H_{\tilde{\Xi}}$. The black (resp. gray) thick intervals along the $x$-axis correspond to the arcs on the singular points of the orbit of $0$ (resp. $ \cut$).}
	\label{spectrum_bulk_Xitilde}
	\begin{tikzpicture}[overlay]	
		\fill[white] (-0.5,2.6) rectangle (0.5,2.85) ;
		\fill[white] (-5.4,6) rectangle (-5.1,9) ;
		\fill[white] (5.4,6) rectangle (5.7,9) ;
		
		\node[scale=0.7] at (0,2.7) {$\tilde\xi$};
		\node[rotate=90,scale=0.7] at (-5.4,7.5) {energy};
		\node[rotate=90,scale=0.7] at (5.4,7.5) {IDS};
	\end{tikzpicture}
\end{figure}

\subsubsection{\texorpdfstring{Edge spectral flow in the primary gaps}{}}
In Figure~\ref{primary_gaps_WN} we display the spectrum of  $\hat H_{\tilde\Xi}$. The thicker parts of the $x$-axis correspond to the added arcs. There are $2\q = 10$ of them, $5$ of which are in light grey and correspond to arcs along the orbit of $\cut=\frac13$ while the other five are in dark grey and correspond to arcs along the orbit of $0$. 
\begin{figure}
	\centering
	\includegraphics[width=1.\textwidth]{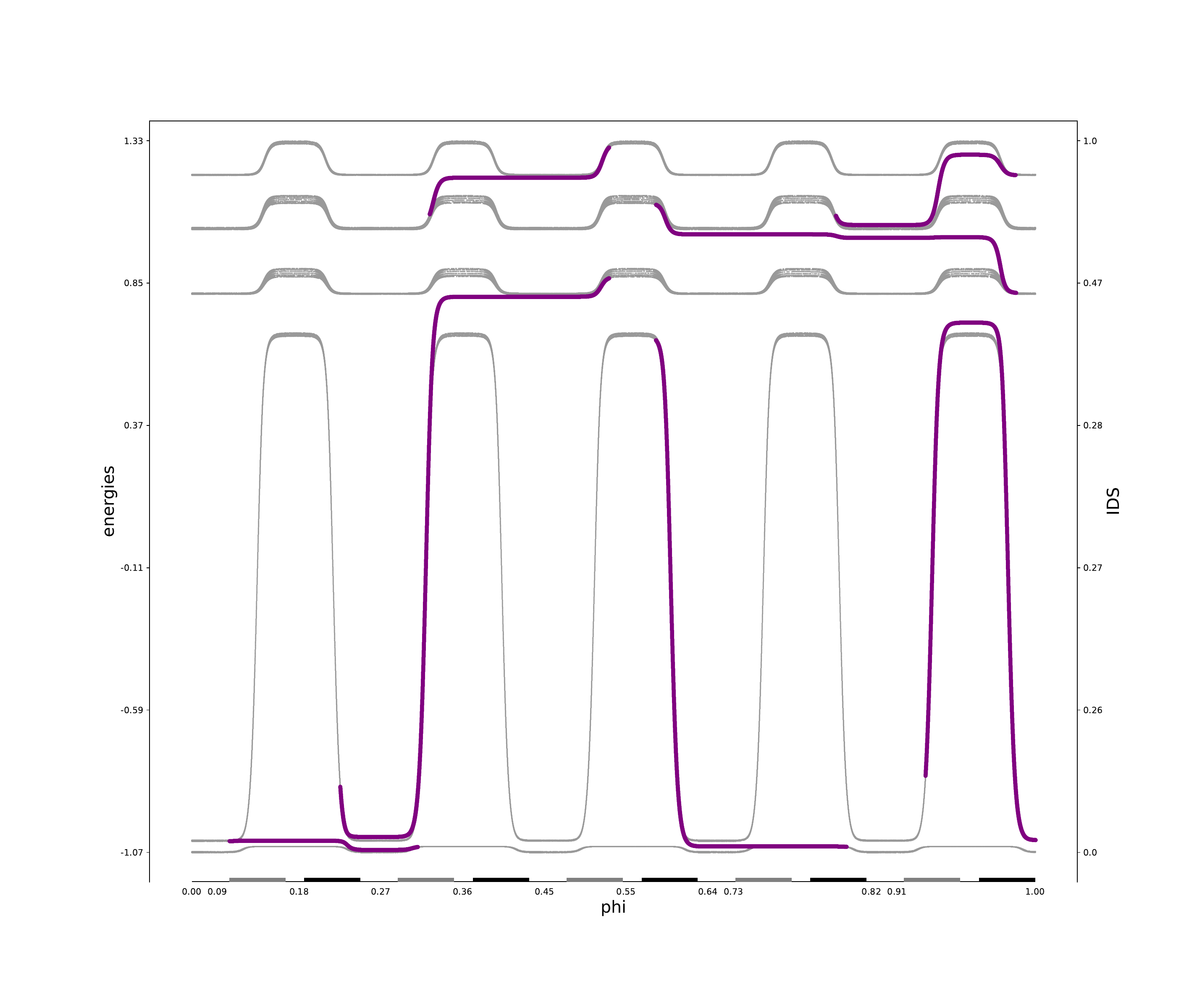} 
	\begin{tikzpicture}[overlay]
		\node[anchor=center, color=orange, font=\bfseries] at (6,2.45) {A};    
		\draw[orange, line width=0.2pt] (-5.8,2.6) -- (5.8,2.6);
		\node[anchor=center, color=orange, font=\bfseries] at (6,6.64) {B};    
		\draw[orange, line width=0.2pt] (-5.8,6.64) -- (5.8,6.64);
		\node[anchor=center, color=orange, font=\bfseries] at (6,10.2) {C};    
		\draw[orange, line width=0.2pt] (-5.8,10.2) -- (5.8,10.2);
		\node[anchor=center, color=orange, font=\bfseries] at (6,11) {D};    
		\draw[orange, line width=0.2pt] (-5.8,11) -- (5.8,11);
		\node[anchor=center, color=orange, font=\bfseries] at (6,11.8) {E};    
		\draw[orange, line width=0.2pt] (-5.8,11.8) -- (5.8,11.8);
		
		\node[anchor=center, color=magenta, font=\bfseries] at (6,2.74) {a};    
		\draw[magenta, line width=0.05pt] (-5.8,2.67) -- (5.8,2.67);
		\node[anchor=center, color=magenta, font=\bfseries] at (6,9.6) {b};    
		\draw[magenta, line width=0.05pt] (-5.8,9.6) -- (5.8,9.6);
		\node[anchor=center, color=magenta, font=\bfseries] at (6,10.45) {c};    
		\draw[magenta, line width=0.05pt] (-5.8,10.45) -- (5.8,10.45);
		\node[anchor=center, color=magenta, font=\bfseries] at (6,11.5) {d};    
		\draw[magenta, line width=0.05pt] (-5.8,11.5) -- (5.8,11.5);
		
		\fill[white] (-0.5,1.6) rectangle (0.5,1.9) ;
		\fill[white] (-6.7,6) rectangle (-6.4,9) ;
		\fill[white] (6.75,6) rectangle (7,9) ;
		
		\node[scale=0.7] at (0,1.6) {$\tilde\xi$};
		\node[rotate=90,scale=0.7] at (-6.7,7.5) {energy};
		\node[rotate=90,scale=0.7] at (6.8,7.5) {IDS};
	\end{tikzpicture}
	\caption{Spectrum of $\hat{H}_{\tilde{\Xi}}$. The thicker lines (in purple in the online version) correspond to the eigenvalues of $\hat{H}_{\tilde{\Xi}}$ whose eigenfunctions are localised on the left boundary. The eigenvalues with eigenfunctions localised on the right boundary are supressed. 
	The horizontal lines correspond to a choice of Fermi energy levels in the gaps: lower case letters identify primary gaps and uppercase secondary gaps. We observe the following spectral flow in the primary gaps: \textcolor{magenta}{a, $-2$}; \textcolor{magenta}{b, $+1$}; \textcolor{magenta}{c, $-1$}; \textcolor{magenta}{d, $+2$}. The values are in agreement with those of Table~\ref{table_solution}.}
	\label{primary_gaps_WN}  
\end{figure}
In primary gaps we can easily determine  graphically the spectral flow of the left edge states by 
counting the number (with sign) of spectral values of $\hat{H}_{\tilde{\Xi}}$ traversing the fiducial line in the gap of $H_{\tilde{\Xi}}$. This spectral flow corresponds to $n_1$, see \eqref{eq-SF-edge} and the observed results agree with those of (\ref{table_solution}).
While also in some secondary gaps there are eigenvalues of left edge states, these are not expected to be topologically stable, and indeed, such a count would seem to depend on the precise position of the fiducial line.

\subsubsection{Edge spectral flow for secondary gaps} We have seen in Section~\ref{sec-secondary-gaps} that the edge invariant for a secondary gap $\Delta$ can be obtained as for a primary gap upon replacing $h$ by a stacked operator $h^s$. Focussing on the model with $\lambda>0$, Prop.~\ref{prop-n2} shows that $n_2=1$ and so the augmentation of $h^s$ has the form 
$$\tilde h^s:=\begin{pmatrix}
	\tilde h & \tilde c \\ \tilde c^* & -\tilde h' 
\end{pmatrix}$$
with $h'=\lambda'(1-\chi_{[0,\cut]})$, $\lambda'>0$ and coupling $\tilde c$ chosen in such a way, that $\tilde h^s$ has a gap inside the gap $\Delta$. 
$\tilde h^s$ is represented by the covariant family $H^s_{\tilde\xi}=\rho^{(2)}_{\tilde\xi}(\tilde h^s)$, $\tilde \xi\in\tilde\Xi$ where  $\rho^{(2)}_{\tilde\xi} = \rho_{\tilde\xi}\otimes\mathrm{id}$ is the doubled representation on $\ell^2(\Z)\otimes\C^2$. The augmented periodic operator is represented by the matrix   
\[
H_{\phi,t}^{s}(k) = \begin{pmatrix}
	A_{\phi,t}(0) & J & 0 & \cdots & 0 & \mathrm{e}^{-2\pi k\mathrm{i}}J \\
	J & A_{\phi,t}(1) & J & \cdots & 0 & 0 \\
	0 & J & A_{\phi,t}(2) & \cdots & 0 & 0 \\
	\vdots & \vdots & \vdots & \ddots & \vdots & \vdots \\
	0 & 0 & 0 & \cdots & A_{\phi,t}(D-2) & J \\
	\mathrm{e}^{2\pi k\mathrm{i}}J  & 0 & 0 & \cdots & J & A_{\phi,t}(D-1)
\end{pmatrix}
\]
where 
$$A_{\phi,t}(n)=\begin{pmatrix}
	V_{\phi,t}(n) & \tilde c_{\phi,t}\\
	\tilde c_{\phi,t} & -\frac{\lambda'}{\lambda} V_{\phi,t}(n) 
\end{pmatrix}, \quad J =\begin{pmatrix}
	1 & 0\\
	0 & 0 
\end{pmatrix}.$$
For the coupling $\tilde c_{\phi,t}$ we take a bell-shaped function in the variable $t$ which vanishes at $t=0$ and $t=1$. Replacing the upper-right and lower-left blocks by $0$, we obtain the matrix representing $\hat{H}_{\phi,t}^{s}$ as the corresponding operator with Dirichlet conditions. We use $\lambda'=1.7$ and $\|c\| = 0.4$. In Figure~\ref{secondary_gapsWN} we  display the spectrum of $\hat{H}_{\tilde{\Xi}}^{s}$. 
The coupling opens the secondary gaps except for gap A. To better read off the spectral flow of the eigenvalues which have eigenfunctions which are localized at the left edge, we provide a zoom on the last three secondary gaps in which we encircle the crossing of the eigenvalues with a chosen energy line in the gap. Counting these crossings we obtain the spectral flow of these eigenvalues. The values agree with those of \eqref{table_solution}.
\begin{figure}
	\centering
	\includegraphics[width=1.\textwidth]{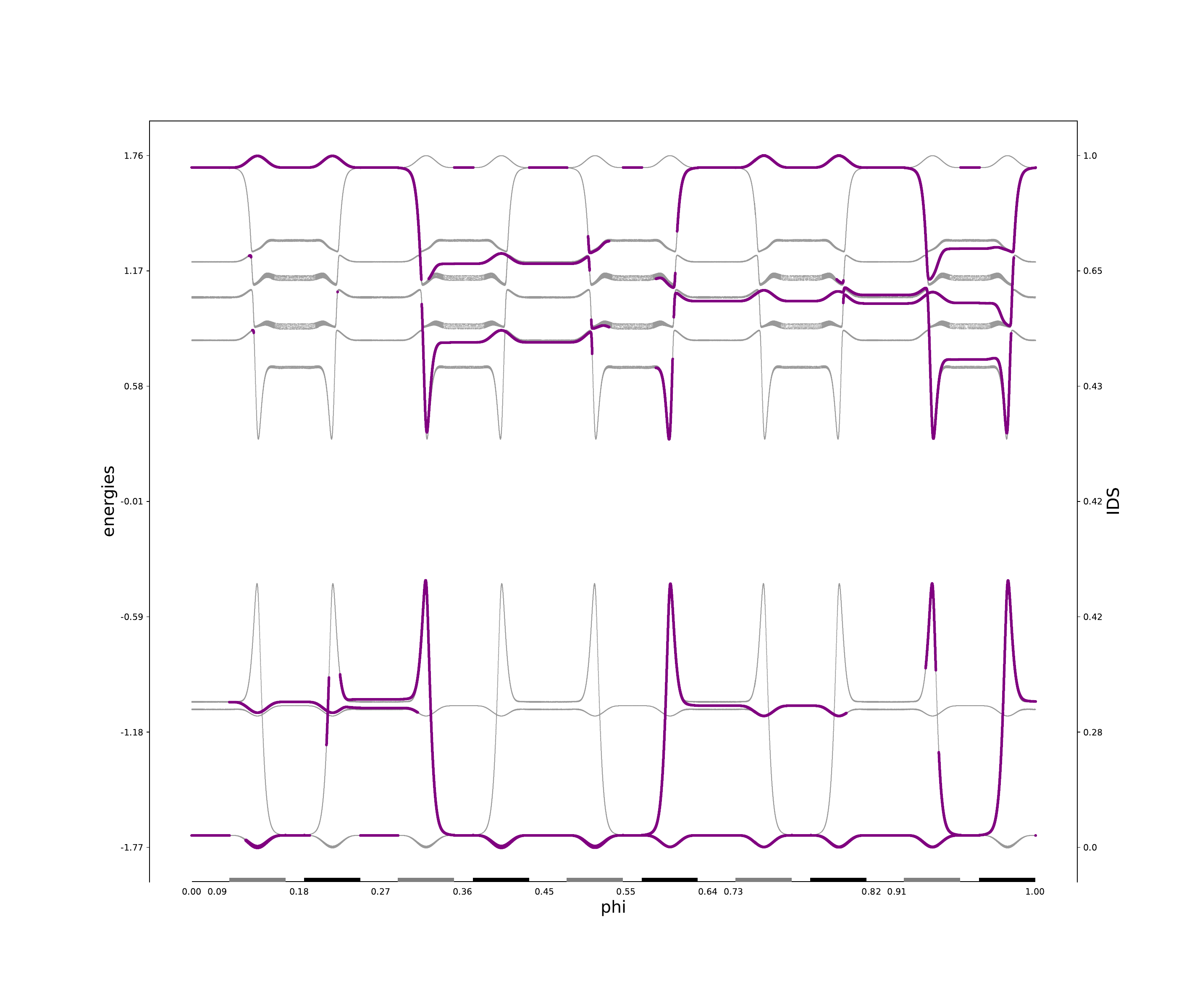}
	\begin{tikzpicture}[overlay]
		\node[anchor=center, color=orange, font=\bfseries] at (6,7.2) {B};
		\draw[orange, line width=0.05pt] (-5.8,7.2) -- (5.8,7.2);
		\node[anchor=center, color=orange, font=\bfseries] at (6,9.55) {C};
		\draw[orange, line width=0.05pt] (-5.8,9.55) -- (5.8,9.55);
		\node[anchor=center, color=orange, font=\bfseries] at (6,10.1) {D};
		\draw[orange, line width=0.05pt] (-5.8,10.1) -- (5.8,10.1);
		\node[anchor=center, color=orange, font=\bfseries] at (6,10.55) {E};
		\draw[orange, line width=0.05pt] (-5.8,10.55) -- (5.8,10.55);
		
		\fill[white] (-0.5,1.6) rectangle (0.5,1.9) ;
		\fill[white] (-6.7,6) rectangle (-6.4,9) ;
		\fill[white] (6.75,6) rectangle (7,9) ;
		
		\node[scale=0.7] at (0,1.6) {$\tilde\xi$};
		\node[rotate=90,scale=0.7] at (-6.7,7.5) {energy};
		\node[rotate=90,scale=0.7] at (6.85,7.5) {IDS};
	\end{tikzpicture}  
	\caption{Spectrum of $\hat{H}_{\tilde{\Xi}}^{s}$. The thicker lines (in purple in the online version) correspond to  the eigenvalues of $\hat{H}_{\tilde{\Xi}}$ whose eigenfunctions are localized on the left boundary. The horizontal lines correspond to a choice of Fermi energy levels in the secondary gaps. We observe the following spectral flow: \textcolor{orange}{B, $0$}; \textcolor{orange}{C, $-2$}; \textcolor{orange}{D, $+1$}; \textcolor{orange}{E, $-1$}.}
	\label{secondary_gapsWN}  
\end{figure}
\begin{figure}
	\centering
\includegraphics[width=1.\textwidth]{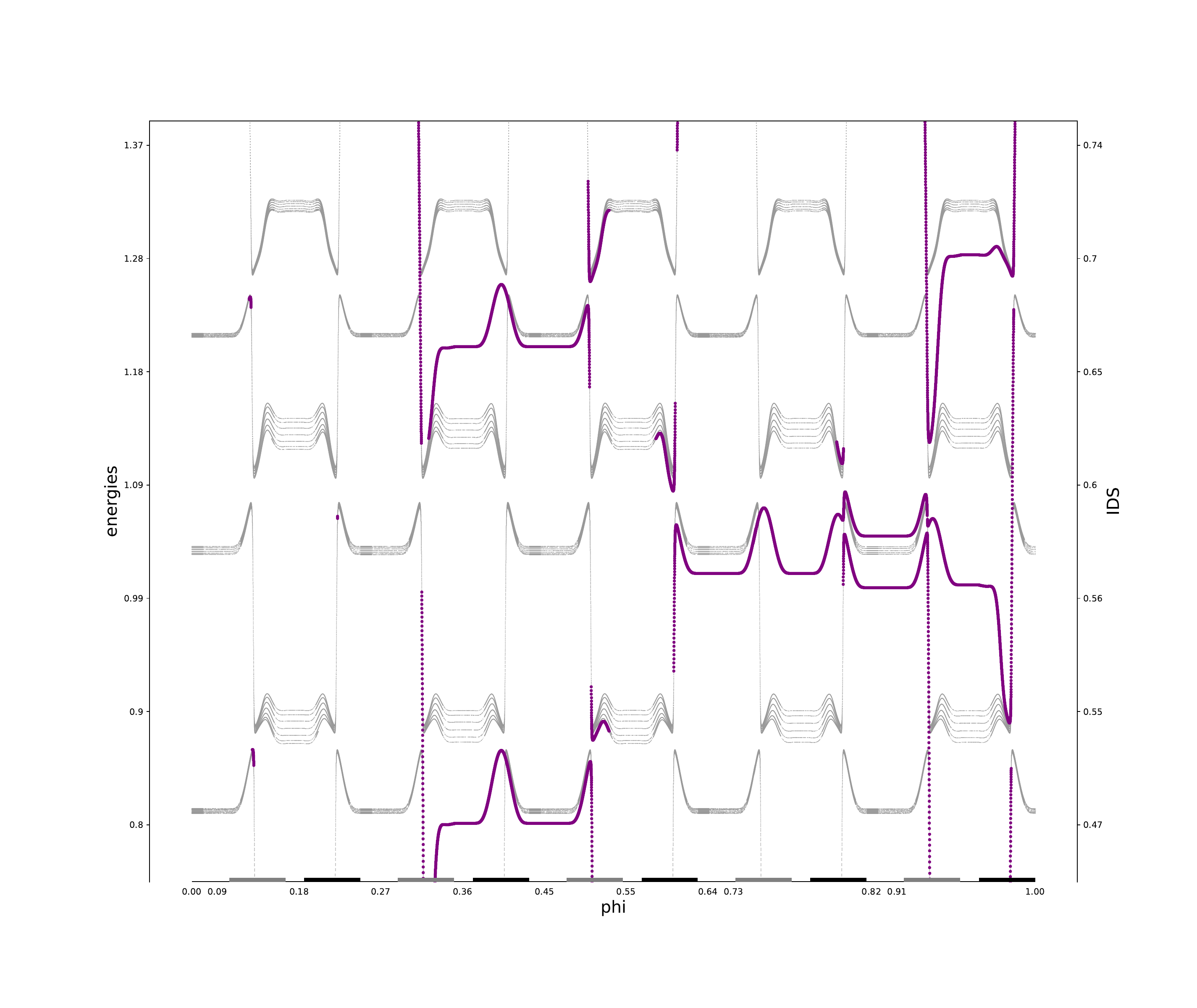}
	\begin{tikzpicture}[overlay]
		\node[anchor=center, color=orange, font=\bfseries] at (6,4) {C};
		\draw[orange, line width=0.01pt] (-5.8,4) -- (5.8,4);
		\node[anchor=center, color=orange, font=\bfseries] at (6,7.3) {D};
		\draw[orange, line width=0.01pt] (-5.8,7.33) -- (5.8,7.33);
		\node[anchor=center, color=orange, font=\bfseries] at (6,10.05) {E};
		\draw[orange, line width=0.01pt] (-5.8,10.05) -- (5.8,10.05);

		\draw[yellow, line width=1pt] (-2.4,4) circle (0.3cm);
		\node[anchor=center] at (-2.7, 4.6) {$-1$};
		\draw[yellow, line width=1pt] (4.45,4) circle (0.3cm);
		\node[anchor=center] at (4.15, 4.6) {$-1$}; 
		\draw[yellow, line width=1pt] (5.5,7.3) circle (0.3cm);
		\node[anchor=center] at (5.8, 7.9) {$+1$};
		\draw[yellow, line width=1pt] (-2.4,10.) circle (0.3cm);
		\node[anchor=center] at (-2.7, 10.6) {$-1$};
		\draw[yellow, line width=1pt] (4.45,10) circle (0.3cm);
		\node[anchor=center] at (4.4, 10.6) {$-1+1$}; 

		\fill[white] (-0.5,1.6) rectangle (0.5,1.9) ;
		\fill[white] (-6.7,6) rectangle (-6.4,9) ;
		\fill[white] (6.75,6) rectangle (7,9) ;
		
		\node[scale=0.7] at (0,1.6) {$\tilde\xi$};
		\node[rotate=90,scale=0.7] at (-6.7,7.5) {energy};
		\node[rotate=90,scale=0.7] at (6.85,7.5) {IDS};
	\end{tikzpicture} 
	\caption{Zoom on the gaps C, D and E of figure \ref{secondary_gapsWN}. The crossings of the eigenvalues with the fiducial energy lines are encircled. 
	}
	\label{secondary_gapsWN_zoom}  
\end{figure}

\section{conclusion}
A formalism named augmentation was introduced to overcome the difficulty that the boundary invariants in the standard bulk edge correspondence are trivial for one dimensional chains with totally disconnected hull. Through such an augmentation we obtained a spectral flow interpretation of the values of the Integrated density of states which also has a physical interpretation related to forces (and hence to pumping). Different augmentations lead to different spectral flows which are related. In this way we obtained a relation between the force induced by moving the boundary and that induced by phason flips. In the 2-cut Kohmoto model the spectral flow of the edge states becomes only visible if we consider a stacked two layer system which corresponds to a relative $K_0$-class, that is, an element that arrises through the Grothendieck construction. This illuminates that even the abstract Grothendieck construction manifests itself in a physical application to topological insulators.     
\bigskip

\noindent
{\bf Acknowledgement:} Most of the  material is part of the second author’s doctoral thesis \cite{ScaglioneThesis}.

\noindent
{\bf Conflict of interest:} All authors declare that they have no conflicts of interest.

\noindent
{\bf Data availability statement:} No new data were created or analysed during this study. Data sharing is not applicable to this article.

\nocite{*}
\bibliographystyle{amsplain}
\bibliography{bibliography}

\end{document}